\definecolor{light-gray}{gray}{0.95}
\newtheorem{maintheorem}{Theorem}
\declaretheoremstyle[
  spaceabove=5pt, spacebelow=5pt,
  headfont=\bfseries,
  notefont=\normalfont, notebraces={(}{)},
  bodyfont=\normalfont,
  postheadspace=1em,
  qed=$\Diamond$
]{pluto}
    \declaretheorem[style=pluto,name=Definition,    numberwithin=section]{definition}
    \declaretheorem[style=pluto,name=Assumption,
    ]{assumption}
\declaretheoremstyle[
  spaceabove=5pt, spacebelow=5pt,
  headfont=\itshape,
  notefont=\normalfont, notebraces={(}{)},
  bodyfont=\normalfont,
  postheadspace=1em,
  qed=$\Diamond$
]{pluto2}
    \declaretheorem[style=pluto2,name=Remark,    sibling=definition]{remark}
\declaretheoremstyle[
  spaceabove=5pt, spacebelow=5pt,
  headfont=\bfseries,
  notefont=\normalfont, notebraces={(}{)},
  bodyfont=\itshape,
  postheadspace=1em,
]{pluto3}
    \declaretheorem[style=pluto3,name=Theorem,    sibling=definition]{theorem}
    \declaretheorem[style=pluto3,name=Lemma,    sibling=definition]{lemma}
    \declaretheorem[style=pluto3,name=Corollary,    sibling=definition]{corollary}
    \declaretheorem[style=pluto3,name=Proposition,    sibling=definition]{proposition}
\newcommand{\pp}{\partial}
\renewcommand{\d}{\mathbb{d}}
\newcommand{\AR}[1]{{\color{ForestGreen}{#1}}}
\newcommand{\calF}{\mathcal{F}}
\DeclareMathAlphabet{\mathbbmsl}{U}{bbm}{m}{sl}
\newcommand{\UpsilonH}{\Upsilon_\mathfrak{H}}
\newcommand{\Upsilono}{\Upsilon_\circ}
\newcommand{\UpsilonoCou}{\Upsilon_{\circ,\mathrm{Cou}}}
\newcommand{\UpsilonCou}{\Upsilon_{\mathrm{Cou}}}
\newcommand{\G}{\mathscr{G}}
\newcommand{\Go}{\G_\circ}
\newcommand{\Gooff}{\G_\circ^{\off}}
\newcommand{\Gred}{\underline{\G}}
\newcommand{\Gredoff}{\underline{\G}^{\off}}
\newcommand{\Gpoff}{\G_\pp} 
\renewcommand{\L}{\mathbb{L}}
\newcommand{\bi}{\mathbb{i}}
\newcommand{\Lie}{\mathrm{Lie}}
\newcommand{\bom}{\boldsymbol{\omega}}
\newcommand{\bHo}{\boldsymbol{H}_\circ}
\newcommand{\bH}{\boldsymbol{H}}
\newcommand{\bh}{\boldsymbol{h}}
\newcommand{\balpha}{{\boldsymbol{\alpha}}}
\newcommand{\bk}{\boldsymbol{k}}
\newcommand{\bE}{{\boldsymbol{E}}}
\newcommand{\be}{{\boldsymbol{e}}}
\newcommand{\bC}{{\boldsymbol{C}}}
\newcommand{\bc}{{\boldsymbol{c}}}
\newcommand{\fN}{\mathfrak{N}}
\newcommand{\bJ}{\boldsymbol{J}}
\newcommand{\tr}{\mathrm{tr}}
\renewcommand{\pp}{\partial}
\newcommand{\wt}[1]{\widetilde{#1}}
\newcommand{\wh}[1]{\widehat{#1}}
\renewcommand{\Im}{\mathrm{Im}}
\renewcommand{\ker}{\mathrm{Ker}}
\newcommand{\wtpi}{\widetilde{\pi}}
\newcommand{\tX}{\widetilde{\X}}
\newcommand{\tA}{\widetilde{\A}}
\newcommand{\calW}{\mathcal{W}}
\newcommand{\calV}{\mathcal{V}}
\newcommand{\calN}{\mathcal{N}}
\newcommand{\Acal}{\mathcal{A}}
\newcommand{\strong}[1]{{#1}'}
\newcommand{\dual}[1]{{#1}^*}
\newcommand{\bidual}[1]{{#1}^{**}}
\newcommand{\rad}{{\text{rad}}}
\newcommand{\bK}{\boldsymbol{K}}
\renewcommand{\H}{{{H}}}
\newcommand{\F}{\mathfrak{F}_{\bullet}}
\newcommand{\Foff}{\mathfrak{F}^\off_\bullet}
\newcommand{\fG}{\mathfrak{G}}
\newcommand{\tfG}{\widetilde{\mathfrak{G}}}
\newcommand{\fg}{\mathfrak{g}}
\newcommand{\fGo}{\mathfrak{G}_\circ}
\newcommand{\fNoff}{\fN^\off}
\newcommand{\fGred}{\underline{\mathfrak{G}}}
\newcommand{\tfGp}{\tfG_\pp} 
\newcommand{\fGredoff}{\fGred^{\off}}
\newcommand{\fGpoff}{\fG_{\partial}} 
\newcommand{\Ho}{{H}_\circ}
\newcommand{\xio}{\xi_\circ}
\newcommand{\h}{{h}}
\renewcommand{\k}{{k}}
\newcommand{\K}{{K}}
\newcommand{\fK}{\mathfrak{K}}
\newcommand{\Jo}{J_{\circ}}
\newcommand{\bJo}{\boldsymbol{J}_{\circ}}
\newcommand{\calPhi}{\mathcal{M}}
\newcommand{\X}{\mathcal{P}}
\newcommand{\C}{\mathcal{C}}
\newcommand{\A}{\mathsf{A}}
\newcommand{\Ann}{\mathrm{Ann}}
\newcommand{\uCo}{\underline{\C}}
\newcommand{\uf}{\underline{\h}{}_\smbullet}
\newcommand{\uh}{\underline{\h}{}_\smbullet}
\newcommand{\uomegao}{\underline{\omega}}
\renewcommand{\S}{\mathcal{S}}
\newcommand{\uS}{\underline{\mathcal{S}}{}}
\newcommand{\uuS}{\underline{\underline{\mathcal{S}}}{}}
\newcommand{\uuC}{\underline{\underline{\mathcal{C}}}{}}
\newcommand{\uuphi}{\underline{\underline{\phi}}{}_\rad}
\newcommand{\uuU}{\underline{\underline{\mathcal{U}}}}
\newcommand{\uuomegao}{\underline{\underline{\omega}}{}}
\newcommand{\oloc}{\Omega_{\loc}}
\newcommand{\iloc}{\Omega_{\int}}
\newcommand{\osrc}{\Omega_{\text{src}}}
\newcommand{\obdry}{\Omega_{\text{bdry}}}
\newcommand{\Xloc}{\mathfrak{X}_{\loc}}
\newcommand{\loc}{\mathrm{loc}}
\newcommand{\off}{\text{off}}
\newcommand{\tbox}[2]{\mbox{\parbox{#1}{\center\sffamily\footnotesize #2}}}
\newcommand{\bd}{\mathbb{d}}
\newcommand{\cbd}{\check{\mathbb{d}}}
\newcommand{\AD}{\mathrm{AD}}
\newcommand{\Ad}{\mathrm{Ad}}
\newcommand{\ad}{\mathrm{ad}}
\renewcommand{\#}{\sharp}
\newcommand{\smbullet}{{\hbox{\tiny$\bullet$}}}
\title[Hamiltonian gauge theory with corners]{Hamiltonian gauge theory with corners: \\constraint reduction and flux superselection}
\author{A. Riello}
\address{Perimeter Institute for Theoretical Physics, 31 Caroline Street North,
Waterloo, Ontario, Canada, N2L 2Y5}
\email{ariello@perimeterinstitute.ca}
\author{M. Schiavina}
\address{Department of Mathematics, University of Pavia, Via Ferrata 5, 27100 Pavia, Italy}
    \address{INFN Sezione di Pavia, via Bassi 6, 27100 Pavia, Italy}
    \email{michele.schiavina@unipv.it}
\date{}
\thanks{A.R.\ acknowledges support from the Université Libre de Bruxelles, the European Union, and the Perimeter Institute.  M.S.\ acknowledges partial support from the NCCR SwissMAP, funded by the Swiss National Science Foundation. M.S.\ is grateful for the hospitality of Perimeter Institute where part of this work was carried out. Research at Perimeter Institute is supported in part by the Government of Canada through the Department of Innovation, Science and Economic Development Canada and by the Province of Ontario through the Ministry of Colleges and Universities. This project has received funding from the European Union’s Horizon 2020 research and innovation programme under the Marie Sk{\l}odowska-Curie grant agreement No 801505.}
\begin{document}

\begin{abstract}
We study the Hamiltonian formulation of gauge theory on spacetime manifolds endowed with a codimension-$1$ submanifold with boundary. The latter is thought of as a corner component for the spacetime manifold.  
We characterise  the reduced phase space of the theory whenever it is described by a local momentum map for the action of the gauge group $\G$,  by adapting Fr\'echet reduction by stages to the case of gauge subgroups. 

The local momentum map decomposes into a bulk term called constraint map defining a coisotropic constraint set, and a boundary term called flux map. The first stage, or \emph{constraint reduction} views the constraint set as the zero locus of a momentum map for a normal subgroup $\mathcal{G}_\circ\subset\mathcal{G}$, called \emph{constraint gauge group}. The second stage, or \emph{flux superselection}, interprets the flux map as the momentum map for the residual action of the \emph{flux gauge group} $\underline{\mathcal{G}}\doteq\mathcal{G}/\mathcal{G}_\circ$. Equivariance is controlled by cocycles of the flux gauge group $\underline{\mathcal{G}}$.

Whereas the only physically admissible value of the constraint map is zero, the flux map is largely unconstrained.
As a result, the reduced phase space of the theory, when smooth, is only a partial Poisson manifold $\underline{\underline{\mathcal{C}}}=\mathcal{C}/\mathcal{G} \simeq \underline{\mathcal{C}}/\underline{\mathcal{G}}$. Its symplectic leaves---defined through the flux map---are called \emph{flux superselection sectors}, for they provide a classical analogue of, and a road map to, the phenomenon of quantum superselection.

To corners, we further assign a natural symplectic Lie algebroid over a Poisson manifold, $\mathsf{A}_{\partial} \to \mathcal{P}_{\partial}$, and show how the submanifold of on-shell configurations $\mathcal{C}_{\partial}\subset\mathcal{P}_{\partial}$ is also Poisson. We interpret $\C_\pp$ as defining the Noether charge algebra. Both $\mathcal{C}_{\partial}$ and $\underline{\underline{\mathcal{C}}}$ fibrate over a common \emph{space of superselections}, labeling the Casimirs of both Poisson structures.

We showcase the formalism by explicitly working out the first and second stage reductions for a broad class of Yang--Mills theories, where $\underline{\underline{\mathcal{C}}}$ is found to be a Weinstein space, and discuss further applications to topological theories.
\end{abstract}

\maketitle

\tableofcontents

\section{Introduction}\label{sec:intro}

The Hamiltonian picture for local gauge theories revolves around the description of the theory as a constrained Hamiltonian system: it is given in terms of a symplectic manifold $(\X,\omega)$ together with a given \emph{constraint} surface $\C$, to which (physical) field configurations and motions are restricted.

Consider the example of classical Hamiltonian mechanics. This can be seen as a field theory of maps from the half-line $\mathbb{R}_{\geq 0}$ (time) to a target manifold $\mathcal{Q}$, known as ``configuration space''. Then, one can think of the ``phase space'' $\X = T^*\mathcal{Q}$ as naturally assigned to the boundary of the line (a single point) or as the symplectic manifold given by the space of solutions of the evolution equations inside $\mathrm{Maps}(\mathbb{R}_{\geq 0},\mathcal{Q})$. Since every initial-condition in $\X$ is uniquely extended to a solution of the equations of motion, the two spaces are in fact isomorphic.

One can generalise this picture to gauge field theory which, for simplicity of exposition, we consider on a half cylinder $M=\Sigma\times \mathbb{R}_{\geq 0}$ with $\Sigma$ a compact manifold. When $\Sigma$ has a boundary we will say that $M$ has corners and we call $\pp\Sigma$ \emph{the corner submanifold}. Then, the restriction of the (off-shell) field theoretic data to $\Sigma$ defines a symplectic manifold $(\X,\omega)$, called the \emph{geometric phase space} of the theory.\footnote{Our construction follows \cite{KT1979}, but it relates to the ``covariant phase space'' construction.  See Appendix \ref{app:covariant} for an explicit discussion and further references, and \cite{CMRCorfu,CSTime,CattaneoSchiavinaPCH} for a modern point of view pertinent to this work.}
However, in a gauge theory, not all field configurations on $\Sigma$ can be extended to a solution in a (small-enough) cylinder, but suitable \emph{constraints} on the configurations over $\Sigma$ need to be imposed. Denote by $\C\subset \X$ the set of such configurations.

We are interested in the case where the geometric phase space $\X$ is the space of sections of a fibre bundle over $\Sigma$ endowed with a weakly symplectic structure $\omega$, $\C$ is the vanishing locus of a set of local Hamiltonian functions in involution\footnote{This notion must be taken with a grain of salt. See for example \cite[Remark 3.4]{BSW}.} i.e.\ of a set of first-class constraints, \emph{and} the characteristic foliation of $\C$, denoted $\C^\omega$, is generated by the Hamiltonian vector fields of said functions \cite[Sec. 2.1 and 3.8]{CMRCorfu}. Furthermore, we will assume that the set $\C\subset\X$---given by constrained configurations defined by a set of differential equations---is a smooth submanifold (Assumption \ref{ass:setup}).

This is not generically the case, but at times one has the following additional structure: the coisotropic submanifold $\C$ is the zero-level set of a momentum map associated to the local action of an infinite dimensional Lie group $\Go$, the \emph{constraint gauge group}.\footnote{Notably, this scenario is not realized in general relativity \cite{LeeWald}.} 
This is the setup we will study in this article.

If certain technical conditions are satisfied, one can equivalently look at the coisotropic reduction $\uCo \doteq \C/\C^\omega$ of $\C$ by its characteristic foliation $\C^\omega$, or at symplectic reduction by the corresponding group action, $\uCo \simeq \C/\Go$. In this work we will refer to this space as the \emph{constraint-reduced phase space}. When smooth, this is a symplectic manifold.

Now, if the corner manifold $\pp\Sigma$ is nontrivial, the question arises of the relationship between the constraint gauge group $\Go$ and a larger, ``full'', gauge group $\G$ acting on $\X$---which leaves $\C$ invariant and of which $\Go$ is a normal subgroup (Theorem \ref{thm:fGo}).

\medskip

To understand the distinction between $\Go$ and $\G$, let us decompose a closed manifold $\Sigma$ into two subregions $\Sigma_1$ and $\Sigma_2$ sharing a boundary interface $K$. To each subregion we associate, by restriction of fields, a symplectic space $\X_i$, and introduce the respective constraint-reduced phase spaces through the prescription outlined above: $\uCo_i=\C_i/\C_i^{\omega_i}$ with $(\C_i,\omega_i)$ constructed as follows. 

The constraint set $\C$ is defined by differential equations, i.e. by the kernel of a certain differential operator acting on elements of $\X$. By locality, this operator can be restricted to act on elements of $\X_i$ over $\Sigma_i$, and the ensuing kernel defines $\C_i$. (Here we consider the differential operator with \emph{free} boundary conditions so that $\C_i$ and $\uCo_i$ are defined intrinsically to $\Sigma_i$, with no reference to the ambient manifold $\Sigma$.) Any solution to the constraint equations over $\Sigma$, seen as an element of $\X$, restricts to a solution of those same equations over $\Sigma_i$, thus defining a restriction map $\C\to \C_i$, which however need not be surjective.\footnote{E.g.\ in Chern--Simons (CS) theory, the constraint equation is a flatness condition. Consider now the case of CS theory on a trivial principal $\mathrm{U}(1)$-bundle over a a 2-disk $\Sigma$, with $\Sigma_1$ a ``concentric'' disk and $\Sigma_2$ the remaining annulus. Then flat connections on $\Sigma$ and $\Sigma_1$ are necessarily $d$-exact, whereas this is not necessarily the case on $\Sigma_2$. \label{fnt:CSexample}}
The symplectic form $\omega_i$ is also built through a locality assumption: our definition of a gauge theory requires the assignment of a (suitably local) symplectic form \emph{density} $\bom$ which can then be integrated over either $\Sigma$ or $\Sigma_i \subset \Sigma$ to give $\omega$ and $\omega_i$, respectively.

Hence, for each subregion we have the maps
\[
\xymatrix{
\X   \ar[d]_{\bullet\vert_{\Sigma_i}}&\;\C \ar@{_(->}[l]\ar[r]\ar[d] &\uCo\\
\X_{i}  &\;\C_{i} \ar@{_(->}[l] \ar[r] &\uCo_{i}\\
}
\]

Now, classical observables are defined as functions over constrained field configurations that are invariant under the gauge group action, and since the observables for a subsystem should be embeddable into observables of the larger system, three distinct problems arise. 

The first one is due to the non-local nature of the solution space of the constraint equations: global obstructions might arise that prevent a solution over the (sub)manifold with boundary $\Sigma_i$ from being extendable to the entirety of $\Sigma$; in concrete applications, this problem can relate, for instance, to the cohomology of $\Sigma$ (cf.\ Footnote \ref{fnt:CSexample}). 
This can be thought of as a \emph{globalisation} problem.

The second problem is due to the gauge invariance properties of observables; in other words, even if all solutions to the constraint over $\Sigma_i$ were extendable to $\Sigma$, one would still face the problem that not all functions over $\uCo_i$ would correspond to the restriction of an invariant function over $\C$. 
This is because functions over $\uCo_i$ need not be invariant under elements of the characteristic distribution of $\C$ whose restriction to the interface $K$ is nontrivial. (In field theory, the characteristic distribution $\C^\omega$ is local on $\Sigma$ as it is given in terms of sections of bundles over it; as such it can be restricted to $K\hookrightarrow \Sigma$.) Loosely speaking, this means that functions over $\C_i$ need not be invariant under gauge transformations which are nontrivial at $\pp\Sigma_i =\K$. 
As we will see, this problem turns out to be local in nature and, at least in certain cases, can be addressed from within $\Sigma_i$, thought of as a manifold with boundary. This problem of defining \emph{quasi-local} observables could be named the problem of \emph{cutting}.

Finally, the third problem relates to whether the fully gauge invariant observables over $\Sigma_{1,2}$, built to address the cutting problem, are capable of ``generating'' the entire algebra of gauge invariant observables over $\Sigma$. This problem is referred to as \emph{gluing}. (See \cite[Section 6]{RielloGomes} for results within Yang--Mills theory, and \cite{cattaneo2022note} for an analysis within a cohomological framework compatible with the investigations considered here.) 

\medskip

~
In this work we address ``cutting'' in its simplest---yet common---embodiment. We consider gauge theories for which the (symplectic) space of fields $\X=\Gamma(\Sigma,E)$, the space of sections of a fibre bundle on a manifold with boundary $(\Sigma,\pp\Sigma)$, carries the action of a local Lie group $\G$ that admits a \emph{local} momentum map $H\colon \X \to \dual{\fG}$; the notion of observable introduced above naturally leads to considering functions on the quotient $\uuC\doteq\C/\G$, which we investigate.
(Main definitions in Section \ref{sec:localfieldtheory}, see in particular Definition \ref{def:Ham+mommaps}. The main set of requirements we will need is summarized in Assumption \ref{ass:setup} of Section \ref{sec:constraintsandfluxes}, whereas further technical assumptions are presented throughout the text: Assumptions \ref{ass:groups}, \ref{ass:smoothuCo}, \ref{ass:symplecticclosure}, \ref{ass:smoothsuperselections}, \ref{ass:symplecticclosure-f} and \ref{ass:AnnIm}.)

Thinking of $(\Sigma,\partial\Sigma)$ as a standalone manifold, and assuming it has a \emph{nontrivial} boundary---which we hereafter refer to as \emph{corner}, following the spacetime perspective---the constraint set $\C$ no longer coincides with the zero-level set of $H$ due to the interference of boundary contributions. 
Indeed, while it is true that $H$ is still a momentum map for the action of $\G$, symplectic reduction at $H=0$ fails to describe the correct physical phase space, for $H^{-1}(0)$ imposes additional spurious boundary conditions: i.e.\ we only have $H^{-1}(0)\subsetneq \C$. This is the problem of cutting at its heart.

One consequence of this observation is that $\uuC$ can be thought of as the final result of a two-stage symplectic reduction procedure \cite{marsden2007stages}, adapted to the nontrivial case of local subgroups of gauge groups which naturally emerge from the presence of boundaries.

The organization into two stages is controlled by a unique decomposition of the local momentum map $H$ into bulk and boundary terms---the \emph{constraint and flux maps}, denoted $\Ho$ and $h$ respectively---which in turn uniquely determine, through a maximality condition, a normal subgroup $\Go\subset \G$ (the constraint gauge group) and the associated quotient $\Gred = \G/\Go$ (the flux gauge group). (This requires Assumptions \ref{ass:setup} and \ref{ass:groups}.)

The first-stage reduction---also called constraint reduction---is then symplectic reduction for the action of $\Go$ at the zero-level set of the \emph{constraint} momentum map $J_\circ$, built out of $H_\circ$ and $h$. The resulting quotient, $\C/\Go$, coincides with the constraint-reduced phase space $\uCo=\C/\C^\omega$.  (Section \ref{sec:bulkreduction} and in particular Definition \ref{def:uomegao} and Proposition \ref{prop:symplclosVo}, as well as Assumptions \ref{ass:smoothuCo} and \ref{ass:symplecticclosure})

The second-stage reduction arises then naturally from a residual Hamiltonian action $\Gred \circlearrowright \uCo$ (Propositions \ref{prop:bulkredflux} and \ref{prop:hamactionGp}), which can instead be performed at any value of the \emph{flux} momentum map, compatible with the constraint, i.e.\ for every \emph{on-shell} flux (Definition \ref{def:flux}). The fully-reduced phase space, when smooth, is then the Poisson manifold $\uuC=\C/\G \simeq \uCo/\Gred$, whose symplectic leaves correspond to the level sets of the flux map, and thus to coadjoint/affine orbits of the flux gauge group. (This requires Assumptions \ref{ass:symplecticclosure}, \ref{ass:smoothsuperselections}, \ref{ass:symplecticclosure-f}.)  We call these symplectic leaves the (on-shell) ``flux superselection sectors''.  
In the particular case of Chern--Simons theory with $\Sigma$ a punctured disk, this correctly reproduces the coadjoint orbits of the centrally-extended loop group (Section \ref{sec:ex-ChernSimons}), consistently with the known results of Meinrenken and Woodward \cite{MeinrenkenWoodward}.

We chose the name ``flux superselection sector'' in analogy with the quantum theory, where the Hilbert space of a theory factorises over superselection sectors corresponding to the irreps of some central subalgebra of observables. In our classical framework, it is the Casimirs (of the Poisson structure) characterising the flux orbits that yield gauge invariant functionals which are central in the Poisson algebra.

Section \ref{sec:cornerred} concludes the description of the phenomenon of flux superselection from the point of view of symplectic reduction, and what follows is not strictly necessary from this point of view.

In Section \ref{sec:cornerdata} we obtain a description of flux superselections that is both intrinsic to the boundary $\pp\Sigma$ and ``off-shell'', i.e.\ such that it does not require the knowledge or description of the constraint set $\C$. Both of these features simplify the task of identifying the possible superselection sectors attached, in a given theory, to a manifold with a given corner $\pp\Sigma$. The information about the shell---which will in general strongly depend on global properties of $\Sigma$, as opposed to only $\pp\Sigma$---can be fed at the very end of this analysis, and in practice further limits the physically available superselection sectors (cf.\ Footnote \ref{fnt:CSexample}).

This construction works as follows. First, a space of boundary configurations $\X_\pp$ and a boundary Lie algebra $\fGpoff$ are constructed by means of the flux map $h$. Then, boundary configurations inherit a Lie algebra action $\rho_\pp\colon\X_\pp\times\fGpoff\to T\X_\pp$
(Theorem \ref{thm:corneralgd}, requiring the simplifying assumption \ref{ass:AnnIm}.) This allows us to define a Poisson manifold $(\X_\pp,\Pi_\pp)$ by setting $\Pi_\pp^\sharp = \rho_\pp$ (Theorem \ref{thm:Poissontheorem}; or the summary Theorem \ref{mainthm:Poisson}(\textit{i})).
Those boundary configurations in $\X_\pp$ that come from the restriction to the boundary of constrained configurations in $\C$ define a Poisson submanifold $\C_\pp\subset\X_\pp$.
One thus finds those symplectic leaves of $\Pi_\pp$ that are compatible with the prescribed cobordism $\Sigma$ (Definition \ref{def:offshellSSS} and Proposition \ref{prop:onshellsuperselections}). 

In most cases considered in this paper, one retains a corner-\emph{local} description of the above data (Proposition \ref{prop:localGredoff}), which practically simplifies the discussion of nonlocal intricacies associated with the (solutions of the) constraint equation.

We close the circle in Section \ref{sec:onshellequivalence} by showing that both Poisson spaces $\C_\pp$ and $\uuC$ fibrate over a common base, the space of superselections $\mathcal{B}\doteq \C_\pp / \rho_\pp(\fG_\pp)$ (Theorem \ref{thm:spaceofsuperselection}; or the summary Theorem \ref{mainthm:Poisson}(\textit{ii})). 

To understand the physical meaning of this result, consider the set of functions on $\uuC$ that Poisson commute with any other function. This is the center of the Poisson algebra over the space of Hamiltonian functions $C^\infty_{\mathrm{Ham}}(\uuC)$ on $\uuC=\uCo/\Gred$, and its elements are generally called Casimir functions of the Poisson structure. 
Note that, in particular, any function on $\uuC$ that is constant over a given superselection sector $\uuS_{[f]}\subset \uuC$ is a Casimir.
Now, Casimir functions on $\uuC=\uCo/\Gred$ lift to a set of basic (i.e.\ $\Gred$-invariant) functions on $\uCo$, and every other $\Gred$-invariant function on $\uCo$ Poisson commutes with the ``lifted'' Casimirs.\footnote{Recall that $\underline{\pi}\colon \uCo \to \uuC$ is a Poisson map, Theorem \ref{mainthm:Poisson} item \ref{thmitem:main2-algebroids}.}  Since $\Gred$-invariant functions on $\uCo$ are the observables of the theory, any sensible quantization of this space of observables will preserve the centrality of the Casimirs, which are then quantized to the center of the quantum algebra of observables. This provides a road map to the quantization of superselection sectors, where one expects the \emph{quantum} sectors to be irreducible blocks of the algebra of quantum observables (given a choice of module that represents the algebra). In other words, we expect these blocks to correspond to the quantizations of each (classical) superselection sector $(\uuS_{[f]},\uuomegao_{[f]})$.

However, studying $\uuC$ and its Casimirs might be challenging and it is therefore useful to have a simpler proxy that captures the same superselection structure. This is where $(\C_\pp,\Pi_\pp)$ becomes important. This is a Poisson space that makes precise the notion of the on-shell Noether corner algebra used in the physics literature. It has the property that its Casimirs are $\fGpoff$-invariant function over $\C_\pp$, i.e. functions on the space of superselections $\mathcal{B}$. Theorem \ref{thm:spaceofsuperselection}---or the summary Theorem \ref{mainthm:Poisson}(\textit{ii})---then guarantees that functions on $\mathcal{B}$ are in turn also Casimirs of the fully reduced phase space $\uuC$ and therefore capture (at least part) of the superselection structure of the original gauge theory under study.

\medskip

The Hamiltonian methods we just summarized apply to a large class of Lagrangian gauge theories---including Yang--Mills, Chern--Simons, and $BF$ theories---whenever $\Sigma$ is a boundary component of a manifold $M$ \emph{with corners}. Corners naturally emerge in Yang--Mills theory when a spacelike and a timelike boundary intersect, or if we are considering a region bounded by a lightlike boundary. In Chern--Simons and $BF$ they are usually associated to punctures, or surgery procedures.
Yang--Mills theory, for $\Sigma\hookrightarrow M$ a spacelike submanifold with boundary, is treated in great detail in the running example Sections \ref{sec:runex-setup} throughout \ref{sec:runex-cornerdata}. In Section \ref{sec:thetaQCD} we analyze the consequences of the introduction of an extra corner term in its symplectic structure. (Yang--Mills theory with a lightlike boundary will be the subject of a separate publication.) Chern--Simons and $BF$ theories are briefly discussed in Sections \ref{sec:ex-ChernSimons} and \ref{sec:BFtheory}, respectively.

\medskip

The approach and results presented in this work are relevant for the problem of quantization of local field theories in the presence of higher codimension strata, for example in phrasing quantum field theory axiomatically as a functor between some category of (geometrized) cobordisms to some linear category. Classically, field theory is expected to assign symplectic manifolds to codimension-one submanifolds, and Poisson manifolds (or a suitable generalization thereof, see below) to codimension-two submanifolds. 
We show that (when smooth) symplectic reduction by stages indeed produces a symplectic manifold $\uCo$ and a (partial) Poisson manifold $\uuC$ associated respectively to first-stage reduction, a.k.a.\ constraint-reduction, and to second-stage reduction, a.k.a.\ flux superselection. The latter is non-trivial \emph{only} in the presence of corners. Additional distinguished objects that emerge in our construction are: the symplectic leaves of the Poisson manifold $\uuC$, which define the superselection sectors; and an off-shell Poisson manifold $\X_\pp$ intrinsic to the corner (independently, that is, of the corresponding cobordism $\Sigma$).

The assignment of symplectic and Poisson data at different codimensions can be embedded into the world of homological algebra by means of the Batalin--Fradkin--Vilkovisky framework \cite{BV1,BV2,BV3}, and the relation between boundary and corner data is well phrased within the BV-BFV formalism \cite{CMR1,CMR2}. While the boundary homological data (BFV) produces a resolution of the reduced phase space $\uCo$ (see e.g.\ \cite{CMRCorfu}), in the form of $L_\infty$ algebroid data \cite{StasheffConstraints88,SchaetzBFV}, it is expected that corner homological data be associated to Poisson-$\infty$ data \cite[Section 2.1.6]{CanepaCattaneo}.

In Section \ref{sec:cornerdata} we characterise the underlying Poisson manifold resolved by the corner homological data induced by the BV-BFV procedure\footnote{More precisely, the corner dg-manifold produced by the BV-BFV procedure for the cases we consider is the Chevalley--Eilenberg--Lichnerowicz complex associated to the Poisson manifold.} \cite[Section 2.2]{CanepaCattaneo}, in the simple but important scenario of gauge theories with a (local) Hamiltonian momentum map, and relate it to the second stage reduction of a Hamiltonian action on a space of fields on the manifold with boundary $(\Sigma,\pp\Sigma)$. 
However, we find that unless certain additional locality properties are satisfied, it is not guaranteed that our corner data matches those obtained via the BV-BFV framework (Section \ref{sec:ultralocality}).

The generalization of these results to field theories that do \emph{not} admit a momentum map description for a gauge group action is obviously of prime importance, but will be tackled elsewhere. Let us stress that some of the complications that might arise involve more general distributions than Lie algebra actions. For example, it is known that the group of diffeomorphisms that preserve a hypersurface is not a normal subgroup of the total diffeomorphism group (see e.g.\ \cite[Example 8.13]{blohmann2018hamiltonian}) and the quotient is generally a more general object such as a groupoid.

We believe that an extension of the methods and ideas contained here to more general field theories is possible. The above mentioned complication arising from boundary-preserving distributions not being ``normal'' is naturally phrased in an $L_\infty$ setting within the Batalin--Vilkovisky setting as the emergence of Poisson-$\infty$ structures at the corner (see, e.g.\ \cite{CanepaCattaneo}). A particularly interesting, natural generalization of our method that can shed light on some of these issues, is one in terms of Hamiltonian Lie  algebroids \cite{blohmann2018hamiltonian}. We will turn to that in a subsequent work.

\subsection*{Summary of results}

In this work we study the Hamiltonian formulation of local, Lagrangian, gauge field theories admitting a local momentum map. We summarise here our main results, which will depend on a number of assumptions (\ref{ass:setup}, \ref{ass:groups}, \ref{ass:smoothuCo}, \ref{ass:symplecticclosure}, \ref{ass:smoothsuperselections}, \ref{ass:symplecticclosure-f} and \ref{ass:AnnIm}), reported below.

\begin{maintheorem}[Reduction in the presence of corners]\label{mainthm:redsummary}
Let $(\X,\bom,\bH)$ be a locally Hamiltonian $\G$-space satisfying Assumption \ref{ass:setup}. Its local momentum form $\bH$ splits into the order-0 constraint form $\bHo$ and the $d$-exact  flux form $d\bh$, i.e.
\[
\bi_{\rho(\xi)} \bom = \bd \langle \bH,\xi\rangle, \quad\text{with}\quad \bH = \bHo + d\bh,
\]
for all $\xi\in\fG = \mathrm{Lie}(\G)$. Denote by $H,\Ho,\h$ the respective integrated local forms.

Let $\C \doteq \Ho^{-1}(0) \subset \X$ be the constraint set and, given a reference configuration $\phi_\smbullet\in\C$, let $\F \doteq \Im(\iota_\C^*\h_\smbullet)$ be the space of on-shell fluxes, where $\h_\smbullet = \h - \h(\phi_\smbullet)$ is the adjusted flux map.

Also, the integrated local form $\k_\smbullet\in \iloc^0(\fG\wedge\fG)$, defined as
    \[
    k_\smbullet(\xi,\eta) \doteq \langle\L_{\rho(\xi)} h_\smbullet,\eta\rangle  - \langle h_\smbullet , [\xi,\eta] \rangle,
\]
is a Chevalley--Eilenberg cocycle and defines an affine action of $\fG$ on $\F$, via \[
\langle\ad^*_{k_\smbullet}(\xi) \cdot f,\eta\rangle \doteq \langle f, [\xi,\eta]\rangle + k_\smbullet(\xi,\eta).
\]
Denote by $(\mathcal{O}_f,\Omega_{[f]})$ the orbit of $f\in\F$ under this action together with its Kirillov--Kostant--Souriau symplectic form.

Then,  
\begin{enumerate}[label=(\roman*)]
    \item\label{thmitem:main-Go} there exists a maximal, constraining, just, Lie ideal $\fGo\subset\fG$, i.e.\ admitting a unique, local, equivariant momentum map $\Jo$ such that $\C=\Jo^{-1}(0)$ (Definition \ref{def:constraintideal}). It is given by $\fGo = \Ann(\F,\fG)$;
    \item\label{thmitem:main-constraint red} under Assumptions \ref{ass:groups} and \ref{ass:symplecticclosure}, $\C\subset \X$ is coisotropic and
    the \emph{constraint-reduced phase space} $(\uCo,\uomegao)$, defined as the coisotropic reduction $\uCo\doteq\C/\C^\omega$ for $\omega=\int_\Sigma \bom$, is given by the symplectic reduction of $(\X,\omega)$ w.r.t.\ the constraint gauge group $\Go$ generated by $\fGo$, i.e.\ $\uCo = \C/\Go$.
    Assumption \ref{ass:smoothuCo} further ensures it is smooth; 
    \item\label{thmitem:main-resHamaction} additionally, $h_\smbullet$ descends to a momentum map $\uh : \uCo \to \fGred^*$ for the action of $\Gred \doteq \G/\Go$ on $(\uCo,\uomegao)$:
    \[
    \bi_{\rho(\underline{\xi})}\uomegao = \bd \langle \uh,\underline{\xi}\rangle,
    \]
    and $\uh$ is equivariant up to the cocycle $k_\smbullet$, seen here as a CE cocycle of $\fGred\doteq\fG/\fGo$.
\end{enumerate}

\medskip

Moreover, denoting by $\S_{[f],q}\subset\C$ the $q$-th connected component of the preimage  $(\iota_\C^*\h_\smbullet)^{-1}(\mathcal{O}_f)$ of a co-adjoint orbit $(\mathcal{O}_{f},\Omega_{[f]})$ for $f\in\F\subset \fG^*$, if Assumption \ref{ass:symplecticclosure}, \ref{ass:smoothsuperselections} and \ref{ass:symplecticclosure-f} also hold, then
\begin{enumerate}[label=(\roman*), resume]
    \item\label{thmitem:main-cornerred} the \emph{fully-reduced phase space} $\uuC \doteq \C / \G\simeq \uCo/\Gred$ is the disjoint union of a collection of symplectic spaces $(\uuS_{[f],q}, \uuomegao_{[f],q})$, called flux superselection sectors:
    \begin{equation*}
        \uuC = \bigsqcup_{\mathcal{O}_f,q} \uuS_{[f],q}, 
        \qquad 
        \uuS_{[f],q} \doteq \S_{[f],q}/\G ,
        \qquad
        \pi^*_{[f],q}\uuomegao_{[f],q} = \iota_{[f],q}^* \omega - (h_\smbullet\circ \iota_{[f],q})^*\Omega_{[f]},
    \end{equation*}
where $\pi_{[f],q}:\S_{[f],q} \to \uuS_{[f],q}$, and $\iota_{[f],q}:\S_{[f],q} \hookrightarrow \X$. \item\label{thmitem:main-Poisson} the \emph{fully-reduced phase space} $\uuC$ is a (partial) Poisson manifold, of which the superselection sectors $(\uuS_{[f],q},\uuomegao_{[f],q})$ are the symplectic leaves.
\end{enumerate}
\end{maintheorem}

In the diagram we summarise the two-stage reduction with wavy lines, and emphasise the unified construction presented in Theorem \ref{mainthm:redsummary}, above, with dashed lines (we omit the $q$-label):

\[
\xymatrix@C=.75cm{
(\X,\omega)
	\ar@{~>}[rr]^-{\tbox{2.2cm}{constraint reduction \\(by $\Go$ at $0$)}}
&&(\uCo,\uomegao)
	\ar@{~>}[rr]^-{\tbox{2.2cm}{flux superselection (by $\Gred$ at $\mathcal{O}_f$)}}
&&(\uuS_{[f]},\uuomegao_{[f]})\\
&{
    \;\C\;
    \ar@{_(->}[ul]^-{\iota_\C}
	\ar@{->>}[ur]_-{\pi_\circ}
	}
&&{
    \;\;\uS_{[f]}
    \ar@{_(->}[ul]^-{\underline{\iota}_{[f]}}
	\ar@{->>}[ur]_-{\underline{\pi}}
	}\\
&&{
    \;\;\S_{[f]}
    \ar@{_(->}[ul]^-{{\iota}^\C_{[f]}}
	\ar@{->>}[ur]_-{\pi_\circ\vert_{\S_{[f]}}}
	\ar@{_(-->}@/^2.7pc/[uull]^{\iota_{[f]}}
	\ar@{-->>}@/_2.7pc/[uurr]_{\pi_{[f]}}
	}\\
}\]

~

We specialise Theorem \ref{mainthm:redsummary} to the case of Yang--Mills theory for $\Sigma$ a non-null\footnote{The null case is discussed in \cite{RielloSchiavinanull}.} manifold with boundary, endowed with a trivial\footnote{This assumption is made for simplicity of exposition. More general statements are found in the sections about Yang--Mills theory.} principal $G$-bundle $P\to \Sigma$, for $G$ either semisimple or Abelian. We further denote $\fg \doteq \mathrm{Lie}(G)$ and $\tr( \cdot \ \cdot )$ an $\Ad$-invariant, nondegenerate, bilinar form on $\fg$:

\begin{maintheorem}[Reduction with corners for non-null YM theory]\label{mainthm:YMon}
    To a smooth non-null manifold with boundary $(\Sigma,\pp\Sigma)$, Yang--Mills theory with structure group $G$ (taken to be either semisimple or Abelian) assigns the locally Hamiltonian $\G$-space $(\X,\bom,\G,\bH)$ where
    \begin{enumerate}
        \item $\X=T^\vee\Acal \simeq \Acal\times \mathcal{E}\ni(A,E)$, with $\Acal = \mathrm{Conn}(P\to\Sigma)\simeq\Omega^1(\Sigma,\fg)$ and $\mathcal{E}\simeq \Omega^{\mathrm{top}-1}(\Sigma,\fg)$, is the geometric phase space,
        \item $\bom = \tr(\bd E \wedge \bd A)$ is the (canonical) symplectic density on $\X$,
        \item $\G= \mathrm{VAut}(P\to\Sigma) \simeq C^\infty_0(\Sigma,G)$ is the gauge group, with action $(A,E)\triangleleft g = ( g^{-1} Ag + g^{-1}dg , g^{-1} E g)$,
        \item $\langle \bH ,\xi\rangle = (-1)^{\mathrm{dim}(\Sigma)}\tr(E \wedge d_A\xi)$ is the (equivariant) local momentum form ($\bK=0$).
    \end{enumerate}
    
    Assumption \ref{ass:setup} is satisfied by $(\X,\bom,\G,\bH)$ when $G$ is Abelian. When $G$ is semisimple, Assumption \ref{ass:setup} is satisfied if $\Acal$ is henceforth reinterpreted as denoting the (dense) subset of \emph{irreducible} principal connections, and $\X$ as the corresponding sub-cotangent bundle, etc. This change in notation will be left implicit in the rest of the theorem.
    
    The constraint- and flux- forms read, respectively,
    \[
    \langle\bHo,\xi\rangle = \tr(d_AE \xi) \qquad \text{and}\qquad d\langle\bh,\xi\rangle = d\tr(E\xi).
    \]
    The constraint set is then the space of configurations satisfying the Gauss constraint $\C = \{(A,E)\in\X \ \vert \ d_A E =0\}$. Denoting $\fGpoff = C^\infty(\pp\Sigma,\fg)$ and $\fg\hookrightarrow\fGpoff$ the constant functions, the flux space is
    \[
    \F \simeq 
        \begin{cases}
             \fGpoff^\vee & \text{$G$ semisimple}\\
             \{f\in\fGpoff^\vee \ | \ \langle f, \chi\rangle = 0 \ \forall \chi \in \fg\} & \text{$G$ Abelian}
        \end{cases}
    \]

    The constraint gauge group is the locally exponential, connected, Lie subgroup of $\G$ integrating, respectively,
    \[
    \fGo =\Ann(\F,\fG) =
    \begin{cases}
    \{\xi\in \fG\ |\ \xi\vert_{\pp\Sigma} = 0\} & \text{$G$ semisimple},\\
    \{\xi\in \fG\ |\ \exists\chi\in \fg, \ \xi\vert_{\pp\Sigma} = \chi\}& \text{$G$ Abelian}
    \end{cases}
    \]
    
    Furthermore, the flux gauge group $\Gred = \G/\Go$ is a central extension by a discrete group $\mathcal{K}$ of $C_0^\infty(\pp\Sigma,G)$ in the semisimple case and of $C_0^\infty(\pp\Sigma,G)/G$ in the Abelian case. In the particular case $\Sigma \simeq D^n$ (the $n$-disk) $\mathcal{K}$ is isomorphic to $\pi_n(G)$.

    The first-stage, constraint-reduction is locally symplectomorphic to
    \[
    \uCo = \C/\Go \simeq T(\Acal/\Go)\simeq_\loc T(\Acal/\G) \times \Gred \times \F.
    \]
    The fully-reduced phase space is a Weinstein space,\footnote{See \cite{marsden2000orbit} and \cite[Sections 6.6.12 and 6.6.18]{RatiuOrtega03}.} locally symplectomorphic to 
    \[
    \uuC \simeq_\loc T(\Acal/ \G) \times \F 
    \]
    and decomposes in the disjoint union of flux superselection sectors as:
    \[
    \uuC \simeq \bigsqcup_{\mathcal{O}_f\subset \F} \uuS_{[f]} 
    \qquad \text{with} \qquad \uuS_{[f]} \simeq_\loc T(\Acal/\G) \times \mathcal{O}_f.
    \]
    In particular, the second-stage reduction map locally reads:
    \[
    \underline{\pi}:\uCo \to \uuC , \qquad ([A], E_\rad, \underline{k}, f) \mapsto (A_\rad, E_\rad, \Ad^*(\underline{k})\cdot f),
    \]
    where $([A],E_\rad)$ denote variables on $T(\Acal/\G)$, with $E_\rad$ identified with a $E\in T^\vee_A\Acal$ such that $d_AE=0$ and $\mathsf{n}E=0$.
\end{maintheorem}

\begin{remark}[Cotangent reduction]
    The fact that in Yang--Mills theory one has $\uCo \simeq T(\Acal/\Go)$ and $\uuC \simeq_\loc T(\Acal/\G) \times \F$ is expected from the general theory of symplectic reduction for lifted actions. In finite dimensions, lifting $\G\circlearrowright \calPhi$ to a Hamiltonian action $\G\circlearrowright T^*\calPhi$ one has that, reduction at the zero level set of the momentum map yields (when smooth) $T^*\calPhi//_0\G \simeq T^*(\calPhi/\G)$. A similar result holds when taking symplectic reduction via orbits at different regular values of the momentum map, i.e. 
    \[
    T^*\calPhi//_{\mathcal{O}_f}\G \simeq_\loc T^*(\calPhi/\G)\times \mathcal{O}_f, \qquad T^*\calPhi//_{\mathcal{O}_f}\G\simeq T^*(\calPhi/\G)\times_{\calPhi/\G}\wt{\mathcal{O}}_f
    \]
    where $\wt{\mathcal{O}}_f \doteq (\calPhi\times \mathcal{O}_f)/\G \to \calPhi/\G$ \cite{marsden2000orbit,RatiuOrtega03}.
    Here, we find that one can naturally describe reduction via the tangent bundle, thus avoiding issues related to dualization on a (nonlocal) infinite dimensional space.
\end{remark}

\begin{remark}[Reducible configurations of YM theory]
    When $G$ is semisimple, the restriction to irreducible connections is required to satisfy the technical Assumption \ref{assA:isotropy}: i.e. regularity of the isotropy locus $\mathsf{I}_\rho =\{ (\phi,\xi)\in \X \times \fG \ \vert \ \rho(\phi,\xi) = 0\}$. In fact, Assumption \ref{assA:isotropy} is satisfied on every subset of connections with the same orbit type (that is with conjugate stabilizers), and not just by the irreducible ones (those with trivial stabilizer); however, not restricting to a given orbit type would yield to singular reduction, with $\Acal/\G$ (and arguably $\uuC$) a stratified manifold (see e.g.\ \cite{DiezHuebschmann-YMred,DiezPhD}).
    In the absence of corners ($\pp\Sigma=\emptyset$), this is a well-known fact since \cite{Arms1979, Arms1981, ArmsMarsdenMoncrief1981, ArmsMarsdenMoncrief:1982}.  Indeed, Theorem 1 of \cite{ArmsMarsdenMoncrief1981} shows that $\C$ is regular at all irreducible configurations, while the rest of the paper focuses on the nature of the singularities at the reducible configurations. Therefore, the fact that  $\C$ is the zero-level set of a momentum map for the action of $\Go$ (Theorem \ref{thm:fGo})---which in the presence of corners contains no nontrivial stabilizer due to the boundary conditions imposed on $\xio\in\fGo$---together with Theorem 1 of \cite{ArmsMarsdenMoncrief1981} suggests that \emph{in presence of corners} $\C$ is always smooth, and so should be $\uCo = \C/\Go$. (Arms et al.\ in \cite{ArmsMarsdenMoncrief1981} work within a Banach setting; reducible configurations and singular reduction are studied in the Fr\'echet setting, but in the absence of boundaries, in \cite{DiezHuebschmann-YMred, DiezRudolph2020, DiezPhD}.) To prove these facts in the Fr\'echet setting one can adapt the arguments of Section \ref{sec:runex-firststage}. This can technically be achieved by employing the Hodge--DeRahm decomposition with Dirichlet, rather than Neumann, boundary conditions (cf.\ Appendix \ref{app:Hodge}). We will leave this aspect of the application of our general framework to YM theory to subsequent work.
\end{remark}

Looking at corner configurations, we show the following:
\begin{maintheorem}[On- and off-shell corner data]\label{mainthm:Poisson}
Under Assumptions \ref{ass:setup} and \ref{ass:AnnIm},
let $\A=\X\times \fG$ be the action Lie algebroid for the Hamiltonian $\G$ space $(\X,\bom)$. Then,
\begin{enumerate}[label=(\roman*)]
    \item \label{thmitem:main2-algebroids} there is a Lie algebroid morphism
\[
\xymatrix{
\A \ar[d]_{\pi_\pp} \ar[r]^{\rho} & T\X \ar[d]^{\bd\pi_{\pp,\X}} \\
\A_\pp \ar[r]^{\rho_\pp} & T\X_\pp,
}
\]
where $\pi_\pp = \pi_{\pp,\X} \times\pi_{\pp,\fG}$ is a surjective submersion, and $\A_\pp=\X_\pp \times \fGpoff$ is a symplectic (action) Lie algebroid supported on $\pp\Sigma$ with anchor $\rho_\pp$, whose base $(\X_\pp,\Pi_\pp)$ is a (partial) Poisson manifold, such that $\Pi_\pp^\sharp = \rho_\pp$.
\item \label{thmitem:main2-diagram} there exists a commuting diagram of Poisson manifolds
\[
\xymatrix{
 \uCo \ar[d]_{\pi_{\pp,\uCo}}\ar[r]^{\underline{\pi}} & \uuC \ar[d]^{\underline{q}} \\
 \C_\pp \ar[r]^-{q_\pp} & \mathcal{B}
}
\]
where $\C_\pp \doteq \pi_{\pp,\X}(\C)$ is a Poisson submanifold of $(\X_\pp,\Pi_\pp)$, $\mathcal{B}\doteq\C_\pp/\Gpoff$ is the space of leaves of the symplectic foliation of $\C_\pp$ ($\mathcal{B}$ is equipped with trivial Poisson structure), and the maps $\underline{\pi}\colon \uCo \to \uuC$ and $q_\pp \colon \C_\pp \to \mathcal{B}$ are the respective group-action quotients by $\Gred$ and $\Gpoff$ respectively. 
We call $\mathcal{B}$ the space of on-shell superselections.
\item \label{thmitem:main2-ultralocal} If $\bom$ is ultralocal and $\rho$ is order 1 as a local map, then $\fGpoff$ is a local Lie algebra on $\pp\Sigma$.
\end{enumerate}
\end{maintheorem}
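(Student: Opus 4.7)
\textbf{Proof strategy for Main Theorem \ref{mainthm:Poisson}.}

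The plan is to prove the three parts in order, building the corner data from the ingredients already identified in the bulk reduction of Main Theorem \ref{thm:redsummary}.

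For part \ref{thmitem:main2-algebroids}, I would first produce the correct off-shell boundary gauge algebra. On-shell we already have $\fGred = \fG/\fGo$ with $\fGo = \Ann(\F)$. The plan is to introduce an off-shell ideal $\fNoff \subset \fG$ of gauge parameters whose infinitesimal action is pure-bulk, i.e.\ parameters $\xi$ such that $\langle \bh, \xi\rangle$ and its variations vanish as a local form on $\pp\Sigma$. Assumption \ref{ass:AnnIm} is what ensures that $\fNoff$ is a Lie ideal and that the quotient $\fGpoff \doteq \fG/\fNoff$ is well defined (this is the off-shell analogue of the identity $\fGo = \Ann(\F)$). The base $\X_\pp$ is then defined as the image of the restriction map $\pi_{\pp,\X}\colon \X \to \X_\pp$ whose fibres are the orbits of $\fNoff$; surjective-submersion-ness is inherited from the surjectivity of restriction to the boundary. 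The anchor $\rho_\pp$ is defined by descending $\rho$ along $\pi_\pp = \pi_{\pp,\X}\times\pi_{\pp,\fG}$, and the diagram commutes by construction. To equip $\A_\pp$ with a symplectic structure, I would define $\Pi_\pp$ on $\X_\pp$ by the formula $\{F,G\}_\pp \doteq \langle \bh, [\xi_F,\xi_G]\rangle + k_\smbullet(\xi_F,\xi_G)$ for Hamiltonian generators $\xi_F,\xi_G \in \fGpoff$, using the Chevalley--Eilenberg cocycle $k_\smbullet$ from Main Theorem \ref{thm:redsummary}. The identity $\Pi_\pp^\sharp = \rho_\pp$ is then the definition, and Jacobi follows from the cocycle condition on $k_\smbullet$ together with the Lie bracket on $\fGpoff$.

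For part \ref{thmitem:main2-diagram}, I would set $\C_\pp \doteq \pi_{\pp,\X}(\C)$ and check it is a Poisson submanifold of $(\X_\pp,\Pi_\pp)$: this uses the fact that $\fGo\subset \fG$ projects trivially under $\pi_{\pp,\fG}$ (since constraint gauge transformations vanish at the corner off-shell), so $\Pi_\pp$ tangentially preserves $\C_\pp$. The space $\mathcal{B} = \C_\pp/\Gpoff$ is then the leaf space of the symplectic foliation of $\Pi_\pp|_{\C_\pp}$, which carries the trivial Poisson structure by definition of the leaf-space quotient. Commutativity of the square is checked orbit-wise: both the constraint-then-flux quotient $\uCo \to \uX \to \mathcal{B}$ and the restrict-then-quotient $\uCo \to \C_\pp \to \mathcal{B}$ identify two configurations iff they lie in the same full $\G$-orbit and have the same boundary data (equivalently, the same flux up to coadjoint/affine action of $\Gred$). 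The surjection onto $\C_\pp \times_\mathcal{B} \uX$ then follows from the universal property of the fibre product, once smoothness is assumed.

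For part \ref{thmitem:main2-ultralocal}, ultralocality of $\bom$ implies that the flux $\bh$ and the cocycle $\bk_\smbullet$ are both local forms on $\pp\Sigma$ (no boundary derivatives spilling into the bulk). Rank-$1$ of $\rho$ as a local map means $\langle \bh, \xi\rangle$ depends only on $\xi|_{\pp\Sigma}$ and its first normal jet, so $\fNoff$ is cut out by the vanishing of a first-order local expression in $\xi$ supported on $\pp\Sigma$. The quotient $\fGpoff = \fG/\fNoff$ therefore identifies with the sections of a vector bundle over $\pp\Sigma$ equipped with a Lie bracket defined by local formulas, i.e.\ a local Lie algebra on $\pp\Sigma$.

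The main obstacle I anticipate is in part \ref{thmitem:main2-algebroids}, namely verifying that $\Pi_\pp$ is a genuine Poisson bivector on $\X_\pp$ and that $\A_\pp$ is a \emph{symplectic} Lie algebroid in the full sense (not merely an action algebroid whose base is Poisson). Both rely on delicate bookkeeping of the cocycle $k_\smbullet$: the equivariance failure of $\uh$ under $\Gred$ must translate precisely into the twist needed for Jacobi of $\Pi_\pp$ on $\X_\pp$, and the identification $\Pi_\pp^\sharp = \rho_\pp$ must be checked to be consistent with the quotient by $\fNoff$. The role of Assumption \ref{ass:AnnIm} is exactly to guarantee that this consistency holds, and its use should be traced carefully.
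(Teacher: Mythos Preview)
Your proposal has the right overall shape but contains two genuine gaps, one in the construction of $\X_\pp$ and one in the descent argument for part \ref{thmitem:main2-diagram}.

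\textbf{Construction of $\A_\pp$ and $\X_\pp$.} You define $\X_\pp$ as the quotient of $\X$ by the $\fNoff$-orbits under $\rho$. This is not what the paper does, and it would not work: the fibres of $\pi_{\pp,\X}$ are not gauge orbits. The paper instead equips $\A=\X\times\fG$ with the closed $2$-form $\varpi \doteq \langle \cbd_H h, \cbd_V\xi\rangle$, restricts it to the space $\tA_\pp$ of transverse jets at $\pp\Sigma$, and defines $\A_\pp$ as the \emph{presymplectic reduction} of $(\tA_\pp,\wt\varpi_\pp)$. The kernel of $\wt\varpi_\pp^\flat$ splits into a $p$-horizontal part, generated by the distribution ${}^hK\X = \{\mathbb{Y}\in T\X : \bi_{\mathbb{Y}}\bd h = 0\}$, and a $p$-vertical part identified (via Assumption \ref{ass:AnnIm}) with $\X\times\fG\times\fNoff$. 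Quotienting by the horizontal kernel produces $\X_\pp$; quotienting by the vertical kernel produces $\fGpoff \simeq \fG/\fNoff$. The anchor $\rho_\pp$ descends because $\rho(\fNoff)\subset {}^hK\X$---an inclusion, not an equality, which is why your identification of the $\X_\pp$-fibres with $\fNoff$-orbits fails. The presymplectic step is not cosmetic: it is what produces the isomorphism $T^{\mathsf{p}}\X_\pp \simeq \A_\pp$, which in turn gives meaning to the functional derivative $\delta g/\delta h_\pp\in\Gamma(\A_\pp)$ needed to write $\Pi_\pp$. The paper then establishes that $\Pi_\pp$ is a genuine partial Poisson structure by a BRST-type argument on $\A_\pp[1]$: the Hamiltonian $S_\pp = \tfrac12\langle h_\pp,[c,c]\rangle + \tfrac12 k_\smbullet(c,c)$ satisfies the classical master equation, and $\Pi_\pp(f,g)\doteq\{\{S_\pp,f\}^{[1]},g\}^{[1]}$ is shown to close on $C^\infty_{\mathsf{p}}(\X_\pp)$ via a Gerstenhaber-algebra isomorphism. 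Your formula for $\Pi_\pp$ is correct, but the closure on $\mathsf{p}$-functions (which you flag as the main obstacle) is precisely what this machinery delivers.

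\textbf{Descent in part \ref{thmitem:main2-diagram}.} Your claim that ``$\fGo\subset\fG$ projects trivially under $\pi_{\pp,\fG}$'' is false in general: $\pi_{\pp,\fG}$ quotients by $\fNoff$, and one only has $\fNoff\subset\fN=\fGo$, with strict inclusion e.g.\ in Abelian Yang--Mills. The correct argument is that \emph{on shell} one has $\rho(\fGo)\vert_\C \subset \ker(\bd\pi_{\pp,\X}\vert_\C)$, because $\bi_{\rho(\xi_\circ)}\iota_\C^*\bd h = 0$ for $\xi_\circ\in\fGo=\fN$; this is what makes $\iota_\C^*\pi_{\pp,\X}$ descend to $\pi_{\pp,\uCo}:\uCo\to\C_\pp$. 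Equivariance of $\pi_{\pp,\uCo}$ under $\Gred\simeq\Gpoff$ then gives the map $\underline{q}:\uX\to\mathcal{B}$, and the surjection onto the fibre product follows from the universal property as you say.

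Part \ref{thmitem:main2-ultralocal} is essentially as you describe, except that the paper shows $\bh$ can be taken of rank $0$ (not rank $1$), so the vertical kernel equations are ultralocal on $\pp\Sigma$ and $\fGpoff$ is the space of sections of a subbundle of $\iota_{\pp\Sigma}^*\Xi$.
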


\begin{maintheorem}[Corner data for non-null YM theory]\label{mainthm:YMPoisson}
    The locally Hamiltonian data for YM theory (under the assumptions above) defines the symplectic Lie algebroid
    \[
    \A_\pp=\X_\pp\times\fGpoff \simeq \Omega^\mathrm{top}(\pp\Sigma, \fg) \times C^\infty(\pp\Sigma, \fg) \ni(E_\pp , \xi_\pp) = \iota^*_{\pp\Sigma}(E,\xi)
    \]
    with $\pi_\pp:\A \to \A_\pp, \ (A,E,\xi) \mapsto \iota^*_{\pp\Sigma}(E,\xi)$, and anchor map $\rho_\pp\colon \A_\pp\to T\X_\pp$ 
    \[
    \rho_\pp(E_\pp,\xi_\pp) = ([E_\pp,\xi_\pp],E_\pp)
    \]
    over the Poisson manifold $(\X_\pp,\Pi_\pp)$, with
    \[
    \Pi_\pp = \tr\left( E_\pp \left[ \frac{\delta}{\delta E_\pp} \stackrel{\wedge}{,} \frac{\delta}{\delta E_\pp} \right]\right),
    \]
    and $\fGpoff$ is a local Lie algebra (i.e.\ YM satisfies the assumptions of Theorem \ref{mainthm:Poisson} item \ref{thmitem:main2-ultralocal}).
    Moreover, the Poisson submanifold of on-shell corner data $\C_\pp$ is
    \[
    \C_\pp = 
        \begin{cases}
             \X_\pp & \text{$G$ semisimple}\\
             \{ E_\pp \in \X_\pp \ \vert \ \int_{\pp\Sigma} \tr(\chi E_\pp) = 0 \ \forall \chi \in \fg\} & \text{$G$ Abelian}
        \end{cases}
    \]
    and it is isomorphic to the flux space, $\C_\pp\simeq\F$.
    Finally, restricting to irreducible configurations and hence using the parametrizations 
    \[
    \uCo \simeq_\loc T^\sigma(\Acal/\G) \times \Gred \times \F \ni (A_\rad,E_\rad, \underline{k}, f), \qquad \uuC \simeq_\loc T^\sigma(\Acal/\G) \times \F,
    \]
    we have $\underline{\pi}(A_\rad, E_\rad, \underline{k}, f) = (A_\rad, E_\rad, \Ad^*(\underline{k})\cdot f)$, and $\pi_{\pp,\uCo}(A_\rad, E_\rad, \underline{k}, f) = (A_\rad, E_\rad, f)$ where we also used the identifications $\C_\pp \simeq \F$, and
    \[
    q_\pp( A_\rad, E_\rad,f) = \mathcal{O}_f = \mathcal{O}_{\Ad^*(k)\cdot f} = \underline{q}(A_\rad,E_\rad, \Ad^*(k)\cdot f) \in \mathcal{B}.
    \]
\end{maintheorem}

In Section \ref{sec:prelim} we outline preliminary material to define local field theories, while Section \ref{sec:localfieldtheory} is devoted to setting up the main definitions we will use throughout.

The core of our paper starts in Section \ref{sec:constraintsandfluxes}, where the constraint and flux gauge groups are constructed, and then Section \ref{sec:reduction}, where symplectic reduction in stages w.r.t.\ the constraint and flux groups is performed, leading to Theorem \ref{mainthm:redsummary} and Theorem \ref{mainthm:YMon}.

Finally, in Section \ref{sec:cornerdata} we analyze the off-shell corner structure of gauge field theory, leading to Theorem \ref{mainthm:Poisson} and Theorem \ref{mainthm:YMPoisson}. 

We analyze concrete examples in the sections devoted to Yang--Mills theory used as a running example (Sections \ref{sec:runex-setup}, \ref{sec:runex-decomposition}, \ref{sec:runex-fluxannihilators}, \ref{sec:runex-fluxgaugegroup}, \ref{sec:runex-firststage}, \ref{sec:runex-secondstage} and \ref{sec:runex-cornerdata}; as well as \ref{sec:thetaQCD}), and in those devoted to Chern--Simons and $BF$ theory (Sections \ref{sec:ex-ChernSimons} and \ref{sec:BFtheory}).

\subsection*{Literature comparison}

Although the problem of symplectic reduction for gauge field theory has gathered significant attention in the past, the philosophy and overall approach presented here are different.

Our aim is to outline a unified, model-independent framework, that can shed light on the interplay between corners, gauge symmetry and symplectic reduction.
In particular, we do so without imposing any conditions on the field configurations at $\pp\Sigma$. Of course, in the explicit application of our framework to specific model theories, such as Yang--Mills, Chern--Simons, or $BF$ theory, we will find a significant overlap of the techniques we employ with the existing literature---of which we now attempt a brief, and necessarily partial, survey.

Electrogmagnetism and de Rham--Hodge theory have profound ties that date back to the origins of the latter. So it is not surprising that a non-Abelian generalization of the de Rham--Hodge decomposition (see Appendix \ref{app:Hodge}) was employed to study symplectic reduction of Yang--Mills theory since the first works on the topic \cite{Singer1978,Singer1981,NarasimhanRamadas79,Arms1981,ArmsMarsdenMoncrief1981,MitterViallet1981,BabelonViallet}.

The analysis of singularities of reduction at YM configurations with nontrivial stabilisers appeared already in \cite{Arms1981,ArmsMarsdenMoncrief1981}.\footnote{For the general theory of singular reduction in finite dimension, see e.g. \cite{ArmsCushmanGotay, sjamaar1991, marsden2007stages}. For application to General Relativity, which in fact motivated most of this research, we recall in particular the names of Arms, Ebin, Fischer, Isenberg, Marsden and Moncrief. Much of these works relies on seminal results by Palais \cite{palais1957,palais1961}, and also \cite{palaisterng1988critical}.} (We will in fact avoid discussing this issue and refer the reader to these references, and to \cite{Rudolph_2002} for a review.) More recently, reduction in the smooth Fr\'echet setting for Yang--Mills theory has been studied in \cite{DiezHuebschmann-YMred, DiezRudolph2020, DiezRudolph2022} and summarized in the doctoral dissertation \cite{DiezPhD} which we will use as our main reference on the topic.\footnote{But see also \cite{AbbatiCirelliMania, AbbatiCirelliManiaMichor}.} We adopt these techniques to present a concrete example of our general construction.

In \cite{Marini92} as well as \cite{SniatyckiSchwarz94,SniatyckiSchwarzBates1996} the Hamiltonian formulation of Yang--Mills dynamics and symplectic reduction were studied in the presence of boundaries. However, non-generic boundary conditions were also imposed so to have an autonomous/well-posed dynamical problem. In our language, these boundary conditions translate to $h=0$, which clearly obscures the corner structure.

During the concluding stages of this work, we learned about \cite{MeinrenkenWoodward} where a symplectic analysis of the moduli space of flat connections on Riemann surfaces with boundaries (Chern--Simons theory), and their gluing, is carried out in great detail. That work follows a 2-stage reduction approach in line with the one investigated in this work. Among other results, it shows that Verlinde formula \cite{VerlindeFormula} can be derived as an application of reduction by stages. Earlier results on the moduli space of flat connections on manifolds with boundary are due to Donaldson \cite{Donaldsonboundary}.

Our key construction of on-shell flux annihilators---which leads to the constraint gauge group (Definition \ref{def:fGo})---is akin to the annihilating sheaf construction of \cite{LinSjamaar}. Their construction naturally applies to reduction by stages, although it is not an explicit application considered in their work, and one recovers the annihilator bundle we discuss in this work.

In \cite{LysovReduction}, similar ideas aiming at the characterization of reduced phase spaces as two-stage symplectic reduction are studied mainly in the context of $d$-dimensional $BF$ theory (which we study in Section \ref{sec:BFtheory}). The way that symplectic reduction by stages is employed there is---to the best of our understanding---different from ours.

The relationship between bulk and boundary data for local field theories that we employ here is inspired by the constructions  proposed in \cite{CMR1,CMR2}, and the local version described in \cite{MSW}. In fact, the construction of corner data we outline in section 6 is a ``degree-0 version'' of the BV-BFV procedure of \cite{CMRCorfu,CMR1}, itself an extension of the procedure outlined in \cite{KT1979}. 

It was shown in \cite{CanepaCattaneo} that the BV data  for a local field theory on a manifold with corners is expected to associate a Poisson-$\infty$ structure  to its codimension 2 stratum. This is an embodiment of the general procedure outlined in \cite{CMR1}.
In this work we give a ``classical'' explanation for that, and show that this indeed is expected to be the consequence of the second stage symplectic reduction.

Finally, the application of our framework to Yang--Mills theory recovers some of the results obtained through more ad-hoc techniques in a series of works by one of the authors and H.\ Gomes \cite{RielloGomesHopfmuller, RielloGomes, RielloSciPost}, where a gluing theorem was also discussed.\footnote{See also \cite{RielloEdge} and \cite{GomesPhD}.}

Here, in addition to that we provide a general, theory-independent, geometric picture for Hamiltonian reduction by stages for field theory on manifolds with corners in the Fr\'echet setting.

\section*{Acknowledgements}
We would like to thank C.\ Blohmann for sharing an early version of \cite{BlohmannLFT} and A.S.\ Cattaneo for helpful exchange during the development of this project. We also thank T. Diez for clarifying to us an argument related to the symplectic reduction for Yang--Mills theory, and for providing feedback on a previous version of this manuscript. Finally, we thank P.\ Mnev and A.\ Alekseev for pointing out relevant literature to us.

\section{Preliminaries}\label{sec:prelim}

Our investigation of the symplectic geometry of gauge theories in the presence of corners relies on the \emph{locality} properties of the underlying physical theory to single out the behavior of fields at the corner. This is mathematically codified in the formalism of local forms on a space of fields $\X$, defined as the space of smooth sections of some (finite dimensional) fibre bundle $E\to \Sigma$, i.e.\ $\X=\Gamma(\Sigma,E)$. 
To feed the notion of corner locality into the symplectic reduction procedure, we introduce \emph{local} notions of Hamiltonian action and momentum maps. 

However, since the constraint set $\C\subset\X$ is given by the space of solutions of some differential equations, in general it does not admit a local description itself, i.e.\ there is no bundle on $\Sigma$ of which $\C$ is the space of sections. A further source of nonlocality (and possibly singularity) emerges when quotienting out the level set of the momentum map by the group action.

Because locality cannot be preserved throughout the symplectic reduction procedure, we need to view local forms within a more general framework for infinite-dimensional geometry.
We choose to work with infinite-dimensional manifolds modelled on locally convex vector spaces, tipically Fr\'echet and their duals.

In this section, we first lay out a series of considerations on the smooth geometry of the infinite-dimensional spaces underlying this work, mostly following the convenient setting of \cite{kriegl1997convenient} (see also \cite{FrolicherKriegl}).
We will then turn to a set of necessary preliminaries by reporting some useful notions in the context of local forms and local vector fields on the space of sections of a fibre bundle, which will be regularly used in the main text. 
For background material, see Appendix \ref{app:localforms} and also \cite{Takens,Zuckerman, Anderson:1989,BlohmannLFT}.

\subsection{Infinite-dimensional geometry}\label{sec:infinitedim}
Field theory unavoidably requires working with infinite-dimensional vector spaces, and smooth manifolds modeled thereon. There are several ways to generalise finite dimensional calculus and differential geometry to a field-theoretic scenario, and the ideal setup for our purposes is that of smooth manifolds modelled on locally convex vector spaces, and in particular Fr\'echet spaces.
\footnote{Another interesting way of looking at smoothness for space of sections of fibre bundles is through diffeologies \cite{Iglesias-Zemmour:Diffeology}. The categories of Fr\'echet manifolds and of diffeological spaces are related by \cite[Theorem 3.1.1]{Losik-DiffeologicalFrechet} (see also \cite[Lem. A.1.7]{waldorfI}).} These smooth manifolds can be described within the ``convenient setting'' of \cite{kriegl1997convenient}, where the $c^\infty$ topology is used (see in particular Sections 1, 2, 33 and 48 of \cite{kriegl1997convenient}). This is defined as the final topology with respect to all smooth \emph{curves}, and therefore all smooth mapping are continuous for it.\footnote{A mapping is said to be smooth if it maps smooth \emph{curves} into smooth curves. 
Depending on the topology of the underlying (infinite-dimensional) spaces, a smooth mapping can be continuous or not.} We expand on the details of this construction in Appendix \ref{app:infinitegeometry}, where we also provide the notion of differential forms and vector fields used throughout.

Since we will be working with group actions on smooth Fr\'echet manifolds, these considerations will also apply to the infinite-dimensional Lie groups and algebras appearing in this work. 

One central question in infinite-dimensional Lie theory concerns whether a Lie algebra encodes the infinitesimal structure around the identity of some Lie group---a property called ``enlargeability'' (see Definition \ref{def:locexpenlargeable})---which becomes relevant when manipulations of infinite-dimensional Lie algebras generate new ones that, a priori, need not be enlargeable. 
We refer to \cite{Neeb-locallyconvexgroups}  for a survey of results on this topic.\footnote{Note that many of those results, however, do not directly apply to the group of diffeomorphisms of a compact surface; this is because of the well-known fact that their exponential map fails to be surjective, even arbitrarily close to the identity \cite{Omori78}. Diffeomorphisms however lie outside of the scope of this work.}

In the remainder of this paper, we will assume that all relevant enlargements of Lie algebras are unhindered and will not look for general conditions under which this happens.
This said, using results of \cite{WockelPhD}, one can show that this assumption is justified in the gauge theories we will consider: Yang--Mills, Chern--Simons, and $BF$ theory, which are defined on a principal $G$-bundle, with $G$ finite dimensional (but not necessarily compact).

Another important question relates to the nature of ``reduced spaces'', obtained as quotients of the level set of a momentum map, modulo the action of an infinite-dimensional Lie group.
Once again, we will not look for sufficient conditions guaranteeing their smoothness and we will simply assume it in general---see the end of Section \ref{sec:intro} for references on this topic.

\begin{remark}[Concrete models]\label{rmk:nuclearFrechet}
When looking at Lie algebras as well as at local models of spaces of fields, we will be chiefly interested in locally convex vector  spaces $\calW$ that have the additional useful structure of a \emph{nuclear, Fr\'echet} vector space (NF). 
In particular, let $\calW\doteq\Gamma(\Sigma,W)$ for $W\to\Sigma$ a finite dimensional vector bundle over a manifold $\Sigma$. Then $\calW$ admits the structure of an NF vector space \cite[Section 30.1, unnumbered Proposition]{kriegl1997convenient}.\footnote{Observe that this is true even when $\Sigma$ and $W$ themselves are infinite-dimensional but (modeled on) nuclear Fr\'echet vector spaces. For us this is relevant e.g. for the construction of the space we will call $\A$.} A summary of properties of NF vector spaces is given in Remark \ref{rmk:NFVS}.

Additionally, spaces of sections of vector bundles over a \emph{compact} manifold are also ``tame'' as Fr\'echet vector spaces (see \cite{Hamilton-NashMoser} for a discussion), and spaces of sections of fibre bundles with tame local models will give rise to tame Fr\'echet manifolds. The main interest of the tame property is the validity of a version of the inverse function theorem due to Nash and Moser \cite{Nash-Imbedding,Moser-NMThm}. This can be used e.g.\ to establish the existence of slices for Hamiltonian actions and thus the smoothness of $\C$ and $\uuC$ (see \cite{DiezPhD}).
\end{remark}

We now turn to infinite-dimensional manifolds which are symplectic or Poisson. 

We denote by $\mathcal{W}^*$ the strong dual of a locally convex vector space $\mathcal{W}$.
A discussion (and definition) of dual spaces in the relevant infinite-dimensional setting is given in Remark \ref{rmk:strong/bornduals}, in particular the topology of the cotangent bundle $T^*\calPhi$ is discussed in Remark \ref{rmk:cotangent}. See also Remark \ref{rmk:cotangentreflexive}.

\begin{definition}[Symplectic manifold]\label{def:symplecticmanifolds}
A skew-symmetric, bilinear form $\omega$ on a vector space $\calW$ is called
\begin{enumerate}
    \item (\emph{weakly}) \emph{symplectic} if $\omega^\flat\colon \calW\to \dual{\calW}$ is injective
    \item \emph{strongly symplectic} if $\omega^\flat$ is bijective.
\end{enumerate}
Similarly, a  (weakly/strongly) symplectic form on a manifold $\calPhi$ is a closed $2$-form $\omega\in\Omega^2(\calPhi)$, $\bd \omega = 0$, such that the induced morphism $\omega^\flat\colon T\calPhi \to T^*\calPhi$ is injective/bijective.
\end{definition}

\begin{remark}
We generally do not consider strongly symplectic forms, and thus drop the ``weak'' characterization: henceforth a symplectic structure is weak if not otherwise stated.
\end{remark}

Let $T^{\mathsf{p}}\calPhi \hookrightarrow {T^*\calPhi}$ be any subbundle of the cotangent bundle of a given smooth manifold $\calPhi$ (here, $\bullet^\mathsf{p}$ is for ``partial''), and look at those functions $f\in C^\infty(\calPhi)$ such that there exists a section $s_f$ of $T^{\mathsf{p}}\calPhi$ for which there exists a commuting diagram of the form:
\begin{equation}\label{eq:partialdiagram}
    \xymatrix{
    T^{\mathsf{p}}\calPhi \;\ar@{^(->}[rr] && {T^*\calPhi}\\
    & \ar[ul]^{s_f} \calPhi \ar[ur]_{\bd f}
    }
\end{equation}
This suggests the following definition:

\begin{definition}[Partial Poisson structure \cite{PelletierCabau}]\label{def:partialpoisson}
Let $T^\mathsf{p}\calPhi\hookrightarrow T^*\calPhi$ be a partial cotangent bundle. The space of \emph{$\mathsf{p}$-functions} $C_{\mathsf{p}}^\infty(\calPhi)$ is the subset of $C^\infty(\calPhi)$ for which the diagram \eqref{eq:partialdiagram} commutes.
A \emph{partial Poisson structure} associated to $T^\mathsf{p}\calPhi$ is a Poisson algebra on $C_\mathsf{p}(\calPhi)$ encoded in a skew-symmetric map, called bracket,
\[
    \{\cdot,\cdot\}^{\mathsf{p}} \colon C_{\mathsf{p}}^\infty(\calPhi)\times C_{\mathsf{p}}^\infty(\calPhi) \to C_{\mathsf{p}}^\infty(\calPhi)
\]
satisfying the Leibniz and Jacobi identities.
\end{definition}

\begin{definition}[Hamiltonian functions]\label{def:Hamfunctions}
Let $(\calPhi,\omega)$ be a  symplectic manifold. If $T^{\mathsf{p}}\calPhi\equiv T^\omega\calPhi \doteq \Im(\omega^\flat)\subset T^*\calPhi$ is a subbundle\footnote{Note that the image of $\omega^\flat$ might not be a subbbundle, e.g. \cite{MarsdenDarboux}.}, we call it the symplectic partial cotangent bundle of $(\calPhi,\omega)$. The \emph{space of Hamiltonian functions} $C_{\omega}^\infty(\calPhi)$ is the space of $\mathsf{p}$-functions associated to $T^\omega\calPhi$.
\end{definition}

\begin{remark}
When $(\calPhi,\omega)$ is strongly symplectic one finds $T^*\calPhi=T^\omega\calPhi$ and all smooth functions are $\mathsf{p}$-functions, and thus Hamiltonian.
\end{remark}

\begin{lemma}\label{lem:weaksymplecticpoisson}
Let $(\calPhi,\omega)$ be a symplectic manifold. The space of Hamiltonian functions $C_\omega^\infty(\calPhi)$ admits a unique Poisson algebra structure compatible with $\omega$.
\end{lemma}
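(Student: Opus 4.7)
The plan is to construct the bracket by the standard Hamiltonian recipe, using the injectivity of $\omega^\flat$ to make it well defined, and then to check closure within $C_\omega^\infty(\calPhi)$, the Leibniz rule and the Jacobi identity. First, for each $f \in C_\omega^\infty(\calPhi)$, Definition~\ref{def:Hamfunctions} gives a section $X_f$ of $T\calPhi = (\omega^\flat)^{-1}(T^\omega\calPhi)$ with $\omega^\flat(X_f)=\bd f$; weak non-degeneracy (i.e.\ injectivity of $\omega^\flat$) forces $X_f$ to be unique. I would then set
\[
\{f,g\} \;:=\; \omega(X_f,X_g) \;=\; X_f(g) \;=\; -X_g(f),
\]
which is manifestly skew-symmetric, and satisfies the Leibniz rule in each argument because $\{f,\cdot\} = X_f(\cdot)$ is a derivation of $C^\infty(\calPhi)$.

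The first non-trivial point is that the bracket closes on $C_\omega^\infty(\calPhi)$. For this I would verify the identity $X_{\{f,g\}} = [X_f,X_g]$ by a standard Cartan-calculus computation, valid in the convenient calculus of Appendix~\ref{app:infinitegeometry}:
\[
\iota_{[X_f,X_g]}\omega
\;=\; \mathcal{L}_{X_f}\iota_{X_g}\omega - \iota_{X_g}\mathcal{L}_{X_f}\omega
\;=\; \bd(X_f(g)) - \iota_{X_g}(\bd\iota_{X_f}\omega + \iota_{X_f}\bd\omega)
\;=\; \bd\{f,g\},
\]
where the last equality uses $\bd^2=0$ and $\bd\omega=0$. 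Hence $[X_f,X_g]$ is a Hamiltonian vector field for $\{f,g\}$, so $\{f,g\}\in C_\omega^\infty(\calPhi)$, and $X_{\{f,g\}}=[X_f,X_g]$ by injectivity of $\omega^\flat$. The Jacobi identity then follows from $\bd\omega=0$ via the usual formula expressing $\bd\omega$ evaluated on the triple $(X_f,X_g,X_h)$: three cyclic terms of the form $X_f(\omega(X_g,X_h))=X_f\{g,h\}$ and three terms $\omega([X_f,X_g],X_h)=\{\{f,g\},h\}$ combine, with opposite signs, into the cyclic Jacobi sum.

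For uniqueness, any Poisson bracket on $C_\omega^\infty(\calPhi)$ that is compatible with $\omega$ must by definition satisfy $\iota_{X_f}\omega = \bd f$, and by weak nondegeneracy this characterizes $X_f$ uniquely; then skew-symmetry together with the derivation property force $\{f,g\}=X_f(g)=\omega(X_f,X_g)$, showing that the bracket constructed above is the only candidate.

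The main obstacle is not algebraic but setting-dependent: one must make sure that Cartan's magic formula, the identity $[\mathcal{L}_X,\iota_Y]=\iota_{[X,Y]}$, and the coordinate-free formula for $\bd\omega$ on three vector fields remain valid in the convenient-calculus framework adopted in Section~\ref{sec:infinitedim}. Once these standard identities are invoked from Appendix~\ref{app:infinitegeometry}, the proof reduces to the finite-dimensional calculation; the only genuinely new point compared to the finite-dimensional case is that one must restrict \emph{ab initio} to the subalgebra $C_\omega^\infty(\calPhi)$, since for a merely weakly symplectic $\omega$ not every smooth function admits a Hamiltonian vector field.
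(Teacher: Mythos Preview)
Your proposal is correct and follows exactly the standard argument that the paper defers to by citation (the paper's proof simply reads ``Standard (see \cite[Example 2.2.4]{PelletierCabau} and \cite[Lemma 48.6]{kriegl1997convenient})''). You have spelled out precisely what those references contain: uniqueness of $X_f$ from injectivity of $\omega^\flat$, closure via $X_{\{f,g\}}=[X_f,X_g]$ using Cartan calculus, Jacobi from $\bd\omega=0$, and uniqueness of the compatible bracket; your caveat about needing the Cartan identities in the convenient setting is exactly the one point of care, and it is indeed covered by \cite[Lemma 33.18]{kriegl1997convenient}.
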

\begin{proof}
Standard (see \cite[Example 2.2.4]{PelletierCabau} and \cite[Lemma 48.6]{kriegl1997convenient}).
\end{proof}

\begin{remark}
The notion of partial Poisson structure is related to that of  \emph{weak Poisson structure} as defined in \cite{NeebSahlmannThiemann} (see \cite[Example 2.2.5]{PelletierCabau}), or in\footnote{Both the authors of \cite{NeebSahlmannThiemann} and \cite{CabreraGualiteriMeinrenken} use the wording ``weak Poisson structure'' albeit with different meaning. We do not know of a general relation between the two. In good cases, they both overlap with the notion of partial Poisson of \cite{PelletierCabau}.} \cite{CabreraGualiteriMeinrenken} and, in the linear case, to that of locally convex Poisson vector space \cite{glockner2007applications}. Important examples of weak Poisson (locally convex) manifolds are given by topological duals\footnote{The ``size'' of the algebra of $\mathsf{p}$-functions, i.e.\ where the weak Poisson structure is defined, depends on the choice of topology on the topological dual. See \cite[Example 2.14]{NeebSahlmannThiemann}.} of locally convex Lie algebras, as discussed in \cite{NeebSahlmannThiemann}. We will not adopt the same terminology, in favour of the subbundle description outlined above. However, in the examples of Lie algebras we will consider, the three definitions will largely overlap. We discuss the Poisson structure on duals of Lie algebras in Section \ref{sec:dualPoisson}.
\end{remark}

For field theoretical purposes, $\calPhi=T^*\calN$ is not always the appropriate space to consider as a phase space. 
For example, distributional momenta for smooth configurations would then arise when taking the strong dual of a space of functions. A natural alternative is available when $\calN$ is modelled on the space of sections of a finite dimensional vector bundle over $\Sigma$: 

\begin{definition}[Densitized cotangent space]\label{def:densitizedDual}
Let $V\to\Sigma$ be a finite dimensional vector bundle, and $\calV \doteq \Gamma(\Sigma,V)$. The \emph{densitized dual} $\calV^\vee$ of $\calV$ is the vector space
\begin{equation}\label{eq:fibrewisedual}
    \calV^\vee \doteq \Gamma(\Sigma, \mathrm{Dens}(\Sigma)\otimes_\Sigma V^*),
\end{equation}
equipped with its nuclear Fr\'echet topology. The \emph{densitized cotangent space} is the (nuclear) Fr\'echet vector space
\[
\calW = T^\vee \calV \doteq \calV \oplus \calV^\vee,
\]
equipped with the following (canonical) skew-symmetric bilinear form $\omega$
\[
    \omega\big( (v_1,v_1^\vee),(v_2,v_2^\vee)\big)\doteq 
    \int_\Sigma  v_1^\vee(x)(v_2(x)) - v_2^\vee(x)(v_1(x)),
\]
where $v^\vee(x)(v(x))$ denotes the fibrewise pairing of sections of $V$ and $V^*$. 
\end{definition}
Notice that $\omega^\flat : \calW \to \dual{\calW}$ is manifestly injective, but it is not surjective, even when $\calV$ is reflexive, since $\calV^\vee \subsetneq \dual{\calV}$. An obvious generalization of this construction to the case of a smooth (Fr\'echet) manifold $\calN$ locally modeled on $\calV$ leads to the definition of the (Fr\'echet) densitized cotangent bundle $(\calPhi = T^\vee \calN,\omega)$ as a symplectic space. Finally, let us introduce the notion of annihilator of a subset of a vector space:

\begin{definition}[Annihilators]\label{def:annihilators}
        Let $\mathcal{W}$ be a nuclear Fr\'echet vector space, or the strong dual of a nuclear Fréchet space. Moreover, let $\mathcal{X}\subset \mathcal{W}$ and $\mathcal{Y}\subset \mathcal{W}^*$ be subsets. The \emph{annihilator of $\mathcal{X}$ in $\mathcal{Y}$} is the set 
        \begin{align*}
        \Ann(\mathcal{X},\mathcal{Y}) &= \{y\in \mathcal{Y}\, : \, \langle y,x\rangle = 0\  \forall x\in \mathcal{X} \} \subset \mathcal{W}^*.
        \qedhere
        \end{align*}
\end{definition}

\begin{lemma}\label{lem:doubleannihilator}
    Let $\mathcal{W}$ be as above, and $\mathcal{X}\subset \mathcal{W}$ be a closed vector subspace, then 
    \[
    \Ann(\Ann(\mathcal{X},\mathcal{W}^*),\mathcal{W}) =\mathcal{X}.
    \]
\end{lemma}

\begin{proof}
    Let $\mathcal{X}\subset \mathcal{W}$ and $\mathcal{Y}=\Ann(\mathcal{X},\mathcal{W}^*)$. From
    \[
    \Ann(\mathcal{X},\mathcal{W}^*) \simeq \left(\mathcal{W}/\mathcal{X}\right)^*, \qquad \Ann(\mathcal{Y},\mathcal{W}) \simeq \left(\mathcal{W}^*/\mathcal{Y}\right)^*
    \]
    we obtain 
    \[
    \Ann(\Ann(\mathcal{X},\mathcal{W}^*),\mathcal{W}) \simeq \left( \mathcal{W}^* / \left(\mathcal{W}/\mathcal{X}\right)^*\right)^*
    \]
    Dualising the short exact sequence 
    \[
    \mathcal{X} \to \mathcal{W} \to \mathcal{W}/\mathcal{X} 
    \]
    we conclude that $\mathcal{X}^* \simeq \mathcal{W}^* / \left(\mathcal{W}/\mathcal{X}\right)^*$. 
    Hence, since both nuclear Fre\'echet vector spaces and their strong duals, as well as their closed subspaces, are reflexive (\cite[Rmk. 6.5]{kriegl1997convenient}), one finds
    \[
    \Ann(\Ann(\mathcal{X},\mathcal{W}^*),\mathcal{W}) \simeq \left(\mathcal{X}^*\right)^* \simeq \mathcal{X}^{**} = \mathcal{X}.\qedhere
    \]
\end{proof}

\subsection{Local forms}
This section builds on standard definitions, recalled in Appendix \ref{app:localforms}. Throughout, $\calPhi=\Gamma(\Sigma,F)$ denotes the space of sections of a given finite dimensional fibre bundle $F\to\Sigma$ over a compact, orientable, manifold $\Sigma$ possibly with boundaries. $\calPhi$ is then a smooth, tame, nuclear Fr\'echet manifold.

With reference to the variational bicomplex associated to $F$, we denote the horizontal differential by $d_{H}$ and the vertical one by $d_V$. 
The infinite jet evaluation map $j^\infty\colon \Sigma \times \calPhi \to J^\infty F$ is surjective on open sets if $\Sigma$ is finite dimensional \cite[Section 41.1]{kriegl1997convenient}, and it is globally surjective iff $F$ has a global section \cite[Prop. 3.1.14]{BlohmannLFT}.
Whenever $j^\infty$ is surjective, the pullback along this map allows one to match the bicomplex structure on $\left(\Omega^{\bullet,\bullet}(J^\infty F),d_H,d_V\right)$ with that on $\left(\Omega^{\bullet,\bullet}(\Sigma\times\calPhi),d, \bd\right)$ (Definition \ref{def:localforms}):
\begin{equation}
\label{eq:localforms-jinfty}
    \left(\oloc^{\bullet,\bullet}(\Sigma \times \calPhi), d, \bd\right) = (j^\infty)^*\left(\Omega^{\bullet,\bullet}(J^\infty F),d_H,d_V\right)
\end{equation}
where $\bd$ denotes the de Rham differential on smooth infinite-dimensional manifolds (here restricted to the image of $(j^\infty)^*$).

Henceforth, we will focus solely on the $(d,\bd)$-bicomplex structure over $\Sigma \times \calPhi$.

Vector fields on $\Sigma \times \calPhi$ are distinguished in notation: we have $X\in\mathfrak{X}(\Sigma)$, with $i_X$ and $L_X$ standing for the contraction and Lie derivative of local forms by $X$, and we use double-struck symbols like $\mathbb{X}\in\Xloc(\calPhi)$ for local vector fields\footnote{These correspond to ``evolutionary vector fields'' on the infinite jet bundle \cite{Anderson:1989,Saunders:1989,AndersonTorre}.} on $\calPhi=\Gamma(\Sigma,F)$. Similarly we denote $\bi_{\mathbb{X}}$ and $\L_{\mathbb X}$  the contraction and Lie derivative of local forms by the local vector field on $\calPhi$. 

We use bold-face symbols to denote local forms in $\oloc^{\bullet,\bullet}(\Sigma\times\calPhi)$, while forms over $\calPhi$ obtained by integrating $(\text{top},\bullet)$-local forms will be denoted with the corresponding non-bold-face symbol, as in
\[
\boldsymbol{\nu}\in\oloc^{\text{top},\bullet}(\Sigma\times\calPhi),
\qquad
\nu = \int_\Sigma \boldsymbol{\nu} \in \iloc^\bullet(\calPhi),
\]

\begin{definition}[Integrated local forms]\label{def:iloc}
The space of \emph{integrated local forms} on $\calPhi$, denoted $\iloc^\bullet(\calPhi)$, is the image of the integration map $\int_\Sigma: \oloc^{\text{top},\bullet}(\Sigma\times\calPhi)\to \Omega^\bullet(\calPhi)$. 
\end{definition}

\begin{definition}[Ultralocal forms]\label{def:ultralocal}
A local $(p,q)$-form $\boldsymbol{\nu}$ is called \emph{ultralocal} iff it descends to a $(p,q)$-form on $J^0F\equiv F\to\Sigma$, i.e.\ iff it does not depend on jets of sections.
\end{definition}

\begin{remark}\label{rmk:sourcetarget}
Let $\calN=\Gamma(\Sigma,H)$ the space of sections of a fibre bundle $H\to\Sigma$. A useful property of local forms in $\oloc^{r,p,q}(\Sigma\times\calPhi\times\calN)$ is that they can equivalently be seen as local $q$-forms on $\calN$ with values in $\oloc^{r,p}(\Sigma\times \calPhi)$ or as local $p$-forms on $\calPhi$ with values in $\oloc^{r,q}(\Sigma\times \calN)$. 
\end{remark}

\begin{definition}[Dual valued forms and vector fields]\label{def:dualvaluednew}
Let $W\to\Sigma$ be a finite dimensional fibre bundle, and $\calW\doteq\Gamma(\Sigma,W)$. A family of forms and a family of vector fields parametrized by $\calW$ are, respectively, smooth maps
\[
\boldsymbol{\nu}\in C^\infty( \calW , \oloc^{\bullet,\bullet}(\Sigma\times \calPhi)),
\qquad 
\mathbb{X}\in C^\infty( \calW, \mathfrak{X}(\calPhi)).
\]
A family of forms $\boldsymbol{\nu}:\calW\to\oloc^{\bullet,\bullet}(\Sigma\times\calPhi)$ is
\begin{enumerate}[label=(\roman*)]
    \item \emph{local}, iff it can be identified with a local form in $\oloc^{\bullet,\bullet,0}(\Sigma\times\calPhi\times\calW)$, i.e.\ iff there exists $\wt{\boldsymbol\nu}\in\Omega^{\bullet,\bullet}(J^\infty(F\times_\Sigma W))$ such that $\boldsymbol{\nu}=(j^\infty)^*\wt{\boldsymbol\nu}$;
    \item \emph{local dual-valued}, iff $W\to\Sigma$ is a vector bundle and $\boldsymbol{\nu}$ is linear over $\calW$, and local as a family of forms. Then, we write\footnote{Bounded (multi-)linear maps between convenient vector spaces are local if they are local as smooth maps, we denote them by $L_\loc$.}
\[
\boldsymbol{\nu} \in \oloc^{\bullet,\bullet}(\Sigma\times\calPhi,\dual{\calW}) \doteq L_\loc(\calW,\oloc^{\bullet,\bullet}(\Sigma\times\calPhi)) \subset \oloc^{\bullet,\bullet,0}(\Sigma\times\calPhi\times\calW).
\]
\end{enumerate}
A family of vector fields $\mathbb{X}:\calW\to\mathfrak{X}_\loc(\calPhi)$ is
\begin{enumerate}[label=(\roman*)]
    \item \emph{local}, iff it is specified by a map $\Gamma(\Sigma,F\times_\Sigma W) \to T\calPhi$ that is local as a map of sections of fibre bundles;
    \item \emph{dual-valued}, iff $W\to\Sigma$ is a vector bundle and $\mathbb{X}$ is linear over $\calW$, and local as a family of vector fields. We write, $\mathbb{X} \in \Xloc(\calPhi,\dual{\calW})\doteq L_\loc(\calW,\mathfrak{X}(\calPhi))$.
    \qedhere
\end{enumerate}
\end{definition}

\begin{proposition}[Prop 6.3.6 \cite{BlohmannLFT}]
Let $\boldsymbol{\nu}$ and $\mathbb{X}$ be local families of forms and vector fields parametrized by the same $\calW$. Then, the contraction 
\[
\bi_{\mathbb{X}}\boldsymbol{\nu} \colon \calW \to \oloc^{\bullet,\bullet -1}(\Sigma \times \calPhi)
\]
is also a local family of forms parametrized by $\calW$.
\end{proposition}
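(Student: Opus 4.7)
The plan is to reduce the claim to the unparametrized statement that contraction of a local form on $\Sigma\times\calPhi$ with a local vector field on $\calPhi$ is again a local form on $\Sigma\times\calPhi$; and then to check that the $\calW$-dependence remains local.

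First, I would use the identification from Definition \ref{def:dualvaluednew} and invoke the product bundle $F'\doteq F\times_\Sigma W\to\Sigma$ with section space $\Gamma(\Sigma,F')\simeq \calPhi\times\calW$. Under this identification, a local family of forms $\boldsymbol{\nu}:\calW\to\oloc^{\bullet,\bullet}(\Sigma\times\calPhi)$ is precisely a local form $\wt{\boldsymbol{\nu}}\in\oloc^{\bullet,\bullet,0}(\Sigma\times\calPhi\times\calW)$, i.e. it arises as the pullback along $j^\infty:\Sigma\times\calPhi\times\calW\to J^\infty F'$ of a form on the jet bundle of $F'$. Similarly, a local family of vector fields $\mathbb{X}:\calW\to\Xloc(\calPhi)$ corresponds to a local map $\Gamma(\Sigma,F')\to T\calPhi$, which I would lift to a local vector field $\wt{\mathbb{X}}$ on $\calPhi\times\calW$ that is vertical with respect to the projection $\calPhi\times\calW\to\calW$.

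Second, I would apply the standard fact, which is part of the structure of evolutionary vector fields on the variational bicomplex (see the material cited in Appendix \ref{app:localforms} and \cite{Anderson:1989,BlohmannLFT}), that the contraction $\bi_{\wt{\mathbb{X}}}$ acts as a derivation of bidegree $(0,-1)$ on the bicomplex $\oloc^{\bullet,\bullet}(\Sigma\times\calPhi\times\calW)$. Concretely, locality is preserved because a local vector field is determined by a jet bundle morphism covering the identity on $\Sigma$, so its action on a form pulled back from $J^\infty F'$ is computed fibrewise on jets and hence again pulls back from $J^\infty F'$. In particular, $\bi_{\wt{\mathbb{X}}}\wt{\boldsymbol{\nu}}\in\oloc^{\bullet,\bullet-1,0}(\Sigma\times\calPhi\times\calW)$, and the ``no $\calW$-leg'' property is preserved because $\wt{\mathbb{X}}$ is vertical over $\calW$, so contraction cannot remove a $\bd w$ that was not there to begin with.

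Third, I would unwind this back through the identification of Remark \ref{rmk:sourcetarget} and Definition \ref{def:dualvaluednew}\textit{(i)}: a local form in $\oloc^{\bullet,\bullet-1,0}(\Sigma\times\calPhi\times\calW)$ is exactly a local family $\calW\to\oloc^{\bullet,\bullet-1}(\Sigma\times\calPhi)$, which is the desired conclusion. The only real obstacle is conceptual bookkeeping at the first step, namely verifying that ``local family parametrized by $\calW$'' coincides with ``local form on $\Sigma\times\calPhi\times\calW$ of bidegree zero in the $\calW$-direction'' when both $\calPhi$ and $\calW$ are spaces of sections over the \emph{same} base $\Sigma$; this is what makes the product fibre bundle $F\times_\Sigma W$ (rather than an external product) the right geometric object on which to invoke the unparametrized result.
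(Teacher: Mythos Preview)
The paper does not give its own proof of this proposition; it is stated as a cited result from \cite{BlohmannLFT}. Your sketch is correct and is essentially the standard argument one would expect in that reference: pass to the fibred product $F\times_\Sigma W$, use that contraction by an evolutionary vector field is a $(0,-1)$-derivation on the variational bicomplex of the product, and observe that verticality over $\calW$ preserves the ``no $\calW$-leg'' condition. There is nothing to compare against in the paper itself.
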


The element of $\oloc^{\bullet,\bullet}(\Sigma\times\calPhi)$ obtained by evaluation of the dual-valued local form $\boldsymbol{\nu}$ at $w\in\calW$ will be denoted by angled bracket:
\[
\langle \boldsymbol{\nu} , w\rangle \equiv \boldsymbol{\nu} (w).
\qedhere
\]

In the course of this article we will encounter only one dual-valued vector field, but it will be ubiquitous. It will be denoted by $\rho$---as customary for actions and anchor maps (see Section \ref{sec:LieAlgPict}).

\begin{remark}\label{rmk:approximationproperty}
Let $E$ be a vector space that possesses the bornological approximation property, which guarantees that $E^*\otimes_\beta E \subset L(E,E)$ is a dense subset, where $\otimes_\beta$ denotes the bornological tensor product.\footnote{This is the tensor product seen as a topological vector space with the finest locally convex topology such that $E\times F\to E\otimes_\beta F$ is bounded, see \cite[Section 5.7]{kriegl1997convenient}.} For $E,F$ metrizable spaces, in particular Fr\'echet and their strong duals, we have $E\otimes_\pi F = E\otimes_\beta F$ where $\otimes_\pi$ is the ``usual'' projective tensor product for locally convex spaces, which allows us to identify $\otimes_\pi = \otimes_\beta \equiv \otimes$. (See Remark \ref{rmk:NFVS}.) Since spaces of sections of vector bundles have the bornological approximation property \cite[Theorem 6.14]{kriegl1997convenient} and they are nuclear Fr\'echet spaces, we can look at the identities\footnote{We thank S.\ D'Alesio for this argument.} (the hat standing for completion):
\begin{multline*}
C^\infty(E,L(F,G)) \simeq C^\infty(E)\hat\otimes L(F,G) \simeq C^\infty(E)\hat\otimes F^*\hat\otimes G \\
\simeq F^*\hat\otimes C^\infty(E)\hat\otimes G \simeq F^*\hat\otimes C^\infty(E,G) \simeq L(F,C^\infty(E,G))
\end{multline*}
and patching local models together we gather that, in particular:
\[
\Omega^{0,0}(\Sigma \times \calPhi, \calW^*) \simeq L(\calW, \Omega^{0,0}(\Sigma \times \calPhi)),
\]
and similarly for $p,q$ forms.
Hence, the space of dual-valued \emph{local} forms of Definition \ref{def:dualvaluednew} corresponds to the space of ``$\calW^*$-valued forms in $\oloc^{p,q}(\Sigma\times \calPhi)$''.

Below, we will use this fact to justify the equivalence between the momentum and comomentum pictures. 
\end{remark}

\begin{definition}[Order--$k$ dual-valued local forms]\label{def:order-kduals}
Let  $\boldsymbol{\nu} \in \oloc^{\bullet,q}(\Sigma\times\calPhi,\dual{\calW})$ be a local dual-valued form. We say that $\boldsymbol{\nu}$ is  \emph{order--$k$} iff the associated local family of forms 
\[
    \bi_{\mathbb{X}_q} \cdots \bi_{\mathbb{X}_1} \boldsymbol{\nu} \colon \calW \to \Omega^{\bullet}(\Sigma)
\]
is of order $k$ as a local map (Definition \ref{def:localmaps}) for all $\mathbb{X}_{i=1, \cdots,q}\in\Xloc(\calPhi)$. Then, we write
\[
\boldsymbol{\nu}\in\oloc^{\bullet,q}(\Sigma\times\calPhi,\dual{\calW}_{(k)}).
\qedhere
\]
\end{definition}

Note that,  although $\calW$ is a $C^\infty(\Sigma)$-module, dual-valued forms are generally only $\mathbb{R}$-linear in their $\calW$-argument; in fact, only dual-valued forms of order-$0$ are also $C^\infty(\Sigma)$-linear. Indeed, order-0 forms in $\oloc^{\bullet,\bullet}(\Sigma\times\calPhi,\dual{\calW}_{(0)})$ are ultralocal when seen as local forms on $\calW$ (with values in $\oloc^{\bullet,\bullet}(\Sigma\times\calPhi)$). 

Finally, upon choosing a volume element on $\Sigma$, we notice that order-$0$, dual-valued, $(\text{top},p)$-local forms on $\Sigma\times\calPhi$ can be understood as local $p$-forms over $\calPhi$ with value in the fibrewise dual, i.e.
\[
\oloc^\text{top}(\Sigma,\dual{\calW}_{(0)})\simeq \calW^\vee\doteq\Gamma(\Sigma,\mathrm{Dens}(\Sigma)\otimes_\Sigma W^*).
\]

To emphasise the enhanced linearity property of elements of order-0 local forms we introduce the following definitions:

\begin{definition}[Local and fibrewise duals]\label{def:dual*}
The \emph{local dual} $\mathcal W_\mathrm{loc}^*$ of $\calW$ is the space of linear functionals in $\iloc^0(\calW)$,
\[
    \dual{\calW}_\mathrm{loc}=  \{\alpha\in \iloc^0(\calW)\,:\, \alpha \text{ is $\mathbb{R}$-linear}\}\subset\iloc^0(\calW),
\]
and the \emph{fibrewise dual} $\dual{\calW}_\text{fbr}$ of $\calW$, is the space of linear functionals in $\iloc^0(\calW)$ which are order-0 as local maps $\calW\to \mathbb{R}$: 
\[
    \dual{\calW}_\text{fbr} = \{ \alpha \in\dual{\calW}_\mathrm{loc}\, :\, \alpha \text{ is order-0}\} \subset \dual{\calW}_\mathrm{loc}.
\qedhere\]
\end{definition}

In particular, identifying $\calW^\vee\simeq\oloc^\text{top}(\Sigma,\dual{\calW}_{(0)})$, we have
\[
\int_\Sigma \oloc^\text{top}(\Sigma,\dual{\calW}_{(0)})= \int_\Sigma \calW^\vee = \dual{\calW}_\text{fbr}.
\]

The previous definitions naturally generalise to multi-linear local family of forms:
\begin{definition}[Multilinear dual-valued forms]
Let $W\to\Sigma$ be a vector bundle and $\calW=\Gamma(\Sigma,W)$. A multilinear local family of $(p,q)$-forms is a multilinear map
\[
\boldsymbol{\nu}\colon\calW\stackrel{n}{\times  \cdots \times}\calW \to \oloc^{p,q}(\Sigma\times\calPhi)
\]
that is local as a form in $\oloc^{p,q,{0, \cdots, 0}}(\Sigma \times \calPhi \times \calW\stackrel{n}{\times \cdots \times}\calW)$.
We will call such forms \emph{local multilinear dual-valued forms}, and denote them as:\footnote{Recall that for our concrete models, the (completed) projective tensor product is enough.}
\[
\boldsymbol{\nu} \in \oloc^{p,q}(\Sigma\times\calPhi,\dual{\calW}\stackrel{n}{\hat\otimes  \cdots \hat\otimes}\dual{\calW}).
\qedhere
\]
\end{definition}

We conclude this survey on local forms by recalling a key result that will be repeatedly employed throughout:

\begin{theorem}[\cite{Takens,Zuckerman}]  \label{thm:LocFormDec} 
For $\calPhi=\Gamma(\Sigma,F)$ as above, the space of $(\text{top},1)$-local forms decomposes as
\[
\oloc^{\text{top},1}(\Sigma \times \calPhi) = \osrc^{\text{top},1}(\Sigma \times \calPhi) \oplus \obdry^{\text{top},1}(\Sigma\times\calPhi),
\]
where $\osrc^{\text{top},1}(\Sigma \times \calPhi)$ is the space of local forms of \emph{source type}\footnote{Roughly speaking, source-type forms are $(\text{top},1)$-local forms which are $C^\infty(\Sigma)$-linear in $\bd \phi$, i.e.\ they do \emph{not} feature any term of the form $\bd(\partial^k\phi) \equiv \partial^k\bd\phi$ for $k>0$, Source-type forms are sometimes called ``Euler--Lagrange'' forms. See Appendix \ref{app:localforms} for details.} (Definition \ref{def:sourcetype}), and $\obdry^{\text{top},1}(\Sigma\times\calPhi)\doteq d \oloc^{\text{top}-1,1}(\Sigma \times \calPhi)$ are called \emph{boundary forms}. 
We denote the corresponding projections by:
\begin{align*}
\Pi_\text{src}\colon \oloc^{\text{top},1}(\Sigma\times \calPhi)\to \osrc^{\text{top},1}(\Sigma\times \calPhi), &\quad \boldsymbol{\nu} \mapsto \boldsymbol{\nu}_\text{src} ;\\
\Pi_\text{bdry}\colon \oloc^{\text{top},1}(\Sigma\times \calPhi)\to \obdry^{\text{top},1}(\Sigma\times \calPhi), &\quad \boldsymbol{\nu} \mapsto \boldsymbol{\nu}_\text{bdry}. 
\end{align*}
\end{theorem}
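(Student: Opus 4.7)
The plan is to construct the two projections $\Pi_\text{src}$ and $\Pi_\text{bdry}$ explicitly via the \emph{interior Euler operator} of the variational bicomplex, and then show the resulting map $(\Pi_\text{src},\Pi_\text{bdry})$ is a direct-sum decomposition. Given $\boldsymbol{\nu}\in\oloc^{\text{top},1}(\Sigma\times\calPhi)$, identified with an element of $\Omega^{n,1}(J^\infty F)$ via surjectivity of $j^\infty$, I would work in local adapted jet coordinates on $J^\infty F$ and write
\[
\boldsymbol{\nu} \;=\; \sum_{|I|\geq 0} A^I_\alpha(j^\infty\phi)\,\bd\phi^\alpha_{,I}\wedge d^n x,
\]
where $d^n x$ is a local top form on $\Sigma$ and the sum is finite (by locality). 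The first step is to define
\[
I(\boldsymbol{\nu}) \;\doteq\; \Big(\sum_I (-D)^I A^I_\alpha\Big)\,\bd\phi^\alpha\wedge d^n x,
\]
with $D^I$ the iterated total derivative and $(-D)^I=(-1)^{|I|}D^I$, and to verify, by an invariance check under changes of coordinates and fiber trivializations, that $I(\boldsymbol{\nu})$ is a well-defined global element of $\osrc^{\text{top},1}(\Sigma\times\calPhi)$.

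The second step is to show that $\boldsymbol{\nu} - I(\boldsymbol{\nu})\in\obdry^{\text{top},1}(\Sigma\times\calPhi)$. This is done via the standard iterated integration by parts: one builds an explicit primitive $\boldsymbol{\lambda}\in\oloc^{\text{top}-1,1}(\Sigma\times\calPhi)$ from the $A^I_\alpha$ (the "Lepage" or "boundary'' piece of the first variation formula), verifying by induction on $|I|$ that $\boldsymbol{\nu} - I(\boldsymbol{\nu}) = d\boldsymbol{\lambda}$. After gluing with a partition of unity on $\Sigma$, this yields the splitting $\boldsymbol{\nu} = I(\boldsymbol{\nu}) + d\boldsymbol{\lambda}$ with the first summand source-type and the second boundary, so one sets $\Pi_\text{src}\doteq I$ and $\Pi_\text{bdry}\doteq \mathrm{id}-I$.

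The main obstacle, and the algebraic heart of the result, is the uniqueness of this decomposition: $\osrc^{\text{top},1}\cap\obdry^{\text{top},1}=\{0\}$, which is equivalent to the identity $I\circ d = 0$ on $\oloc^{\text{top}-1,1}(\Sigma\times\calPhi)$. The plan here is a direct computation in coordinates: for $\boldsymbol{\mu}=\sum_I B^{I,i}_\alpha\,\bd\phi^\alpha_{,I}\wedge \iota_{\partial_i} d^n x$, applying $I$ to $d\boldsymbol{\mu}$ produces, for each $\alpha$, a telescoping sum in the multi-index $I$ of terms of the form $(-D)^J D_i B^{I,i}_\alpha$; a careful reindexing of multi-indices using the Leibniz rule for $D_i$ shows the whole sum collapses to zero. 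Idempotency $I^2=I$ then follows from $I\circ d = 0$ together with the fact that $I$ restricts to the identity on source-type forms (visible from the definition when the $A^I_\alpha$ vanish for $|I|\geq 1$). With this, $\Pi_\text{src}$ and $\Pi_\text{bdry}$ are complementary projections and the direct sum decomposition is established.
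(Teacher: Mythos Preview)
The paper does not prove this theorem; it is stated as a classical result cited to Takens and Zuckerman and used as a black box throughout. Your sketch via the interior Euler operator is exactly the standard argument from those references (and from Anderson's variational bicomplex), and is correct as outlined.
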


\begin{proposition}[Decomposition of dual-valued local forms]
\label{prop:dualvaluedformdecomposition}
The space of dual-valued $(\text{top},0)$-local forms decomposes as
\[
\oloc^{\mathrm{top},0}(\Sigma\times \calPhi,\calW^*) = \oloc^{\mathrm{top},0}(\Sigma\times \calPhi,\calW_{(0)}^{*}) \oplus d\oloc^{\mathrm{top}-1,0}(\Sigma\times \X,\calW^*),
\]
where the first term and second terms involve dual-valued local forms which are order-0 and $d$-exact respectively.
\end{proposition}
\begin{proof}
The proof of this lemma follows from Theorem \ref{thm:LocFormDec}.
Since $\calW$ is a vector space there is an injection:
\[
Y:\oloc^\text{top}(\Sigma,\calW^*) \to \oloc^{\text{top},1}(\Sigma\times \calW), 
\quad 
\balpha \mapsto \langle \boldsymbol{\balpha},\cbd w\rangle
\]
where we denoted $\cbd$ the differential on $\calW$. Moreover, this map sends the subspace $\oloc^\text{top}(\Sigma,\calW^*_{(0)})\subset \oloc^\text{top}(\Sigma,\calW^*)$ of order-0 dual-valued forms (resp.\ the subspace $d\oloc^{\mathrm{top}-1}(\Sigma,\calW^*)$ of $d$-exact dual-valued forms) into source (resp. boundary) forms in $\oloc^{\text{top},1}(\Sigma\times \calW)$, cf.\ Theorem \ref{thm:LocFormDec}. We denote the corresponding restriction $Y_{(0)}$ (resp.\ $Y_d$).
These injections are invertible on the image, which can be characterized as 
\begin{align*}
\Im(Y) & = \{ \boldsymbol{\nu}\in \oloc^{\text{top},1}(\Sigma\times \calW) \ : \ \boldsymbol{\nu} \ \text{is $\mathbb{R}$-linear in} \ \calW \},\\
\Im(Y_{(0)}) & = \{ \boldsymbol{\nu}\in \Im(Y) \ : \ \boldsymbol{\nu} \ \text{is $C^\infty(\Sigma)$-linear in} \ \calW \},\\
\Im(Y_d) & = \{ \boldsymbol{\nu}\in \Im(Y) \ : \ \boldsymbol{\nu} \ \text{is $d$-exact}\}.
\end{align*}

Using  Definition  \ref{def:dualvaluednew}(ii), we identify $\boldsymbol{\Lambda} \in \oloc^{\mathrm{top},0}(\Sigma\times \calPhi,\calW^*)$ with a local form in $\oloc^{\mathrm{top},0}(\Sigma\times\A)$, where $\A \doteq \calPhi\times\calW \simeq \Gamma(\Sigma, F\times_\Sigma W)$. 
Denoting $w\in\calW$, and (with slight abuse of notation) $\cbd$ the differential on $\A$, introduce the following local $(\text{top},1)$ form on $\Sigma\times \A$: $\langle\boldsymbol{\Lambda},\cbd w\rangle$. Then, the decomposition into \emph{source} and \emph{boundary} terms of Theorem \ref{thm:LocFormDec} holds, and since $\calW$ is a vector space and by the previous observations there exists unique $\boldsymbol{\Lambda}\in\oloc^{\mathrm{top},0}(\Sigma\times \calPhi,\calW_{(0)}^{*})$ and $d\boldsymbol{\lambda } \in d\oloc^{\mathrm{top}-1,0}(\Sigma\times \calPhi,\calW^*)$ such that
\[
\begin{cases}
\langle \boldsymbol{\Lambda},\cbd w\rangle_\text{src} = \langle \boldsymbol{\Lambda}_\circ,\cbd w\rangle \\
\langle \boldsymbol{\Lambda},\cbd w\rangle_\text{bdry} = d\langle  \boldsymbol{\lambda},\cbd w\rangle
\end{cases}
\iff 
\quad 
\langle \boldsymbol{\Lambda},w\rangle = \langle \boldsymbol{\Lambda}_\circ,w\rangle + d\langle \boldsymbol{\lambda}, w\rangle. \qedhere
\]
\end{proof}

\begin{corollary}\label{cor:cptsupport}
Denote $\mathring\Sigma$ the open interior of $\Sigma$, and let
\[
\calW_c \doteq \{ w\in\calW \ : \ w \text{ compactly supported in $\mathring\Sigma$ } \}.
\]
Then $\boldsymbol{\Lambda}\in\oloc^\text{top}(\Sigma\times\calPhi,\calW^*)$ is $d$-exact iff $\int_\Sigma\langle\boldsymbol{\Lambda},w_c\rangle = 0$ for all $w_c\in\calW_c$.
\end{corollary}
\begin{proof}
The forward implication is straightforward. We focus on the converse statement.
By Proposition \ref{prop:dualvaluedformdecomposition}, and adopting the notation used in its proof, we write $\boldsymbol{\Lambda} = \boldsymbol{\Lambda}_\circ + d \boldsymbol{\lambda}$ and thus find
\[
 \int_\Sigma\langle\boldsymbol{\Lambda}_\circ,w_c\rangle = 0 \quad \forall w_c\in\calW_c.
\]
Applying the map $Y$ to the above equation, we see that the fundamental theorem of the calculus of variations implies that $\boldsymbol{\Lambda}_\circ(x,\phi)=0$ for all $(x,\phi)\in\mathring\Sigma\times\calPhi$. By continuity over $x\in\Sigma$, we conclude.
\end{proof}

\section{Geometric setup for gauge theories with corners}\label{sec:localfieldtheory}

We start by considering a connected, compact manifold $\Sigma$, thought of as a codimension-1 submanifold in some (spacetime) manifold $M$. In typical applications $M$ is endowed with a Lorentzian metric and $\Sigma$ is spacelike, but our formalism can be applied to examples where $\Sigma$ is null \cite{RielloSchiavinanull} or cases where $M$ is not equipped with a metric at all (Section \ref{sec:ex-ChernSimons}). Physically, $\Sigma$ may represent a region inside an initial-value, or Cauchy, surface. Crucially, we will allow $\Sigma$ to possess nontrivial boundaries, $\partial\Sigma\neq\emptyset$.

We call the embedded sumbanifold $\pp\Sigma\hookrightarrow \Sigma$ the \emph{corner}, a terminology justified by the scenario where $\Sigma$ is a connected component of the boundary of a manifold with corners $M$ such that $\pp\Sigma$ is a component of the (true) corner of $M$.

We define the space of fields of our theory as the space $\X\doteq\Gamma(\Sigma,E)$ of sections of some finite dimensional fibre bundle $E\to \Sigma$. We further assume that $\X$ comes equipped with a local symplectic density $\bom$ (Definition \ref{def:locsymp}), and a compatible local action of a Lie group $\G$ with Lie algebra $\fG$ (Definitions \ref{def:localaction} and \ref{def:LocLieAlg}): in physical parlance, these are the ``gauge transformations'' acting on the fields in $\X$.  

Compatibility of the local symplectic structure $\bom$ and of the $\G$-action on $\X$ in the presence of corners, is encoded in the notion of \emph{locally Hamiltonian $\G$-spaces}. The ``local'' characterization refers to the fact that the Hamiltonian flow equation, which defines what we call a \emph{local momentum form} in Definition \ref{def:Ham+mommaps} will be required to hold between local forms over $\Sigma\times\X$, that is to say \emph{point-wise} over $\Sigma$. 

Locality also allows us to conveniently introduce ``corner cocycles'' encoding the fact that equivariance of the local momentum map can hold ``up to boundary''. While corner cocycles appear in examples, e.g. in $BF$ or Chern--Simons theory, a failure of equivariance in the bulk would instead hinder the closure of the constraint algebra, obstructing reduction.

\subsection{Local symplectic structure} \label{sec:localsymplstr}

Let $(\Sigma,\pp\Sigma)$ be a compact, connected manifold with boundary. Let $E\to\Sigma$ be a finite dimensional vector, or affine, bundle and  $\X\doteq\Gamma(\Sigma,E)$ its space of sections, which we call the \emph{geometric phase space} or, for short, the \emph{space of fields}. 

\begin{definition}[Locally symplectic space]\label{def:locsymp}
A symplectic form $\omega$ on $\X=\Gamma(\Sigma,E)$ is \emph{local} iff there exists a \emph{symplectic density} $\bom\in\oloc^{\text{top},2}(\Sigma \times \X)$ such that\footnote{Observe that $\bd\omega=0$ fails to imply $\bd\bom=0$.} $\bd \bom=0$ and $\omega = \int_\Sigma \bom$. 
$(\X,\bom)$ is called a \emph{locally symplectic space}.
\end{definition}

\begin{lemma}\label{lem:nondegeneracystrength}
Let $\bom\in\oloc^{\text{top},2}(\Sigma\times\X)$, and denote by $\bom^\flat_{\text{src}}$, the map
\[
\bom^\flat_{\text{src}} \doteq \Pi_\text{src}\circ\bom^\flat: \Xloc(\X) \to \Omega_\text{src}^{\text{top},1}(\Sigma \times \X).
\]
Then, $\bom^\flat_\text{src}$ injective implies $\bom^\flat$ injective on $\mathfrak{X}_\loc(\X)$, with the opposite implication holding if $\bom$ is ultralocal (Definition \ref{def:ultralocal}). Moreover, if $\partial\Sigma=\emptyset$,  $\bom^\flat_\text{src}$ is injective iff $\omega^\flat$ is injective. 
\end{lemma}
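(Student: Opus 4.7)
The argument will rest on two elementary facts fed into the source/boundary decomposition of Theorem \ref{thm:LocFormDec}: that boundary terms $d\boldsymbol\alpha$ integrate to zero by Stokes when $\partial\Sigma=\emptyset$, and that ultralocal $2$-forms produce only source-type $(\mathrm{top},1)$-contractions. The three implications are essentially independent of one another.

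For the first implication, I simply observe that $\bom^\flat_{\text{src}}=\Pi_\text{src}\circ\bom^\flat$ by definition, so any $\mathbb{X}\in\Xloc(\X)$ with $\bi_{\mathbb{X}}\bom=0$ automatically satisfies $\bom^\flat_\text{src}(\mathbb{X})=0$. Hence $\ker\bom^\flat\subseteq\ker\bom^\flat_\text{src}$ and injectivity of the latter forces injectivity of the former. For the ultralocal case, the key remark is that if $\bom$ depends only on $\phi$ (no jets), then locally $\bom$ is of the form $f_{ab}(\phi)\,\bd\phi^a\wedge\bd\phi^b\otimes\mathrm{vol}_\Sigma$, so that for \emph{any} local vector field $\mathbb{X}$ (allowed to depend on arbitrary jets of $\phi$), the contraction $\bi_{\mathbb{X}}\bom$ contains only bare $\bd\phi$'s and never terms of the form $\bd(\partial^k\phi)$ with $k>0$. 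By the characterization of source-type forms as those which are $C^\infty(\Sigma)$-linear in $\bd\phi$, this means $\bi_{\mathbb{X}}\bom$ is itself already source-type, so $\Pi_\text{bdry}\circ\bom^\flat=0$ and $\bom^\flat=\bom^\flat_\text{src}$ as maps; the two injectivity notions then coincide.

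The third statement, in the closed case $\partial\Sigma=\emptyset$, will use Stokes' theorem to write, for every $\mathbb{X}\in\Xloc(\X)$,
\[
\omega^\flat(\mathbb{X})=\int_\Sigma \bi_{\mathbb{X}}\bom=\int_\Sigma\Pi_\text{src}(\bi_{\mathbb{X}}\bom)+\int_\Sigma d\boldsymbol\alpha_{\mathbb{X}}=\int_\Sigma\bom^\flat_\text{src}(\mathbb{X}),
\]
since the exact piece integrates to zero over a closed $\Sigma$. The reverse implication ($\bom^\flat_\text{src}$ injective $\Rightarrow$ $\omega^\flat$ injective) is then immediate. For the forward implication, I would suppose $\omega^\flat(\mathbb{X})=0$ and probe with arbitrary test vector fields $\mathbb{Y}\in\Xloc(\X)$, obtaining $\int_\Sigma \bi_{\mathbb{Y}}\bom^\flat_\text{src}(\mathbb{X})=0$ for all such $\mathbb{Y}$. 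Because $\bom^\flat_\text{src}(\mathbb{X})$ is a source form, it is $C^\infty(\Sigma)$-linear in $\bd\phi$, hence its coefficients are ordinary functions on the jet bundle. Choosing $\mathbb{Y}$ with arbitrarily small support then forces these coefficients to vanish pointwise, yielding $\bom^\flat_\text{src}(\mathbb{X})=0$ and thus $\mathbb{X}=0$.

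The main technical obstacle I anticipate is making precise the ``variational lemma'' step in the last paragraph: one needs enough local vector fields $\mathbb{Y}$ with prescribed supports to conclude pointwise vanishing of the coefficients of a source form from the vanishing of all its integrated pairings. In the Fr\'echet/convenient setting this should follow from the fact that $\Xloc(\X)$ contains compactly supported evolutionary vector fields of every jet-order, together with the standard fundamental lemma of the calculus of variations applied coefficient-by-coefficient in a local trivialization of $E\to\Sigma$.
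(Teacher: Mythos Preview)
Your proposal is correct and follows essentially the same route as the paper: the kernel inclusion from the source/boundary decomposition, the observation that ultralocality forces $\bom^\flat=\bom^\flat_\text{src}$, and the use of Stokes plus the fundamental lemma of the calculus of variations for the $\partial\Sigma=\emptyset$ case.

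One minor point: you have the two directions of the final ``iff'' swapped. The direction $\omega^\flat$ injective $\Rightarrow$ $\bom^\flat_\text{src}$ injective is the genuinely immediate one (if $\bom^\flat_\text{src}(\mathbb{X})=0$ then $\omega^\flat(\mathbb{X})=\int_\Sigma 0=0$, hence $\mathbb{X}=0$), whereas the direction $\bom^\flat_\text{src}$ injective $\Rightarrow$ $\omega^\flat$ injective is the one that needs the fundamental-lemma argument you supply. Your detailed paragraph (starting from ``suppose $\omega^\flat(\mathbb{X})=0$'' and ending with ``$\bom^\flat_\text{src}(\mathbb{X})=0$ and thus $\mathbb{X}=0$'') is precisely the argument for the latter direction, since the final step uses injectivity of $\bom^\flat_\text{src}$. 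So all the content is there; only the labels ``forward'' and ``reverse'' are interchanged.
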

\begin{proof}
Application of Theorem \ref{thm:LocFormDec} to $\mathrm{Im}(\bom^\flat)$ implies that $\ker(\bom^\flat)\subset\ker(\bom^\flat_\text{src})$, from which the first statement follows. 
The second statement follows from the definition of ultralocality \ref{def:ultralocal}, since then $\bom^\flat = \bom^\flat_\text{src}$. Finally, the last statement for $\pp\Sigma=\emptyset$ is a consequence of the identity $\omega^\flat(\mathbb{X}) = \int_\Sigma \bom^\flat(\mathbb{X}) = \int_\Sigma \bom_\text{src}^\flat(\mathbb{X})$ and the fundamental theorem of the calculus of variations which states that, for a source-type form, the vanishing locus of $\int_\Sigma \bom_\text{src}^\flat(\mathbb{X})$ is the same as that of its integrand. (This is because $\bom^\flat_\text{src}(\mathbb{X}) = \boldsymbol{F}_\mathbb{X}(j^k\phi)\bd \phi$ for some local functional $\boldsymbol{F}_\mathbb{X}$ depending on the $k$-jet $j^k\phi$, and because the variations $\bd\phi$ are free.)
\end{proof}

\begin{remark}\label{rmk:nondegeneracystrength}
Here, we present a counterexample to the converse statement in Lemma \ref{lem:nondegeneracystrength}, i.e.\ we find a $\bom$ such that $\bom^\flat$ is injective while $\bom^\flat_\text{src}$ is not.
Let $\Sigma = [-1,1]$ and $E = \Sigma \times \mathbb R$ the trivial line bundle. Denoting $\phi\in\X=\Gamma(\Sigma,E)$, consider the symplectic density\footnote{$J^1E$ is globally isomorphic to $\Sigma \times \mathbb{R} \times \mathbb{R}$. Let $(x,y,y')$ be (global) coordinates such that $j^1 : \Sigma \times \X \to J^1E$, $(x,\phi) \mapsto (x,y,y')=(x,\phi(x),\pp_x\phi(x))$. Then, denoting by $d_H$ ($d_V$) the horizontal (vertical, resp.) differential on $J^1E$, $\bom$ is the pullback along $j^1$ of $\wt\bom = - d_V y \wedge d_V y' \wedge d_H x \in \Omega^{1,2}(J^1E)$.} $\bom = \bd \phi \wedge d \bd \phi$.
As observed above, from the source/boundary decomposition theorem (Theorem \ref{thm:LocFormDec}) it follows that $\ker(\bom^\flat) \subset \ker(\bom^\flat_\text{src})$. Applying the source/boundary decomposition to $\bom^\flat(\mathbb{X})$, and denoting by $X_\phi\doteq \bi_{\mathbb{X}}\bd \phi$, we obtain $[\bom^\flat(\mathbb X)]_\text{bdry} = d( X_\phi\bd \phi)$ and $\bom^\flat_\text{src}(\mathbb X) \doteq [\bom^\flat(\mathbb X)]_\text{src} = -2 (d X_\phi)  \bd \phi $. 
Hence, $\mathbb{X}\in\ker(\bom^\flat_\text{src})$ iff $X_\phi =c $ is constant over $\Sigma$.
Now, among all constant values $X_\phi = c$, only $c=0$ also guarantees that $0 = [\bom^\flat(\mathbb X)]_\text{bdry} = c d\bd \phi$.
Thus we proved that in this case, although $\bom^\flat$ is injective, $\bom^\flat_\text{src}$ is not. 
\end{remark}

\subsection{Local group action}
\label{sec:localgrpaction}

We now equip $\X$ with the right action of a Fr\'echet Lie group $\G$ with Lie algebra $\fG \doteq\mathrm{Lie}(\G)$. We call $\G$ the \emph{gauge group}, and its action on $\X$ \emph{gauge transformations}: $\phi\mapsto \phi^g = \phi\triangleleft g$.

Let us recall the following notions \cite{Neeb-locallyconvexgroups}:

\begin{definition}[Locally exponential/enlargeable]\label{def:locexpenlargeable}
~
\begin{enumerate}
\item A Lie group $\G$ is said to be \emph{locally exponential} iff it has a smooth exponential map $\exp:\fG\equiv\Lie(\G)\to\G$ and there exists an open 0-neighbourhood $U\subset\fG$ such that $\exp\vert_U:U\to\exp(U)$ is a diffeomorphism onto an open $e$-neighbourhood of $\G$.  

\item A Lie algebra $\fG$ is said to be \emph{enlargeable} (to $\G$) if a locally exponential Lie group $\G$ exists such that $\fG=\Lie(\G)$.\qedhere
\end{enumerate}
\end{definition}

\begin{theorem}[\cite{GloecknerNeeb}]\label{thm:GloecknerNeeb}
If $\G_1$ and $\G_2$ are two locally exponential simply connected Lie groups, then $\fG_1\cong\fG_2$ iff $\G_1\cong\G_2$.
\end{theorem}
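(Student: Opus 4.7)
The forward direction is immediate and general: if $\varphi\colon\G_1\to\G_2$ is an isomorphism of Lie groups, then $T_e\varphi\colon\fG_1\to\fG_2$ is an isomorphism of Lie algebras, without any need for local exponentiability or simple connectedness. So the work is entirely in the reverse direction, which I now sketch.

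Given an isomorphism $\psi\colon\fG_1\to\fG_2$ of Lie algebras, the plan is to build a Lie group isomorphism in three steps. First, I would use local exponentiability to construct a \emph{local} homomorphism near the identity: by hypothesis there are $0$-neighborhoods $U_i\subset\fG_i$ such that $\exp_i\colon U_i\to V_i\subset\G_i$ is a diffeomorphism onto an open identity-neighborhood. Shrinking $U_1$ if necessary so that $\psi(U_1)\subset U_2$, define
\[
\tilde\psi \doteq \exp_2\circ\psi\circ\exp_1^{-1}\colon V_1 \to V_2.
\]
The content of this step is to show that $\tilde\psi$ is a local group homomorphism, i.e.\ $\tilde\psi(gh)=\tilde\psi(g)\tilde\psi(h)$ whenever all three products make sense in $V_1$ and $V_2$. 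In the locally exponential setting this follows because the local multiplication is governed (via the exponential chart) by a Baker--Campbell--Hausdorff-type series in the Lie bracket, which is preserved by $\psi$; Gl\"ockner--Neeb's framework guarantees that the local group structure is recovered from the Lie bracket in precisely this way.

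Second, I would extend $\tilde\psi$ to a global Lie group homomorphism $\Psi\colon\G_1\to\G_2$ by the monodromy principle for simply connected topological groups. Concretely, given $g\in\G_1$ choose a smooth path $\gamma\colon[0,1]\to\G_1$ from $e$ to $g$, subdivide it so that each $\gamma(t_{i+1})\gamma(t_i)^{-1}$ lies in $V_1$, and define
\[
\Psi(g) \doteq \prod_i \tilde\psi\bigl(\gamma(t_{i+1})\gamma(t_i)^{-1}\bigr).
\]
Independence from the choice of subdivision follows from the local homomorphism property, and independence from the path follows from simple connectedness of $\G_1$ by a standard covering argument (homotope the path through small perturbations and use the local homomorphism property in each tile of the homotopy). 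The result is a continuous, in fact smooth, group homomorphism extending $\tilde\psi$.

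Third, applying the same construction to $\psi^{-1}\colon\fG_2\to\fG_1$ produces a homomorphism $\Psi'\colon\G_2\to\G_1$, and the compositions $\Psi'\circ\Psi$ and $\Psi\circ\Psi'$ agree with the identity on a neighborhood of $e$ (since $\psi^{-1}\psi=\mathrm{id}$ on $U_1$). By uniqueness in the monodromy extension, they are the identity globally, so $\Psi$ is an isomorphism. The main obstacle, and the place where the hypotheses truly bite, is the first step: without local exponentiability one has no way of transporting a Lie algebra map to a local group map, and without simple connectedness the local-to-global extension in step two fails. Everything else is formal category-theoretic bookkeeping.
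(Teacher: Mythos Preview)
The paper does not provide its own proof of this theorem: it is stated with a citation to Gl\"ockner--Neeb and used as a black box. Your sketch is the standard argument and is correct in outline; it is essentially how the result is established in that reference. The one point worth flagging is that step one is where all the analytic content lives: in infinite dimensions the existence and convergence of a BCH-type multiplication in the exponential chart is not automatic, and the precise statement that locally exponential Lie groups are BCH (so that the local group law is a smooth function of the bracket alone) is itself a nontrivial theorem in the Gl\"ockner--Neeb theory. You acknowledge this by deferring to ``Gl\"ockner--Neeb's framework,'' which is honest, but it means your proof is really a reduction to that result rather than a self-contained argument.
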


\begin{remark}[Mapping groups]\label{rmk:mappinggrp}
Let $P\to\Sigma$ be a principal $G$-bundle, for $G$ a finite dimensional Lie group, and let the equivariant mapping group $\G = C_0^\infty(P,G)^G$ be the group of vertical automorphisms of $P$ which are connected to the identity \cite[Thm. 37.17]{kriegl1997convenient}. Then $\G$ is simply connected, and it is also locally exponential \cite[Thm 3.1.11]{WockelPhD} (see also \cite[Thm. 42.21]{kriegl1997convenient} and \cite[Thm. IV.1.12]{Neeb-locallyconvexgroups}).
\end{remark}

We now demand that the gauge group $\G$ is locally exponential and connected 
\footnote{At least in QCD it seems correct to only demand invariance under ``gauge transformations'' that are connected to the identity. In Yang--Mills theory, consider the group $\tilde\G$ of vertical automorphisms of the principal $G$-bundle, and its identity component $\G$. The discrete group $\mathcal{T}\doteq\tilde\G/\G$ plays a role in the (axial) $U(1)$ problem of QCD: the choice of an irrep of $\mathcal{T}$ defines the choice of a $\theta$-vacuum. This suggests that $\mathcal{T}$ should not be considered as a local gauge redundancy, but rather as a global symmetry. See \cite[Chap.8]{Strocchi13} and \cite[Chap.3]{strocchi19} for details.}
---and that it acts on $\X$ locally:

\begin{definition}[Local Lie algebras]\label{def:LocLieAlg}
A $\Sigma$-\emph{local Lie algebra} is a Lie algebra $(\fG,[\cdot,\cdot])$ realized as the space of sections of a vector bundle $\Xi\to\Sigma$, and such that its Lie bracket $[\cdot,\cdot] : \fG \times \fG \to \fG$ is local as a map of sections of fibre bundles (Definition \ref{def:localmaps}).

If a Lie algebra $\mathfrak{H}$ is $\Sigma$-local or the quotient of a $\Sigma$-local Lie algebra $\fG$ by a (not necessarily $\Sigma$-local) ideal $\mathfrak{I}$, i.e.\ $\mathfrak{H}\simeq\fG/\mathfrak{I}$, we say that $\mathfrak{H}$ is \emph{supported on $\Sigma$}.
\end{definition}

\begin{remark}[Nuclear Fr\'echet Lie algebras]\label{rmk:FrechetLieAlgebras}
As a consequence of Remarks  \ref{rmk:nuclearFrechet} and \ref{rmk:NFVS}, all Lie algebras supported on $\Sigma$ (in particular local ones) are nuclear Fr\'echet spaces, and hence reflexive, i.e.\ $\bidual{\fG} \simeq \fG$.
\end{remark}

\begin{definition}[Local action, $\G$-space]\label{def:localaction}
A  manifold $\X$ equipped with a right action of $\G$ (resp. $\fG$) is called a $\G$-space (resp.\ $\fG$-).

The action of $\G$ on $\X$ is said to be \emph{local} iff the corresponding \emph{infinitesimal action} $\rho$ of $\fG$ on $\X$ is local, i.e\ iff (\emph{i}) the Lie algebra $\fG\doteq\mathrm{Lie}(\G)$ is local, and (\emph{ii}) the action
\begin{equation}\label{eq:rhoaction}
    \rho:\X\times\fG\to T\X,  
    \quad \xi \mapsto \rho_\phi(\xi)\doteq \left.\frac{d}{d t}\right\vert_{t=0}\phi\triangleleft \exp(t\xi)
\end{equation}
is local as a bundle morphism. 
\end{definition}

The vector fields $\rho(\xi)$ are called \emph{fundamental vector fields}.
By construction they are equivariant, i.e.\ $g_*\rho(\xi) = \rho(\Ad(g)^{-1}\cdot\xi)$, which infinitesimally translates into the homomorphism of Lie algebras: $[\rho(\xi),\rho(\eta)] = \rho([\xi,\eta])$. We will elaborate further on this in the next section. 

Note, however, that $\rho$ can also be understood as a dual-valued vector field, $\rho\in\Xloc(\X,\dual{\fG})$, i.e.\ as a local linear map:\footnote{Cf. Definition \ref{def:dualvaluednew} and the paragraph above Remark \ref{rmk:approximationproperty}.}
\begin{align}\label{eq:rhoaction2}
    \rho\;\colon\fG\to\Xloc(\X), \quad
    \xi\mapsto\rho(\xi) \notag.
\end{align}

An important question at this point is whether the action has stabilisers. The most restrictive hypothesis is to require that the action is free\footnote{Recall: free means that, at all $\phi$, the action has no nontrivial stabilisers: $\G_\phi \doteq \mathrm{Stab}(\phi)= \{e\}$.} which implies that $\rho(\xi)$  is a nowhere-vanishing vector field for all $\xi\neq0$. However, as we will see later on, this is too strong a restriction.

\subsection{Local momentum forms}\label{sec:momentum maps}
We have so far introduced a symplectic density and a gauge group action on $\X$. The following definitions ensure that the two are suitably compatible.

\begin{definition}[Locally symplectic $\G$-space]
A \emph{locally symplectic $\G$-space} $(\X,\bom,\G)$ is a locally symplectic space which is also a $\G$-space, and for all $g\in\G$, $g^*\bom = \bom$.
\end{definition}

Infinitesimally, the  locally symplectic $\G$-space property requires that $\bom$ be invariant under the flow generated by all elements of $\fG$: for all $\xi\in\fG$, $\L_{\rho(\xi)}\bom = 0$. This formula, together with Cartan's formula for the Lie derivative (Proposition \ref{prop:Cartancalc}), and the fact that $\bom$ is $\bd$-closed, imply that $\bi_{\rho(\xi)}\bom$ is $\bd$-closed. We will now demand that it be $\bd$-exact.

\begin{definition}[Local momentum forms and maps]\label{def:Ham+mommaps}
Let $(\X,\bom,\G)$ be a locally symplectic $\G$-space. Then, the dual-valued local $(\text{top},0)$-form
\[
\bH\in \oloc^{\text{top},0}(\Sigma \times \X,\dual{\fG})
\]
is  a \emph{local momentum form} iff, for all $\xi\in\fG$,
        \begin{equation}
        \bi_{\rho(\xi)}\bom = \bd \langle \bH, \xi\rangle.
        \label{eq:weakHam}
        \end{equation}
If a locally symplectic $\G$-space admits a local momentum form $\bH$, it is said to be \emph{locally Hamiltonian} and will be denoted by $(\X,\bom,\G,\bH)$. The integral over $\Sigma$ of a local momentum form, $\H \doteq \int_\Sigma \bH \in \iloc^0(\X,\dual{\fG})$ is called a \emph{local momentum map}. 
\end{definition}

\begin{remark}[Perspectives on local momentum forms]\label{rmk:co-}
To describe local momentum forms as valued in the dual $\fG^*$, we have adopted the perspective provided by Definition \ref{def:dualvaluednew}. Since $\fG$ is a nuclear Fr\'echet space, see Remarks \ref{rmk:sourcetarget} and \ref{rmk:approximationproperty}) allow us to think of a dual-valued density such as $\bH$ (and its integrated counterpart $H$) in three equivalent ways: 
\begin{enumerate}
    \item as a $(\text{top},0,0)$-local form over $\Sigma\times\X\times\fG $ which is linear in the $\fG$-argument. This is the ``neutral'' local form perspective, in which $\X$ and $\fG$ are on the same footing (see Section \ref{sec:LieAlgPict});
    \item as a (local) map $\bH : \X \to \oloc^{\text{top}} (\Sigma,\dual{\fG})$, which integrates to a map $H:\X \to \dual{\fG}_\loc\subset\dual{\fG}$ (Definition \ref{def:dual*}). This is the local \emph{momentum} form/map perspective;
    \item as a linear local map $\bH:\fG \to \oloc^{\text{top},0}(\Sigma\times\X)$. This is the (local) \emph{comomentum} form perspective.\qedhere
\end{enumerate}
\end{remark}

\begin{remark}[Shifting the momentum form]
\label{rmk:HamShift1}\label{rmk:HamShift2}
In general, $\X$ lacks a natural origin with respect to which $\bH$ can be naturally defined, i.e.\ a prescription that allows us to integrate the (by assumption) exact 1-form $\bi_{\rho(\cdot)}\bom$ to obtain a \emph{specific} 0-form $\bH$. This is manifest in Chern--Simons theory where $\X$ is an affine space of connections lacking a canonical choice of origin (Section \ref{sec:ex-ChernSimons}).

We will assume that the spaces of fields we are working with, , have been endowed with an arbitrary choice of reference configuration to make this integration possible (See \cite{iglesias2010moment} for a more general perspective on the need of a reference configuration to define momentum maps in a diffeological treatment of symplectic geometry). However, we will not keep track of this initial choice of reference, since it is only provisional and might have to be adjusted later on (Section \ref{sec:flux}).

Provided $\bi_{\rho(\cdot)}\bom$ is exact, changing reference configuration  means shifting $\bH \mapsto \bH + \balpha$ by a field-space constant $\balpha\in\oloc^{\text{top},0}(\Sigma \times \X,\dual{\fG})$, $\bd \balpha = 0$. In the following, such $\balpha$'s will be more succinctly thought of as elements $\balpha\in\oloc^{\text{top}}(\Sigma,\dual{\fG})$. 
\end{remark}

\begin{remark}
The Hamiltonian flow equation links the orders of the momentum map and of the local action $\rho$, to that of the symplectic densities $\bom$ (Definition \ref{def:order-kduals}).
All cases of interest (that we know of) deal with a local action $\rho$ of order (at most) 1. Typically---but not always---one works with ultralocal symplectic densities, which are order-0 (Definition \ref{def:ultralocal}). If both conditions are met the ensuing \emph{local} momentum map are necessarily of order at most $1$. In physics, symplectic densities of order-1 naturally appear over light-like hypersurfaces \cite{AshtekarStreubel}. These are studied in the case of Yang--Mills theory in \cite{RielloSchiavinanull}.
\end{remark}

\subsection{Equivariance and cocycles}

We now briefly discuss the equivariance properties of the momentum form $\bH$ upon the action of $\G$. We refer the reader to Appendix \ref{app:CEcoh} for further details as well as the definitions of some of the relevant, but standard, notions.

Given a dual-valued local form (Definition \ref{def:dualvaluednew}), we can compose it with the coadjoint action of an element $g\in\G$, so that
\[
\Ad^* : \G \times \oloc^{\text{top},0}(\Sigma\times\X,\dual{\fG}) \to \oloc^{\text{top},0}(\Sigma\times\X,\dual{\fG}),\quad (g,\bH) \mapsto \Ad^*(g)\cdot\bH
\]
is given by
\begin{subequations}
\begin{equation}
\label{eq:coadj1}
 \langle(\Ad^*(g)\cdot \bH)(\phi),\xi\rangle \doteq \langle \bH(\phi) , \Ad(g)\cdot\xi\rangle.   
\end{equation}
Using the \emph{co}-momentum map viewpoint (see Remark \ref{rmk:co-}), which we emphasise with the notation $\bH^\text{co}: \fG \to \oloc^{\text{top},0}(\Sigma\times\X)$, Equation \eqref{eq:coadj1} becomes:
\begin{equation}
\label{eq:coadj2}
(\Ad^*(g)\cdot\bH)^\text{co} \doteq  \bH^\text{co}\circ\Ad(g).
\end{equation}
\label{eq:coadj}
\end{subequations}

The following notion measures the lack of equivariance of the local momentum form $\bH$ seen as intertwining between the $\G$-actions  on $\X$ and on $\fG^*$. 

\begin{definition}[CE cocycles forms]\label{def:cocycleforms}
A group and algebra \emph{Chevalley--Eilenberg (CE) cocycle forms} are, respectively, maps $\bC:\G \to \oloc^{\text{top}}(\Sigma,\dual{\fG})$ and $\bK: \fG^{\wedge 2}\to\oloc^{\text{top}}(\Sigma)$ such that for all $g,h\in\G$ and all $\xi_{1,2,3}\in\fG$,
\begin{equation}
\bC(gh)=  \Ad^*(h)\cdot\bC(g)+ \bC(h),
\qquad\text{and}\qquad
\begin{cases}
\bK(\xi_1,\xi_2) + \bK(\xi_2,\xi_1) = 0,\\
\mathrm{Cycl}_{123}  \bK([\xi_1,\xi_2],\xi_3)= 0.
\end{cases}
\end{equation}
\end{definition}

\begin{proposition}\label{prop:CKcocycles}
Let $(\X,\bom,\G,\bH)$ be as in Definition \ref{def:Ham+mommaps}, with $\X$ path-connected. Denoting $(g^*\bH)(\phi) \doteq \bH(\phi\triangleleft g)$, consider the map $\bC:\G \to \oloc^{\text{top},0}(\Sigma\times\X,\dual{\fG})$ given by
\[
\bC(g) \doteq g^*\bH - \Ad^*(g)\cdot \bH,
\]
and its tangent map at the identity, i.e.\ the smooth, linear, map $\bK: \fG\to\oloc^{\text{top},0}(\Sigma\times\X,\dual{\fG})$ 
\[
\bK \doteq T_e\bC, \quad \bK(\xi) = \L_{\rho(\xi)}\bH - \ad^*(\xi)\cdot \bH.
\]
Then $C\doteq\int_\Sigma\bC$ and $K\doteq\int_\Sigma\bK$ do not depend on $\phi$, $\bd \bC = 0 = \bd \bK =0$, and are a CE group and algebra cocycle, respectively.
\end{proposition}
\begin{proof}
Standard. We collect here two observations. First, to translate the infinitesimal equation $\bd\bC=0$ to a statement about the independence of $\bC$ from $\phi$, we need to assume that $\X$ is path-connected (cf.\ Assumption \ref{assA:item1}): if it is not, there could be, a priori, distinct cocycles associated to each connected component of $\X$. And second, for $\bK: \fG\to\oloc^{\text{top},0}(\Sigma\times\X,\dual{\fG})$ to be identified with a CE algebra cocycle (after integration), one first needs to (canonically) identify it with a map $\fG^{\times 2} \to \Omega^\text{top}(\Sigma)$ which one can prove, in particular, to be skew-symmetric in its arguments. 
\end{proof}

We now introduce three different notions of equivariance for the local momentum \emph{form} $\bH$. The notion that will be most relevant in the following is that of weak equivariance:

\begin{definition}[Equivariance, corner cocycles]\label{def:equivariance}
On a locally Hamiltonian $\G$-space $(\X,\bom,\G,\bH)$, the local momentum form $\bH$ is said to be:
\begin{enumerate}[label=(\arabic*)]
    \item \emph{strongly equivariant}, iff $\bC = 0$,
    \item \emph{$\Sigma$-equivariant}, iff $\int_\Sigma \bC =0$.
    \item \emph{weakly equivariant}, iff $\bC = d\bc$ for some $\bc:\G\to \Omega^\text{top}(\Sigma,\dual{\fG})$,
\end{enumerate}
Moreover, we say that the integrated local functional $\H = \int_\Sigma \bH\in\iloc^{0}(\calF)$ is \emph{equivariant} iff $\bH$ is $\Sigma$-equivariant. Finally, if there exists a $\bc$ as in (3) above, we denote $\bk =  T_e\bc$, and say that $\bC=d\bc$ an $\bK=d \bk$ are \emph{(CE) corner cocycles}.
\end{definition}

\begin{proposition}
With reference to Definition \ref{def:equivariance}, $(1)\implies(2)\implies(3)$, and if moreover $\pp\Sigma=\emptyset$ then one also has $(3)\implies(2)$.
\end{proposition}
\begin{proof}
The only nontrivial implication is $(2)\implies (3)$. Note that $\bC$ can be thought of as a map $\G\to\oloc^{\text{top},0}(\Sigma\times\X,\fG^*)$. Therefore, for every $g\in\G$ we can apply Proposition \ref{prop:dualvaluedformdecomposition} with $\calPhi = \X$ and $\calW = \fG$, and thus uniquely decompose $\bC(g) = \bC_\circ(g) + d\bc(g)$ with $\bC_\circ$ order-0. Since $\int_\Sigma \bC= 0$, one has in particular $\int_\Sigma \langle \bC,\xi_c\rangle = \int_\Sigma \langle\bC_\circ, \xi_c\rangle$ for all $\xi_c \in \fG_c \subset \fG$ a gauge transformation of compact support in $\mathring{\Sigma}$ (the open interior of $\Sigma$), whence we conclude using Corollary \ref{cor:cptsupport}.
\end{proof}

\begin{lemma}
Let $(\X,\bom)$ be a locally Hamiltonian $\G$-space, and let $\bH$ be any local momentum form for the action of $\G$ with group cocycle form $\bC$. Then, a  momentum map for the action of $\G$ on $(\X,\bom)$ is said (1) strongly-, (2) weakly-, or (3) $\Sigma$-equivariant iff respectively 
\begin{enumerate}[label=(\arabic*)]
    \item $\bC$ is a $\delta$-coboundary,  $[\bC]_{\text{CE}}=0$,
    \item $\bC$ is a $\delta$-coboundary up to a $d$-coboundary, $[\bC]_{\text{CE}}=[d\bc]_\text{CE}$,
    \item $\bC$ integrates to a $\delta$-coboundary,  $[\int_\Sigma\bC]_{\text{CE}}=0$,
\end{enumerate}
where $[\bC]_{CE}$ denotes the CE-cohomology class of $\bC$.
\end{lemma}

We saw that a momentum form $\bH$ is necessarily equivariant at most up to a CE cocycle form. A (local) CE cocycle is said exact if it can be trivialized by shifting the momentum map by a field-space constant $\balpha \in \oloc^\mathrm{top}(\Sigma,\fG^*)$ (Remark \ref{rmk:HamShift1}). If this is not possible, the cocycle is  said to be in the CE cohomology (see Appendix \ref{app:CEcoh}). There are two essentially equivalent ways to deal with this scenario. 

The first one is to introduce an affine action of $\G$ on $\fG^*$ by ``translating'' the coadjoint action of $\G$ ($\fG$) on $\fG^*$ by $C$ ($K$, respectively). This affine action then provides the correct notion of equivariance in the presence of a cocycle and can be used to define the relevant generalization of coadjoint orbits and other relevant objects that we will use in the following. 

The second way is to introduce a central extension $\wh\fG$ of $\fG$ which is governed by $K$, as well as appropriate extensions $\wh\rho$ and $\wh{H}$ of the action $\rho : \fG\times \X \to T\X$ and momentum map $H :\X \to \fG^*$, respectively. This allows one to reinterpret the affine orbits above in terms of genuine coadjoint orbits in the centrally extended algebra, and a non-$\Ad$-equivariant momentum map $H$ in terms of an equivariant one. 

Details on these last considerations, as well as on the Kirillov--Konstant--Souriau (KKS) Poisson structure on $\fG^*$ associated to a CE cocycle $K$, and the related symplectic structure on the affine/coadjoint orbits in $\fG$, can be found in Appendix \ref{app:CEcoh} (see in particular Lemma \ref{lemma:KKS} and Corollary \ref{cor:KKS}).

\begin{definition}[Equivariant locally Hamiltonian $\G$-spaces]\label{def:equivarianceX}
Let $(\X,\bom,\G)$ be as in Definition \ref{def:Ham+mommaps}. Then $(\X,\bom,\G)$ is a \emph{strongly-, $\Sigma$-, weakly-equivariant locally Hamiltonian $\G$-space} iff the action of $\G$ is locally symplectic and it admits a strongly-, $\Sigma$-, weakly-equivariant local momentum form $\bH$, respectively.
\end{definition}


\subsection{Yang--Mills theory:\ general setup}\label{sec:runex-setup}
This is the first of several sections dedicated to a ``running'' example, given by Yang--Mills theory (YM). Throughout this example $\Sigma$ is a compact, oriented manifold with positive-definite\footnote{Generalizations to a Lorentzian (nondegenerate) signatures of $\Sigma$ are possible in principle, but the solution space of the (Gauss) constraint must be re-analyzed. The case of YM on a null $\Sigma$ is studied in \cite{RielloSchiavinanull}.} metric $\gamma$; in other words, $\Sigma$ can be thought of as a (region within a) Cauchy surface in a globally hyperbolic spacetime $M=\Sigma\times\mathbb R$.

We consider two cases: $G$ a real semisimple Lie group with Lie algebra $\mathfrak{g}$, or $\fg$ Abelian.\footnote{The semisimple and Abelian cases are not exhaustive of all possible scenarios, but they capture the basic phenomena that are to be expected in general.}
On $\fg$ we consider a nondegenerate, $\Ad$-invariant, bilinear form, denoted
\[
\tr \colon \mathfrak{g}\times\mathfrak{g} \to \mathbb{R},
\]
and used to (implicitly) identify $\dual{\fg}$ and $\fg$. In the semisimple case we can use the Killing form. This convention allows us to write formulas in a more familiar form. Under this identification $\Ad^*(g)$ is mapped to $\Ad(g^{-1})$ and $\ad^*(\xi)$ to $-\ad(\xi)$.

Consider a principal $G$-bundle $P\to \Sigma$, and denote by\footnote{Note that it is possible to see $\Acal$ as sections of the fibre bundle $(J^1P)/G \to \Sigma$, thus fitting with our general framework (see e.g. \cite{MangiarottiSardanashvily2000, Lopez}).} 
\[
\Acal \doteq \mathrm{Conn}(P\to\Sigma)
\]
the space of principal connections. This is an affine space (locally) modelled over the space of $\fg$-valued 1-forms. Denoting by $\sim$ the local model for infinite-dimensional manifolds, we write $\Acal \sim \Omega^1(\Sigma,\fg)$. Often, the elements $A\in \Acal$ are called \emph{gauge potentials}, or gauge connections.

The \emph{gauge group} $\G$ is then defined to be the group of vertical automorphisms of $P$ which are connected to the identity. This group is isomorphic to the group of {$\AD$}-equivariant $G$-valued functions over $P$ which are homotopic to the identity, that is $\G\simeq C_0^\infty(P,G)^G $; it is also isomorphic to the space of sections of the associated  bundle  $\AD P \doteq P\times_{\AD} G$ which are homotopic to the identity section, $\G \simeq\Gamma_0(\Sigma,\AD P)$. Summarising
\[
\G\simeq C_0^\infty(P,G)^G  \simeq\Gamma_0(\Sigma,\AD P),
\]
which is a smooth nuclear Fr\'echet manifold \cite[37.17, 42.21]{kriegl1997convenient}, which locally over $U\subset \Sigma$ is given by $\G\sim_U C_0^\infty(U,G)$. Its Lie algebra $\fG=\mathrm{Lie}(\G)$ can similarly be expressed in terms of $\Ad$-equivariant $\fg$-valued functions on $P$, or section of the adjoint bundle $\Ad P \doteq P \times_{\Ad} \fg$:
\[
\fG\simeq C^\infty(P,\fg)^G  \simeq\Gamma(\Sigma,\Ad P) \sim_U C^\infty(U,\fg).
\]
In other words, here $\Xi = \Ad P$. Similarly, we denote the dual vector bundle $\Xi^*$ by $\Ad^* P \doteq P \times_{\Ad^*} \dual{\fg}$.

The finite and infinitesimal actions on $\Acal$ are given by $A \mapsto A^g = {\Ad(g^{-1})\cdot} A + g^{-1}d g$ and $\rho(\xi)(A) = d_A \xi \doteq d \xi + [A,\xi]$. This action is proper \cite{Rudolph_2002}, and it admits slices whenever $\Sigma$ is compact \cite{AbbatiCirelliMania, DiezPhD}. 

\begin{remark}[Notation]\label{rmk:dualpairingYM}
The canonical pairing $\tr(\cdot\cdot)$ between elements of $\dual{\fg}$ and $\fg$ \emph{should not be confused} with the related, but distinct, pairing $\langle\cdot,\cdot\rangle$ between dual-valued forms and elements of $\fG$. Indeed, the two can be conflated when the dual-valued form is order-0, i.e.\ it is valued in sections of the dual vector bundle $\Xi^*\to\Sigma$. E.g.\ consider the special case of a \emph{order-0} $(\text{top},\bullet)$-form $\balpha$ and let $\xi\in\fG$:
\[
\text{if } \balpha\in\oloc^{\text{top},\bullet}(\Sigma\times\X,\dual{\fG}_{(0)}) \text{ then }\langle \balpha, \xi\rangle(x) = \tr(\balpha(x) \xi(x))\in\oloc^{\text{top},\bullet}(\Sigma\times\X).
\]
But this equality \emph{fails} for a general $\balpha $ which is \emph{not} order-0, i.e.\ for a general element of $\oloc^{\text{top},\bullet}(\Sigma\times\X,\dual{\fG})$, since then $\langle \balpha, \xi\rangle(x)$ can involve derivatives of $\xi(x)$. (Cf. Definitions \ref{def:dualvaluednew}, \ref{def:order-kduals}, and \ref{def:dual*}). 
\end{remark}

The geometric phase space of Yang--Mills theory is the fibrewise cotangent bundle of the space of principal connections $\Acal$ 
\[
\X \doteq T^\vee \Acal,
\] 
which in this case is a trivial bundle with fibre
\[
T^\vee_A\Acal = \Gamma(\mathrm{Dens}(\Sigma)\times_\Sigma T\Sigma\times_\Sigma \Ad^*P).
\]
Whenever $\Sigma$ is non-null, we have that the composition $\sharp\circ \star$ of the Hodge-star and musical-isomorphism operators associated to the positive-definite metric $\gamma$ over $\Sigma$, yields the isomorphism\footnote{This characterization of the electric fields in terms of $(\text{top}-1)$-forms is what one naturally gets when inducing symplectic data for YM theory using the covariant phase space method (Appendix \ref{app:covariant}).}
\begin{equation}
    \label{eq:mathcalE}
T^\vee \Acal \simeq
\mathcal{E}\doteq \Gamma(\wedge^{\text{top}-1}T^*\Sigma \times_\Sigma \Ad^*P) \sim_U \Omega^{\text{top}-1}(U,\fg^*).
\end{equation}
We denote $E$ the elements of $\mathcal{E}$ and call them \emph{electric fields}. This allows us to locally think of $E$ either as a $\fg^*$-valued vector density of weight $1$ or as $\fg^*$-valued $(\text{top}-1)$-forms on $\Sigma$. In our formulas we adopt the differential-form point of view unless otherwise stated. The fibrewise cotangent bundle is then identified with the Cartesian product
\[
\X \simeq \mathcal{E} \times \Acal.
\]

From the definition of $\X$ and the action of the gauge  group on $\Acal$, one deduces $E \mapsto E^g = \Ad^*(g)\cdot E$, i.e.\ $\L_{\rho(\xi)} E = \ad^*(\xi)\cdot E$. In other words, locally over $\Sigma$, the fundamental vector field $\rho(\xi)\in\Xloc(\X)$ reads
\begin{equation}
  \label{eq:rhoYM}  
\rho(\xi)(A,E) = (d_A \xi,\ad^*(\xi)\cdot E).
\end{equation}

\begin{remark}[Irreducible configurations]\label{rmk:YMirreducibleconfig}
In the case of $G$ non-Abelian semisimple, unless explicitly stated, we will hereafter ignore the existence of reducible configurations at which $\rho$ has a nontrivial kernel, and thus restrict our attention to the dense subset of $\X$ given by the cotangent of the space of irreducible connections---which we keep denoting $\Acal$ and $\X=T^\vee\Acal$.\footnote{This is not the same as the subspace of irreducible configurations of $\X$, which fails to be a cotangent space---in fact in the latter space one can consider pairs $(A,E)$ where both $A$ and $E$ are reducible but by different parameters.}
This is done for later convenience: see Assumption \ref{assA:isotropy} and Remarks \ref{rmk:isotropybundle1} and \ref{rmk:isotropybundle2}.
More generally, we expect that our procedure can be combined with the stratified reduction picture described in \cite{DiezHuebschmann-YMred} to treat reducible configurations as well. One could in fact consider each ``stratum'' given by those configurations whose stabilisers are conjugate to each other. Since the \emph{actual} stabilizer still depends on the point $(A,E)$, one gets a bundle version of our construction. We discuss this generalization in Section \ref{sec:LieAlgPict}.
\end{remark}

The local symplectic density on $\X$ is given by the local 2-form 
\begin{equation}\label{eq:bomegaYM}
\bom = \langle \bd E , \bd A \rangle = \tr( \bd E\bd A).
\end{equation}
Notice that this symplectic density $\bom$ is ultralocal (Definition \ref{def:ultralocal}).

From Equations \eqref{eq:rhoYM} and \eqref{eq:bomegaYM} it is easy to check that the action of $\G$ on $(\X,\bom)$ is locally Hamiltonian with local momentum form
\[
\langle \bH, \xi \rangle = (-1)^{\dim\Sigma}\ \tr( E d_A\xi).
\]
The local momentum form $\bH$ is order 1, since it depends on the first derivative of $\xi$ but no higher jets thereof.
Furthermore, we observe that the momentum form satisfies the following:
\[
\bK(\xi) \doteq \L_{\rho(\xi)}\bH - \ad^*(\xi)\cdot \bH=0
\]
a property that will be called ``strong equivariance'', Definition \ref{def:equivariance}.

We conclude this section by noticing that the phase space $(\X,\bom)$ over the spacelike surface $\Sigma$ can be obtained by following the procedure outlined in Appendix \ref{app:covariant} starting from the Lagrangian density $\boldsymbol{L} = \tfrac14 \tr(F_{\bar A}\star_M F_{\bar A})$, where $\star_M$ is the Hodge operator over $M$ (possibly Lorentzian) and $F_{\bar A}$ is the principal curvature 2-form of the principal connection $\bar A$ on a $G$-principal bundle $\bar P \to M$ which pullbacks to $P\to\Sigma$ along $\iota_{\Sigma}:\Sigma\hookrightarrow M$.

\subsection{The Lie algebroid picture} \label{sec:LieAlgPict}
We conclude Section \ref{sec:localfieldtheory} with the introduction of a different take on the action of $\G$ on $\X$ in terms of an (action) Lie algebroid $\A = \X\times\fG$, where $\G$ and $\X$ are put on a more equal footing. Not only this change of perspective points to mathematically natural generalizations, but it also allows us to perform constructions that are crucial even in the \emph{current} setup---notably Proposition \ref{prop:dualvaluedformdecomposition}, as well as the entirety of Section \ref{sec:cornerdata}.

We obtain a Lie algebroid by looking at the Lie-algebra action as the morphism of vector bundles
\begin{equation}\label{eq:A} 
\xymatrix@C=.5cm{
\A \doteq \X \times\fG \ar[drr]^-{p} \ar[rr]^-{\rho} & &  T\X \ar[d]\\
&&\X
}
\end{equation}
Here, the bundle $p:\A\to \X$ is trivial, and the space of sections of $p:\A\to\X$ can be equipped with a natural Lie bracket $\llbracket\cdot,\cdot\rrbracket$: 
\[
\llbracket\cdot,\cdot\rrbracket : \Gamma(\A)\times\Gamma(\A) \to\Gamma(\A),
\quad 
(\sigma,\tau) \mapsto \llbracket\sigma,\tau\rrbracket \doteq  [\sigma,\tau] + \L_{\rho(\sigma)}\tau - \L_{\rho(\tau)}\sigma.
\]
Then, seen as a map between spaces of sections, the action \eqref{eq:rhoaction} defines a homomorphism of Lie algebras:
\[
\rho: \Gamma(\A) \to \mathfrak{X}(\X), 
\qquad
[\rho(\sigma), \rho(\tau) ] = \rho(\llbracket \sigma, \tau \rrbracket).
\]
$(\A,p,\rho,\llbracket\cdot,\cdot\rrbracket)$ is called the \emph{action Lie algebroid} associated to $(\X,\G)$ with anchor $\rho$.
(Note: if the action $\rho$ is local, a \emph{local} version of this action Lie algebroid can be defined by allowing only local sections $\sigma\in\Gamma_\loc(\X,\A)$, i.e.\ only those sections which are local as maps between fibre bundles over $\Sigma$, $\sigma:\X \to \fG$.)

\medskip

The Lie algebroid $\A$ is itself the space of sections of a fibre bundle,
\[
\A=\X\times\fG=\Gamma(E\times_\Sigma \Xi).
\]
Hence, there is a space of local forms $\oloc^{\bullet,\bullet}(\Sigma\times\A)$, whose horizontal and vertical differentials we denote $d$ and $\cbd$ respectively:
\[
\left( \oloc^{\bullet,\bullet}(\Sigma\times\A),d,\cbd\right)
\]
where the $\check\bullet$ notation allows us to distinguish the differential $\cbd$ over $\A$ from the differential $\bd$ over $\X$. However, since $E\times_\Sigma\Xi$ is a product bundle, it is possible to decompose the local vertical differential into anticommuting differentials over each factor, respectively
\[
\cbd = \cbd_H + \cbd_V,
\]
with $\cbd_H$ essentially identified with $\bd$.
This in turn leads to a $3$-fold grading for the local forms over $\Sigma\times\A$, which is associated to the one horizontal ($d$) and the two vertical ($\cbd_H,\cbd_V$) differentials:
\begin{equation}
    \label{eq:p-horz-vert}
 \left(\oloc^{\bullet,\bullet}(\Sigma\times\A),d,\cbd\right)
\equiv
\left(\oloc^{\bullet,\bullet,\bullet}(\Sigma\times\X\times\fG), d,\cbd_H,\cbd_V\right).
\end{equation}
With reference to the bundle $p:\A\to \X$, we will henceforth refer to $(\cbd_H,\cbd_V)$ as to the \emph{$p$-horizontal} and \emph{$p$-vertical differentials}, respectively. Finally, whenever unambiguous, we identify
\[
\cbd_H\equiv\bd.
\]

\begin{remark}[Beyond action Lie algebroids]\label{rem:LieAlgbds}

In the action Lie algebroid setup, the fundamental vector fields $\rho(\xi)$ correspond to the image of constant sections of $p:\A\to\X$ under the \emph{anchor} map $\rho$. 
In physics parlance, non-constant sections of $\A\to\X$ are sometimes referred to as ``field-dependent gauge transformations'' (see e.g.\ \cite{RielloGomes,RielloSciPost} for an account).

Generalising our constructions to non-action Lie algebroids is possible in principle, at least when $\A$ is equipped with a (flat) connection. 
This allows to directly generalise the notions of $p$-horizontal and $p$-vertical forms delineated here as well as the associated 3-fold grading---the nilpotency of $\cbd_H$ as well as the vanishing of $[\cbd_H,\cbd_V]=0$ being consequences of flatness.
With such a connection at hand, one can replace the notion of constant sections, with that of sections which are constant w.r.t.\ $\cbd_H$. This generalization is in the spirit of \cite{blohmann2018hamiltonian} (see also \cite{KotovStrobl19}).

We note that (flat) connections are also relevant when considering ``infinitesimal ideal systems'' for Lie algebroids \cite{JotzLeanOrtiz}; these generalizations of the notion of a Lie algebra ideal to Lie algebroids are likely to play a role in the generalization of the bulk/corner split of the gauge group discussed in Section \ref{sec:BulkCornerGrps}. 
\end{remark}

\begin{remark}[Hamiltonian Lie algebroids and momentum sections]\label{rmk:HamLieAlg}
In this more general setting, if the fibres $p^{-1}(\phi)\subset \A$ are isomorphic to the space of sections of a vector bundle, $\calW=\Gamma(\Sigma,W)$, local forms in $\iloc^0(\A)$ provide the infinite-dimensional, local, analogue of the momentum sections introduced in \cite{blohmann2018hamiltonian} as generalization of the standard momentum maps for finite-dimensional symplectic spaces with a Lie-algebroid symmetry.
\end{remark}

%

\section{Constraint and Flux momentum maps}\label{sec:constraintsandfluxes}

\subsection{Basic definitions and main assumptions}

So far we have discussed locally Hamiltonian (phase) spaces quite abstractly, without any reference to an underlying Lagrangian gauge theory from which they could be derived, and only paying little attention to the presence of corners. In this section we input additional requirements coming from the treatment of physical models.

Key to the entire construction is the source/boundary decomposition of the local momentum form $\bH$ of Definition \ref{def:Ham+mommaps}:

\begin{definition}[Constraint and flux forms]\label{def:momentumformdecomposition}
Let $(\X,\bom,\G,\bH)$ be a locally Hamiltonian $\G$ space, with $\bH\in\oloc^{\mathrm{top},0}(\Sigma\times \X,\fG^*)$ the local momentum form. Following Proposition \ref{prop:dualvaluedformdecomposition}, we decompose $\bH = \bHo + d\bh$ into the sum of a order-0 and a $d$-exact local form. We call $\bHo$ the \emph{constraint form} and $d\bh$ the \emph{flux form}. Moreover, we call $\Ho\doteq\int_{\Sigma}\bHo$ the \emph{constraint map} and $\h\doteq\int_\Sigma d\bh$ the \emph{flux map}.
\end{definition}

\begin{definition}[Constraint set]
The vanishing locus $\C \subset \X$ of the constraint form is said the \emph{constraint set}, i.e.\footnote{We are using here characterization (2) of Remark \ref{rmk:co-}, that is we are thinking of $\bHo$ as a map $\X \to \osrc^\text{top}(\Sigma,\fG^*)$.} $\C\doteq \bHo^{-1}(0)$.
\end{definition}

\begin{remark}[Off- vs.\ On-shell]\label{rmk:theshell}
To distinguish between quantities that are supported on $\X$ or only at the constraint set $\C\subset\X$, we introduce the (informal) terminology of, respectively, \emph{off-} and \emph{on-shell}. 
\end{remark}

\begin{definition}[On-shell isotropy locus]\label{def:isotropylocus}
Let $\iota_\C^*\A=\C\times\fG \to \C$ be the pullback vector bundle over $\C\hookrightarrow\X$. Then, its subset
\[
\mathsf{I}_\rho \doteq \{ (\phi,\xi)\in \C \times \fG \ | \ \rho(\phi,\xi) = 0 \} \subset \iota_\C^*\A
\]
is said the \emph{on-shell isotropy locus} of the action $\rho$.
\end{definition}

\begin{remark}[Co-flux map]\label{rmk:coflux}
In the spirit of Remark \ref{rmk:co-}, the flux map $\h$ can either be viewed as an integrated local form $h \in\iloc^{0}(\A)$, or as a dual-valued functional $\X\to \dual{\fG}_\loc$ (Definition \ref{def:dualvaluednew}). Additionally, it makes sense to think of $h^\text{co}$ as a \emph{co-flux} map, i.e as a linear map:
\[
\h^\text{co} : \fG \to \iloc^{0}(\X), \qquad \bd\h^{\text{co}}\colon \fG \to \iloc^{1}(\X).
\qedhere
\]
\end{remark}

We can now formulate the main set of assumptions used in this work, which will be \emph{left implicit unless stated otherwise}. 

\begin{assumption}[Main assumptions]\label{ass:setup}
$(\Sigma,\pp\Sigma)$ is a compact, oriented, connected, manifold with boundary. $(\X,\bom)$ is a locally symplectic space (Definition \ref{def:locsymp})
modeled on sections of some fibre bundle over $\Sigma$ that admits a global section\footnote{This condition ensures surjectivity of $j^\infty$, cf.\ Equation \eqref{eq:localforms-jinfty} and \cite[Proposition 3.1.14]{BlohmannLFT}.}, with the following properties:
\begin{enumerate}[label=(\ref{ass:setup}.\arabic*)]
    \item\label{assA:item1} $(\X,\bom)$ is path-connected,\footnote{\label{fn:pathconnected1}This hypothesis is needed to deduce that if a $(\bullet,0)$-local form $\boldsymbol{\nu}$ on $\X$ is such that $\bd \boldsymbol{\nu} = 0$, then $\boldsymbol{\nu}(\phi)=\boldsymbol{\nu}(\phi_\smbullet)$ for an arbitrary $\phi_\smbullet\in\X$. It is straightforward to generalise our results to the union of path-connected components.} and \emph{locally-Hamiltonian} w.r.t.\ the 
    action of a local Lie group $\G$ with $\Sigma$-local Lie algebra $\fG = \Gamma(\Sigma,\Xi)$ (Definitions \ref{def:Ham+mommaps} and \ref{def:LocLieAlg}), and local momentum form $\bH\in\oloc^{\text{top},0}(\Sigma\times\X,\dual{\fG})$:
    \[
        \bi_{\rho(\xi)}\bom = \bd\langle \bH,\xi\rangle;
    \]
    \item\label{assA:equiv} the local momentum form $\bH$ is weakly equivariant (Definition \ref{def:equivariance}), i.e.
    \[
    \bC = d\bc;
    \]
    \item\label{assA:constr} the constraint set $\C\subset \X$ is a smooth and path-connected submanifold;
    \item\label{assA:isotropy} 
    the on-shell isotropy locus $\mathsf{I}_\rho$ is a vector subbundle $\mathsf{I}_\rho \to \C$ of $\A\to\X$; we thus call it the \emph{on-shell isotropy bundle} and denote its (generic) fibre by $\mathfrak{I}\subset\fG$ and call it the \emph{on-shell isotropy algebra}.\footnote{To see that the fibre $\mathfrak{I}$ is a subalgebra of $\fG$, note that a constant section $\xi$ of $\iota_\C^*\A$ is by definition also a (constant) section of the on-shell isotropy bundle $\mathsf{I}_\rho$ iff $\rho(\xi)\vert_\C=0$. Then, if $\chi_{1,2}$ are two constant sections of $\mathsf{I}_\rho$, on $\C$ one has $\rho([\chi_1,\chi_2])\vert_\C = [\rho(\chi_1),\rho(\chi_2)]\vert_\C= 0$ and thus one concludes.}
    \qedhere
\end{enumerate}
\end{assumption}

\subsection{Remarks on Assumption \ref{ass:setup}}

We collect here a few observations on the general assumptions we just outlined, and on their scope.

\begin{remark}[Constraints and the Noether current]\label{rmk:Noethercurrent}
Our momentum form decoposition is a straightforward application of the theory of local forms, but it has an important physical interpretation. Throughout, we will work with $\C$ being \emph{defined} as the vanishing locus of $\bHo$ or, equivalently, of $\Ho$, while forgoing all reference to the Lagrangian origin of the locally Hamiltonian field theory. Noether's theorem is the key link that relates the equation of motion for the associated Lagrangian field theory (say on a cylinder $\Sigma\times [0,1]$) to $\C$. In that case, $\C = \Ho^{-1}(0)$ is the statement that the vanishing locus of $\Ho$ coincides with the canonical constraints of the Lagrangian field theory. This is a consequence of the classical analysis of \cite{LeeWald}. This analysis proves that the momentum form $\bH$ corresponds to the (pullback of) the Noether current on $\Sigma$, and thus---in virtue of Noether's theorem---that $\bH$ is $d$-cohomologous to zero on shell: $[\bH]_d\approx 0$. Hence, physical configurations are determined by the vanishing of $\bH$ ``up to boundary terms'', a statement concisely summarized by Definition \ref{def:momentumformdecomposition}. Finally, note that not all Lagrangian theories with local symmetries give rise to Hamiltonian $\G$-spaces, but \emph{if} one does, \emph{then} the picture described here holds. (An important counterexample is provided by General Relativity and diffeomorphism symmetry, cf.\ \cite{LeeWald}.)
\end{remark}

\begin{remark}[Smoothness: geometric vs.\ cohomological description]\label{rmk:C-smoothness} 
This paper focuses primarily on describing the structure of (symplectic) reduction for locally Hamiltonian gauge field theories \emph{when smooth}: in other words, we are only marginally interested in looking for sufficient conditions for smoothness of $\C$ and its quotients.
Indeed, we are more generally interested in developing a framework that---although able to touch on these topics---looks at cohomological resolutions of quotients as a preferred alternative to their geometric description, especially when they are not smooth, e.g.\ within the Batalin--Fradkin--Vilkovisky method (see \cite{BV1,BV2,BV3,StasheffConstraints88,SchaetzBFV}).
Nonetheless, regarding smoothness, let us note that over a general Fr\'echet manifolds the inverse function theorem is not available, and one needs to rely on other methods to prove the smoothness of $\C = \Ho^{-1}(0)$ (on \emph{tame} Fr\'echet manifolds one has the Nash--Moser theorem). 
Following \cite{ArmsMarsdenMoncrief1981} we know that, besides infinite-dimensional issues, the singularities in $\C$ are linked to the stabilizers of the infinitesimal action. This links Assumption \ref{assA:constr} with Assumption \ref{assA:isotropy}.
This said, in the examples we will consider in this paper, smoothness of $\C$ is granted. (See also Remarks \ref{rmk:smooth-C-revisited}, \ref{rmk:smooth-uCo} for related comments on the smoothness of $\C$ and the corresponding first- and second-stage reduced spaces. As for connectedness, we only need it as a simplifying assumption, but our results can be extended to non-connected $\C$.)
\end{remark}

\begin{remark}[Isotropy locus: the general case]\label{rmk:isotropybundle1}
The on-shell isotropy locus encodes the configurations with non-trivial reducibility parameters. Physically, these can be understood as ``global symmetries'' which (by the $\fG$-equivariance of $\rho$) are well-defined even up to gauge. When configurations with non-trivial reducibility parameters are present, the gauge action is not free. Moreover, if the ``size'' of the reducibility algebra\footnote{In many gauge theories of physical interest, such as YM theory, the reducibility algebra of any given configuration is finite dimensional and its ``size'' is easily codified by its dimension.} is not constant over $\C$, symplectic reduction is singular \cite{ArmsMarsdenMoncrief1981,sjamaar1991,DiezHuebschmann-YMred}.
Assumption \ref{assA:isotropy}, i.e.\ the requirement that all on-shell configurations $\phi\in\C$ have \emph{isomorphic} stabilisers $\ker(\rho_\phi) \simeq \mathfrak{I}$ (also known as reducibility algebras), guarantees that the ``order'' of the induced anchor $\rho\vert_\C$ is constant.
For example, in theories of a $\fg$-valued gauge connection $A$, such as Yang--Mills, Chern--Simons or $BF$, $\X$ decomposes into \emph{subsets of orbit type,} given by configurations whose isotropy subgroups\footnote{In YM and Chern--Simons theory, these are finite-dimensional subgroups of $\G$ isomorphic to subgroups of $G$. In $BF$ the same statement applies with $G$ replaced by $T^*G$.} are conjugate to each other in $\G$ (e.g. \cite[App. A.2]{DiezPhD}).
Assumption \ref{assA:isotropy} is then satisfied if we look at the intersection of $\C$ with each subset of orbit type.
\end{remark}

\begin{remark}[Isotropy bundle: the trivial case]\label{rmk:isotropybundle2}
There are a few cases of interests in which the isotropy bundle is a \emph{trivial} subbundle, i.e.\ all configurations $\phi\in\C$ have the \emph{same} isotropy, $\ker(\rho_\phi) = \mathfrak{I}\subset\fG$, so that one further has
\[
\mathsf{I}_\rho = \C \times \mathfrak{I}.
\]

In YM theory, this happens e.g.\ in the following cases:
\begin{enumerate}
    \item\label{rmk:iso2-item1} if $G$ is Abelian, for then $\mathfrak{I} \simeq \fg$ is the subalgebra of $d$-closed gauge parameters (which is a $\phi$-independent condition);
    \item in non-Abelian YM, provided that we restrict to the (fibrewise cotangent bundle of the) space of irreducible configurations, for then $\{0\}=\ker(\rho_\phi) \simeq \mathfrak{I}$. If $G$ is semisimple, irreducible configurations constitute a dense subset. 
\end{enumerate}

All our theorems work regardless of the triviality of $\mathsf{I}_\rho\to\C$. However, in the non-trivial case some results will be ``sub-optimal'' (cf.\ the discussion of non-Abelian $BF$ theory in Section \ref{sec:BFtheory}). In fact, this scenario naturally calls for a generalization of our construction to more general (i.e.\ non-action) Lie algebroids.
Indeed, these can be accommodated by working  with the notion of Hamiltonian Lie algebroids developed in \cite{blohmann2018hamiltonian}.
We leave this line of investigation for future work.
\end{remark}

\subsection{Weak-equivariance of the local momentum form}

We now show that Assumption \ref{assA:equiv} of weak equivariance of $\bH$ is equivalent to the strong equivariance of the constraint form $\bHo$, which in turn implies that the constraint set $\C\subset\X$ is preserved by the action of $\G$---i.e.\ that \emph{all} gauge transformations are tangent to the constraint set $\C$ (here assumed smooth). This guarantees that the space of on-shell configurations modulo gauge transformations, i.e. $\uuC \doteq \C/\G$, is well-defined as a set of equivalence classes. As observed in Remark \ref{rmk:Noethercurrent}, in physical theories where a momentum form exists, this is a direct consequence of Noether's theorem. In the language of Dirac and Bergmann, the strong equivariance of $\bHo$ is related to the constraints being first-class \cite{henneaux1992quantization}.

\begin{proposition}[Bulk and corner equivariance]\label{prop:equi}
Assumption \ref{assA:equiv} holds if and only if the constraint local form $\bHo$ is strongly $\G$-equivariant while the flux form $d\bh$ is only weakly equivariant, i.e.\ up to the $d$-exact CE cocycle $d\bc$. That is, for all $g\in\G$,
\[
 g^*\bHo - \Ad^*(g)\cdot\bHo =0
\quad\text{and}\quad
 g^*\bh - \Ad^*(g)\cdot\bh = d\bc(g).
\]
Infinitesimally, this yields:
\[
\L_{\rho(\xi)}\bHo - \mathrm{ad}^*(\xi)\cdot\bHo = 0
\quad\text{and}\quad
\L_{\rho(\xi)}d\bh - \mathrm{ad}^*(\xi)\cdot d\bh = d\bk(\xi).
\qedhere
\]
\end{proposition}
\begin{proof}
Recall the decomposition of $\bH = \bHo+d\bh$ (Definition \ref{def:momentumformdecomposition}), the definition of the CE cocycle $\bC(g) = g^*\bH - \Ad^*(g)\cdot\bH$ (Proposition \ref{prop:CKcocycles}), as well as Assumption \ref{assA:equiv} ($\bC=d\bc$). Hence, introduce for each $g\in\fG$ the identically-vanishing local form 
$\boldsymbol{S}_g\in \oloc^{\text{top},0}(\Sigma\times\A)$, defined by
\[
0 \equiv \boldsymbol{S}_g \doteq g^*\bHo - \Ad^*(g)\cdot\bHo 
+ d \big(  g^*\bh - \Ad^*(g)\cdot\bh - \bc(g)   \big),
\]
where e.g.\ $(g^*\bHo - \Ad^*(g)\cdot\bHo)(\phi,\xi) = \langle\bHo(\phi\triangleleft g),\xi\rangle - \langle\bHo(\phi), \Ad(g)\cdot\xi\rangle$.

Consider its $p$-vertical differential $0\equiv\cbd_V\boldsymbol{S}_g\in\oloc^\text{\text{top},1}(\Sigma\times\A)$, and observe that since $\bHo$ is order-0, $\cbd_V (g^*\bHo - \Ad^*(g)\cdot\bHo)$ is of source-type, whereas the remainder of $\cbd_V\boldsymbol{S}_g$ is boundary-type, i.e.\ $d$-exact. The result readily follows from application of Theorem \ref{thm:LocFormDec} on the uniqueness of the source/boundary decomposition of $(\text{top},1)$-local forms to the expression $\cbd_V\boldsymbol{S}_g\equiv0$. The converse implication is immediate.
\end{proof}

\begin{corollary}\label{cor:tangent}
For all $g\in\G$, $g^*\C \subset\C$. 
Infinitesimally, denote $\rho\vert_\C : \fG \times \C \to T_\C \X$ the restriction of $\rho$ to the constraint set; then, $\Im(\rho\vert_\C) \subset T\C$ and thus defines the on-shell anchor $\rho_\C :  \iota_\C^*\A \to T\C$ making $\iota_\C^*\A = \C \times \fG$ an action Lie algebroid.
\end{corollary}

\begin{proof}
Recall that $\C=\bHo^{-1}(0)$. Thus, for the first statement it is enough to prove that if $\bHo(\phi)=0$ then also $g^*\bHo(\phi) \equiv \bHo(\phi\triangleleft g) =0 $---which follows from the first equation of the previous proposition. The second statement similarly follows by considering infinitesimal, rather than finite, actions.
\end{proof}

\subsection{Isotropy locus revisited}

We conclude this discussion with a lemma meant to showcase the utility of thinking of the local momentum form $\bH$ as an auxiliary Lagrangian over $\A= \X\times\fG$,
\[
\bH \in \oloc^{\text{top},0}(\Sigma\times\X,\dual{\fG}) \equiv \oloc^{\text{top},0}(\Sigma\times\A),
\]
a viewpoint that will turn useful in Section \ref{sec:cornerdata}. We start with a definition:

\begin{definition}\label{def:momentmapSourcedecomp}
Let $\bH \in \oloc^{\text{top},0}(\Sigma\times\A)$. Then, by means of Theorem \ref{thm:LocFormDec}, we define the source form $\bE\in\osrc^{\text{top},1}(\Sigma\times\A)$ and the boundary form $d\be\in d\oloc^{\text{top}-1,1}(\Sigma\times\A)$ through the equation
\[
\cbd \bH \doteq \bE + d\be .
\qedhere
\]
\end{definition}

\begin{lemma}[Isotropy locus as Euler--Lagrange locus of $\bH$]\label{lemma:checkC}
Let $\A$, $\bH$ and $\bE$ be as in Definition \ref{def:momentmapSourcedecomp}. Then, if $\bom_\text{src}^\flat$ is injective (cf.\ Lemma \ref{lem:nondegeneracystrength}), the on-shell isotropy locus $\mathsf{I}_\rho\hookrightarrow\iota_\C^*\A\subset \A$ coincides with the Euler--Lagrange locus $\{ \bE = 0 \}\subset\A$.
\end{lemma}

\begin{proof}
Recall the notion of $p$-horizontal/vertical differentials over $\A$, $\cbd = \cbd_H + \cbd_V$ (Equation \eqref{eq:p-horz-vert}).
Splitting the source and boundary forms $\bE$ and $d\be$ into their $p$-horizontal and $p$-vertical components, as in $\bE = \bE_H + \bE_V$,  we obtain $\cbd_H \bH = \bE_H + d \be_H$ and $\cbd_V \bH = \bE_V + d\be_V$, respectively. Notice that $\bE_{H/V}$ are both of source type. Since $\bE_{H/V}$ are linearly independent as forms, the vanishing locus of $\bE$ is the intersection of the vanishing loci of $\bE_H$ and $\bE_V$.

First, consider the $p$-horizontal term $\bE_H$. Using the locally-Hamiltonian property (Definition \ref{def:Ham+mommaps}), we find 
\[
\bE_{H}  \equiv \Pi_\text{src}( \cbd_H \bH  )
= \Pi_\text{src} (\bi_{\rho(\cdot)}\bom ) \equiv \bom^\flat_\text{src}(\rho(\cdot)).
\]

Since $\bom^\flat_\text{src}$ is injective by assumption,
\begin{equation}\label{eq:Ephi-nonfree}
\bE_{H}(\phi,\xi)=0 \; \iff \; \xi \in \ker(\rho_\phi).
\end{equation}

Consider now the $p$-vertical component of $\cbd\bH = \bE+ d\be$.
Since, by Proposition \ref{prop:dualvaluedformdecomposition} and Definition \ref{def:momentumformdecomposition}, $\bH = \bHo + d\bh$ is a dual-valued form that decomposes into a ``bulk'' constraint form $\bHo$ \emph{of order-0}, and a $d$-exact flux form $d\bh$, it is immediate to see that $\cbd_V\bH = \langle \bH , \cbd_V\xi\rangle$ and therefore the $p$-vertical forms $\bE_{V}$ and $d\be_{V}$  are given by
\[
\bE_{V}= \langle\bHo, \cbd_V\xi\rangle,
\qquad
d\be_{V} = d\langle\bh,\cbd_V\xi\rangle.
\]

From this we readily conclude that 
\begin{equation}\label{eq:Exi}
\bE_{V}(\phi,\xi)=0 \; \iff \;
\bHo(\phi)=0
\; \iff \;
\phi \in \C.
\end{equation}
Together, Equations \eqref{eq:Ephi-nonfree} and \eqref{eq:Exi} return the definition of $\mathsf{I}_\rho$ (Definition \ref{def:isotropylocus}).
\end{proof}

\subsection{Yang--Mills theory: constraint/flux decomposition}\label{sec:runex-decomposition}

We continue here our analysis of YM theory we initiated in Section \ref{sec:runex-setup}, where we have already shown that this theory satisfies Assumption \ref{assA:item1} and \ref{assA:equiv}. 

Now, from the formula $\langle \bH,\xi\rangle = (-1)^{\Sigma}\tr(E d_A\xi)$, one readily sees that the local momentum form $\bH$ is strongly equivariant and decomposes into the sum of the (Gauss) constraint form $\bHo$ and the flux form $d\bh$:
\[
\langle \bHo,\xi\rangle = \tr((d_AE)\xi),
\qquad
\langle d \bh,\xi\rangle = -d \tr(E\xi).
\]
Physically viable YM configurations must satisfy the Gauss constraint, i.e.\ they must belong to the Gauss constraint set (Assumption \ref{assA:constr})
\[
\C = \{ (E,A)\in\X\ |\ d_A E = 0\} \subset \X.
\]

The dual-valued form $\bHo$ is of order-$0$, and therefore $C^\infty(\Sigma)$-linear in $\xi$.
Conversely, neither $\bH$ nor $d\bh$ are $C^\infty(\Sigma)$-linear in $\xi$, but only $\mathbb{R}$-linear and, indeed, order-1 as they involve one horizontal ($d$) derivative of $\xi$.
In particular $d\bh$ should be thought of as a linear map $\fG\to d\oloc^{\text{top}-1,0}(\Sigma\times \X)$ meaning that---when evaluated on $\xi\in\fG$---it returns a $d$-exact top-form on $\Sigma$. 

The flux \emph{map}, evaluated at $\xi\in\fG$, is given by the smearing against $\xi$ of the electric flux through $\pp\Sigma$, hence its name:
\[
h(\xi) = \int_{\pp\Sigma} \tr(\xi E).
\]

An explicit description of the constraint set $\C=\bHo^{-1}(0)$ in terms of a Hodge-like decomposition of $E\in\mathcal{E}_A$ will be provided in Section \ref{sec:runex-fluxannihilators}.

The isotropy locus is the set
\[
\mathsf{I}_\rho = \{ (A,E, \xi) \in \C\times \fG \ : \ d_A \xi = 0 = [E,\xi] \}.
\]
For a fixed $A$, the equation $d_A \xi =0$ is an over-determined, first-order, elliptic equation, whose solution throughout $\Sigma$---when it exists---is fully determined by the value of $\xi$ at any one point $x\in\Sigma$. This implies that  the dimension of the reducibility algebra of any given $(A,E)$ is at most $\dim(\fg)$.

If $G$ is Abelian, then $\mathsf{I}_\rho = \C \times \mathfrak{I}$ with $\mathfrak{I} = \fg \hookrightarrow \fG$ as the subalgebra of constant gauge transformations (note that, in the Abelian case, this notion is well-defined even over a non-trivial principal bundle). Assumption \ref{assA:isotropy} is satisfied.

If $G$ is semisimple, on the other hand, the equation $d_A\xi=0$ has nontrivial solutions $\xi\in\fG$ only if $A$ is reducibile. To satisfy Assumption \ref{assA:isotropy}, in Remark \ref{rmk:YMirreducibleconfig} we introduced $\X = T^\vee\Acal$ as the fibrewise cotangent bundle of $\Acal$ the subset of \emph{irreducible} configurations, which are dense in $\mathrm{Conn}(P\to\Sigma)$ (see also Remarks\ref{rmk:isotropybundle1} and \ref{rmk:isotropybundle2}).

\begin{remark}[Abelian gauge theory with matter fields]
Notice that point \ref{rmk:iso2-item1} of the previous remark is affected by the presence of $G$-modules, which one can interpret as charged matter fields. In this case reducibility is generally lost and the corresponding transformations, now nontrivially generated by the charge density, are best understood as global symmetries. E.g. \cite{AbbottDeser82,RielloGomes} and \cite[Chap.7]{Strocchi13}.
\end{remark}

\subsection{The flux map}\label{sec:flux}

In the absence of corners, $\partial\Sigma=\emptyset$, the flux map $h= \int_{\Sigma}d\bh$ vanishes, and the constraints---encoded in the constraint form $\bHo$---are the only information one needs: all functions on $\uCo \doteq \C / \C^\omega$  are observables, i.e. gauge invariant, since then $\uCo \simeq \C/\G$ \cite{DiezPhD}.

In the context of \emph{locally} Hamiltonian systems with nontrivial corners, $\partial\Sigma\neq\emptyset$, the role of the (corner) flux map is instead front and center for it controls the two-stage reduction procedure, as we will see. 
For technical reasons, however, the  flux form $d\bh$ needs the following ``adjustments'':

\begin{definition}[Adjusted flux form]\label{def:adjpreflux}
Let $\phi_\smbullet\in\C$ be a fixed (albeit arbitrary) reference configuration. Then, the \emph{adjusted flux form and map} (with respect to $\phi_\smbullet$) are 
\[
d\bh_\smbullet \doteq d\bh - d\bh(\phi_\smbullet) , \qquad h_\smbullet \doteq \int_\Sigma d\bh_\smbullet.\qedhere
\]
\end{definition}

\begin{remark}
    Note that in Assumption \ref{ass:setup} we required connectedness of $\C$. If this assumption is relaxed, an adjusted flux map $d\bh_\smbullet^i$ has to be defined with respect to every connected component $\C^i$ of $\C$. Consider $\iota^*_\C d\bh^i_\smbullet \doteq \iota^*_\C d\bh^i - \iota^*_\C d\bh(\phi_\smbullet^i)$, where $\phi_\smbullet^i\in\C^i$ and $\iota^*_\C d\bh^i=\iota^*_\C d\bh$ on $\C^i$ and zero elsewhere. 
\end{remark}

\begin{definition}[On-shell flux map]\label{def:flux}
Let $h_\smbullet\in\iloc^0(\X,\dual{\fG})$ be the adjusted flux map. The \emph{on-shell adjusted flux map} (with respect to $\phi_\smbullet\in\C$) is the pullback $\iota_\C^*h_\smbullet$.
\end{definition}

\begin{remark}[About on- and off-shell fluxes]\label{rmk:flux}
In certain instances, it will be necessary to distinguish the ``on-shell flux map''  $\iota_\C^*\h_\smbullet$  from the flux $\h$ \emph{tout court}, which we call ``off-shell flux map.''
A few observations:
\begin{enumerate}
    \item All flux maps and forms, either on- or off-shell, are functions of the restriction of fields at $\partial\Sigma$ (possibly through transverse jets). Hence, if $\partial\Sigma=\emptyset$ they all vanish.
    \item Adjusted flux maps (and forms) vanish at the reference configuration with respect to which they are defined, $h_\smbullet(\phi_\smbullet)=0$.
    \item Since $d\bh(\phi_\smbullet)$ is constant over $\X$, the (local) Hamiltonian flow equations can be rewritten as
    \begin{equation}\label{eq:intHamfloweq}
        \bi_{\rho(\cdot)}\bom = \bd\bHo + \bd d\bh_\smbullet 
        \quad\text{and}\quad
        \bi_{\rho(\cdot)}\omega = \bd\Ho + \bd\h_\smbullet.
    \qedhere
    \end{equation}
\end{enumerate}
\end{remark}

\begin{proposition}[Flux equivariance]\label{prop:adjequi}
Let $d\bc$ and $d\bk$ be as in Proposition \ref{prop:equi}. 
Define
\[c_\smbullet \doteq \int_\Sigma d\bc + \delta d\bh(\phi_\smbullet), \qquad k_\smbullet \doteq \int_\Sigma d\bk - \delta'd\bh(\phi_\smbullet)
\] 
with $\delta$ (resp. $\delta'$) the CE group (resp. algebra) differentials (Appendix \ref{app:CEcoh}).
Then, for all $g\in\G$ and $\xi\in\fG$ 
\begin{align}\label{eq:starredcocycle}
    c_\smbullet(g)  = g^*h_\smbullet - \Ad^*(g)\cdot h_\smbullet, \qquad 
 \k_\smbullet(\xi)   =\L_{\rho(\xi)}h_\smbullet - \ad^*(\xi)\cdot h_\smbullet, 
\end{align}
with $c_\smbullet(g)=\h_\smbullet(\phi_\smbullet\triangleleft g)$. 
These equations hold for the \emph{on-shell} flux map as well, i.e.\ upon the replacement of $\h_\smbullet$ by $\iota_\C^*\h_\smbullet$.
We call $c_\smbullet$ and $\k_\smbullet$ the \emph{corner (CE) group and algebra cocycles}.
\end{proposition}

\begin{proof}
Equation \eqref{eq:starredcocycle} is an immediate consequence of the definition of the adjusted flux form as a shifted version of the flux form, together with Proposition \ref{prop:equi} (cf.\ Remarks \ref{rmk:HamShift2}).
To prove that $c_\smbullet(g) = h_\smbullet(\phi_\smbullet\triangleleft g)$, it is enough to evaluate the first of Equations \eqref{eq:starredcocycle} at $\phi=\phi_\smbullet\in\C$ where $\h_\smbullet(\phi_\smbullet) = 0$. 
Finally, the on-shell statement follows from Corollary \ref{cor:tangent} stating $g^*\C\subset \C$.
\end{proof}

\begin{remark}[Equivariance, revisited]\label{rmk:equipushfwd}
For later convenience we rewrite the equivariance property of the adjusted flux map $\h_\smbullet$ as the push-forward along $\h_\smbullet :\X \to \dual{\fG}_\mathrm{loc}$ of the Hamiltonian action $\rho$ of $\fG$ on $\X$ to the action $\ad^*_K$ of $\fG$ on $\dual{\fG}_\mathrm{loc}$ associated to the corner cocycle $K=k$ (Definition \ref{def:affinecoad}):
\[
(\h_\smbullet)_* \rho(\xi) = \ad^*_{k}(\xi) 
\]
where we identified $T_{\h_\smbullet(\phi)}\dual{\fG}_\mathrm{loc}$ with $\dual{\fG}_\mathrm{loc}$.
\end{remark}

Finally, we introduce the important notions of \emph{off- and on-shell flux spaces:}
\begin{definition}[Flux spaces]\label{def:fluxspaces}
Let $h_\smbullet : \X \to \dual{\fG}_\mathrm{loc}$ be the adjusted flux map, and define the \emph{off}- and \emph{on-shell flux spaces}, respectively, as
\[
\Foff \doteq \Im(\h_\smbullet)
\quad\text{and}\quad
\F \doteq \Im(\iota_\C^*\h_\smbullet)
\]
Then, $\F\subset\Foff\subset\dual{\fG}_\mathrm{loc}$ and we denote their elements, said \emph{off}- and \emph{on-shell fluxes}, by the same symbol $f \in \F^{({\off})}$. 
\end{definition}

\begin{lemma}\label{lemma:FPoisson}
The flux space $\F$ (resp.\ $\Foff$) is a subset of $\fG^*$ invariant under the action of $\ad_K^*$, which is  generated by $\Pi^K_{\fG^*}$ (Lemma \ref{lemma:KKS}). The orbits of $f\in \F$ (resp. $\Foff$) under said (coadjoint/affine) action are symplectic submanifolds of $\fG^*$. We denote the KKS symplectic structure on these orbits by $\Omega_{[f]}$.
\end{lemma}

\begin{proof}
As observed in Lemma \ref{lemma:KKS} and Remark \ref{rmk:PoissonLieAlg}, the Poisson bivector $\Pi_{\dual{\fG}}^{K}$ satisfies $\Pi_{\dual\fG}^{K\sharp} = \ad_K^*$ and thus restricts from $\dual{\fG}$ to $\dual\fG_\loc$, since the coadjoint/affine action of $\G$ on $\dual{\fG}$ is local (Definition \ref{def:localaction}). The off-shell flux space $\Foff \simeq \Im(h_\smbullet)$ is an invariant subset of $\dual{\fG}_\mathrm{loc}$ with respect to the affine or coadjoint action $\ad^*_K$ of $\G$ on $\dual{\fG}_\mathrm{loc}$ (Definition \ref{def:affinecoad}), in virtue of the fact that $\h_\smbullet$ is equivariant (possibly up to a CE cocycle; see Proposition \ref{prop:adjequi}, Remark \ref{rmk:equipushfwd}).
Therefore $\Pi_{\dual{\fG}}^K$ also restricts to $\Foff\subset\fG^*_\loc \subset \dual{\fG}$.
The remaining statements about the symplectic leaves of the (restricted) Poisson structure on $\Foff$ follow since invariance of $\Foff$ implies that the (symplectic) affine/coadjont orbit of $f\in\Foff$ is entirely contained in $\Foff$. Finally, the corresponding statements for the on-shell flux space $\F$ follow in an analogous manner from the observation that $\iota_\C^* \h_\smbullet$ is also equivariant (Proposition \ref{prop:adjequi}).
(Recall that in Assumption \ref{assA:constr} we required $\C$ to be smooth in $\X$.) 
\end{proof}

\begin{remark}[On the choice of $\phi_\smbullet$]
The subsets $\F$ and $\Foff$ of $\fG^*_\loc$ depend on a choice of reference configuration $\phi_\smbullet\in\C$ through $\h_\smbullet\doteq h - h(\phi_\smbullet)$. If we choose another reference configuration $\phi_\smbullet'$ and define $\h_\smbullet' \doteq h - h(\phi_\smbullet')$ we get that $\h'_\smbullet - \h_\smbullet = h(\phi_\smbullet) - h(\phi'_\smbullet)\doteq \Delta_\smbullet\in\fG^*$, which is a constant on $\X$. We can thus define $\F'\doteq \mathrm{Im}(\h_\smbullet')$ and immediately observe that $\F' = \F + \Delta_\smbullet$. From this one can easily conclude that the preimage of an element $f\in\F$ is actually independent of the choice of reference configuration, in the sense that
\[
\phi\in\h_\smbullet^{-1}(f) \iff \phi \in (\h_\smbullet')^{-1}(f + \Delta_\smbullet).
\]
A similar statement holds on-shell. Note, finally, that  $0\in\F$ for any choice of $\phi_\bullet$, by construction.
\end{remark}

\section{Constraint and flux gauge groups}\label{sec:constraintfluxgroups}

\subsection{Flux annihilators}\label{sec:fluxann}
The following definition will play a central role in setting up the symplectic reduction in stages, and in the description of off-shell corner data (Section \ref{sec:cornerdata}).

\begin{definition}[Flux annihilators]\label{def:frakN}
The \emph{on- and off-shell flux annihilators} are, respectively, the following subsets\footnote{Recall: if $\mathcal{X}\subset \mathcal{W}$ and $\mathcal{Y}\subset \calW^*$, then $\Ann(\mathcal{X},\mathcal{Y}) \doteq \{ y\in \mathcal{Y} \ : \ \langle y,x\rangle = 0 \quad \forall x\in\mathcal{X} \}$, see Definition \ref{def:annihilators}. Here, $\mathcal{W} = \fG^*$ and $\fG$ is reflexive, see \cite[Rmk. 6.5]{kriegl1997convenient}.}  of $\fG$:
\[
\fN \doteq \Ann(\Im(\iota^*_\C\bd\h),\fG) \quad\text{and}\quad \fNoff \doteq \Ann(\Im(\bd\h),\fG).
\qedhere
\]
\end{definition}

\begin{remark}\label{rmk:Nhco}
With reference to Remarks \ref{rmk:co-} and \ref{rmk:coflux},
denote $\bd\h^{\text{co}} :\fG \to \iloc^1(\X)$ the differential of the off-shell co-flux map (Remark \ref{rmk:coflux}) and by $\iota_\C^*\bd\h^\text{co} \equiv(\iota_\C^*\bd\h)^\text{co}:\fG \to \iloc^1(\C)$ its on-shell analogue (Definition \ref{def:flux}).
Hence, note that one can alternatively write $\fN = \ker(\iota_\C^*\bd\h^\text{co})$ and $\fNoff = \ker(\bd\h^\text{co})$, and thus conclude $\fNoff\subset \fN$. 
\end{remark}

\begin{theorem}[Flux annihilator ideals]
\label{thm:frakN}
Let $\fN$ and $\fNoff$ be the flux annihilators as per Definition \ref{def:frakN}.
Recall also the definitions of the corner group and algebra cocycles $c_\smbullet$ and $k_\smbullet$ from Proposition \ref{prop:adjequi}.
Then, 
\begin{enumerate}[label=(\roman*)]
    \item $\fNoff = \Ann(\Foff,\fG)$ 
    and 
    $\fN = \Ann(\F,\fG)$;\label{prop:frakNi}
    \item 
    $\fNoff\subset\fN \subset \bigcap_{g\in\G}\ker(c_\smbullet(g))\subset\bigcap_{\xi\in\fG}\ker(\k_\smbullet(\xi))$. \label{prop:frakNii}
    \item $\fN$ and $\fNoff$ are stable\footnote{A \emph{stable} Lie ideal is a Lie ideal which is invariant under the adjoint action of all elements of $\exp(\fG)$. Note that a Lie ideal is guaranteed to be stable only in finite dimensions \cite{Neeb-locallyconvexgroups}.} ideals of $\fG$;\label{prop:frakNiii}
    \item if $\partial\Sigma=\emptyset$, then $\fNoff=\fN = \fG$.\label{prop:frakNiv}
\end{enumerate}
\end{theorem}
\begin{proof}
Recall from Definition \ref{def:fluxspaces} that $\F = \Im(\iota_\C^*\h_\smbullet)$.
Therefore to prove the second statement of \ref{prop:frakNi} we need to show that $\Ann(\Im(\iota_\C^*\h_\smbullet^\text{co},\fG)\subset\Ann(\Im(\iota_\C^*\bd \h),\fG)$ and viceversa---or, equivalently, that $\ker(\iota_\C^* h_\smbullet)\subset \ker(\iota_\C^*\bd h^\text{co})$ (cf.\ Remark \ref{rmk:Nhco}). We start from this inclusion: if $\xi\in\ker(\iota_\C^* h_\smbullet^{\text{co}})$ then $\langle\iota_\C^*\h_\smbullet,\xi\rangle=0$ and therefore, by differentiation and using that $\h(\phi_\smbullet)$ is a constant over $\X$, a fortiori $\langle\iota_\C^*\bd \h, \xi\rangle = 0$.
To show the opposite inclusion, consider $\xi\in\ker(\iota_\C^*\bd\h^{\text{co}})$; then, by definition, one has 
\[
0 = \langle \iota_\C^*\bd \h^{\text{co}},\xi\rangle = \langle \bd\iota_\C^* h, \xi\rangle = \bd\langle \iota_\C^*h, \xi\rangle.
\]
Since $\C$ is path-connected (Assumption \ref{assA:constr}), we conclude that $\langle \iota_\C^*\h, \xi\rangle$ is constant over $\C$ and thus equal to $\langle \h(\phi_\smbullet), \xi\rangle$, for some $\phi_\bullet\in\C$. Since $h_\smbullet(\phi_\smbullet)=0$, it follows that $\langle \iota_\C^* \h_\smbullet,\xi\rangle =0$ for all $\xi\in\ker(\iota_\C^*\bd\h^\text{co})$, and hence $\ker(\iota_\C^*\bd h)\subset\ker(\iota^*_\C\h_\smbullet^{\text{co}})$. A totally analogous argument applies to $\fNoff$, where we use the path-connectedness of $\X$ instead (Assumption \ref{assA:item1}). 

Turning to \ref{prop:frakNii}, the only non-obvious part is showing that $\fN\subset\ker( c_\smbullet(g))$ for all $g\in \G$. The argument relies on Proposition \ref{prop:adjequi}. First,  observe that $c_\smbullet(g=e)=0$. From this it follows that to prove that $\fN\subset\ker( c_\smbullet(g))$ it suffices to show that, for all $\eta\in\fN$,  $\langle c_\smbullet(g),\eta\rangle$ is constant in $g$. Since $\G$ is simply connected, it is enough to show that $\langle c_\smbullet(g),\eta\rangle$ is constant upon all infinitesimal variations of the form $g \mapsto e^{-\epsilon\xi} g$ for $\epsilon\to0$.
Indeed, since
\[
\langle c_\smbullet(g),\eta\rangle = \langle h(\phi_\smbullet\triangleleft g), \eta\rangle =  \langle h(\phi),\eta\rangle\vert_{\phi=\phi_\smbullet\triangleleft g},
\]
the infinitesimal variation $g\mapsto e^{-\epsilon\xi}g$ of the quantity $\langle c_\smbullet(g),\eta\rangle$ equals the evaluation at $\phi=\phi_\smbullet\triangleleft g$ of the infinitesimal variation of $\langle h(\phi),\eta\rangle$ along the gauge transformation $\phi \mapsto \phi\triangleleft e^{-\epsilon \xi}$. 

Therefore, to prove the independence of $\langle c_\smbullet(g),\eta\rangle$ from $g$ when $\eta\in\fN$, it is enough to show that when $\eta\in\fN$ the quantity $\L_{\rho(\xi)}\langle h,\eta\rangle\vert_{\phi_\smbullet\triangleleft g}$ vanishes for all $\xi\in\fG$.
To prove this statement, we observe that since the action of $\G$ preserves $\C$ (Corollary \ref{cor:tangent}) and $\phi_\smbullet\in\C$ by construction, we also have $\phi_\smbullet\triangleleft g\in\C$ as well as 
\[
\L_{\rho(\xi)}\langle h,\eta\rangle\vert_{\phi_\smbullet\triangleleft g} 
= \iota_\C^*\L_{\rho(\xi)}\langle h,\eta\rangle\vert_{\phi_\smbullet\triangleleft g}
= \bi_{\rho(\xi)} \langle \iota_\C^*\bd h,\eta\rangle\vert_{\phi_\smbullet\triangleleft g}  
=0,
\]
where in the last step we used the result of statement \ref{prop:frakNi} that $\eta\in\fN$ iff $\eta\in \ker(\iota_\C^*\bd h^\text{co})$.

To prove \ref{prop:frakNiii} for $\fN$ (and, \emph{mutatis mutandis}, for $\fNoff$), note that it is enough to prove the stability of $\fN\subset\fG$,\footnote{Since $\Ad(g):\fG\to\fG$ is invertible for all $g\in\G$, $\Ad(g)\cdot \mathfrak{N} \subset\mathfrak{N}$ iff $\Ad(g)\cdot \mathfrak{N} = \mathfrak{N}$.} i.e.\ that $\Ad(g)\cdot \mathfrak{N}\subset\mathfrak{N}$ for all $g\in\G$. Indeed, this automatically implies that $\mathfrak{N}$ is a Lie ideal, $[\fG,\mathfrak{N}]\subset\mathfrak{N}$, and in particular a Lie subalgebra, $[\mathfrak{N},\mathfrak{N}]\subset\mathfrak{N}$.
We are thus set to prove that for all $g\in\G$, $\Ad(g)\cdot \mathfrak{N}\subset\mathfrak{N}$.
That is, using \ref{prop:frakNi}, we want to show that for all $g\in\G$ and $\eta\in\mathfrak{N}$, 
$\langle \iota_\C^*\h_\smbullet, \Ad(g)\cdot\eta \rangle = 0$.
But this is indeed a simple consequence of statements \ref{prop:frakNi} and \ref{prop:frakNii} of this proposition, together with Proposition \ref{prop:adjequi}:
\[
\langle
\iota_\C^*\h_\smbullet, \Ad(g)\cdot\eta\rangle 
\equiv\langle \Ad^*(g)\cdot(\iota_\C^*\h_\smbullet), \eta\rangle
=\langle  g^*(\iota_\C^*\h_\smbullet)-c_\smbullet(g),\eta\rangle
=g^*\langle  \iota_\C^*\h_\smbullet, \eta\rangle 
=0.
\]
Finally, if $\partial\Sigma=\emptyset$, the flux vanishes ($h_\smbullet\equiv0$) and $\fNoff=\fN=\fG$, proving \ref{prop:frakNiv}.
\end{proof}

We conclude this section with a generalization of the {{Gauss}} law of Maxwell theory. Recall that Gauss law states that, on-shell of the Gauss constraint, the integral of the electric flux through $\pp\Sigma$ equals the total electric charge in $\Sigma$, and therefore vanishes in the absence of charges (for an explanation of how this proposition recovers this formulation of the Gauss law, see Section \ref{sec:runex-fluxannihilators}; cf.\ also \cite{AbbottDeser82} and \cite[Section 4]{RielloGomes}):

\begin{proposition}[{{Gauss}} law]\label{prop:Gausslaw}
Recall the notation $\rho_\C\colon \Gamma(\iota^*_\C\A) \to \mathfrak{X}(\C)$ (Corollary \ref{cor:tangent}). Whenever the on-shell isotropy locus has the structure of a vector bundle, its space of sections $\Gamma(\C,\mathsf{I}_\rho)= \ker(\rho_\C)$ is an ideal of the Lie algebra of sections of the on-shell algebroid $\iota^*_\C\A$. If, additionally, the bundle is trivial, i.e. $\mathsf{I}_\rho = \C \times \mathfrak{I}$ as per Assumption \ref{assA:isotropy} (cf.\ Remark \ref{rmk:isotropybundle2}), then
\[
\mathfrak{I} \subset \fN,
\] 
that is $\langle \iota_\C^*h_\smbullet , \chi \rangle =0$ for all $\chi \in \mathfrak{I}$.
\end{proposition}
\begin{proof}
The first statement follows from the fact that $\rho(\llbracket \xi_1,\xi_2\rrbracket) = [\rho(\xi_1),\rho(\xi_2)]$ for every $\xi_1,\xi_2\in\Gamma(\iota^*_\C\A)$ since $\iota^*_\C\A$ is a Lie algebroid, and we conclude by specialising to $\xi_1\in \ker(\rho_\C)$.

If the on-shell isotropy locus is a trivial bundle, $\mathsf{I}_\rho = \C \times \mathfrak{I}$, we can view the on-shell anchor as a map $\rho_\C:\fG  \to \mathfrak{X}(\C)$, so that the on-shell isotropy algebra is $\mathfrak{I} = \ker(\rho_\C)$ (Assumption \ref{assA:isotropy}, Remark \ref{rmk:isotropybundle2}).
Assume that $\chi\in \mathfrak{I} =\ker(\rho_\C)$. Then, using Corollary \ref{cor:tangent}:
\[
   \langle \iota^*_\C\bd \h, \chi\rangle =\iota^*_\C(\iota_{\rho(\chi)}\omega - \langle\bd\Ho,\chi\rangle) 
   \equiv 0
\]
and $\chi \in \fN$ (cf.\ Theorem \ref{thm:frakN}, item \ref{prop:frakNi}).
\end{proof}

\begin{remark}[Isotropy bundle: sections and {{Gauss}} law] \label{rmk:isotropybundle3}
In general, the nontriviality of $\mathsf{I}_\rho$ obstructs us from casting $\rho_\C$ as a map from $\fG$ to $\mathfrak{X}(\C)$, and thus viewing $\ker(\rho_\C)$ within $\fG$. This is because, lacking a (flat) connection, sections of $\mathsf{I}_\rho$ cannot be identified with elements of the generic fibre $\mathfrak{I}$. 
Nonetheless, many of our statements admit obvious, and natural, generalizations: e.g.\  one can introduce the \emph{on-shell flux annihilating locus} $\mathsf{N}$,
\[
\mathsf{N} \doteq \{ (\phi,\xi)\in \iota_\C^*\A :  \langle\iota_\C^* \bd h(\phi),\xi\rangle = 0 \},
\]
generalising the concept of on-shell flux annihilator $\mathfrak{N}$ and thus show in analogy to Proposition \ref{prop:Gausslaw} that 
\[
\mathsf{I}_\rho \subset \mathsf{N}.
\]
Then, when $\mathsf{I}_\rho\subset \mathsf{N} \subset \iota_\C^*\A$ are vector subbundles, one is naturally led to consider the theory of Hamiltonian Lie algebroids developed in \cite{blohmann2018hamiltonian} as a way to extend our work. Indeed, non-action Hamiltonian Lie algebroids make a natural appearance in the context of symplectic reduction even when they are not part of the framework from the start (cf.\ Remark \ref{rmk:isotropybundle1}). Further investigations along these lines are deferred to future work.
\end{remark}

\subsection{Yang--Mills theory: flux annihilators}\label{sec:runex-fluxannihilators}

Recall from Section \ref{sec:runex-decomposition} that the flux form of YM theory is $\langle d\bh,\xi\rangle = -d \tr(E\xi)$, thus
\[
\langle h, \xi\rangle = - \int_{\pp\Sigma } \iota_{\pp\Sigma}^* \tr( E \xi).
\]

\begin{remark}[Electric flux]\label{rmk:electricflux}
Denoting $\gamma_{\pp\Sigma}$ the metric induced by $\gamma$ on $\pp\Sigma$ and $n_i$ the conormal of $\pp\Sigma$ in $\Sigma$, one readily sees that 
\[
\langle h ,\xi\rangle = - \int_{\pp\Sigma} \sqrt{\gamma_{\pp\Sigma}} \, n_i \tr( \hat E^i \xi)\vert_{\pp\Sigma}
\]
is given by the smeared integral of (minus) the \emph{flux of the electric vector}  $\hat E^i = \sharp_\gamma \star E$ through the corner $\pp\Sigma$, cf.\ \eqref{eq:mathcalE}. This is the reason why we called $h$ the ``flux'' map.
\end{remark}

Since $E\in \mathcal{E}\sim\Omega^{\text{top}-1}(\Sigma,\fg^*)$, the off-shell flux annihilator is given by those gauge parameters with vanishing restriction on the corner $\pp\Sigma$---this is indeed a consequence of the linearity of $\iota^*_{\partial\Sigma}\tr(E\xi)$ with respect to $E$ and of the fact that, since $\tr(E\xi)(x)\equiv\tr(E(x)\xi(x))$ by definition, the flux map $h$ only involves the restriction of the field at the corner, and none of its higher or normal jets:\footnote{This statement holds also for nontrivial bundles $P\to\Sigma$.}
\[
\fNoff = \ker(\bd h^\text{co}) = \{ \xi \in\fG\ |\ \xi\vert_{\pp\Sigma}=0\}.
\]
This is clearly an ideal of $\fG$, since the ultralocality of $\fG$, 
\[
[\xi,\eta]_\fG(x) = [\xi(x),\eta(x)]_\fg,
\]
implies that $[\xi,\eta](x)=0$ if $\eta(x)=0$ and thus, for $x\in\pp\Sigma$, that $[\xi,\eta]\in\fN$ if $\eta\in\fN$.

We shall now show that in pure YM theory the on- and off-shell flux annihilators coincide:
\[
\fN=\fNoff
\]
Note that in Section \ref{sec:runex-setup} we prescribed that, whenever $G$ is semisimple, we restrict to irreducible connections. However, the above result holds unaltered also in the general case.

To prove the equality between the flux annihilators, a more explicit description of $\C$ is needed. This can in turn be given in terms of the equivariant Hodge--de Rham decomposition detailed in Appendix \ref{app:Hodge}.
According to those results, the cotangent fibre $\mathcal{E}_A\simeq \mathcal{E}$---which is defined as the vector space of $\Ad^*$-equivariant $(\text{top}-1)$-forms over $\Sigma$ as per Equation \eqref{eq:mathcalE}---admits the orthogonal decomposition\footnote{See also \cite{DiezPhD} for an analogous discussion in the corner-less case, as well as \cite{RielloGomesHopfmuller,RielloGomes} for a more physical take on these decomposition in the context with corners.}
\[
\mathcal{E}_A = \mathcal{V}_A \oplus \mathcal{H}_A
\]
where
\begin{subequations}\label{eq:YMcotangentsplit}\begin{align}
\mathcal{V}_A & \doteq \{ E \in \mathcal{E}_A\ |\ \exists \varphi : E = \star d_A\varphi \}, \label{eq:verticalYMcotangentsplit}\\
\mathcal{H}_A & \doteq \{ E \in \mathcal{E}_A\ |\ d_A E =0 \text{ and } \mathsf{t} E =0\}.\label{eq:horizontalYMcotangentsplit}
\end{align}\end{subequations}
Here, we denoted $\varphi$ elements of $\Gamma(\Sigma,\Ad^*P)\sim \Omega^0(\Sigma,\dual{\fg})$, and $\mathsf{t} E = \iota_{\pp\Sigma}^* E$ the tangential part of the electric form $E$ at $\pp\Sigma$---which corresponds to the electric flux $n_i\hat E^i\vert_{\pp\Sigma}$ of the electric vector, see Remark \ref{rmk:electricflux} here above.

In physics parlance, $\mathcal{V}_A$ is the space of ``Coulombic'' electric fields, whereas one could call $\mathcal{H}_A$ the space of ``radiative'' electric fields. We call $\varphi$ the \emph{Coulombic potential}.

Observe that radiative electric fields both satisfy the Gauss constraint and are fluxless (i.e.\ $h(E)=0$ if $E\in \mathcal{H}_A$).
Therefore, since both 
\[
\bHo = \tr(d_A E \cdot)
\quad\text{and}\quad
d\bh = - d\tr(E\cdot)
\]
are linear in $E$, it follows that both $\bHo$ and $d\bh$ effectively depend only on the Coulombic component of $E$ and not on its radiative one.

This is most directly seen by inspecting the equivariant-Neumann elliptic boundary value problem that uniquely determines the Coulombic projection of $E$, i.e.
\begin{equation}
\label{eq:LaplaceEq}
\begin{cases}
\Delta_A \varphi = - \star d_A E &\text{in }\Sigma,\\
 \mathsf{n}  d_A\varphi = (-1)^{\Sigma-1}\mathsf{t}E& \text{at }\partial \Sigma.
\end{cases}
\end{equation}
Since $A$ is by hypothesis irreducible, these equations admit a unique (smooth) solution $\varphi$ for \emph{any} value of $E$. Cf. Appendix \ref{app:Hodge}, especially Remark \ref{rmk:EllipticReducible}.

Now, if $(A,E)\in\C$, the constraint form $\bHo = \tr(d_AE\cdot)$ vanishes and therefore, \emph{leaving the dependence on $A$ implicit},
\begin{equation}
    \label{eq:LaplaceEqOnShell}
\text{on $\C$,} \quad \varphi = \varphi(\mathsf{t}E).
\end{equation}
Let us introduce the notation $\mathcal{E}_\pp$ for the space of tangential electric forms---also called electric fluxes (Remark \ref{rmk:electricflux}),\footnote{Recall: the notation $\mathcal{X}\sim \mathcal{Y}$ is used to signify that $\mathcal{X}$ is \emph{locally modeled on} $\mathcal{Y}$.}
\begin{equation}\label{e:cornerconfYM}
\mathcal{E}_\pp \doteq \Gamma( \mathrm{Dens}(\pp\Sigma) \times_{\pp\Sigma} \iota_{\pp\Sigma}^*\Ad^*P) \sim
\Omega^\text{top}(\pp\Sigma,\fg^*).
\end{equation}
We observe that---owing to the irreducibility of $A\in\Acal$---\emph{any} value $E_\pp \in \mathcal{E}_\pp$ of $\mathsf{t}E$ is compatible with the Gauss constraint, meaning that $\C$ maps \emph{surjectively} onto $\mathcal{E}_\pp$, and moreover---due to the on-shell formula \eqref{eq:LaplaceEqOnShell}---$\C$ can be identified with the product space
\begin{subequations}
    \label{eq:CtoP-YM}
\begin{equation}
\C \simeq \X_\rad \times \mathcal{E}_\pp,  
\end{equation}
where $\X_\rad$ is a fibration over $\Acal$ with fibre $\mathcal{H}_A \subset \mathcal{E}$:
\begin{equation}
\pi_\rad:\X_\rad\to \Acal, \qquad \pi_\rad^{-1}(A) = \mathcal{H}_A.
\end{equation}
The embedding $\C \hookrightarrow \X = \Acal \times \mathcal{E}$ is given by the map:
\begin{equation}
(A, E_{\text{rad},A}, E_\pp)  \ \mapsto \ (A,E) = \big(A, E_{\text{rad},A} + \star d_A \varphi(E_\pp)  \big)
\end{equation}
\end{subequations}
where $E_{\text{rad},A}\in\mathcal{H}_A$ and $E_\pp \in \mathcal{E}_\pp$.

To construct the adjusted flux map $\h_\smbullet$, we observe that $(A_\smbullet,E=0)$, for $A_\smbullet\in\Acal$ a reference connection chosen once and for all  (cf.\ Remark \ref{rmk:HamShift1})\footnote{If $P\to\Sigma$ is trivial, $A_\smbullet$ can be conveniently chosen to vanish in the chosen atlas with trivial transition functions.} is an element of $\C$ and can therefore be used as the reference configuration:
\[
\phi_\smbullet = (A_\smbullet,E=0).
\]
Then, since $h(\phi_\smbullet)=0$, we have that $h_\smbullet = h - h(\phi_\bullet) \equiv h$ is the (smeared) integral of $\mathsf{t}E$ and---using the surjectivity of the mapping of $\C$ onto $\mathcal{E}_\pp$---that $\iota_\C^*h_\smbullet$ is the smeared integral of $E_\pp \in \mathcal{E}_\pp$. In other words, each on-shell flux $f\in\F$ is of the form $f = \int_{\pp\Sigma} \tr( E_\pp \cdot)$ for some $E_\pp\in\mathcal{E}_\pp$---and indeed one has 
\begin{equation}\label{eq:YM-fluxdensity}
\mathcal{E}_\pp \simeq \F=\Foff = \mathrm{Im}(h_\smbullet) \subset \dual{\fG}_\mathrm{loc}.
\end{equation}
Hence, the announced equality $\fN=\fNoff$  \cite[Section 4]{RielloGomes} can be read directly off of Definition \ref{def:frakN}, the fact that $\F=\Foff$, and Theorem \ref{thm:frakN}.

Using the isomorphism $\mathcal{E}_\pp \simeq \F$, the KKS Poisson structure on $\F$ can be written as
\[
\{ f_1, f_2 \} = f_3,\qquad f_3= \int_{\pp\Sigma} \tr( [E_{\pp1}, E_{\pp2}] \,\cdot\, ). 
\]

Finally, the KKS symplectic structure on the symplectic leaf $\mathcal{O}_{f}$  can be explicitly written down by introducing a reference flux $\bar f\in\mathcal{O}_f$ so that $\mathcal{O}_f\simeq \Ad^*(\G)\cdot \bar f$. Thus, using $u\in\G$ as an overcomplete set of coordinates over $\mathcal{O}_f$ according to $f(u)=\Ad^*(u)\cdot \bar f$, the KKS symplectic structure $\Omega_{[f]}$ reads
\begin{equation}
    \label{eq:KKSYM}
\Omega_{[f]} = \tfrac12 \langle \bar f , \left[ \bd uu^{-1} , \bd uu^{-1} \right] \rangle.
\end{equation}
The overcompleteness of the $u$-coordinate follows from the fact that $\bar f$ has a large stabiliser $\mathrm{Stab}(\bar f)\subset\G$ which includes, but is not restricted to, $\Go$. It is a standard exercise to check that $\Omega_{[f]}$ is indeed well-defined over $\mathcal{O}_{f}$ despite this large redundancy.\footnote{This computation is done, among many other places, in \cite{RielloSciPost}. At heart it is nothing else than a symplectic reduction of $\G \circlearrowright T^*\G$ at $f\in\fG^*$ \cite{MarsdenWeinstein}.}
Notice that, since $\Go \subset \mathrm{Stab}(\bar f)$, $u$ can as well be thought of as element of $\Gred$.

\medskip

We now turn to the Abelian case. 
First, note that in this case $\Ad P \to \Sigma$ is trivial, so that $\fG = \Gamma(\Ad P)\simeq C^\infty(\Sigma,\fg)$.
Now, the main difference with the non-Abelian case is that here $E$ is gauge invariant and moreover all gauge connections $A\in\Acal$ are reducible.
For this reason, it is meaningless to restrict to irreducible gauge connections as we did in the non-Abelian case. 
However, to our advantage, in the Abelian case all gauge connections have the same reducibility group $\G_A$, given by the group of (locally\footnote{Generalizations of the following statements to a non-connected $\Sigma$ are straightforward.}) constant group-valued functions, $dg=0$, so that 
\[
\G_A \simeq G,
\]
and $\fG_A\simeq\fg$ is its Lie algebra. This is encoded in the statement that the isotropy locus $\mathsf{I}_\rho$ is a trivial bundle. (See Remark \ref{rmk:isotropybundle2}.)

This fact has a bearing on the structure of the Gauss constraint set $\C$ and introduces a distinction between the on- and off-shall flux annihillators and flux-spaces. 
This is because only electric fluxes $E_\pp$ satisfying the ``integrability'' condition
\begin{equation}\label{eq:abeliangausslaw}
0=\int_{\pp\Sigma} \tr( E_\pp\ \chi\vert_{\pp\Sigma}) \qquad\forall \chi\in\fG_A
\end{equation}
are compatible with the Gauss constraint. Inspired by Maxwell theory, we call this condition \emph{{{Gauss}} law}, and observe that it can be expressed as
\[
0=\langle \iota_\C^* h_\smbullet, \chi \rangle \qquad\forall \chi\in\fG_A
\]
(cf.\ Theorem \ref{thm:frakN}\ref{prop:frakNi} and especially Proposition \ref{prop:Gausslaw}.)

This leads us to introducing $\mathcal{E}_\pp^\text{ab}$, the subspace of $\mathcal{E}_\pp$ which satisfy the {{Gauss}} law \eqref{eq:abeliangausslaw}. If $\Sigma = \bigsqcup_\alpha \Sigma_\alpha$ is the decomposition of $\Sigma$ into its connected components, one has that
\begin{equation}
\label{eq:EppAb}
\mathcal{E}_\pp^\text{ab} = \left\{ E_\pp \in \mathcal{E}_\pp \ \left| \ \textstyle{\int}_{\pp(\Sigma_\alpha)} E_\pp = 0 \;\; \forall \Sigma_\alpha\right.\right\}.
\end{equation}
Hence, observing that in the Abelian case $\mathcal{H}_A = \mathcal{H} \subset \mathcal{E}$ does not depend on $A$, we find that $\C$ factorises as:
\begin{equation}\label{eq:CfibrationAb}
\C = \Acal \times \mathcal{H} \times \mathcal{E}_\pp^\text{ab},
\end{equation}
with embedding $\C \hookrightarrow \X = \Acal\times \mathcal{E}$ completely analogous to the one of Equation \eqref{eq:CtoP-YM}.

From {{Gauss}} law \eqref{eq:abeliangausslaw}, we deduce that the on-shell flux annihilator differs from the off-shell one, and is given by
\[
\fN^{\text{ab}} = \{ \xi \in \fG \ |\ \exists \chi\in\fG_A : \xi\vert_{\pp \Sigma} = \chi\vert_{\pp\Sigma} \} \supset \fNoff.
\]
In the special case where $\Sigma$ and $\pp\Sigma$ have only one connected component, $\fN^\text{ab}$ is given by the $\xi\in\fG$ which take a constant---but not-necessarily zero---value at the corner. Similarly, the on-shell flux space differs from the off-shell one: 
\[
\F^\text{ab} \simeq \mathcal{E}_\pp^\text{ab} \subset \mathcal{E}_\pp \simeq \Foff.
\]
Due to the Abelian nature of $G$, the Poisson structures on $\F$ and $\Foff$ are trivial, and the orbits $\mathcal{O}_f$ are points.

\begin{remark}[Semisimple YM: reducible configurations]
In semisimple YM theory,  integrability conditions analogous to {{Gauss}} law \eqref{eq:abeliangausslaw} exist at \emph{reducible} configurations only. This can be seen as a consequence of the presence of a nontrivial kernel in Equation \eqref{eq:LaplaceEq} when $A$ is reducible. 
Since not all electric fluxes are compatible with the Gauss constraint, and since this compatibility depends on the configuration variable $A\in\Acal$, the structure of $\C$ as a fibration over $\Acal$ takes the more general form:
\[
\pi_\C:\C \to \Acal, \qquad \pi_\C^{-1}(A) \simeq \mathcal{H}_A \times \mathcal{E}_{\pp, A},
\]
where $\mathcal{E}_{\pp,A}$ denotes the space of electric fluxes which are compatible with the Gauss constraint at $A$.

In addition to the {{Gauss}} law generalization of Proposition \ref{prop:Gausslaw} and Remark \ref{rmk:isotropybundle3} (see also the comment below Equation \eqref{eq:LaplaceEq}), which is related to the reducibility of $(E,A)$ jointly, in YM theory additional compatibility conditions emerge from reducibilities of $A$ alone.

As emphasized in Remark \ref{rmk:isotropybundle1}, Assumption \ref{assA:isotropy} forces us to focus on subspaces of the full YM configuration space associated to configurations with fixed orbit types (i.e.\ of fixed isotropy group up to conjugation). 
In Remark \ref{rmk:isotropybundle2} we have also emphasized that irreducible configurations (those of trivial orbit type) are dense in semisimple YM theory, and in that case our previous discussion shows that no analogue of {{Gauss}} law exists, and indeed $\fN = \fNoff$. 

Conversely, in subspaces of a given nontrivial orbit type, the situation is more subtle. This is because in these cases the isotropy bundle $\mathsf{I}_\rho$ (Assumption \ref{assA:isotropy}) is nontrivial and in particular cannot be described in terms of an isotropy subalgebra common to \emph{all} configurations of the fixed orbit type.
This means that although an analogue of Gauss law holds at these reducible configurations, this is not registered in $\fN$, which is still equal to $\fNoff$. As discussed in Remark \ref{rmk:isotropybundle3}, to fully capture this situation from a geometrical standpoint, one would have to generalise the strictly Hamiltonian framework discussed here to include non-action Lie algebroids \cite{blohmann2018hamiltonian}. We leave this task for future work.
\end{remark}

\subsection{The constraint and flux gauge groups}\label{sec:BulkCornerGrps}

The goal of this section is to explicitly characterise the normal subgroup $\Go\subset\G$, which we call the ``constraint gauge group'', defined as the maximal normal subgroup that admits an equivariant momentum map $\Jo$ whose zero-level set coincides with the constraint set $\C=\bHo^{-1}(0)$. The identification of $\Go$ is the first main result of this work, and---as we will now explain---it constitutes the foundations of what will follow. 

The constraint group $\Go$, being by construction normal in $\G$, allows us to identify the residual ``flux gauge group'' defined as the quotient $\Gred\doteq \G/\Go$. The decomposition of $\G$ into the constraint  and flux gauge groups, $\Go$ and $\Gred$, allows us to adopt a reduction-by-stages approach to understand the structure of the space of constrained fields modulo \emph{all} gauge transformations, $\C/\G$.

\medskip

To introduce the normal constraint gauge subgroup $\Go\subset\G$ we start in the following section from an intrinsic definition of its Lie algebra $\fGo$ as the maximal Lie ideal of $\fG$ whose momentum map vanishes at $\C$ (Definition \ref{def:fGo}). We will then show (Theorem \ref{thm:fGo}) that $\fGo$ coincides with  the flux annihilator $\fN$ introduced in the previous section. This result is the first evidence of the centrality of the flux in the analysis of gauge symmetry in the presence of corners. Finally, we conclude the section with the definition of the flux gauge algebra $\fGred$ and some observations about the existence of the corresponding constraint and flux gauge groups.

\subsubsection{The constraint gauge ideal}
We start by stating the main definitions and theorems relevant to the identification of $\fGo$.

\begin{remark}\label{rmk:adjoint-inclusion}
In the following we will define the local momentum forms $\bJ_\mathfrak{H}$ for a subalgebra $\mathfrak{H}$ by means of the following identity, valid for any $\eta \in \mathfrak{H}\subset \fG$:
\[
\langle \iota_\mathfrak{H}^*\bH, \eta \rangle \doteq \langle \bH, \iota_\mathfrak{H} \eta \rangle = \langle \bH,  \eta \rangle.
\]
The symbol $\iota_\mathfrak{H}^*$ on the left-most term is best understood as a ``pullback" in the co-momentum map picture $\bH^\text{co} : \fG \to \oloc^{\text{top},0}(\Sigma\times \X)$ along the embedding $\iota_\mathfrak{H} : \mathfrak{H}\hookrightarrow \fG$ (as above, we will most often leave the $\bullet^\text{co}$ notation implicit). It is important here to recall that the equivalence between the momentum and co-momentum map pictures in our infinite dimensional setting---which underpins the previous formula---is warranted because we are working with \emph{dual-valued local forms} (Definition \ref{def:dualvaluednew} and Remark \ref{rmk:co-}). In other words, the \emph{local} setting provides us with a solution to the difficulty of having to define an adjoint operation for the inclusion $\iota_\mathfrak{H} : \mathfrak{H}\hookrightarrow \fG$, whose existence is indeed subordinate to the type of the dual one works with and is generally not warranted in infinite dimensions. 
\end{remark}

\begin{definition}[Constraining algebras]\label{def:constraintideal}
Let $(\X,\bom,\C)$ be as in Assumption \ref{ass:setup}, and $\mathfrak{H}\subset\fG$ a Lie subalgebra with local momentum form $\bJ_{\mathfrak{H}}\in \oloc^{\text{top},0}(\Sigma\times\X,\mathfrak{H}^*)$, and denote $J_{\mathfrak{H}}\doteq\int_\Sigma \bJ_{\mathfrak{H}}$ the corresponding local momentum map. We say that
\begin{enumerate}[label=(\roman*)]
    \item $J_{\mathfrak{H}}$ (resp. $\bJ_{\mathfrak{H}}$) is a \emph{constraining local momentum map} (resp.\ \emph{form}) iff $\C\subset J_{\mathfrak{H}}^{-1}(0)$.
    
    \item A constraining momentum map (resp. form) such that $\C=J_{\mathfrak{H}}^{-1}(0)$ is said to be \emph{just}.
    
    \item $\mathfrak{H}$ is a (\emph{just}) \emph{constraining algebra} iff it admits a (just) constraining local momentum map.
    \qedhere
\end{enumerate}
\end{definition}

We called $\bHo$, and $\Ho$ the \emph{constraint form and map}, respectively, and now we are introducing the concept of \emph{constraining momentum form and map}. The clash in nomenclature is justified by the fact that $\Ho$ is \emph{almost} the correct object one needs to look at to have a well-defined momentum map, as we will clarify shortly.

\begin{lemma}\label{lem:uniqueJ}
Fix a Lie subalgebra $\mathfrak{H}\subset\fG$. Denote $\bJ_{\mathfrak{H}}$ a constraining local momentum form for the action of $\mathfrak{H}$. Then $\bJ_{\mathfrak{H}}$ is unique up to an additive constant $\balpha\in\oloc^\text{top}(\Sigma,\mathfrak{H}^*)$ such that $\int_\Sigma\balpha=0$, and $J_{\mathfrak{H}}$ is unique.
\end{lemma}
\begin{proof}
Local momentum forms are defined up to additive constants $\balpha\in\oloc^\text{top}(\Sigma,\mathfrak{H}^*)$ (Remark \ref{rmk:HamShift2}). The claim follows from the fact that, by the definition of a local momentum form, $\bJ_{\mathfrak{H}}$ is constraining iff $\iota_\C^*\int_\Sigma \bJ_{\mathfrak{H}} = 0$, and
\[
  \iota_\C^*\int_\Sigma (\bJ_{\mathfrak{H}} + \balpha) = 0 \iff \int_\Sigma\balpha=0. 
    \qedhere
\]
\end{proof}

\begin{definition}[Constraint gauge ideal, $\fGo$]\label{def:fGo}
The \emph{constraint gauge ideal} $\fGo$ is defined as the \emph{maximal} constraining ideal in $\fG$ (when it exists).
Denote by $\iota_{\fGo}:\fGo\hookrightarrow\fG$ the inclusion. The fundamental vector fields associated to its action on $\X$ are called \emph{constraint gauge transformations}.
\end{definition}

\begin{remark}[Existence of maximal ideal]
Existence of the maximal ideal $\fGo$ is not obvious \emph{a priori}. In Proposition \ref{prop:Nsubalgebraprop} we show that all constraining subalgebras (and \emph{a fortiori} ideals) must belong to $\fN$, which is itself a constraining ideal and hence maximal. Nontrivial, non-maximal, examples of constraining ideals will appear in Lemma \ref{lemma:fGc} and Proposition \ref{prop:Noffideal}.
\end{remark}

We can now state the main theorem of this section, which constructs the constraint gauge ideal $\fGo\subset\fG$ and its local momentum form. 
We will present the proof of this theorem in the forthcoming Section \ref{sec:proofofThm}.

\begin{theorem}[Characterization of the constraint gauge ideal]\label{thm:fGo}
The constraint gauge ideal coincides with the on-shell flux annihilator ideal
\[
\fGo = \fN.
\]
Moreover, $\fGo$ admits a $\Sigma$-equivariant (Definition \ref{def:equivariance}), constraining, local, momentum form,
\[
    \bJo{}_\smbullet \doteq \iota_{\fGo}^*(\bHo + d\bh_\smbullet).
\]
Denoting by $\Jo$ the unique (integrated) constraining momentum map associated to $\bJo{}_\smbullet$, we have that $\Jo$ is equivariant and it is just, i.e.\ 
\[
\C = \Jo^{-1}(0).
\qedhere
\]
\end{theorem}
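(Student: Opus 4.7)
The plan is to prove $\fGo=\fN$ by showing that $\fN$ itself is a constraining ideal and that every constraining ideal is contained in $\fN$; the claims about $\bJo{}_\smbullet$ will then follow from results already at hand. For the first statement, Theorem~\ref{thm:frakN}\ref{prop:frakNiii} already guarantees that $\fN$ is a stable Lie ideal, so it suffices to verify that $\bJo{}_\smbullet\doteq\iota_{\fN}^*(\bHo+d\bh_\smbullet)$ is a constraining local momentum form for the $\fN$-action. The local Hamiltonian flow equation for $\bJo{}_\smbullet$ is the restriction to $\fN$ of \eqref{eq:intHamfloweq}, noting that $d\bh(\phi_\smbullet)$ is a field-space constant, and the constraining property is immediate: $\iota_\C^*\bHo\equiv 0$ by definition of $\C$, while $\iota_\C^*\langle h_\smbullet,\xi\rangle=0$ for $\xi\in\fN$ by Theorem~\ref{thm:frakN}\ref{prop:frakNi}.

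For the converse inclusion, take any constraining ideal $\mathfrak{H}\subset\fG$ with integrated momentum map $J_\mathfrak{H}$: both $J_\mathfrak{H}$ and $\iota_{\mathfrak{H}}^*(\Ho+h)$ satisfy the integrated Hamiltonian flow equation for the $\mathfrak{H}$-action, so path-connectedness of $\X$ (Assumption~\ref{assA:item1}) together with Lemma~\ref{lem:uniqueJ} forces them to coincide up to a $\xi$-dependent constant. Restricting to $\C$, where $\Ho$ and $J_\mathfrak{H}$ both vanish, shows that $\iota_\C^*\langle h,\xi\rangle$ is constant on $\C$, and evaluating at $\phi_\smbullet$ identifies this constant with $\langle h(\phi_\smbullet),\xi\rangle$, whence $\iota_\C^*\langle h_\smbullet,\xi\rangle=0$ for every $\xi\in\mathfrak{H}$. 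Thus $\mathfrak{H}\subset\fN$; combined with the previous step this makes $\fN$ the unique maximum in the poset of constraining ideals, so $\fGo=\fN$.

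For equivariance, Proposition~\ref{prop:equi} gives that $\bHo$ is strongly equivariant and that $d\bh$ has CE cocycle $d\bc$, and Lemma~\ref{lem:shiftcocycle} together with Proposition~\ref{prop:adjequi} then identifies the cocycle of $\bJo{}_\smbullet$ with $d\,\iota_{\fGo}^*\bc_\smbullet$. Its integral over $\Sigma$ paired with $\xi\in\fGo=\fN$ reproduces $\langle c_\smbullet(g),\xi\rangle$, which vanishes by Theorem~\ref{thm:frakN}\ref{prop:frakNii}. Hence $\bJo{}_\smbullet$ is $\Sigma$-equivariant and $\Jo$ is in fact strongly equivariant.

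It remains to establish $\Jo^{-1}(0)\subset\C$. The decisive observation is that any $\xi\in\fG$ supported compactly in $\Sigma\setminus\partial\Sigma$ automatically belongs to $\fN=\fGo$, since all its jets vanish at $\partial\Sigma$ and the boundary pairing with $d\bh_\smbullet$ dies; for such test parameters, $\langle\Jo(\phi),\xi\rangle=\int_\Sigma\langle\bHo(\phi),\xi\rangle$, and vanishing for every compactly supported interior $\xi$ forces $\bHo(\phi)=0$ on $\Sigma\setminus\partial\Sigma$ by a bump-function argument exploiting the rank-$0$ (hence pointwise $C^\infty(\Sigma)$-linear) nature of $\bHo$ from Proposition~\ref{prop:momentumformdecomposition}; smoothness in $x$ then extends this to all of $\Sigma$. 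I expect the most delicate point to be precisely this last step: confirming that interior-supported gauge parameters genuinely suffice to probe $\bHo$ pointwise, a property ultimately secured by the rank-$0$ character of $\bHo$ and which would fail for a higher-rank constraint form.
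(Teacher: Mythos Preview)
Your proof is correct and follows essentially the same strategy as the paper: showing that $\fN$ is a constraining ideal (via the explicit momentum form $\iota_\fN^*(\bHo+d\bh_\smbullet)$), that every constraining ideal sits inside $\fN$ (by comparing its momentum map to $\iota_\mathfrak{H}^*(\Ho+h)$ on $\C$ and evaluating at $\phi_\smbullet$), and that justness follows by probing $\bHo$ with compactly supported interior gauge parameters. The paper packages these steps as separate statements---Proposition~\ref{prop:Nsubalgebraprop}, Lemma~\ref{lemma:fGc}, and Corollary~\ref{cor:justconstr}---but the content is the same.

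One small remark: in your converse step you invoke Lemma~\ref{lem:uniqueJ} to conclude that $J_\mathfrak{H}$ and $\iota_\mathfrak{H}^*(\Ho+h)$ ``coincide up to a $\xi$-dependent constant''. That lemma actually asserts \emph{uniqueness} of the constraining momentum map (no constant), which is not quite what you need here since you do not yet know $\iota_\mathfrak{H}^*(\Ho+h)$ is constraining. What you are really using is the elementary fact (Remark~\ref{rmk:HamShift2}) that any two momentum maps for the same action differ by a field-space constant, which follows from path-connectedness of $\X$. Your subsequent evaluation at $\phi_\smbullet\in\C$ then pins down that constant and recovers $\mathfrak{H}\subset\fN$; in effect you have made the paper's somewhat brisk assertion $\langle J_\mathfrak{H},\eta\rangle=\langle\Ho,\eta\rangle+\langle h_\smbullet,\eta\rangle$ fully explicit.
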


\begin{remark}
The above statement proves Theorem \ref{mainthm:redsummary}, item \ref{thmitem:main-Go}.
\end{remark}

\begin{remark}[Coisotropic sets as zero-level sets]
The problem of whether a given coisotropic submanifold can be thought of as the zero-level set of a momentum map is a fairly general problem in symplectic geometry and has generally a negative answer (see for instance \cite[Section 5]{BSW}). In the particular case of constrained systems, it relates to important questions on the algebraic structure defined by the Cauchy data for initial value problems. For the special example of General Relativity, it is a consequential problem that motivated a number of studies, such as the generalization of Hamiltonian reduction to Lie algebroids \cite{blohmann2018hamiltonian}. 
\end{remark}

\begin{remark}[Smoothness of $\C$, revisited]\label{rmk:smooth-C-revisited}
According to Theorem 1 of \cite{ArmsMarsdenMoncrief1981} (which relies on the inverse function theorem and thus holds as-is in the Banach setting only), the set $\C = \Jo^{-1}(0)$ defines a smooth manifold if at all $\phi\in \C$ the momentum map $\Jo$ is regular, which in turn is the case iff $\phi$ has no stabiliser in $\fGo$.
In the context of semisimple Yang--Mills theory, we showed in Section \ref{sec:runex-fluxannihilators} that $\xio\in\fN = \fGo$ iff $\xio|_{\pp\Sigma}=0$; therefore, since $d_A\xi = 0$ and $\xi|_{\pp\Sigma}=0$ together imply $\xi=0$, it readily follows that no configuration in $\X$, and thus in $\C$, is reducible with respect to the action of $\Go$. Following \cite{ArmsMarsdenMoncrief1981} one therefore expects $\C$ to be smooth in Yang--Mills theory even without excluding configurations which are reducible with respect to $\G$---thus suggesting that the presence of boundaries can improve the smoothness of $\C$. See Remark \ref{rmk:smooth-uCo} for comments on the smoothness of the corresponding first- and second-stage reduced spaces.
\end{remark}

\subsubsection{Proof of Theorem \ref{thm:fGo}} \label{sec:proofofThm}
We will break down the proof of this theorem into various lemmas and propositions.

\begin{proposition}\label{prop:Nsubalgebraprop}
Let $\mathfrak{H}\subset\fG$ be a Lie subalgebra, and denote by $\iota_\mathfrak{H} : \mathfrak{H} \hookrightarrow\fG$ the canonical inclusion. Then, $\mathfrak{H}$ is a constraining subalgebra iff $\mathfrak{H}\subset \fN$.
Moreover, its associated constraining, local, momentum form $\bJ_{\mathfrak{H}\smbullet} \doteq \iota_{\mathfrak{H}}^*(\bHo + d\bh_\smbullet)$ is $\Sigma$-equivariant, and its unique constraining momentum map $J_{\mathfrak{H}}$ is equivariant.\footnote{Recall that for integrated momentum maps we have only one notion of equivariance.}
\end{proposition}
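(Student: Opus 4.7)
The plan is to prove the iff by exploiting the fact that on path-connected $\X$ any local momentum form for an action is unique up to a field-space constant (Remark \ref{rmk:HamShift2}); this reduces the characterization of constraining subalgebras to an algebraic condition verified at the reference configuration $\phi_\smbullet\in\C$. The equivariance statements then follow from the $\phi$-independence of the CE cocycles (Proposition \ref{prop:cornercocycle}) combined with the $\G$-invariance of $\C$ (Corollary \ref{cor:tangent}). No heavy analysis is required; the argument is essentially bookkeeping between local and integrated forms.

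For the forward direction, I would check by direct substitution that $\bJ_{\mathfrak{H}\smbullet}=\iota_\mathfrak{H}^*(\bHo + d\bh_\smbullet)=\iota_\mathfrak{H}^*\bigl(\bH-d\bh(\phi_\smbullet)\bigr)$ is a local momentum form for the $\mathfrak{H}$-action: the local Hamiltonian flow equation for $\iota_\mathfrak{H}^*\bH$ is inherited from Assumption \ref{assA:item1}, and $d\bh(\phi_\smbullet)$ being a field-space constant is $\bd$-closed. The constraining property is then immediate: on $\C$ one has $\bHo=0$, so $J_\mathfrak{H}\vert_\C = \iota_\mathfrak{H}^*\iota_\C^* h_\smbullet$, which vanishes by the hypothesis $\mathfrak{H}\subset\fN=\ker(\iota_\C^* h_\smbullet^\text{co})$ (Remark \ref{rmk:Nhco}).

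Conversely, suppose a constraining local momentum form $\bJ_\mathfrak{H}$ exists. Subtracting the Hamiltonian flow equations for $\bJ_\mathfrak{H}$ and for $\iota_\mathfrak{H}^*\bH$ gives $\bd(\bJ_\mathfrak{H}-\iota_\mathfrak{H}^*\bH)=0$, and path-connectedness of $\X$ forces the difference to be a field-space constant $\balpha_\mathfrak{H}\in\oloc^\text{top}(\Sigma,\mathfrak{H}^*)$. Integrating and evaluating at $\phi_\smbullet$ using $\bHo(\phi_\smbullet)=0$ and the constraining condition $J_\mathfrak{H}(\phi_\smbullet)=0$ determines $\alpha_\mathfrak{H} = -\iota_\mathfrak{H}^* h(\phi_\smbullet)$, producing the announced expression $J_\mathfrak{H}=\iota_\mathfrak{H}^*(\Ho + h_\smbullet)$; the vanishing of this on $\C$ then reads $\iota_\mathfrak{H}^*\iota_\C^* h_\smbullet\equiv 0$, i.e.\ $\mathfrak{H}\subset\fN$. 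Uniqueness of the integrated $J_\mathfrak{H}$ is the content of Lemma \ref{lem:uniqueJ}.

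For equivariance I would argue at the algebra level, since we have not assumed $\mathfrak{H}$ enlargeable to a subgroup of $\G$ (this is the only real subtlety). The CE algebra cocycle $\bK_{\mathfrak{H}\smbullet}(\xi)\doteq \L_{\rho(\iota_\mathfrak{H}(\xi))}\bJ_{\mathfrak{H}\smbullet}-\ad^*(\xi)\cdot \bJ_{\mathfrak{H}\smbullet}$ is $\bd$-closed (Proposition \ref{prop:cornercocycle}), hence a field-space constant on the path-connected $\X$; evaluating at $\phi_\smbullet$, the $\ad^*$-term vanishes because $J_\mathfrak{H}(\phi_\smbullet)=0$, and the Lie-derivative term vanishes because the flow of $\rho(\iota_\mathfrak{H}(\xi))$ preserves $\C$ (Corollary \ref{cor:tangent}) where $J_\mathfrak{H}\equiv 0$. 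Therefore $K_{\mathfrak{H}\smbullet}\equiv 0$, which is the equivariance of $J_\mathfrak{H}$; being the $\Sigma$-integral of $\bK_{\mathfrak{H}\smbullet}$, this same vanishing is precisely the $\Sigma$-equivariance of $\bJ_{\mathfrak{H}\smbullet}$. When the enlargement to a subgroup $\mathbf{H}\subset\G$ does exist, the same argument applies verbatim at the finite-group level by evaluating $C_{\mathfrak{H}\smbullet}(g)$ at $\phi_\smbullet$ and using $\phi_\smbullet\triangleleft g\in\C$.
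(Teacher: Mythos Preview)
Your proof of the iff is correct and essentially coincides with the paper's; you are simply more explicit than the paper about why any constraining momentum map must equal $\iota_{\mathfrak{H}}^*(\Ho+h_\smbullet)$ (the paper asserts this formula directly, leaving the constant-shift argument and the evaluation at $\phi_\smbullet$ implicit).

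For equivariance you take a genuinely different route. The paper computes the group cocycle directly from Propositions~\ref{prop:equi} and~\ref{prop:adjequi}, obtaining $g^*J_\mathfrak{H}=\Ad^*(g)\cdot J_\mathfrak{H}+\iota_\mathfrak{H}^*c_\smbullet(g)$ for $g\in\G$, and then invokes the inclusion $\fN\subset\ker(c_\smbullet(g))$ established in Theorem~\ref{thm:frakN}\ref{prop:frakNii} to kill the cocycle term. Your argument instead exploits the field-space constancy of the cocycle (Proposition~\ref{prop:cornercocycle}) together with $\phi_\smbullet\in\C$: since $J_\mathfrak{H}$ vanishes on all of $\C$ and $\C$ is $\G$-stable (Corollary~\ref{cor:tangent}), the cocycle vanishes at $\phi_\smbullet$ and hence everywhere. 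This is more self-contained---it does not rely on the nontrivial inclusion $\fN\subset\ker(c_\smbullet(g))$---and in fact yields an independent proof of that inclusion when specialized to $\mathfrak{H}=\fN$. One minor caveat: you phrase the argument for $\xi\in\mathfrak{H}$ (and $g$ in the enlarging subgroup $\mathbf{H}$), whereas the paper works with arbitrary $g\in\G$; your group-level argument extends verbatim to $g\in\G$ once one reads $\Ad^*(g)\cdot J_\mathfrak{H}$ as $\iota_\mathfrak{H}^*\bigl(\Ad^*(g)\cdot(\Ho+h_\smbullet)\bigr)$, which is what the paper implicitly does.
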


\begin{proof}
We set to prove that $\mathfrak{H}$ is a constraining subalgebra iff $\mathfrak{H}\subset\fN$. We start from the ``forward'' implication ($\Rightarrow$).
For every element $\eta\in\mathfrak{H}$ we have
\[
\langle{J}_{\mathfrak{H}},\eta\rangle =  \langle \Ho, \eta\rangle + \langle \h_\smbullet,\eta\rangle,
\]
where we used the fact that $\langle \iota_{\mathfrak{H}}^*\bH,\eta\rangle = \langle \bH, \iota_{\mathfrak{H}}(\eta)\rangle= \langle \bH, \eta\rangle$ for all $\eta\in \mathfrak{H}\subset\fG$ (cf.\ Remark \ref{rmk:adjoint-inclusion}). Restricting to $\C$, and recalling that by hypothesis $\C\subset J^{-1}_{\mathfrak{H}}(0)$, by evaluating the above equation at $\C$ we get that, for all $\eta\in\mathfrak{H}$,
\[
\eta \in \ker (\iota_\C^* \h_\smbullet)=\fN,
\]
where we used Theorem \ref{thm:frakN}. This proves one implication of the first statement. For the opposite implication ($\Leftarrow$), assume that $\mathfrak{H}\subset \fN $. In virtue of Definition \ref{def:frakN} and Theorem \ref{thm:frakN},
\[
\langle \iota_\C^* \h_\smbullet ,\eta\rangle = 0
\]
for all $\eta\in\mathfrak{H}\subset\fN$. Since $\bHo=0$ on $\C$, and in virtue of $\mathfrak{H}\ni\eta$ we have
\[
\iota_\C^*\int_{\Sigma}\langle \iota_{\mathfrak{H}}^*(\bHo + d\bh_\smbullet),\eta\rangle  = 0.
\]
Therefore, we can define the local and integrated forms
\[
\bJ_{\mathfrak{H}\smbullet}\doteq\iota_{\mathfrak{H}}^*(\bHo + d\bh_\smbullet) \in \oloc^{\text{top}}(\X,\mathfrak{H}^*),
\quad
J_{\mathfrak{H}}\doteq \int_{\Sigma}\bJ_{\mathfrak{H}\smbullet}\in \iloc^{0}(\X,\mathfrak{H}^*)
\]
and check by direct inspection that they are, respectively, a constraining momentum form and map for the action of $\mathfrak{H}$, i.e.\ such that $\C\subset J_{\mathfrak{H}}^{-1}(0)$ as per Definition \ref{def:constraintideal}. Uniqueness of $J_{\mathfrak{H}}$ follows from Lemma \ref{lem:uniqueJ}. 

Finally, to prove equivariance, we use Proposition \ref{prop:equi} and Proposition \ref{prop:adjequi} to find, for an arbitrary $g\in\G$, that
\[
g^* J_{\mathfrak{H}} = g^* \int_\Sigma \bJ_{\mathfrak{H}\smbullet}=\int_\Sigma \iota_{\mathfrak{H}}^*\big( \Ad^*(g)\cdot(\bHo + d\bh_\smbullet) + d\bc_\smbullet(g) \big) = \Ad^*(g)\cdot J_{\mathfrak{H}} +  \iota_{\mathfrak{H}}^*c_\smbullet(g).
\]
Thus, since $\mathfrak{H}\subset\fN$, the desired equivariance properties follow from Proposition \ref{prop:adjequi}, which states that for all $g\in\G$, $\fN\subset \ker(c_\smbullet(g))$, and thus $ \iota_{\mathfrak{H}}^*c_\smbullet(g) = 0$.
\end{proof}

\begin{corollary}[of Proposition \ref{prop:Nsubalgebraprop}]\label{cor:inclusion}
Let $\mathfrak{H}_1\subset\mathfrak{H}_2\subset\fG$ be two constraining subalgebras with constraining local momentum maps $J_{\mathfrak{H}_1}$ and $J_{\mathfrak{H}_2}$. Then
\[
\C\subset{J}_{\mathfrak{H}_2}^{-1}(0)\subset{J}_{\mathfrak{H}_1}^{-1}(0).
\]
\end{corollary}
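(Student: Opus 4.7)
The plan is to combine the explicit characterization of constraining momentum maps from Proposition~\ref{prop:Nsubalgebraprop} with the fact that, when $\mathfrak{H}_1 \subset \mathfrak{H}_2$, pairings involving elements of $\mathfrak{H}_1$ factor through pairings with elements of $\mathfrak{H}_2$.

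First, the inclusion $\C \subset J_{\mathfrak{H}_2}^{-1}(0)$ is nothing but the defining property of $\mathfrak{H}_2$ being a constraining subalgebra (Definition~\ref{def:constraintideal}), so there is nothing to prove for this part.

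For the second inclusion, I would invoke Proposition~\ref{prop:Nsubalgebraprop} to write, for $i=1,2$,
\[
J_{\mathfrak{H}_i} = \int_\Sigma \iota_{\mathfrak{H}_i}^*\bigl( \bHo + d\bh_\smbullet \bigr),
\]
and observe that, denoting by $\iota_{12}\colon \mathfrak{H}_1 \hookrightarrow \mathfrak{H}_2$ the natural inclusion, one has $\iota_{\mathfrak{H}_1} = \iota_{\mathfrak{H}_2}\circ \iota_{12}$, hence
\[
\langle J_{\mathfrak{H}_1}(\phi),\eta_1\rangle = \langle J_{\mathfrak{H}_2}(\phi), \iota_{12}(\eta_1)\rangle
\qquad \forall \eta_1 \in \mathfrak{H}_1,\ \phi\in\X.
\]
Then, for any $\phi \in J_{\mathfrak{H}_2}^{-1}(0)$, the right-hand side vanishes for all $\eta_1\in \mathfrak{H}_1$, so $J_{\mathfrak{H}_1}(\phi) = 0$, i.e.\ $\phi \in J_{\mathfrak{H}_1}^{-1}(0)$.

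There is no real obstacle here: the only subtlety is ensuring that one uses the specific representatives of the constraining momentum maps produced by Proposition~\ref{prop:Nsubalgebraprop}, since a priori $J_{\mathfrak{H}_i}$ is unique (Lemma~\ref{lem:uniqueJ}) but $\bJ_{\mathfrak{H}_i\smbullet}$ is only defined up to $\balpha$ with $\int_\Sigma \balpha = 0$---this ambiguity is immaterial for the inclusion of zero-level sets of the integrated maps.
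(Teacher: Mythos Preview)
Your proof is correct and follows essentially the same approach as the paper: both use the factorization $\iota_{\mathfrak{H}_1} = \iota_{\mathfrak{H}_2}\circ \iota_{12}$ of the inclusions together with the explicit form of $J_{\mathfrak{H}_i}$ from Proposition~\ref{prop:Nsubalgebraprop}. The paper simply records the commutative triangle and declares the result immediate, whereas you spell out the pairing identity $\langle J_{\mathfrak{H}_1}(\phi),\eta_1\rangle = \langle J_{\mathfrak{H}_2}(\phi), \iota_{12}(\eta_1)\rangle$ explicitly.
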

\begin{proof}
Immediately follows from the previous proposition, using
\[
\xymatrix@C=.5cm{
    \mathfrak{H}_1 \;\ar@{^(->}[rr]^-{\iota_{\mathfrak{H}_1}} \ar@{^(->}[dr]_-{\iota_{21}} && \fG \\
    & \mathfrak{H}_2 \ar@{^(->}[ur]_-{\iota_{\mathfrak{H}_2}}
    }
\]
for $\iota_{21} \colon {\mathfrak{H}_1} {\hookrightarrow} {\mathfrak{H}_2}$ such that $\iota_{\mathfrak{H}_1}=\iota_{\mathfrak{H}_2}\circ \iota_{21}$.
\end{proof}

A useful class of constraining ideals are given by the following:
\begin{lemma}\label{lemma:fGc}
Let $K\subset\mathring\Sigma$ be a compact subset of the open interior of $\Sigma$, and let $\fG_K$ be the set of elements of $\fG$ with support in $K$. Then $\fG_K$ is a constraining ideal. In particular, $\fG_c\doteq\bigcup_K\fG_K$ is a just constraining ideal:
\[
\C=J_{c}^{-1}(0),
\]
where we denoted $J_{c}\equiv J_{\fG_c}$. 
\end{lemma}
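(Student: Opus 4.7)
\smallskip
\noindent\textbf{Proof plan.} The plan is to first show that each $\fG_K$ is a constraining Lie ideal by reducing to Proposition \ref{prop:Nsubalgebraprop}, and then to upgrade $\fG_c$ to a \emph{just} constraining ideal by a standard density/fundamental-lemma argument applied to the rank-$0$ constraint form $\bHo$.

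\smallskip

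First, I would check that $\fG_K$ is a Lie ideal. Since $\fG$ is a $\Sigma$-local Lie algebra (Definition \ref{def:LocLieAlg}), its bracket is local as a map of sections, hence support-decreasing: for $\eta\in\fG_K$ and $\xi\in\fG$ one has $\mathrm{supp}([\xi,\eta])\subset\mathrm{supp}(\eta)\subset K$, so $[\xi,\eta]\in\fG_K$. Next, because $K$ is compact in the \emph{open} interior $\mathring{\Sigma}$, every $\eta\in\fG_K$ vanishes identically in a neighbourhood of $\pp\Sigma$, together with all its jets. Thus the dual-valued flux form $d\bh$, which depends only on jets at the corner, satisfies $\langle d\bh,\eta\rangle = 0$ pointwise on $\pp\Sigma$, and in particular $\eta\in\ker(\h^{\mathrm{co}})\subset\fNoff\subset\fN$ (Remark \ref{rmk:Nhco} and Theorem \ref{thm:frakN}\ref{prop:frakNii}). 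Proposition \ref{prop:Nsubalgebraprop} then directly yields that $\fG_K$ is a constraining Lie ideal, with constraining momentum form $\bJ_{\fG_K\smbullet}=\iota_{\fG_K}^*(\bHo+d\bh_\smbullet)$.

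\smallskip

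Since $\{\fG_K\}_K$ is directed under inclusion (any two compacts sit inside a third), $\fG_c=\bigcup_K\fG_K$ is again a Lie ideal of $\fG$, contained in $\fN$, and hence (by Proposition \ref{prop:Nsubalgebraprop}) a constraining ideal with momentum map $J_c=\int_\Sigma\iota_{\fG_c}^*(\bHo+d\bh_\smbullet)$. The inclusion $\C\subset J_c^{-1}(0)$ is automatic. For the converse, fix $\phi\in J_c^{-1}(0)$ and any $\xi\in\fG_c$. Because $\xi$ is compactly supported in $\mathring{\Sigma}$, the boundary integral $\int_\Sigma\langle d\bh_\smbullet,\xi\rangle=\int_{\pp\Sigma}\iota_{\pp\Sigma}^*\langle\bh_\smbullet,\xi\rangle$ vanishes, and we obtain
\[
0 \;=\; \langle J_c(\phi),\xi\rangle \;=\; \int_\Sigma \langle\bHo(\phi),\xi\rangle \qquad \forall\, \xi\in\fG_c.
\]

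\smallskip

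The main step is now to deduce $\bHo(\phi)=0$ pointwise. Here I would use that $\bHo\in\oloc^{\mathrm{top},0}(\Sigma\times\X,\fG^*_{(0)})$ is \emph{rank-$0$} (Proposition \ref{prop:momentumformdecomposition}), so that the pairing $\langle\bHo(\phi),\xi\rangle$ is $C^\infty(\Sigma)$-linear in $\xi$ and thus can be identified with a genuine section-wise pairing of $\bHo(\phi)\in\Gamma(\mathrm{Dens}(\Sigma)\otimes_\Sigma \Xi^*)$ with $\xi\in\Gamma(\Sigma,\Xi)$. The space $\fG_c$ of compactly-supported-in-$\mathring{\Sigma}$ sections of $\Xi$ is dense in the space of test sections for this pairing, so the fundamental lemma of the calculus of variations gives $\bHo(\phi)\vert_{\mathring{\Sigma}}=0$; by continuity of $\bHo(\phi)$ this extends to all of $\Sigma$, hence $\phi\in\bHo^{-1}(0)=\C$. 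The only conceptual subtlety I foresee is the rigorous invocation of this fundamental lemma in the present locally-convex setup, but since $\bHo$ is rank-$0$ (i.e.\ ultralocal in $\xi$) and $\Xi\to\Sigma$ is finite-dimensional, this is entirely standard and requires no infinite-dimensional input. This proves $\C=J_c^{-1}(0)$, i.e.\ $\fG_c$ is a just constraining ideal. \qed
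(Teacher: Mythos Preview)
Your proof is correct and follows essentially the same approach as the paper's own proof: first show $\fG_K\subset\fN$ via the vanishing of the flux pairing (compact support in $\mathring\Sigma$ plus Stokes), invoke Proposition~\ref{prop:Nsubalgebraprop}, then for ``just'' use the fundamental lemma of the calculus of variations on the rank-$0$ form $\bHo$ together with continuity to close up to $\pp\Sigma$. Your additional emphasis on the $C^\infty(\Sigma)$-linearity of the rank-$0$ pairing is a welcome clarification but not a departure from the paper's argument.
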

\begin{proof}
From the compact support property of the $\xi_K\in\fG_K$ it is easily seen that $\fG_K$ is a Lie ideal, i.e.\ that $[\fG,\fG_K]\subset\fG_K$. From that same property and Stokes theorem, it also readily follows that $\langle h_\smbullet, \xi_K\rangle= \int_\Sigma d\langle\bh_\smbullet,\xi_K\rangle =0$. Hence $\fG_K$ is a subideal of $\fN$ and hence a constraining ideal, in virtue of Proposition \ref{prop:Nsubalgebraprop}.

Since all $\fG_K$ are constraining ideals, $\fG_c$ is a constraining ideal as well. We now prove $\C=J_c^{-1}(0)$.  Since $\fG_c$ is a constraining ideal, it is enough to prove that $J_c^{-1}(0)\subset\C$. 
From Proposition \ref{prop:Nsubalgebraprop} we know that
\[
J_{c} = \int_\Sigma \iota_{\fG_c}^*(\bHo + d\bh_\smbullet)
\]
is a momentum map for $\fG_c$. 
Then $\phi\in J_c^{-1}(0)$ iff for all $\xi_c\in\fG_c$
\[
0 = \langle J_c(\phi),\xi_c\rangle = \int_\Sigma \langle\bHo(\phi),\xi_c\rangle,
\]
where the corner term dropped because by definition there exists a compact $K\subset\mathring\Sigma$ such that $\mathrm{supp}(\xi_c)\subset K$. 
Finally, by extremising over all possible $\xi_c\in\fG_c$ one finds that if $\phi\in J_c^{-1}(0)$ then
\[
\bHo(x,\phi) = 0
\]
at all $x\in \mathring\Sigma$. By continuity, we conclude that $\bHo(\phi) = 0$ everywhere on $\Sigma$, and thus $J_c^{-1}(0)\subset\C$.
Equivariance follows from Proposition \ref{prop:Nsubalgebraprop}.
\end{proof}

\begin{corollary}[of Corollary \ref{cor:inclusion} and Lemma \ref{lemma:fGc}]\label{cor:justconstr}
Any subalgebra $\mathfrak{H}$ of $\fG$ such that $\fG_c \subset \mathfrak{H} \subset\mathfrak{N}$ is a just constraining subalgebra.
\end{corollary}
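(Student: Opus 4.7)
The plan is to chain together the two preceding results. By hypothesis $\mathfrak{H} \subset \fN$, so Proposition \ref{prop:Nsubalgebraprop} immediately gives that $\mathfrak{H}$ is a constraining subalgebra, with constraining momentum form $\bJ_{\mathfrak{H}\smbullet} = \iota_\mathfrak{H}^*(\bHo + d\bh_\smbullet)$ and integrated map $J_\mathfrak{H}$, both enjoying the appropriate equivariance. Thus, by the very definition, $\C \subset J_\mathfrak{H}^{-1}(0)$, and only the reverse inclusion remains to be established.

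For the reverse inclusion, I would use that $\fG_c \subset \mathfrak{H}$ together with Corollary \ref{cor:inclusion} applied to the pair $\mathfrak{H}_1 = \fG_c \subset \mathfrak{H} = \mathfrak{H}_2$. This yields the chain
\[
\C \;\subset\; J_{\mathfrak{H}}^{-1}(0) \;\subset\; J_{c}^{-1}(0).
\]
Lemma \ref{lemma:fGc} states that $\fG_c$ is already \emph{just}, i.e.\ $J_c^{-1}(0) = \C$. The two inclusions thus collapse into equalities, giving $J_\mathfrak{H}^{-1}(0) = \C$, which is precisely the just constraining condition.

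The main (and essentially only) step requiring care is checking that Corollary \ref{cor:inclusion} is indeed applicable, i.e.\ that both $\fG_c$ and $\mathfrak{H}$ are constraining subalgebras in the sense of Definition \ref{def:constraintideal}---but this is immediate from Proposition \ref{prop:Nsubalgebraprop} since $\fG_c \subset \fN$ (established inside the proof of Lemma \ref{lemma:fGc}) and $\mathfrak{H} \subset \fN$ by hypothesis. No further estimates, topological hypotheses, or regularity arguments are needed; the statement is essentially a squeeze of the zero-level set of $J_\mathfrak{H}$ between $\C$ (from above) and $\C$ (from below, via $\fG_c$).
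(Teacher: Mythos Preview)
Your argument is correct and is exactly the squeeze the paper intends: the corollary is stated as a direct consequence of Corollary \ref{cor:inclusion} and Lemma \ref{lemma:fGc}, and your chain $\C \subset J_{\mathfrak{H}}^{-1}(0) \subset J_c^{-1}(0) = \C$ is precisely that. Nothing to add.
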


\begin{remark}\label{rmk:uniquejustfd1}
Notice that this scenario is only interesting in infinite dimensions.
In \emph{finite} dimensions, one can prove that the requirement that a momentum map vanish at a given submanifold, when possible, implies that there exists at most one just, constraining, subalgebra. See Corollary \ref{rmk:uniquejustfd2}.
\end{remark}

We are ready now to combine these results into a proof of Theorem \ref{thm:fGo}.

\begin{proof}[Proof of Theorem \ref{thm:fGo}]
In virtue of Proposition \ref{prop:Nsubalgebraprop}, $\mathfrak{H}\subset\fN$ iff it is a constraining subalgebra. All constraining ideals are then (\emph{not} necessarily proper) subsets of $\fN$, but since $\fN$ is a constraining Lie ideal itself (Theorem \ref{thm:frakN}), $\fN$ must be \emph{maximal} as a constraining ideal:
\[
\fGo = \fN.
\]
One concludes that $\fGo$ is also a stable Lie ideal (Theorem \ref{thm:frakN}).

From Proposition \ref{prop:Nsubalgebraprop}, we also know that $\fGo$ admits a unique, constraining,  momentum map $\Jo$ such that $\C \subset\Jo^{-1}(0)$, as well as a $\Sigma$-equivariant, constraining, local momentum map of the form
\[
\bJ_{\circ\smbullet} \doteq \boldsymbol{J}_{\mathfrak{N}\smbullet} = \iota_{\fN}^*(\bHo + d\bh_\smbullet), \qquad \Jo=\int_\Sigma \bJ_{\circ\smbullet}.
\]

Finally, to show that, truly, $\C=\Jo^{-1}(0)$, we use Corollary \ref{cor:justconstr}: since $\fGo$ is maximal and $\fG_c$ is a constraining ideal, one has $\fG_c\subset\fGo$ and therefore by the corollary, it follows that
\[
\Jo^{-1}(0) \subset J^{-1}_c(0) = \C,
\]
so that $\Jo$ is just. This concludes the proof of Theorem \ref{thm:fGo}.
\end{proof}

\begin{proposition}\label{prop:Noffideal}
The off-shell flux annihilator $\fNoff$ is a just constraining ideal.
\end{proposition}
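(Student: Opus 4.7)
The plan is to deduce this directly from the machinery already assembled: the characterisation of constraining subalgebras as subsets of $\fN$ (Proposition \ref{prop:Nsubalgebraprop}), the stability/ideal property of $\fNoff$ (Theorem \ref{thm:frakN}\ref{prop:frakNiii}), and the criterion for \emph{justness} given by Corollary \ref{cor:justconstr}, namely $\fG_c \subset \mathfrak{H} \subset \fN$. Since we already know $\fNoff \subset \fN$ (Remark \ref{rmk:Nhco}) and that $\fNoff$ is a Lie ideal, the whole task reduces to showing the lower inclusion $\fG_c \subset \fNoff$.

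First I would invoke Proposition \ref{prop:Nsubalgebraprop} applied to $\mathfrak{H} = \fNoff$: the inclusion $\fNoff \subset \fN$ immediately gives that $\fNoff$ admits the $\Sigma$-equivariant constraining local momentum form $\bJ_{\fNoff\smbullet} \doteq \iota_{\fNoff}^*(\bHo + d\bh_\smbullet)$, with associated equivariant constraining momentum map $J_{\fNoff}$. Combined with Theorem \ref{thm:frakN}\ref{prop:frakNiii}, which says $\fNoff$ is a stable Lie ideal of $\fG$, this shows that $\fNoff$ is a constraining ideal.

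Next I would establish that $\fG_c \subset \fNoff$. Pick $\xi \in \fG_c$ with $\mathrm{supp}(\xi) \subset K \subset \mathring\Sigma$ for some compact $K$. Because $\xi$ together with all its jets vanishes in a neighbourhood of $\pp\Sigma$, the pullback $\iota_{\pp\Sigma}^*\langle\bh,\xi\rangle$ is zero. By Stokes' theorem
\begin{equation*}
    \langle h, \xi \rangle \;=\; \int_\Sigma d\langle \bh, \xi\rangle \;=\; \int_{\pp\Sigma}\iota_{\pp\Sigma}^*\langle\bh,\xi\rangle \;=\; 0
\end{equation*}
identically on $\X$, hence $\bd\langle h,\xi\rangle = 0$. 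By Theorem \ref{thm:frakN}\ref{prop:frakNi}, $\fNoff = \ker(\bd h^{\mathrm{co}})$, so $\xi \in \fNoff$, proving the inclusion.

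Finally, the chain $\fG_c \subset \fNoff \subset \fN$ together with Corollary \ref{cor:justconstr} yields that $\fNoff$ is a just constraining subalgebra; combined with the ideal property, this gives that $\fNoff$ is a just constraining ideal, completing the proof. No significant obstacle is expected: the argument is essentially bookkeeping of previous results, with the only genuine content being the Stokes-theorem computation that places $\fG_c$ inside $\fNoff$.
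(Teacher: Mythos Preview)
Your proof is correct and follows essentially the same approach as the paper: establish that $\fNoff$ is a constraining ideal via $\fNoff\subset\fN$ together with Proposition \ref{prop:Nsubalgebraprop} and Theorem \ref{thm:frakN}\ref{prop:frakNiii}, then obtain justness from the inclusion $\fG_c\subset\fNoff$ and Corollary \ref{cor:justconstr}. The paper's proof is more terse but uses the same ingredients, arguing directly that $\langle\bd h,\xi\rangle=0$ for $\xi\in\fG_c$ because this expression depends only on the restriction of $\xi$ and its jets at $\pp\Sigma$; your Stokes-theorem computation is just an explicit version of this observation.
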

\begin{proof}
The fact that $\fNoff$ is a constraining ideal follows from the obvious inclusion $\fNoff \subset \fN $ (Proposition \ref{prop:Nsubalgebraprop}), whereas the fact that it is just follows from $\fG_c\subset \fNoff$ (cf.\ Lemma \ref{lemma:fGc}). The latter inclusion can be checked by observing that for any $\xi\in\fG_c$ we have $\langle \bd h,\xi\rangle=0$ since  this expression only depends on restrictions of $\xi$ and its jets at $\pp\Sigma$.
\end{proof}

We conclude this section with two important remarks in which we compare $\fN=\fGo$ with both $\fG_c$ and $\fNoff$.

\begin{remark}[$\fGo$ vs. $\fG_c$]
The compact-support constraining ideal $\fG_c$  satisfies all the desired properties we want for $\fGo$---\emph{except} for maximality. In fact this also shows that \emph{maximality} of the constraining ideal is, in general, a stronger requirement than being ``just'' (see Definition \ref{def:constraintideal}).
Note also that the definition of $\fG_c$ does not depend on the local momentum form $\bH$ and, in particular, it does not depend on the local flux form $d\bh$.
This is in contrast to the definition of $\fN$, and therefore to $\fGo$. It is precisely thanks to its \emph{maximality} that $\fGo$ is ``locked'' to the precise form of the local momentum map $\H=\Ho+d\h$ and thus to the on-shell flux map $\iota_\C^*\h_\smbullet$, so that $\fGo=\fN$ (Theorem \ref{thm:fGo}). To see this, we first note that given a constraining subalgebra $\mathfrak{H}$:(\textit{i}) $\iota_\C^* J_\mathfrak{H} = 0$, and (\textit{ii})  $J_\mathfrak{H}^\text{co} = \iota_\mathfrak{H}^* (\Ho + h_\smbullet)^\text{co}$ (see Lemma \ref{lem:uniqueJ} and Remark \ref{rmk:co-}). From these we deduce that, on $\C$, we have $0=(\iota_\C^*J_\mathfrak{H})^\text{co}  = \iota^*_\mathfrak{H}(\iota_\C^*h_\smbullet)^\text{co}$ and hence one sees that the definition of $\mathfrak{H}=\fGo$ as the \emph{maximal} constraining ideal depends on the properties of the (adjusted) on-shell flux map $\iota_\C^*h_\smbullet$.
The ultimate importance of the maximality condition relates to the characteristic distribution $\C^\omega$, and will be explained in Proposition \ref{prop:symplecticclosurenogo}.
\end{remark}

\subsubsection{The on- and off-shell flux gauge algebras and groups}

Having fully characterized the Lie \emph{ideal} of constraint gauge transformations, and strong of Proposition \ref{prop:Noffideal}, we can introduce the algebra of flux gauge transformations both on- and off-shell, by quotienting out the respective ideals:
\begin{definition}[On/Off-shell flux gauge algebras]\label{def:fGp}
The on-shell and off-shell \emph{flux gauge algebras} are the quotients:
\[
    \fGred\doteq\fG/\fGo \equiv \fG/\fN, \qquad  \fGredoff \doteq \fG/ \fNoff.
\]
We denote the elements of $\fGred$ and $\fGredoff$ respectively by $\underline{\xi} \equiv \xi+\fGo$ and $\underline{\xi}^{\off}= \xi + \fNoff$. 
\end{definition}

It follows that there is a canonical isomorphism of Lie algebras:
\[
\fGred \simeq \frac{\fG/\fNoff}{\fN/\fNoff} = \frac{\fGredoff}{\fN/\fNoff}.
\]

In Theorem \ref{thm:corneralgd}, we will provide an alternative, equivalent, characterization of $\fGredoff$ which will allow us to show that $\fGred$ and $\fGredoff$ are canonically isomorphic to algebras \emph{supported on the corner} $\partial\Sigma$ (Definition \ref{def:LocLieAlg}).

So far we have defined the constraint and flux gauge \emph{algebras}. We are now going to introduce the corresponding \emph{groups}.

\begin{assumption}[Constraint and flux gauge groups]\label{ass:groups}
The on- and off-shell flux annihilators $\fN$ and $\fNoff$ are enlargeable to connected, normal, Fr\'echet, Lie subgroups of $\G$, denoted $\Go$ and $\Gooff$ respectively, which we call the \emph{on- and off-shell constraint gauge groups}. 
Moreover, the quotients 
\[
\Gred \doteq \G/\Go \quad\text{and}\quad \Gredoff=\G/\Gooff
\]
are locally exponential  Fr\'echet Lie groups.
\end{assumption}

\begin{definition}
The Fr\'echet Lie groups $\Gred$ and $\Gredoff$ are called the \emph{on- and off-shell flux gauge groups}, respectively.
\end{definition}

\begin{remark}\label{rmk:split}
The following results are useful to check the validity of Assumption \ref{ass:groups} in concrete cases: \cite[Theorems IV.3.5, IV.4.9, and VI.1.10]{Neeb-locallyconvexgroups}.
They also state that, if the assumption is valid, $\fGred$ and $\fGredoff$ are enlargeable to $\Gred$ and $\Gredoff$, respectively.
Moreover, since we assumed from the start that $\G$ is connected, $\Gred^{(\text{off})}$ is necessarily connected too.
Finally, if $\Go \hookrightarrow \G$ happens to be split, then $\G$ is a principal principal bundle over $\Gred$ with fibre $\Go$ \cite[Definition I.3.5 and Remark IV.4.13]{Neeb-locallyconvexgroups}.
\end{remark}

Observe that we also have
\[
\Gred \simeq \frac{\Gredoff}{\Go / \Gooff}.
\]

\begin{remark}[Higher order theories]\label{rmk:Goingravity}
The fact that in the theories we explicitly analyze the flux gauge group and algebra are simply given by the gauge group and algebra associated to the manifold $\partial\Sigma$, appears to be a rather special occurrence. In particular, it is a consequence of the fact that $\H$ is of order-1 and $\h$ is of order-0.
For theories in which $\bH$ is of order $k>1$, $\fGo$ has a more subtle characterization, which might generically involve transverse jets of $\xi$ at $\partial \Sigma$. Cf. also Section \ref{sec:ultralocality}.
\end{remark}

\subsection{Yang--Mills theory: flux gauge groups}\label{sec:runex-fluxgaugegroup}

As in the previous running example section we treat the  semisimple and Abelian cases separately. 

The semisimple case is straightforward. In Section \ref{sec:runex-fluxannihilators} we have shown that the constraint gauge ideal $\fN\simeq\fGo$ for YM theory with semisimple structure group $G$ coincides with those elements of $\fG$ that vanish when restricted to the corner,
\[
\fN = \fNoff =  \{ \xi\in\fG \,|\, \xi\vert_{\pp\Sigma}=0 \}.
\]
From this, we find on the one hand that the constraint gauge ideal $\fGo=\fN$ (Theorem \ref{thm:fGo}) is simply given by those sections of the induced adjoint bundle $\iota_{\pp\Sigma}^*\Ad P$ that vanish at the corner. On the other hand, the on- and off-shell flux gauge algebras coincide and are isomorphic to the algebra of gauge transformations intrinsically defined on the corner surface (this is a special case of a general result about $\fGredoff$, Theorem \ref{thm:corneralgd}):
\begin{equation}
\label{eq:uGYM}    
\fGred = \fGredoff \simeq \Gamma(\pp\Sigma, \iota_{\pp\Sigma}^* \Ad P).
\end{equation}

\medskip

The Abelian case is more subtle.
This is because, in this case, there exists a nontrivial stabiliser which affects the structure of the constraint set (see Proposition \ref{prop:Gausslaw}).
Recall that in the Abelian case $\Ad P$ is a trivial vector bundle and $\fG \simeq C^\infty(\Sigma,\fg)$.
Then, from the results of Section \ref{sec:runex-fluxannihilators},
\[
\fN = \{ \xi \in \fG \ |\ \exists \chi\in\fG_A : \xi\vert_{\pp \Sigma} = \chi\vert_{\pp\Sigma} \} \supset \fNoff
\qquad \text{(Abelian)}
\]
where both $\fNoff$ and $\fGredoff$ are as above, whereas for all Abelian gauge potential $A$ the stabiliser $\fG_A$ is isomorphic to the space of (locally) constant $\fg$-valued functions in $\fG\simeq C^\infty(\Sigma,\fg)$. Therefore, having assumed $\Sigma$ connected, the on-shell flux gauge group is the quotient
\[
\fGred \simeq \Gamma(\pp\Sigma, \iota_{\pp\Sigma}^*\Ad P)/\fg 
\simeq \fGredoff/\fg 
\qquad \text{(Abelian)},
\]
where $\fg$ is embedded in $\fGredoff$ as the algebra of $\fg$-valued functions on $\pp\Sigma$ which take the same value on the boundary.
For instance, if $G$ is Abelian and $\pp\Sigma$ has one single connected component, then the on-shell flux gauge algebra $\fGred$ is (isomorphic to) the algebra of gauge transformations supported on the corner \emph{modulo} constant gauge transformations.

In both the semisimple and Abelian cases, $\fGredoff$ is the space of sections of the induced adjoint bundle on $\partial \Sigma$, i.e.\ the gauge Lie algebra for principal connections on $\iota_{\pp\Sigma}^*P\to\pp\Sigma$. 
This algebra can also be identified with the equivariant mapping class group $\fGred \simeq C^\infty(\iota_{\partial\Sigma}^*P,\fg)^G$.

Recall that $\Sigma$ is assumed connected. If the bundle $P\to\Sigma$ is trivial, adapting the arguments of \cite[Sections 5 and 6]{RielloSchiavinanull}, one can show that $\Gred$ is a connected discrete central extension by $\mathcal{K}$ of either the identity component of the mapping group $C^\infty(\pp\Sigma,G)$, denoted $C_0^\infty(\pp\Sigma,G)$, in the semisimple case, or of the quotient $C_0^\infty(\pp\Sigma,G)/G$ in the Abelian case, where $G\hookrightarrow C_0^\infty(\pp\Sigma,G)$ is the embedding of $G$ as constant maps. In both cases, $\mathcal{K}$ is the discrete group of components of the ``relative\footnote{``Relative'' refers to $\pp\Sigma$.} subgroup" of the elements of $C_0^\infty(\Sigma,G)$ that are equal to the identity at $\pp\Sigma$, denoted $C_{0,\text{rel}}^\infty(\Sigma,G)$. 

If $G$ is connected and $\Sigma \simeq D^n$ i.e. the $n$-disk, with boundary the $(n-1)$-sphere $S^{n-1}$, then $C_0^\infty(\Sigma,G)\simeq C^\infty(\Sigma,G)$ and $\mathcal{K}\simeq \pi_0(C_{\text{rel}}^\infty(D^n,G))\simeq \pi_0(C^\infty(S^n,G))\simeq \pi_n(G)$. 

Summarising:
\[
\begin{dcases}
\mathcal{K} \to \Gred \to C_0^\infty(\pp\Sigma,G) & \text{(semisimple)}\\
\mathcal{K} \to \Gred \to C_0^\infty(\pp\Sigma,G)/G & \text{(Abelian)}
\end{dcases}
\]
and 
\[
\Sigma\simeq D^n \implies \mathcal{K} \simeq \pi_n(G).
\]

\begin{remark}
We note that this is the group relevant for the study of the $\theta$-vacua and the axial $U(1)$ symmetry breaking. Similarly to \cite{MorchioStrocchiTheta}, this group is here obtained without referring to instantons in an auxiliary Euclidean 4-dimensional theory, but rather by carefully studying the topological properties of the gauge group over $\Sigma$. However, differently from \cite{MorchioStrocchiTheta}, we never invoke the group of gauge transformation over an entire Cauchy surface nor we need to give a meaning to gauge transformations that are not connected to the identity.
Instead, we recover $\pi_3(G)$ by studying reduction by stages with respect to a connected gauge group $\G$ over a topological disk $\Sigma\simeq D^3$ \emph{within} a Cauchy surface.
We defer the study of the relevance for the definition of $\theta$-vacua of our quasi-local derivation of the ``instanton group'' $\pi_3(G)$ to future work, where we will address the quantization of the classical geometrical structures studied in this work.
\end{remark}

Now, since equivariant mapping groups on manifolds with corners, such as $C_0^\infty(\pp\Sigma,G)$, are locally exponential (see \cite{WockelPhD} and Remark \ref{rmk:mappinggrp}) and since $\mathcal{K}$ is discrete, Assumption \ref{ass:groups} is ensured in virtue of \cite[Thm. VI.1.10]{Neeb-locallyconvexgroups} (See also  \cite[Prop. IV.3.4 and Thm. IV.3.5]{Neeb-locallyconvexgroups} and \cite{AbouqatebNeeb,kriegl1997convenient}).

This scenario can be extended to other gauge theories, such as those discussed in Section \ref{sec:applications}.

\section{Locally Hamiltonian symplectic reduction with corners}\label{sec:reduction}
In the previous section we decomposed the locally Hamiltonian space $(\X,\bom,\G,\bH)$ into the data of a constraint form $\bHo$, which induces the constraining, just, equivariant momentum map $\Jo$ of Theorem \ref{thm:fGo} for the Hamiltonian action of $\Go\subset \G$, and the data of a flux form $\d\bh$, and associated flux gauge group $\Gred=\Gred/\Go$. 

In what follows we will apply reduction by stages to this decomposition, as a two-step approach to the description of the quotient $\uuC = \C/\G$, the fully-reduced phase space, the functions over which define the observables of the theory in the presence of boundaries.

\subsection{First stage symplectic reduction: the constraint}\label{sec:bulkreduction}

The first stage of our reduction procedure of the locally Hamiltonian space $(\X,\bom,\G,\bH)$, consists in applying the Marsden--Weinstein--Meyer (MWM) reduction procedure \cite{MarsdenWeinstein,Meyer,RatiuOrtega03} (adapted to the Fr\'echet setting as in \cite{DiezPhD,DiezHuebschmann-YMred}) to the symplectic space $(\X,\omega)$ with respect to the action of the constraint gauge group $\Go\subset\G$, at the \emph{zero}-level set $\C=\bHo^{-1}(0)$ of the \emph{constraint form} $\bHo$.

We assume that reduction can be performed and look at the quotient space $\uCo=\C/\Go$ of the flux gauge group $\Gred = \G/\Go$.

Technically, there are two important steps in the Hamiltonian reduction procedure. 
The first step consists in showing that the pullback of $\omega$  to the constraint set $\C$ is basic, i.e.\ is the pullback of a 2-form $\uomegao$ defined on the quotient space $\uCo$. This is summarized in the diagram below for the reduction $\pi_\circ: \C\to \uCo =\C/\Go$:
\[
\xymatrix@C=.75cm{
(\X,\omega)
	\ar@{~>}[rr]^-{\tbox{2.2cm}{constraint red. (red. by $\Go$ at $0$)}}
&&(\uCo,\uomegao)\\
&\;\C\;
    \ar@{_(->}[ul]^-{\iota_\C}
	\ar@{->>}[ur]_-{\pi_\circ}\\
}
\]

~

The second step consists in showing that the 2-form $\uomegao$ is nondegenerate. 
In infinite dimensions this is not automatic, and we will discuss this point in some generality at the end of the section, and within YM theory in Section \ref{sec:runex-firststage}. 

Thanks to Corollary \ref{cor:tangent}, which states that the fundamental vector fields $\rho(\fG)\vert_\C$ are everywhere tangent to the constraint set $\C\subset\X$, one has that the action of $\G$ on $\X$ induces an action of $\Go$ on $\C$. Therefore, it is meaningful to introduce the following:

\begin{definition}[Constraint-reduced phase space]\label{def:bulkredsp}
The \emph{constraint-reduced phase space}, denoted $\uCo$, is the set of equivalence classes of constrained field configurations $\phi\in\C$ under the action of the constraint gauge group $\Go$
\[
\uCo \doteq \C / \Go.
\]
Denote $\pi_\circ : \C \to \uCo$ the corresponding projection.
\end{definition}

\begin{remark}
In principle, the constraint-reduced phase space should be defined as the coisotropic reduction of the constraint set $\C/\C^\omega$, and then one needs to \emph{prove} that it coincides with the symplectic reduction by the action of $\Go$. For ease of presentation, here we reverse the discussion and introduce the constraint-reduced phase space as the quotient of the constraint set by the action of the constraint gauge ideal. We will then revisit the key steps in the Marsden--Weinstein--Meyer reduction procedure to highlight what extra conditions need to be satisfied in order to have $\C/\Go\simeq \C/\C^\omega$ in the infinite-dimensional Fr\'echet setting.
\end{remark}

\begin{assumption}\label{ass:smoothuCo}
The constraint-reduced phase space $\uCo$ is a smooth manifold and $\pi_\circ : \C \to \uCo$ is a principal $\Go$-bundle.
\end{assumption}

\begin{remark}[Smoothness of quotients]\label{rmk:smooth-uCo}
In order to ensure that the quotient by a group action is smooth, one often requires the action to be smooth and proper. In certain cases this can be established even beyond the traditional Banach setting of \cite{AtiyahBott}. 
However, even if the action is proper, $\C$ may contain configurations with different stabilisers. 
As a consequence, symplectic reduction might be singular, yielding a quotient which is often a \emph{stratified} manifold, each stratum corresponding to configuration  with isomorphic stabilisers---over which reduction is smooth (see \cite{Arms1981, ArmsMarsdenMoncrief1981,  DiezHuebschmann-YMred, DiezRudolph2020, DiezRudolph2022, DiezPhD}).\footnote{For the general theory of singular reduction in finite dimension, see e.g. \cite{ArmsCushmanGotay, sjamaar1991, marsden2007stages}. For application to General Relativity, which in fact motivated most of this research, we recall in particular the names of Arms, Ebin, Fischer, Isenberg, Marsden, Moncrief, and Palais. See also \cite{CattaneoFelderPoisson,Contreras} for the Poisson sigma model, where $\uCo$ is the symplectic groupoid (when smooth).}

We should note that in concrete cases it often happens that there are no $\Go$-reducible configurations in $\C$ and therefore that the action of $\Go$ is free. If the action of $\G$ (and thus that of $\Go$) was also proper, this is enough to show that $\uCo$ is also smooth.
E.g., as discussed in Remark \ref{rmk:smooth-C-revisited}, this is true in any theory of principal connections---such as Yang--Mills theory---if boundaries are present. 
\end{remark}

\begin{remark}[Nonlocality of $\uCo$]\label{rmk:nonlocality}
The constraint-reduced space $\uCo$ in general fails to be local, in the sense that it is not the space of sections of some bundle over $\Sigma$. There are a priori two potential sources of nonlocality:
\begin{enumerate}
    \item in theories where fields transform by a derivative under a gauge transformation, i.e.\ in which $\rho$ is of order $k>0$, one expects that quotienting by $\Go$ be a source of nonlocal behaviour. Theories involving principal connections or diffeomorphism symmetry belong to this class;
    \item even when a set of local gauge-invariant field-space coordinate exists (e.g.\ the electric and magnetic fields in Maxwell theory), the on-shell field configurations must be solutions to some elliptic PDE (e.g.\ the Gauss constraint in Maxwell theory), and therefore $\C$ is ``nonlocal'' in the sense that it cannot be expressed as the space of sections of some bundle over $\Sigma$.
\end{enumerate} 
Therefore, to work on $\uCo$, the local setting does not suffice and a more general infinite-dimensional geometric setting becomes necessary. \qedhere
\end{remark}

\begin{definition}
Denoting $T_\C \X$ the ambient tangent bundle over $\C$,
let $V_\circ \subset  T_\C\X$ be the image of $\C\times\fGo$ under the anchor map $\rho:\A \to T\X$ (Equation \eqref{eq:A}), 
\[
V_\circ \doteq \rho(\C\times\fGo)
\qedhere
\]
\end{definition}

\begin{remark}
Note that $V_\circ$ can be equivalently defined as the union of the tangent spaces (in $\X$) to the orbits under $\G_\circ\subset\G$ of the elements of $\C$, i.e.\ $V_\circ = \bigcup_{\phi\in\C} T\mathcal{O}^\circ_\phi$, where $\mathcal{O}^\circ_\phi = \phi\triangleleft\Go\subset \X$ is the orbit of $\phi$ under the action of $\Go$.
Then, thanks to Corollary \ref{cor:tangent}, it is immediate to see that $V_\circ$ is in fact tangent to $\C$, i.e.\ 
\[
V_\circ \subset  T\C \subset  T_\C\X.
\qedhere
\]
\end{remark}

\begin{lemma}\label{lemma:coisoC} The following two statements hold true:
\begin{enumerate}[label=(\roman*)]
    \item $V_\circ^\omega = T\C $. \label{prop:C2dual}
    \item $\C\subset\X$ is coisotropic,\label{prop:C1}
\end{enumerate}
\end{lemma}
\begin{proof}
Let $\xio \in \fGo$. From Theorem \ref{thm:fGo}, we have  $\C = \Jo^{-1}(0)$ as well as the following Hamiltonian flow equation for $\Go$
\[
\bi_{\mathbb{X}} \bi_{\rho(\xio)}\omega = \langle \L_{\mathbb{X}} J_\circ, \xio \rangle,
\]
which we wrote contracted with an arbitrary vector $\mathbb{X} \in T\X$.
Now, we evaluate the above equation at a $\phi\in\C\subset \X$. If $\mathbb{X}\in T_\phi\C \subset T_\phi\X$, then the right-hand side vanishes because $\C$ is the zero-level set of $\Jo$: this means that $T\C \subset V_\circ^\omega$.
Conversely, if $\mathbb{X}\in V_\circ^\omega$, then the left-hand side vanishes by definition, and thus---using once again that $\C=\Jo^{-1}(0)$---we deduce $V_\circ^\omega \subset T\C$. 
From these two inclusions, we conclude that $V_\circ^\omega = T\C$. Finally, the coisotropy of $\C$, i.e. $T\C^\omega \subset  T\C$, follows from this statement together with the observation that $V_\circ \subset T\C$ (Corollary \ref{cor:tangent}) implies $T\C^\omega \subset V_\circ^\omega$.
\end{proof}

We now construct a symplectic form on $\uCo$. Recall Theorem \ref{thm:fGo} which states that $\C$ is the zero-level set of the constraining $\Go$-equivariant momentum map $J_\circ \doteq \iota^\ast_{\fGo}( \Ho + h_\smbullet)$.
A standard application of Marsden--Weinstein--Meyer reduction theory (see e.g. \cite{RatiuOrtega03}) allows us to define the following induced 2-form on $\uCo$: 

\begin{definition}[Constraint-reduced form]\label{def:uomegao}
Let $\uomegao\in\Omega^{2}(\uCo)$ be the unique 2-form on $\uCo$ such that
\begin{equation}\label{e:uomegao}
\pi_\circ^* \uomegao = \iota_\C^*\omega, \qquad \pi_\circ\colon \C\to\uCo.
\end{equation}
Then, we call $\uomegao$ the \emph{constraint-reduced form}.
\end{definition}

What does not automatically go through in infinite dimensions is the proof of nondegeneracy of the constraint-reduced form $\uomegao$, which requires an additional assumption. Indeed, the two-form $\uomegao$ is nondegenerate if, and only if, the kernel of $\iota^*_\C\omega$ is given solely by the $\pi_\circ$-vertical vector fields corresponding to the infinitesimal action of $\Go$, i.e.\ 
\begin{equation}\label{eq:nondeg}
\uomegao^\flat \text{ is injective iff } \ker(\iota_\C^*\omega^\flat) = V_\circ.
\end{equation}
Since $\ker(\iota_\C^*\omega^\flat) = T\C \cap \ker(\omega^\flat\vert_{T_\C\X})\equiv T\C \cap T\C^\omega$, this condition is often seen as the consequence of the following two (stronger) conditions \cite{MarsdenRatiuPoissonRed}:
\begin{enumerate}[label=(C\arabic{*})]
    \item $\C$ is a coisotropic submanifold of $(\X,\omega)$, i.e.\ $T\C^\omega\subset  T\C$,  \label{C1}
    \item $V_\circ$ is the symplectic complement of $T\C$ in $(\X,\omega)$, i.e.\ $V_\circ = T\C^\omega$. \label{C2}
\end{enumerate} 
Lemma \ref{lemma:coisoC} already subsumes \ref{C1}, however it only implies the ``\emph{symplectic complement}" of condition \ref{C2}.

Although in finite dimension the non-degeneracy of $\omega$ readily implies ``symplectic reflexivity'' of all subspaces $V\subset  W$, i.e.\ $V^{\omega\omega} = V$, in the infinite-dimensional case one can in general only conclude that $V \subset  V^{\omega\omega}$. 
Therefore, Lemma \ref{lemma:coisoC} only implies that $V_\circ \subset  V_\circ^{\omega\omega}=T\C^\omega$. Therefore, condition \ref{C2}, and with it the nondegeneracy condition \eqref{eq:nondeg}, holds iff $V_\circ = V_\circ^{\omega\omega}$.    
We will refer to this condition as the \emph{symplectic closure} of $V_\circ$ (see \cite[Prop. 4.1.7 and Lem. 4.2.14]{DiezPhD} for an explanation of this nomenclature).

We can summarise these considerations in the following:
\begin{proposition}\label{prop:symplclosVo}
The constraint-reduced symplectic form $\uomegao^\flat$ is injective iff $V_\circ = \rho(\C\times\fGo)$ is symplectically closed, i.e.\ iff $V_\circ = V_\circ^{\omega\omega}$.
\end{proposition}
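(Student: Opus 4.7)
The strategy is to compute $\ker(\uomegao^\flat)$ explicitly by working on the total space $\C$ of the principal $\Go$-bundle $\pi_\circ:\C\to\uCo$ (Assumption \ref{ass:smoothuCo}) and then translate the condition for injectivity into a statement about $V_\circ$ and its symplectic bicomplement.

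First, I would identify the tangent space to $\uCo$ at $\pi_\circ(\phi)$ with the quotient $T_\phi\C / V_\circ\vert_\phi$, which makes sense because $V_\circ$ is by construction the vertical distribution of $\pi_\circ$ and is tangent to $\C$ by Corollary \ref{cor:tangent}. Using the defining relation $\pi_\circ^*\uomegao = \iota_\C^*\omega$ from Proposition \ref{prop:uomegao}, a class $[\mathbb X]\in T_\phi\C/V_\circ$ satisfies $\uomegao^\flat([\mathbb X])=0$ iff $\omega(\mathbb X,\mathbb Y)=0$ for every $\mathbb Y\in T_\phi\C$ and every (equivalently, for one) lift $\mathbb X\in T_\phi\C$ of $[\mathbb X]$. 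This is precisely the condition $\mathbb X\in T_\phi\C\cap T_\phi\C^\omega$, which by the coisotropy of $\C$ (Lemma \ref{lemma:coisoC}\ref{prop:C1}) simplifies to $\mathbb X\in T_\phi\C^\omega$. Hence
\[
    \ker(\uomegao^\flat)\vert_{\pi_\circ(\phi)} \;=\; T_\phi\C^\omega / V_\circ\vert_\phi.
\]

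Next I would invoke Lemma \ref{lemma:coisoC}\ref{prop:C2dual}, which states $V_\circ^\omega = T\C$. Taking one more symplectic complement, and noting that $\bullet^\omega$ is order-reversing, this yields $V_\circ^{\omega\omega}=(T\C)^\omega = T\C^\omega$. Substituting into the description of the kernel above gives
\[
    \ker(\uomegao^\flat)\vert_{\pi_\circ(\phi)} \;=\; V_\circ^{\omega\omega}\vert_\phi / V_\circ\vert_\phi,
\]
from which the equivalence $\uomegao^\flat\text{ injective} \iff V_\circ = V_\circ^{\omega\omega}$ follows immediately, noting that the inclusion $V_\circ\subset V_\circ^{\omega\omega}$ is automatic.

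The only subtlety to watch in carrying this out is the one already flagged in the paragraph preceding the proposition: in the infinite dimensional setting $V^{\omega\omega}=V$ is \emph{not} automatic, so the equality $T\C^\omega = V_\circ^{\omega\omega}$ must be used exactly in the form provided by Lemma \ref{lemma:coisoC}\ref{prop:C2dual} (which is an equality, not merely an inclusion), while the reverse equality $V_\circ^{\omega\omega}=V_\circ$ is precisely the hypothesis being tested. No other analytic input is needed, and the argument is purely algebraic once Proposition \ref{prop:uomegao} and Lemma \ref{lemma:coisoC} are in hand.
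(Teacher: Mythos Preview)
Your proposal is correct and follows essentially the same approach as the paper. The paper does not give a separate formal proof of this proposition; instead it presents the argument in the discussion immediately preceding it (Equation \eqref{eq:nondeg}, conditions \ref{C1}--\ref{C2}, and Lemma \ref{lemma:coisoC}), phrasing the conclusion as $\ker(\iota_\C^*\omega^\flat)=T\C\cap T\C^\omega=T\C^\omega=V_\circ^{\omega\omega}$ and then summarizing in the proposition---your quotient formulation $\ker(\uomegao^\flat)=V_\circ^{\omega\omega}/V_\circ$ is simply the same computation pushed down to $\uCo$.
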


Whether $V_\circ$ is actually symplectically closed depends on the specific functional analytic properties of $(\X,\omega,\G, \rho)$. For more detail on such conditions over Fr\'echet spaces, see \cite[Section  4]{DiezPhD} (as well the end of Section \ref{sec:runex-firststage}). However, it is important to emphasise that the \emph{maximality} requirement defining $\fGo$ is in fact a \emph{necessary} (albeit not sufficient) condition for $V_\circ$ to be symplectically closed and therefore for the constraint-reduced space $\uCo$ to be symplectic. More precisely:

\begin{proposition}[Symplectic closure and the maximality of $\fGo$]\label{prop:symplecticclosurenogo} 
Assume that the on-shell isotropy bundle is trivial $\mathsf{I}_\rho = \C \times \mathfrak{I}$ (Remark \ref{rmk:isotropybundle2}). Let $\mathfrak{K}$ be a just constraining ideal (i.e.\ $\C=J_{\mathfrak{K}}^{-1}(0)$) such that $\mathfrak{I}\subset \mathfrak{K}$, and let $V_\mathfrak{K}\doteq \rho(\C\times\mathfrak{K})$ be the associated vertical distribution over $\C$. Then, $V_{\mathfrak{K}}\doteq \rho(\C\times \mathfrak{K})$ can be symplectically closed only if $\mathfrak{K}$ is equal to $\fGo$, i.e.
\[
\mathfrak{K} \subsetneq\fGo \implies V_\mathfrak{K} \subsetneq V_\mathfrak{K}^{\omega\omega}.
\]
\end{proposition}
\begin{proof}
Using Proposition \ref{prop:Nsubalgebraprop}, the proof of Lemma \ref{lemma:coisoC} can be repurposed to show that since $\mathfrak{K}$ is a \emph{just} constraining ideal it must satisfy
\[
V_\mathfrak{K}^\omega = T\C.
\]
As already observed, in infinite dimensions this equation generally only implies $V_\mathfrak{K} \subset T\C^\omega$, but not their equality. However, from the maximality condition one has that for all $\mathfrak{K}\neq\fGo$, $\mathfrak{K} \subsetneq \fGo$, and hence---thanks to the condition on the isotropy, $\ker(\rho_\phi)=\mathfrak{I}\subset\mathfrak{K}\subset\fGo$---also that $V_\mathfrak{K}\subsetneq V_\circ$.
Summarising:
\[
V_\mathfrak{K}\subsetneq V_\circ \subset T\C^\omega  = V_\circ^{\omega\omega} = V_\mathfrak{K}^{\omega\omega}.
\]
Comparison between the first and the last term concludes the proof.
\end{proof}

\begin{remark}\label{rmk:uniquejustfd2}
Note that it is still possible to have a just constraining ideal $\mathfrak{K} \subsetneq \fGo$ such that $V_\mathfrak{K} = V_\circ$,  
e.g. if $\fGo = \mathfrak{K} + \mathfrak{I}$ with $\mathfrak{K}\simeq \fGo/\mathfrak{I}$.
For an explicit example consider $\mathfrak{K} = \{\xi\in C^\infty(\Sigma,\fg)\ |\ \xi\vert_{\pp\Sigma}=0\}$ in Abelian Yang--Mills theory with $\fg=\mathbb{R}$, where the (configuration-independent) isotropy is given by constant gauge transformations $\mathfrak{I} = \fg\hookrightarrow\fG$, and the constraint gauge algebra by $\fGo =  \mathfrak{K} + \mathfrak{I}$.
Furthermore, we note that from Proposition \ref{prop:symplecticclosurenogo} we can conclude that if $\X$ is finite dimensional then $\mathrm{dim}(V_\mathfrak{K}^{\omega\omega}) = \mathrm{dim}(V_\mathfrak{K})$ and therefore $V_{\mathfrak{K}} \simeq V_\circ$ for every just constraining ideal $\mathfrak{K}$, i.e.\ that just constraining ideals are unique up to elements in the kernel of the action $\rho$ (cf.\ Remark \ref{rmk:uniquejustfd1}).
\end{remark}

\begin{assumption}\label{ass:symplecticclosure}
The vector space $V_\circ = V_\circ^{\omega\omega}$ is symplectically closed.
\end{assumption}

\begin{remark}[Symplectic closure and $\C^\omega$]\label{rmk:symplclosure-Comega}
This section proves Theorem \ref{mainthm:redsummary}, item \ref{thmitem:main-constraint red}. Observe that the symplectic closure of $V_\circ$ (Assumption \ref{ass:symplecticclosure} allows us to say that the characteristic distribution $\C^\omega$ is given by the the image of $\fGo$ through $\rho$, regardless from the smoothness of the quotient (Assumption \ref{ass:smoothuCo}). When, additionally, $\uCo$ is smooth, symplectic closure ensures that $\uomegao^\flat$ is injective and $(\uCo,\uomegao)$ is \emph{symplectic}.
\end{remark}

\subsection{Yang--Mills theory: constraint-reduced phase space}\label{sec:runex-firststage}

In this part of the running example we show how to use a principal connection on the field space bundle $\Go\hookrightarrow \C\to \uCo$ to provide a parametrization of the constraint-reduced phase space $\uCo$ of YM theory. This parametrization can furthermore be made explicit by introducing a particular choice of connection. 

In the semisimple case, recall from Remark \ref{rmk:smooth-C-revisited} that the action of $\Go$ on $\Acal$ is free. Setting aside for a moment questions of smoothness to be addressed later, we can therefore consider $\Acal\to\Acal/\Go$ to be a principal $\Go$-bundle. 

In the Abelian case, $\Go$ does not act freely, because constant gauge transformations are in the kernel of $\rho$. However, gauge transformations that trivialize at $\pp\Sigma$ act freely, and the quotient of $\X$ (resp.\ $\Acal$) with respect to them coincides with $\X/\Go$ (resp.\ $\Acal/\Go$). Therefore, if one replaces $\Go$ by the set of boundary-trivial gauge transformations, the following discussion holds for both the semisimple and Abelian cases (note, however, that $\fGred$ in the Abelian case must still be taken to be $C^\infty(\pp\Sigma,\fg)/\fg$).

In the absence of boundaries, the bundle\footnote{In the cited literature $\pp\Sigma=\emptyset$ so that $\Go = \G$ and instead of excluding reducible configurations focus is placed on the group of pointed gauge transformations, i.e. gauge transformations that are trivial at a marked point of $\Sigma$. Note that pointed gauge transformations on a $n$-sphere are homotopic to gauge transformations on the $n$-disk that vanish at the boundary. Cf. Remark \ref{rmk:smooth-C-revisited}.} $\G\hookrightarrow \Acal\to \Acal/\G$ was first studied by means of a connection form by \cite{Singer1978,NarasimhanRamadas79} in the Banach setting (see also \cite{ArmsMarsdenMoncrief1981}); their work was later extended to the $C^\infty$ setting by \cite{MitterViallet1981}. The tame-Fr\'echet properties of this fibration are discussed in \cite{AbbatiCirelliMania,AbbatiCirelliManiaMichor}. The interplay between with boundaries was first addressed in \cite{GomesRiello-ghost, RielloGomesHopfmuller, RielloGomes, RielloSciPost}

\begin{definition}\label{def:Upsilon}
Let $\mathcal{H}$ be either $\Go$ or $\G$, and $\mathfrak{H}=\mathrm{Lie}(\mathcal{H})$.
A $\mathfrak{H}$-valued 1-form $\Upsilon_{\mathfrak{H}} \in \Omega^1(\Acal,\mathfrak{H})$ is said to be a \emph{principal (Ehresmann) connection form} on $\Acal\to\Acal/\mathcal{H}$ if 
\begin{equation}
    \label{eq:varpi}
\begin{cases}
\bi_{\rho(\xi)} \Upsilon_{\mathfrak{H}} = \xi\\
\L_{\rho(\xi)}\Upsilon_{\mathfrak{H}} = [\Upsilon_{\mathfrak{H}},\xi]
\end{cases}
\qquad\forall\xi\in\mathfrak{H}. 
\end{equation}
We then denote by $\mathbb{F}_{\Upsilon_\mathfrak{H}} =\bd \Upsilon_\mathfrak{H} + \Upsilon_\mathfrak{H} \wedge \Upsilon_\mathfrak{H}\in \Omega^2(\Acal,\mathfrak{H})$ its curvature 2-form.

The distribution $V_\mathfrak{H} \doteq \rho(\Acal\times\mathfrak{H})\subset T\Acal$ generated by the fundamental vector fields is said the \emph{vertical distribution}\footnote{The vertical distribution corresponds to the ``pure gauge'' directions in $\Acal$.} in $T\Acal$ associated to $\mathfrak{H}$.

If $\mathfrak{H}=\fGo$ (resp. $\fG$), we simply write $\Upsilon_{\mathfrak{H}} = \Upsilono$ (resp $\Upsilon$), $V_\circ$ (resp. $V$).
\end{definition}

\begin{remark}[Connection forms]
The kernel of a (principal) Ehresmann connection, $\ker(\Upsilon_\mathfrak{H}) \subset T\Acal$ is called the $\Upsilon_\mathfrak{H}$-\emph{horizontal distribution} of $T\Acal$. Denote $\underline\Upsilon_{\mathfrak{H}}$ the corresponding projector:
\[
\underline\Upsilon_{\mathfrak{H}} : T\Acal \to \ker(\Upsilon_\mathfrak{H}).
\]
Then one has the splitting:
\[
T_{A}\Acal = \rho(\mathfrak{H})\vert_{A} \oplus \ker(\Upsilon_\mathfrak{H}\vert_{A}).
\]
A (principal) Ehresmann connection defines an equivariant horizontal lift of vectors from the base to the bundle:
\[
\check{\Upsilon}_\mathfrak{H}:\Acal\times_{\Acal/\Go} T(\Acal/\Go) \to T\Acal, \quad \Im(\check{\Upsilon}_\mathfrak{H})=\ker(\Upsilon_\mathfrak{H})=\Im(\underline\Upsilon_{\mathfrak{H}}).
\qedhere
\]

Finally, we introduce the following $\Upsilon_\mathfrak{H}$-horizontal differentials (with respect to the action $\mathfrak{H}\circlearrowright\X$) in $\Omega^1(\X)$ (generalizations to higher forms are straightforward):
\[
\bd_{\Upsilon_{\mathfrak{H}}} A \doteq \bd A + d_A \Upsilon_{\mathfrak{H}}, 
\qquad
\bd_{\Upsilon_{\mathfrak{H}}} E \doteq \bd E - \ad^*(\Upsilon_{\mathfrak{H}})\cdot E,
\]
which owe their name to the identity $\bi_\mathbb{X} \bd_{\Upsilon_{\mathfrak{H}}} A  \equiv \bi_{\underline \Upsilon_{\mathfrak{H}}(\mathbb{X})}\bd A$, and similarly for $A\leadsto E$.\footnote{\label{fnt:trivialization}Recall that $\bi_\bullet$ and $d$ anticommute: $\bi_{\rho(\xi)}\bd_{\Upsilon_{\mathfrak{H}}} A = \L_{\rho(\xi)}A - d_A \bi_{\rho(\xi)} \Upsilon_{\mathfrak{H}} = 0$. We write ${\Upsilon_{\mathfrak{H}}}$ to the left of $E$ to ensure that $\bi_{\rho(\xi)}\ad^*({\Upsilon_{\mathfrak{H}}})\cdot E = \ad^*(\xi)\cdot E$, even though $\Upsilon_{\mathfrak{H}}$ and $\bi_\bullet$ are odd-degree objects and $\mathrm{deg}(E) = \mathrm{dim}(\Sigma)-1$. Our sign-conventions differ from \cite{RielloGomes, RielloSciPost} cited below.} The horizontal differentials are  equivariant, i.e.\ $\L_{\rho(\xi)} \bd_{\Upsilon_{\mathfrak{H}}} A = [\bd_{\Upsilon_{\mathfrak{H}}} A,\xi]$ and similarly for $A\leadsto E$. 
\end{remark}

\begin{remark}[Sections and flat connections]\label{rmk:sectionflatconn}
To each section $\sigma_\mathfrak{H}: \Acal/\mathcal{H} \hookrightarrow \Acal$---i.e.\ to each ``gauge fixing'' of the action of $\mathcal{H}$ in $\Acal$---there corresponds a trivialization of $\Acal \stackrel{\sigma_\mathfrak{H}}{\simeq}  (\Acal/\mathcal{H}) \times \mathcal{H} \ni ( [A], k)$ and thus a flat connection $\Upsilon_\mathfrak{H} = k^{-1}\bd k$ such that $\ker(\Upsilon_\mathfrak{H} \vert_{\mathrm{Im}(\sigma_\mathfrak{H})})=T\mathrm{Im}\sigma_\mathfrak{H}$. 
The nonexistence of global sections is often referred to as the ``Gribov problem'' \cite{Gribov, Singer1978, Singer1981, NarasimhanRamadas79}.
For this reason, one can think of $\Upsilon_\mathfrak{H}$ as a generalization of the notion of gauge fixing, which does away with the flatness condition and can be applied even when global sections of $\mathcal{H} \hookrightarrow \Acal \to \Acal/\mathcal{H}$ do not exist. 
\end{remark}

\medskip

On the configuration space of YM theory one has the DeWitt metric (cf.\ \cite[Equation 6.3]{DeWitt1967QG1} and the references in footnote \ref{fnt:Coulconnection}):

\begin{definition}[DeWitt metric]\label{def:deWitt}
Let $\mathbb{X} \in T_A\Acal$ be given by $\mathbb{X}(A_i(x)) = X_i(x)$, and similarly for $\mathbb{Y}$.
The \emph{DeWitt metric} over $\Acal$ is defined by 
\[
\langle\langle \mathbb{X}, \mathbb{Y} \rangle\rangle \doteq \int_\Sigma \gamma^{ij}\tr( X_i Y_j)vol_\Sigma.
\] 
The DeWitt diffeomorphism is the associated isomorphism 
\[
\varphi_{\text{dW}} : T\Acal \to T^\vee \Acal  = \X \simeq \Acal \times \mathcal{E}, \quad (A, \mathbb{X}) \mapsto (A,E) = (A, \star X).
\qedhere
\]
\end{definition}

Since the DeWitt metric is gauge invariant, i.e. $\L_{\rho(\xi)}\langle\langle \mathbb{X},\mathbb{Y}\rangle\rangle = \langle\langle [\rho(\xi),\mathbb{X}],\mathbb{Y}\rangle\rangle + \langle\langle\mathbb{X},[\rho(\xi),\mathbb{Y}]\rangle\rangle$, it can be used to define a principal (Ehresmann) connection by means of an orthogonal splitting in $T\Acal$, cf.\ Definition \ref{def:Upsilon}.  We call the corresponding connection the Coulomb connection.\footnote{\label{fnt:Coulconnection}To the best of our knowledge this connection was first introduced in \cite{Singer1978} and, independently, by \cite{NarasimhanRamadas79}, in the absence of boundaries. See also \cite[Chapter 24]{DeWittBook} or \cite{DeWitt2005FiftyYearsYM}. In the presence of boundaries, the first treatment for $\mathcal{H}=\G$ was given in \cite{RielloGomesHopfmuller} (see also \cite{RielloGomes}). However, closely related gauge conditions had appeared in \cite{Marini92,SniatyckiSchwarz94,SniatyckiSchwarzBates1996}. The connection provided here is adapted to the action of $\Go$.} It is easy to check that for $\mathcal{H}=\Go$ or $\G$ this definition is equivalent to the following:

\begin{definition}[Coulomb connections on $\Acal$]\label{def:Coulombconnection}
The \emph{Coulomb connections} $\UpsilonoCou$ and $\UpsilonCou$ associated to $\Go$ and $\G$ respectively, are given by the unique\footnote{In the Neumann case, this requires $A$ to be irreducible, see Appendix \ref{app:Hodge}, where details on the notation can be found the notation.} solutions to the following elliptic Dirichlet, Neumann resp., boundary value problems:
\[
\begin{cases}
\Delta_A \UpsilonoCou = d_A^\star \bd A & \text{in }\Sigma \\
\UpsilonoCou = 0 & \text{at }\pp\Sigma
\end{cases}
\quad \text{and, resp.,} \quad
\begin{cases}
\Delta_A \UpsilonCou = d_A^\star \bd A & \text{in }\Sigma \\
\mathsf{n}d_A \UpsilonCou = \mathsf{n}\bd A & \text{at }\pp\Sigma
\end{cases}
\]
where $\Delta_A = d_A^\star d_A$ is the gauge-equivariant Laplace-Beltrami operator at $A\in\Acal$ over $(\Sigma,\gamma)$. Denote the associated horizontal derivatives by $\bd_{\circ,\mathrm{Cou}}$ (resp. $\bd_{\mathrm{Cou}}$).
\end{definition}

The Coulomb connections are flat iff $G$ is Abelian. \AR{More specifically their curvature is given by \cite{Singer1978,NarasimhanRamadas79,RielloGomesHopfmuller}:}

\begin{lemma}\label{lemma:curv-Coul-conn}
    The curvature of the Coulomb connections $\UpsilonoCou$ and $\UpsilonCou$, i.e. $\mathbb{F}_{\circ,\mathrm{Cou}}$ and $\mathbb{F}_\mathrm{Cou}$ resp., satisfy the following elliptic boundary value problems:
    \[
    \begin{cases}
    \Delta_A \mathbb{F}_{\circ,\mathrm{Cou}} = \gamma^{ij}[\bd_{\circ,\mathrm{Cou}}A_i, \bd_{\circ,\mathrm{Cou}}A_j] & \text{in }\Sigma \\
    \mathbb{F}_{\circ,\mathrm{Cou}} = 0 & \text{at }\pp\Sigma
    \end{cases}
    \]
    and, resp.,
    \[
    \begin{cases}
    \Delta_A \mathbb{F}_{\mathrm{Cou}} = \gamma^{ij}[\bd_{\mathrm{Cou}}A_i, \bd_{\mathrm{Cou}}A_j] & \text{in }\Sigma \\
    \mathsf{n}d_A \mathbb{F}_{\mathrm{Cou}} = 0 & \text{at }\pp\Sigma
\end{cases}
    \]
\end{lemma}

\begin{lemma}\label{lem:coulombicKernel}
    Let $\mathbb{X}(A_i(x)) = X_i(x)$ be a tangent vector in $T_A\Acal$, as above. Then, $\mathbb{X}$ is Coulomb-horizontal with respect to $\Go$ iff
    \[
     \underline\UpsilonoCou(\mathbb{X}) = \mathbb{X} \iff  \UpsilonoCou(\mathbb{X}) = 0 \iff d_A^\star X = 0.
    \]
\end{lemma}
\begin{proof}
The horizontality condition holds iff $\UpsilonoCou(\mathbb{X}) = 0$. On the other hand, contracting the defining equation of $\UpsilonoCou$ with $\mathbb{X}$, we compute $\Delta_A(\UpsilonoCou(\mathbb{X})) = d_A^*X$.
Using the invertibility of $\Delta_A$ with Dirichlet boundary condition (Proposition \ref{prop:FaddeeevPopov}), we conclude.
\end{proof}

\begin{remark}[Arms--DeWitt K\"ahler structure]
The reason why the Coulomb connection and the radiative-Coulombic decomposition $\mathcal{E}_A=\mathcal{V}_A\oplus\mathcal{H}_A$ (cf.\ Equation \eqref{eq:YMcotangentsplit}) are convenient in dealing with the symplectic reduction of YM theory, is that they are both orthogonal with respect to $\langle\langle\cdot,\cdot\rangle\rangle$ (Remark \ref{rmk:HodgeInYM}), and there exists a complex structure $j:T\X\to T\X$, $j^2=-1$, that intertwines the symplectic 2-form $\omega$ with the DeWitt supermetric  $\langle\langle\cdot,\cdot\rangle\rangle$  on $\X\simeq \Acal \times \mathcal{E}$ \cite{Arms1981}. 
More precisely, if $\mathbb{X} = (\mathbb{X}^{(A)}, \mathbb{X}^{(E)})$, one sets $j\mathbb{X} \equiv -( \star \mathbb{X}^{(E)}  , (-1)^{\Sigma}\star \mathbb{X}^{(A)})$, and thus finds the K\"ahler compatibility
\[
\omega(\mathbb X_1 , \mathbb X_2) = \langle\langle\mathbb{X}_1, j \mathbb{X}_2\rangle\rangle.\qedhere
\]
\end{remark}

\begin{proposition}\label{prop:YMlocaliso}
Assume $\Acal/\Go$ and $\A/\G$ are smooth. 
    \[
    \uCo \simeq T(\Acal/\Go) \simeq_\loc T(\Acal/\G)\times T\Gred.
    \]
\end{proposition}
\begin{proof}
    Using Lemma \ref{lem:coulombicKernel} and the identification $\X = T^\vee\Acal \simeq \Acal \times \mathcal{E}$ (Section \ref{sec:runex-setup}), we have that 
    \[
    \varphi_{\text{dW}}(\mathrm{Ker}({\Upsilon}_{\circ,\mathrm{Cou}})) \simeq \{(A,E)\in\X\ |\ d_AE =0\} = \C.
    \]

    From the $\Go$-equivariance of $\UpsilonoCou$ it follows that $\ker(\UpsilonoCou)/\Go \simeq T(\Acal/\Go)$. Moreover, since the DeWitt metric is gauge-invariant, $\varphi_\mathrm{dW}$ is gauge equivariant isomorphism which descends to the quotients (by $\Go$ or $\G$). Hence, together with the previous equation, we deduce
    \[
    T(\Acal/\Go) \simeq \C/\Go \doteq \uCo.
    \]
    
    Finally, note that since $\Gred = \G/\Go$ and since $\Acal$ is a principal $\G$-bundle (in view of our restriction to irreducible configurations), we have that locally $\Acal/\Go \simeq_\loc \Acal/\G \times \Gred$. Passing to the tangent spaces, we conclude.
\end{proof}

\begin{remark}
    An intrinsic description of the quotient $\Acal/\Go$ is hard to achieve, and so is the exhibition of a viable (local) section $\sigma_\circ:\Acal/\Go\to\Acal$. However, a good model for $\sigma_\circ$ is a slice through a given (irreducible) configuration $A\in\Acal$ \cite{palais1957, palais1961, palaisterng1988critical, ArmsMarsdenMoncrief1981}. A viable, and well-known, choice of slice is described in Proposition \ref{prop:FaddeeevPopov}, which uses a generalized Coulomb gauge fixing and the local invertibility of generalized Faddeev--Popov operators.
\end{remark}

We now show that the DeWitt metric allows us to define a natural symplectic structure on $T(\Acal,\Go)$.

\begin{definition}[Reduced DeWitt metric]
    The \emph{$\Go$-reduced DeWitt metric} is the bilinear form on $T(\Acal/\Go)$ defined by the identity\footnote{Recall, the Coulomb horizontal lift $\check{\Upsilon}_{\circ,\mathrm{Cou}}$ is defined by the prescription that the lifted vector be orthogonal to the fibres of $\Acal\to\Acal/\Go$ with respect to the DeWitt metric.}
    \begin{equation}\label{eq:reduceddewitt}
    \langle\langle\mathbb{X},\mathbb{Y}\rangle\rangle_\circ\vert_{[A]_\circ} \doteq \langle\langle\check{\Upsilon}_{\circ,\mathrm{Cou}}(A,\mathbb{X}) ,\check{\Upsilon}_{\circ,\mathrm{Cou}}(A,\mathbb{Y})\rangle\rangle
    \quad
    \forall \mathbb{X},\mathbb{Y}\in T_{[A]_\circ}(\Acal/\Go),\ A\in [A]_\circ.\qedhere
    \end{equation}
\end{definition}

The above definition is well-posed: simply note  that the gauge invariance of the DeWitt metric implies that Definition \eqref{eq:reduceddewitt} is independent of the point $A$ in the fibre of $[A]_\circ$ at which the lift is performed. The smoothness of the reduced metric is implied by those of the DeWitt metric and of the Coulomb connection.
It is also straightforward to see that the reduced DeWitt metric is (weakly) nondegenerate.

Denote $([A]_\circ, \Pi_\circ)\in T(\Acal/\Go)$. 
Using the linearity of $T_{[A]_\circ}(\Acal/\Go)$, identify $T_{([A]_\circ,\Pi_\circ)}(T(\Acal/\Go)) \simeq T_{[A]_\circ}(\Acal/\Go)$.
Thus a vector $\mathbb{X} \in T_{[A_\circ,\Pi_\circ]}(T(\Acal/\Go))$ can be written as a pair of vectors $(\mathbb{X}^{(A)},\mathbb{X}^{(\Pi)})$ with $\mathbb{X}^{(\bullet)} \in T_{[A]_\circ}(\Acal/\Go)$.

\begin{definition}
    Define $\vartheta \in \Omega^1(\calPhi)$ and $\varpi\in\Omega^2(T(\Acal/\Go))$ by 
    \[
    \bi_{\mathbb{X}}\vartheta|_{([A]_\circ, \Pi_\circ)} \doteq \langle\langle\ \Pi_\circ, \mathbb{X}^{(A)}\ \rangle\rangle_\circ 
    \qquad \forall \mathbb{X} \in TT(\Acal/\Go),
    \]
    and the manifestly nondegenerate form
    \[
    \varpi \doteq \bd \vartheta \in \Omega^2(T(\Acal/\Go)).\qedhere
    \]
\end{definition}

\begin{proposition}[Local symplectic model for $\uCo$]\label{prop:YM-uCo}
Assume $\Go\circlearrowright \C$ is proper. Then $\uCo$ is smooth and $\iota_\C^*\omega$ defines a unique (weakly) symplectic 2-form $\uomegao\in\Omega^2(\uCo)$. Moreover, we have the local symplectomorphism
    \[
    (\uCo,\uomegao) \simeq_\loc (T(\Acal/\Go), \varpi).
    \]
\end{proposition}

\begin{remark}[Proper actions]
    The assumption of properness in Proposition \ref{prop:YM-uCo} can be relaxed when $G$ is compact, as proven in the absence of corners in \cite{NarasimhanRamadas79, Rudolph_2002} (with $\Sigma$ compact). In the presence of corners the properness of the action of $\G$ on $\C$ when $G$ is compact  can be proved in the same way, whereas the action of the normal subgroup $\Go \subset \G$ is also proper since $\Go$ is closed.\footnote{In \emph{finite} dimensions, normal subgroups of connected semisimple groups are closed \cite{Ragozin}.} (Inclusions of closed submanifolds are proper maps, and $\Go=\pi_{\pp}^{-1}(\mathrm{id})$ where the restriction to the boundary $\pi_\pp$ is a morphism of groups). 
\end{remark}

\begin{proof}
Our proof of the first part of the proposition mostly follows the arguments of \cite{DiezHuebschmann-YMred, DiezPhD}, adapting them to the case with boundary.
To show the smoothness of $\uCo$ we argue that (\textit{i}) the action of $\fGo$ over $\C$---which is proper by assumption---is free, and (\textit{ii}) the action of $\fGo$ on $\C$ admits slices; whereas to show that $\uomegao$ is (weakly) symplectic we use the previous lemma and argue that (\textit{iii}) the on-shell vertical distribution associated to $\fGo$, $V_\circ \doteq \rho(\C\times \fGo)\subset T_\C\X$, is symplectically closed (cf.\ Proposition \ref{prop:symplclosVo}).

Regarding (\emph{ii}), let us first recall that a slice through $A_0\in \Acal$ for a free action by $\Go$ is the image of a local section of $\Acal\to \Acal/\G$ through $A_0$ \cite{palais1957, palais1961, palaisterng1988critical, ArmsMarsdenMoncrief1981}.
In the absence of corners, $\fG=\fGo$ and a (candidate) slice is introduced constructively by showing that the orthogonal decomposition $T_{A_0}\Acal = \rho_{A_0}(\fGo) \oplus \ker(\UpsilonoCou)$ can be affinely extended to open neighborhoods of any $A_0\in\Acal$. One then goes on to show that the candidate slice is well-defined by means of the Nash--Moser theorem \cite{AbbatiCirelliMania,DiezMaster} which holds on tame Fr\'echet spaces (i.e.\ for $\Sigma$ compact).
This is related to the invertibility of the Faadeev--Popov operator  \cite{Gribov,NarasimhanRamadas79,BabelonViallet}. The construction of the candidate slice adapted to the presence of corners is described in Proposition \ref{prop:FaddeeevPopov} and Remark \ref{rmk:HodgeInYM}. 

To prove (\emph{iii}) one first observes that since $\omega(\cdot,\cdot) = \langle\langle\cdot,j\cdot\rangle\rangle$ and $\langle\langle\cdot,\cdot\rangle\rangle$ is an $L^2$ metric, symplectic closure of any $W\subset T\X$ is equivalent to its $L^2$ closure. 
One then observes that $\fGo$ is itself closed, and that so are both $\rho_A(\fGo)$ and its $L^2$ orthogonal complement in $T_A\Acal$.
This follows from the fact that $\rho\vert_A = d_A$ is weakly open,\footnote{We thank T.\ Diez for clarifying this point to us.} and Proposition \ref{prop:HodgeEqui} on the equivariant Hodge--de Rham decomposition of $\Omega^1(\Sigma, \fg)$ (see also Remarks \ref{rmk:openmaps} and \ref{rmk:HodgeInYM}).
With this at hand, one can step-by-step adapt to the case with corners the proof of the $L^2$ closure of $V_\circ$ provided in \cite[Lemma 5.6.9]{DiezPhD} for the corner-less case.

Finally, turning to the second part of the proposition, we now show that the local diffeomorphism we constructed in Proposition \ref{prop:YMlocaliso} is a map of symplectic manifolds.
We introduce the following notation: the projection 
\[
\wt\pi_\circ : \ker(\UpsilonoCou) \to T(\Acal/\Go)
\]
and the 1-form $\theta \in\Omega^1(\X)$,
\[
\theta= \int_\Sigma \tr(E \bd A), \qquad \omega = \bd\theta.
\]
We also recall that in light of Proposition \ref{prop:YMlocaliso}, the DeWitt diffeomorphism $\varphi_\mathrm{dW} : T\Acal\to T^\vee\Acal$ restricts to:
\[
\varphi_\mathrm{dW}\vert_{\ker(\UpsilonoCou)} :
\ker(\UpsilonoCou) \to \C.
\]

In the following we will leave the following diffeomorphisms implicit
\[
T\Acal \stackrel{\varphi_\mathrm{dW}}{\simeq} T^\vee\Acal = \X \simeq \Acal \times \mathcal{E}
\]
which restrict to 
\[
\ker(\UpsilonoCou)\stackrel{\varphi_\mathrm{dW}}{\simeq} \C \simeq \{ (A,E) \ | \ d_A E = 0\}, 
\quad(A,\wt{\Pi}_\circ) \mapsto (A,E).
\]

With this notation, and denoting $([A]_\circ,\Pi_\circ)=\wt{\pi}_\circ(A,\wt{\Pi}_\circ)$, we compute for an arbitrary $\mathbb{X}\in T\ker(\UpsilonoCou)\simeq T\C$:
\begin{align*}
\bi_{\mathbb{X}}(\wt\pi_\circ^*\vartheta)\vert_{(A,\wt{\Pi}_\circ)}
&= \langle\langle \Pi_\circ, \bd\wt{\pi}_\circ(\mathbb{X}^{(A)})\rangle\rangle_\circ
= \langle\langle \wt{\Pi}_\circ, \underline{\UpsilonoCou}(\mathbb{X}^{(A)}
) \rangle\rangle \\
&= \iota^*_\C\int_\Sigma \tr( E \ \underline{\UpsilonoCou}(\mathbb{X}^{(A)}
)  )
= \iota^*_\C\bi_{\mathbb{X}} \int_\Sigma \tr( E \ \bd_{\circ,\mathrm{Cou}} A)
\\
&= \bi_{\mathbb{X}} \iota^*_\C\int_\Sigma \tr( E \ \bd A) = \bi_{\mathbb{X}} \iota^*_\C\theta\vert_{(A,E)}
&\end{align*}
where in the second step we used that $\check{\Upsilon}_{\circ,\mathrm{Cou}}(\Pi_\circ) = \wt\Pi_\circ$ and $\check{\Upsilon}_{\circ,\mathrm{Cou}}(\bd \wt{\pi}_\circ(\mathbb{X}^{(A)})) = \underline{\UpsilonoCou}(\mathbb{X}^{(A)})$, and in the last step we used that (\textit{i}) $(A,E)$ satisfy the Gauss constraint $\mathsf{G} = d_AE = 0$, and (\textit{ii}) $\UpsilonoCou$ satisfies the boundary condition $\UpsilonoCou|_{\pp\Sigma}=0$, in order to simplify the following general identity (signs depend on the dimensionality of $\Sigma$):
\[
\int_\Sigma \tr( E \bd_\Upsilon A) 
= \int_\Sigma \tr( E (\bd A + d_A \Upsilon)) 
= \int_\Sigma \tr(E \bd A) \pm \int_\Sigma \tr( \mathsf{G} \Upsilon ) \pm \int_{\pp\Sigma}\tr(E \Upsilon).
\]
Summarizing the computation above, we have found that: 
\[
\wt\pi_\circ^*\vartheta = \iota_\C^*\theta.
\]
And thus, differntiating and using the definition of $\uomegao$, we obtain:
\[
\wt\pi_\circ^*\varpi = \iota_\C^*\omega = \pi_\circ^*\uomegao.
\]
One concludes using the surjectivity of the maps $\pi_\circ$ and $\wt\pi_\circ$, as well as the diffeomorphic nature of their images, $T(\Acal/\Go) \simeq \uCo$.
\end{proof}

\begin{remark}
    This proves the local models for $\uCo$ claimed in Theorem \ref{mainthm:YMon}.
\end{remark}

We conclude by stating a result that uses the Ehresmann connections introduced here, but will be useful only later on. Note that $\omega^H_{\mathrm{Cou}}$ in Equation \eqref{eq:omegaHSdW} can be thought of as another expression for $\uomegao$ on $\uCo$ and, equivalently, $\varpi$ on $T(\Acal/\Go)$.

\begin{proposition}\label{prop:omegadecomp}
Given any principal Ehresmann connection $\Upsilon_\mathfrak{H}\in\Omega^1(\X,\mathfrak{H})$, the on-shell pre-symplectic structure $\iota_\C^*\omega$ can be written as the sum of two $\bd$-exact contributions:\footnote{The last term on the first line cancels the first term onthe second line. These terms are introduced to make both forms $\bd$-exact.}
\begin{equation}
    \label{eq:omegaHpp}
    \iota_\C^*\omega = \omega^H + \omega^\pp, 
    \quad
        \begin{dcases}
        \omega^H \doteq \iota_\C^* \int_\Sigma \tr(\bd_{\Upsilon_\mathfrak{H}} E\ \bd_{\Upsilon_\mathfrak{H}} A) - \iota_\C^*\int_{\pp\Sigma} \tr(E\, \mathbb{F}_{\Upsilon_\mathfrak{H}})\\
        \omega^\pp\doteq \iota_\C^*\int_{\pp\Sigma} \tr\left(E\, \mathbb{F}_{\Upsilon_\mathfrak{H}} + \tfrac12E\, [\Upsilon_\mathfrak{H},\Upsilon_\mathfrak{H}] - (-1)^{\dim\Sigma} \bd_{\Upsilon_\mathfrak{H}} E \wedge\Upsilon_\mathfrak{H}\right)
        \end{dcases}
    \end{equation}
    Both contributions are invariant with respect to the action of $\mathcal{H}$, and $\omega^H$ is basic with respect to the action of $\mathcal{H}$. If $\Go \subset \mathcal{H}$, then both contributions are basic with respect to the action of $\Go$. In the case of $\mathfrak{H}=\fGo$ and $\UpsilonoCou$ the expression simplifies ($\omega^\pp|_{\Upsilono=\UpsilonoCou}=0$) to
    \begin{equation}
    \label{eq:omegaHSdW}
    \omega^H_\mathrm{Cou} \doteq 
    \omega^H\vert_{\Upsilono=\UpsilonoCou} = \int_\Sigma \tr(\bd_{\circ,\mathrm{Cou}} E_{\text{rad},A}\wedge \bd_{\circ,\mathrm{Cou}} A).
    \end{equation}
    where $E_{\rad,A}$ is defined in Equation \eqref{eq:YMcotangentsplit}. This expression is manifestly basic with respect to $\Go$.
\end{proposition}

\begin{proof}
The formulas in this proposition generalize those in \cite{RielloGomes, RielloSciPost} to an arbitrary subgroup $\mathcal{H}\subset \G$.
Note that $\omega = \bd \theta$ and $\theta = \theta^H + \bd \theta^V$, where $\theta^H \doteq \int\tr(E \bd_{\UpsilonH}A)$ and $\theta^V \doteq  - \int_\Sigma \tr(E d_A\UpsilonH)$. 
We then define, and through a lengthy but straightforward computation we obtain:
\[
\bd \theta^H 
= \int_\sigma \tr(\bd_{\UpsilonH}E \ \bd_{\UpsilonH} A ) +  \tr(\mathsf{G}\ \mathbb{F}_{\UpsilonH}) - d\tr(E \ \mathbb{F}_{\UpsilonH}),
\]
and, since $\theta^V = (-1)^{\dim \Sigma}\int_\Sigma \tr(\mathsf{G} \ \UpsilonH) - d\tr(E \ \UpsilonH)$:
\[
\bd \theta^V = (-1)^{\dim \Sigma}\int_\Sigma \tr( (\bd \mathsf{G}) \wedge \UpsilonH) +d \tr((\bd E) \ \UpsilonH + E \ \bd \UpsilonH).
\]
Rearranging terms and pulling back to the constraint surface we find the sought result with
\[
\omega^H = \iota_\C^*\bd\theta^H \qquad \text{and}\qquad \omega^\pp = \iota_\C^*\bd \theta^V.
\]
Note that $\omega^{H}$ is manifestly basic with respect to $\C\to\C/\mathcal{H}$. It is also manifest that $\omega^\pp$ is $\mathcal{H}$-invariant, i.e. $\L_{\rho(\xi)} \omega^\pp = 0$ forall $\xi\in\mathfrak{H}$. To show that, for $\Go\subset \mathfrak{H}$, $\omega^\pp$ is $\fGo$-horizontal and thus $\Go$-basic, we use the parametrization $\C \simeq \X_\rad \times \mathcal{E}_\pp \ni (A,E_\rad, E_\pp)$ of Equation \eqref{eq:CtoP-YM} to compute
\[
\bi_{\rho(\xio)} \omega^\pp = \bi_{\rho(\xio)} \iota_\C^*\omega = \bd \int_{\pp\Sigma} \tr(E_\pp \xio) = 0
\]
which vanishes since $\xio\in\fGo = \{ \xi \in\fG \, | \, \xi\vert_{\pp\Sigma}=0\}$. For the same reason, the expression \eqref{eq:omegaHpp} simplifies to \eqref{eq:omegaHSdW}.
\end{proof}

\subsection{Second stage symplectic reduction: flux superselections}\label{sec:cornerred}

In this section we finally discuss the structure of the space of constrained configurations modulo \emph{all} gauge transformation, i.e.\ the structure of:
\begin{definition}[Fully-reduced phase space]\label{def:uX}
The \emph{fully-reduced phase space} is 
\[
\uuC \doteq \C/\G.\qedhere
\] 
\end{definition}

In Section \ref{sec:bulkreduction} we have analyzed and performed the symplectic reduction of $(\X,\omega)$ with respect to the action of the constraint gauge (sub)group $\Go\subset\G$, the \emph{maximal} normal subgroup for which $\C$ is the vanishing locus of the associated momentum map. 
This produced the symplectic constraint-reduced phase space $(\uCo,\uomegao)$. Now, the constraint-reduced phase space $\uCo$ carries the residual action of the group of flux gauge transformations  $\Gred = \G/\Go$. 

From a general standpoint, $\uuC$ cannot be expected to be symplectic, but rather a \emph{collection} of symplectic spaces---we call them superselections sectors. In regular enough cases, these are organized as the symplectic leaves of a (partial) Poisson manifold (Definition \ref{def:partialpoisson}).

The goal of this section is to  construct $\uuC$ in terms of the symplectic reduction of the constraint-reduced phase space $(\uCo, \uomegao)$ by residual flux gauge group $\Gred$. This is the second-stage in the reduction of $(\X,\omega)$ by $\G=\G/\Go$. 

The first step consists in showing that  $\Gred $ has a \emph{Hamiltonian} action on $(\uCo,\uomegao)$ generated by a momentum map $\uh:\uCo\to\fGred^*$ readily derived from $\iota_\C^*\h_\smbullet$, a fact directly related to the identity $\fGo = \ker(\iota_\C^*\h^\text{co}_\smbullet)$ (Theorems \ref{thm:frakN} and \ref{thm:fGo}). 

It is important to notice that---in contrast to the constraint reduction which was performed at the \emph{zero}-level set of $\bHo$---there isn't a priori any one preferred value of the flux $f\in\F \simeq \Im(\uh)$ at which to perform the corner reduction.  As a result, we must conclude that the reduction of $\uCo$ by $\Gred$ provides a symplectic space for \emph{each} value of the flux $f\in\F$. According to the MWM symplectic reduction, one can construct such symplectic spaces as
\[
\uuS_f \doteq \uS_f / \Gred_f ,\qquad \uS_f \doteq \uh^{-1}(f) \subset \uCo
\]
where $\Gred_f$ is the stabiliser of $f\in \Im(\uh)$ with respect to the appropriate action, dictated by the equivariance properties of $\h_\smbullet$. This reduction is sometimes called ``point reduction'' with reference to the fact that a specific point in $\Im(\uh)$ is singled out \cite{RatiuOrtega03}. 

However, it is more convenient to adopt a manifestly equivariant construction and thus consider the symplectomorphic spaces $\uuS_{[f]} \simeq \uuS_f$ defined in terms of the coadjoint, or affine, orbit $\mathcal{O}_f$ of $f\in \Im(\uh)$ (Definition \ref{def:affinecoad}), as
\[
\uuS_{[f]} \doteq \uS_{[f]} / \Gred, \qquad \uS_{[f]} \doteq \uh^{-1}(\mathcal{O}_f)\subset \uCo.
\]
The ensuing reduction procedure is sometimes called ``orbit reduction''. In view of the equivariance properties of $\uh$, the $\G$-action on $\X$ induces a foliation of $\Im(\uh)$ by the orbits of $\Gred$, and therefore the fully reduced space ends up being the disjoint union of the symplectic spaces $\uuS_{[f]}$:
\[
\uuC \simeq \bigsqcup_{\mathcal{O}_f} \uuS_{[f]}.
\]
In the rest of this section, we will make these considerations precise.

\medskip

As anticipated, we begin by studying the Hamiltonian properties of $(\uCo,\uomegao)$.

First recall that the dual $\dual{U}$ of a quotient vector space $U=V/W$ is (canonically) isomorphic to the annihilator of $W$: $\dual{U}\simeq \Ann(W)\subset \dual{V}$.

\begin{definition}\label{def:Gpdual}
The \emph{local dual of the corner algebra} $\fGred$ is 
\[
\fGred^*_\loc\doteq\Ann_{\loc}(\fGo) \subset \fG^*_\loc,
\] 
the subspace of local dual functionals $\dual{\fG}_{\loc}$ that annihilate $\fGo$ (Definition \ref{def:dual*}).
\end{definition}

\begin{proposition}[Reduced flux map]\label{prop:bulkredflux}
Let the adjusted flux map $\h_\smbullet$ be as in Definition \ref{def:adjpreflux}. Then there exists a unique $\uf :\uCo \to \fGred^*_\loc$ such that, for all $\xi\in\fG$,
\[
\pi_\circ^*\langle\uh, \underline{\xi} \rangle = \iota_{\C}^*\langle \h_\smbullet,\xi\rangle,
\]
where $\underline{\xi}=\xi +\fGo$ denotes the equivalence class of an element $\xi\in\fG$ in $\fGred=\fG/\fGo$. We call $\uh$ the \emph{reduced flux map}.
\end{proposition}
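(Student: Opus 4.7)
The plan is to establish the existence and uniqueness of $\uh$ in two steps, corresponding to the two descents encoded in Proposition~\ref{prop:bulkredflux}: first we descend the linear $\xi$-argument from $\fG$ to the quotient $\fGred = \fG/\fGo$, and then we descend the base $\phi$-argument from $\C$ to the quotient $\uCo = \C/\Go$. Uniqueness will be automatic: $\pi_\circ\colon \C \to \uCo$ is a surjective submersion (Assumption~\ref{ass:smoothuCo}) and the quotient $\fG \to \fGred$ is a surjection of vector spaces, so any descended object is fixed by its defining equation.

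\textbf{Descent in $\xi$.} I would first observe that for fixed $\phi \in \C$, the linear functional $\xi \mapsto \langle \iota_\C^*\h_\smbullet(\phi), \xi\rangle$ annihilates $\fGo$. This is immediate from Theorem~\ref{thm:fGo}, which identifies $\fGo = \fN = \ker(\iota_\C^*\h_\smbullet^{\text{co}})$ (cf.\ Theorem~\ref{thm:frakN}(i) and Remark~\ref{rmk:Nhco}). Consequently $\iota_\C^*\h_\smbullet$ factors through the quotient and takes values in the annihilator $\Ann(\fGo)$. Locality of $\h_\smbullet$ (as an integrated local map, cf.\ Remark~\ref{rmk:coflux}) implies that these values lie in the local annihilator $\Ann_\loc(\fGo) = \fGred^*_\loc$ of Definition~\ref{def:Gpdual}.

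\textbf{Descent in $\phi$.} Next I would verify that $\iota_\C^*\langle\h_\smbullet, \xi\rangle$ is $\Go$-invariant for every fixed $\xi \in \fG$. Since $\Go$ is connected (Assumption~\ref{ass:groups}), it suffices to check infinitesimal invariance along $\rho(\eta)$ for $\eta \in \fGo$. Using the infinitesimal equivariance relation of Proposition~\ref{prop:adjequi},
\[
\L_{\rho(\eta)}\h_\smbullet \;=\; \ad^*(\eta)\cdot \h_\smbullet \,+\, \k_\smbullet(\eta),
\]
and the tangency of $\rho(\eta)$ to $\C$ (Corollary~\ref{cor:tangent}), I would compute
\[
\L_{\rho(\eta)}\,\iota_\C^*\langle\h_\smbullet,\xi\rangle \;=\; \iota_\C^*\langle\h_\smbullet,[\eta,\xi]\rangle \,+\, \langle \k_\smbullet(\eta),\xi\rangle.
\]
The first term vanishes because $\fGo$ is an ideal (Theorem~\ref{thm:frakN}(iii)), so $[\eta,\xi]\in\fGo=\fN$ is annihilated by $\iota_\C^*\h_\smbullet$. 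The second term vanishes by Theorem~\ref{thm:frakN}(ii), which gives $\fN \subset \ker \k_\smbullet(\xi)$ for every $\xi\in\fG$; combining with the skew-symmetry of $\bk$ from Proposition~\ref{prop:cornercocycle}, this yields $\langle\k_\smbullet(\eta),\xi\rangle = -\langle\k_\smbullet(\xi),\eta\rangle = 0$ for $\eta\in\fGo$.

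\textbf{Assembly.} The two descents combine: for every equivalence class $\underline\xi\in\fGred$ and every equivalence class $[\phi]\in\uCo$, the value $\langle\uh([\phi]),\underline\xi\rangle \doteq \iota_\C^*\langle\h_\smbullet(\phi),\xi\rangle$ is well-defined (independent of representatives). This defines a smooth map $\uh\colon\uCo\to\fGred^*_\loc$, and the defining equation $\pi_\circ^*\langle\uh,\underline\xi\rangle = \iota_\C^*\langle\h_\smbullet,\xi\rangle$ holds by construction. Uniqueness follows from the surjectivity of $\pi_\circ$. I do not anticipate a serious obstacle: the only slightly delicate point is the cocycle-term cancellation, which rests on the interplay between skew-symmetry of $\bk$ and the annihilator property from Theorem~\ref{thm:frakN}(ii); everything else is a direct consequence of the work already done to identify $\fGo$ with $\fN$.
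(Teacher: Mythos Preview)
Your proposal is correct and follows essentially the same route as the paper: both arguments first use $\fGo=\fN=\ker(\iota_\C^*\h_\smbullet^{\text{co}})$ to descend in $\xi$, and then use the equivariance relation of Proposition~\ref{prop:adjequi} together with Corollary~\ref{cor:tangent} to show $\Go$-invariance on $\C$. The only cosmetic difference is in the final vanishing step: you kill the bracket term and the cocycle term separately (via the ideal property and Theorem~\ref{thm:frakN}(ii) plus skew-symmetry), whereas the paper combines them via the swap identity $\langle\iota_\C^*\h_\smbullet,[\eta,\xi]\rangle + k_\smbullet(\eta,\xi) = -\L_{\rho(\xi)}\langle\iota_\C^*\h_\smbullet,\eta\rangle$, which vanishes in one stroke since $\langle\iota_\C^*\h_\smbullet,\eta\rangle\equiv 0$ for $\eta\in\fGo$.
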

\begin{proof}
First, notice that by Theorem \ref{thm:fGo}, $\fGo  = \fN \doteq \ker(\iota_\C^*\h_\smbullet^\text{co})$,  one has that 
$\iota_\C^*\h_\smbullet^\text{co}(\xi) = \iota_\C^*\h_\smbullet^\text{co}(\xi + \eta_\circ)$ for all $\eta_\circ\in\fGo$.
Therefore, $\iota_\C^*\h_\smbullet$ can be understood as being valued in $\fGred^*_\loc\subset \fG^*_\loc$. 

We are now left to show that $\iota_\C^*\h_\smbullet:\C\to\fGred^*_\loc\subset \dual{\fG}_\loc$ descends unambiguously to a functional on $\uCo$, i.e.\ that it is basic w.r.t.\ to $\C \to \uCo$. In formulas, $\L_{\rho(\eta_\circ)}\iota_\C^*\h_\smbullet = 0$ for all $\eta_\circ\in\fGo$. In Proposition \ref{prop:adjequi} and Proposition \ref{prop:adjequi}, we found that for all $\eta,\xi\in\fG$, 
\[
\langle \L_{\rho(\eta)}\iota_\C^*\h_\smbullet,\xi\rangle 
=  \langle \iota_\C^*\h_\smbullet, [\eta,\xi]\rangle + \k_\smbullet(\eta,\xi) 
= - \L_{\rho(\xi)} \langle\iota_\C^*\h_\smbullet,\eta\rangle,
\]
which vanishes whenever $\eta=\eta_\circ\in\fGo\subset\fG$ thanks to Corollary \ref{cor:tangent} ($\rho(\C\times\fG) \subset T\C$) and Theorem \ref{thm:fGo} ($\fGo = \fN = \Ann \Im( \iota_\C^*\h_\smbullet)$). This concludes the proof.
\end{proof}

\begin{proposition}[Hamiltonian action of $\fGred$]\label{prop:hamactionGp}
Let the constraint-reduced phase space $(\uCo,\uomegao)$ be as in Definitions \ref{def:bulkredsp} and \ref{def:uomegao}, the flux gauge group $\Gred $ be as in Assumption \ref{ass:groups}, and the reduced flux map $\uh$ be as in Proposition \ref{prop:bulkredflux}. Then, locally Hamiltonian action of $\G$ on $(\X,\bom)$ descends to a Hamiltonian action of  $\Gred$ on $(\uCo,\uomegao)$, with the reduced flux map $\uh$ as momentum map:
\[
\bi_{\rho(\underline{\xi})}\uomegao = \bd \langle \uh,\underline{\xi}\rangle,
\]
where we denoted $\underline{\xi} = \xi +\fGo \in \fGred$.
Moreover, the equivariance of the reduced flux map under the action of $\fGred$ is controlled by the corner CE cocycle:
\[
\L_{\rho(\underline{\xi})}\uh + \ad^*(\underline{\xi})\cdot\uh = \k_\smbullet(\underline{\xi}).    
\]
Finally, if the on-shell isotropy locus is a trivial vector bundle, i.e.\ $\mathsf{I}_\rho = \C \times \mathfrak{I}$, the action $\Gred \circlearrowright \uCo$ is free.
\end{proposition}
\begin{proof}
Recall the local Hamiltonian flow Equation \eqref{eq:weakHam} and the split $\bH = \bHo + d\bh$ of Proposition \ref{prop:dualvaluedformdecomposition}: $\bi_{\rho(\xi)}\bom = \bd \langle \bHo + d\bh_\smbullet,\xi\rangle$. Thanks to Corollary \ref{cor:tangent} ($\C \triangleleft \G \subset \C$), one can pull back this equation to $\C = \bHo^{-1}(0)$. Integrating,  one obtains
\[
\bi_{\rho(\xi + \eta_\circ)}\iota_\C^* \omega= \iota_\C^*\bi_{\rho(\xi)\vert_\C}\omega = \bd \langle \iota_\C^*\h_\smbullet,\xi\rangle = \bd \langle \iota_\C^*\h_\smbullet,\xi + \eta'_\circ\rangle \qquad \forall \eta_\circ,\eta'_\circ\in\fGo,
\]
where the first and last identities rely on Lemma \ref{lemma:coisoC} ($\rho(\C\times \fGo) \equiv V_\circ\subset T\C^\omega$) and Theorem \ref{thm:fGo} ($\fGo = \fN = \Ann\Im(\iota_\C^* h_\smbullet)$).
Now, using again Corollary \ref{cor:tangent} and Assumption \ref{ass:groups}, it is clear that the action of $\G$ on $\C$ descends to an action of $\Gred$ on $\uCo$---which we denote by the same symbol. Therefore, one can use Proposition \ref{prop:bulkredflux} to pass to the quotient in the previous equation both along $\C\to\uCo$ and $\fG\to\fGred$, thus proving the first statement of the proposition.

The equivariance properties of $\uh$ immediately follow from Propositions \ref{prop:bulkredflux} and  \ref{prop:adjequi}. Since $\k_\smbullet$ does not depend on $\phi$ and $\k_\smbullet(\fGo,\cdot)= \k_\smbullet(\cdot,\fGo)=0$ (ibidem), there is no need to adopt a new symbol for it in passing to the quotient.

Finally, the freeness of the action in the case of a trivial on-shell isotropy bundle is a direct consequence of Gauss law (Proposition \ref{prop:Gausslaw}) which can be rephrased as saying that $\mathfrak{\fGo} \supset \ker(\rho\vert_\phi)$.
\end{proof}

\begin{remark}
This Proposition proves Theorem \ref{mainthm:redsummary}, item \ref{thmitem:main-resHamaction}.
\end{remark}

Now that we have proved the Hamiltonian structure of $(\uCo,\uomegao, \Gred)$, we need to set the stage for the study of its symplectic reduction. In particular, we need to study the image of the momentum map and its foliation by orbits associated to the action of $\Gred$.

\begin{lemma}\label{lemma:Imh=flux}
The image of the reduced flux map is canonically isomorphic to the space of (on-shell) fluxes (Definition \ref{def:fluxspaces})
\[
\Im(\uh) \simeq \F \doteq \Im(\iota_\C^*\h_\smbullet),
\]
as a subspace of $\fGred^*_\loc \subset \dual{\fG}_\mathrm{loc}$ (Definition \ref{def:Gpdual}).

\end{lemma}
\begin{proof}
Thanks to Theorem \ref{thm:fGo} ($\fGo = \fN$), the space of on-shell fluxes $\F \doteq \Im(\iota_\C^*\h_\smbullet)$ (Definition \ref{def:fluxspaces}) is a subset of $\fGred^*_\loc \doteq \Ann_\mathrm{loc}(\fGo) \subset \dual{\fG}_\mathrm{loc}$. Moreover, since $\iota_\C^*\h_\smbullet$ is basic w.r.t.\ $\pi_\circ \colon \C \to \uCo$ (Proposition \ref{prop:bulkredflux}), one has that the space of on-shell fluxes $\F$ also coincides with $\Im(\uh)$.
\end{proof}

\begin{remark}
In Remark \ref{rmk:equipushfwd} we expressed the action $\ad^*_K$ of $\fG$ on $\dual{\fG}_\mathrm{loc}$ as the pushforward of the action $\rho$ of $\fG$ on $\X$ along $\h_\smbullet:\X \to \dual{\fG}_\mathrm{loc}$---that is: $(\h_\smbullet)_*\rho(\xi) = \ad^*_K(\xi)$. Also, from Proposition \ref{prop:bulkredflux}, we deduce that $f\in\F$ is invariant under the action of $\Go$, i.e.\ $\ad^*_K(\xio) \cdot f =0$ for all $\xio\in\fGo$. 
Therefore, if $\mathcal{O}_f$ is the affine/coadjoint orbit of $f\in\dual{\fG}_{\loc}$ (Definition \ref{def:affinecoad}), one has that
\[
\mathcal{O}_f \doteq f \triangleleft\G \simeq \G_f\backslash \G \simeq \Gred_f\backslash\Gred,
\]
where $\G_f$ (resp.\ $\Gred_f$) are the stabilisers of $f \in \F \subset \fGred^*_\loc \subset \dual{\fG}_\mathrm{loc}$ in $\G$ (resp.\ $\Gred$), and the backslash denotes a \emph{left} quotient. This observation suggests that $f$, its orbit $\mathcal{O}_f$, as well as the KKS symplectic structure $\Omega_{[f]}$ can all be understood either as associated to $\G$, through its action on the local dual $\dual{\fG}_\mathrm{loc}$, or to $\Gred$, through its action on the local dual $\fGred^*_\loc$. This allows us to understand the image of the reduced flux map and the associated coadjoint/affine orbits---which are a priori nonlocal objects associated to the quotient $\fGred=\fG/\fGo$---in a local framework over $\Sigma$.\footnote{In Section \ref{sec:cornerdata} we will show that in many cases of interest, \emph{both} $\fGred$ and $\fGred^*_\loc$ can be naturally identified with a local Lie algebra and its dual \emph{over the corner} $\partial\Sigma$.}
\end{remark}

\begin{definition}[Flux superselection sector]\label{def:fluxSSS}
Let $\mathcal{O}_f\subset\F$ be the orbit of a flux, and let $\uS_{[f],q}$ be the $q$-th connected components of the sets $\uS_{[f]}\doteq\uh^{-1}(\mathcal{O}_f)$, with $q\in I$ some index.
The \emph{$q$-th superselection sector of $f$} is the quotient
\[
\uuS_{[f],q}\doteq \uS_{[f],q}/ \Gred.\qedhere
\] 
\end{definition}

\begin{lemma}
The fully-reduced phase space $\uuC\doteq \C/\G$ is isomorphic to the disjoint union of the superselection sectors:
\[
    \uuC \simeq \bigsqcup_{\mathcal{O}_f,q}\uuS_{[f],q},
\]
and we denote $\underline{\iota}{}_{[f],q}: \uuS_{[f],q} \hookrightarrow \uuC$.
\end{lemma}
\begin{proof}
Since $\uuC\doteq \C/\G \simeq \uCo/\Gred$, the lemma follows from the following observations: (i) $\uCo$ decomposes in orbits under the action of $\Gred$, (ii) $\F$ is foliated by the orbits $\mathcal{O}_f$, and (iii) the map $\uh$ is equivariant under the action of $\fGred$ so that the sets $\uS_{[f]} \doteq \uh^{-1}(\mathcal{O}_f)$, which are unions of $\Gred$-orbits in $\uCo$, are $\Gred$-invariant. Then, the set of orbits $\{\mathcal{O}_f\}_{f\in\F}$ parametrises the orbit spaces $\uS_{[f]}/\Gred$, (possibly) up to multiple connected components labelled by $q\in I$.
\end{proof}

\begin{remark}[Other superselection charges]\label{rmk:topochargesSSS}
In principle, the flux superselection sectors might have multiple connected components labelled by an index $q$. This happens if the same flux $f$ is the image by $\h_\smbullet$ of two (on-shell) configurations that cannot be smoothly deformed into one another without $f$ leaving its orbit.\footnote{Recall that we assume all the relevant groups to be simply connected.} The number $q$ can be understood as an additional ``topological superselection charge''.
\end{remark}

We will now prove that flux superselection sectors are the result of the second-stage reduction, whose smoothness (as elsewhere in this article) we simply assume:

\begin{assumption}\label{ass:smoothsuperselections}
The connected components $\uS_{[f],q}$ of $\uS_{[f]}=\uh^{-1}(\mathcal{O}_f)$ are smooth submanifolds of $\uCo$. Moreover, the quotients $\uuS_{[f],q}\doteq \uS_{[f],q}/\Gred$ are smooth manifolds, with associated surjective submersions  $\underline{\pi}_{[f],q}\colon \uS_{[f],q} \to \uuS_{[f],q}$.
\end{assumption}

\begin{remark}
Observe that if the action of $\G$ on $\X$ is proper, the induced action of $\Gred$ has the same property whenever the map $\C\to \uCo$ is a proper map. 
\end{remark}

\begin{proposition}[Second-stage reduction: flux superselection]\label{prop:secondstage}
Let $\mathcal{O}_f\subset\F$ be the orbit of a flux, and let $\uuS_{[f],q}$ be the $q$-th corresponding superselection sector. Then, under Assumption \ref{ass:smoothsuperselections} there exists, unique, a closed 2-form $\uuomegao_{[f],q}$ on $\uuS_{[f],q}$ such that
\[
\underline{\pi}{}_{[f],q}^*\uuomegao_{[f],q} = \underline{\iota}{}_{[f],q}^*  \uomegao - (\uh\circ\underline{\iota}{}_{[f],q})^* \Omega_{[f]}.
\]
\end{proposition}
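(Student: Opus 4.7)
The plan is to follow the standard orbit-reduction theorem (see e.g.\ \cite{RatiuOrtega03}), adapting the Marsden--Weinstein--Meyer argument to our infinite-dimensional setting. First I would define the candidate $2$-form on $\uS_{[f],q}$ by
\[
\widetilde{\omega}_{[f],q} \doteq \underline{\iota}{}_{[f],q}^*\uomegao - (\uh\circ\underline{\iota}{}_{[f],q})^*\Omega_{[f]},
\]
and then show that it descends uniquely through $\underline{\pi}{}_{[f],q}\colon\uS_{[f],q}\to\uuS_{[f],q}$ to a closed $2$-form $\uuomegao_{[f],q}$ satisfying the asserted equation. Closedness is immediate: $\uomegao$ is closed by Proposition \ref{prop:uomegao}, $\Omega_{[f]}$ is closed as a KKS form (Lemma \ref{lemma:KKS}), and pullbacks commute with $\bd$. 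For the descent it suffices to check that $\widetilde{\omega}_{[f],q}$ is \emph{basic} with respect to $\underline{\pi}{}_{[f],q}$, i.e.\ both $\Gred$-invariant and horizontal along the fibres (which under Assumption \ref{ass:smoothsuperselections} are the $\Gred$-orbits).

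Invariance under $\Gred$ follows from Cartan's formula $\L_{\rho(\underline{\xi})}\uomegao = \bd\bi_{\rho(\underline{\xi})}\uomegao = \bd\bd\langle\uh,\underline{\xi}\rangle = 0$ applied via the Hamiltonian flow equation of Proposition \ref{prop:hamactionGp}, together with the analogous identity $\L_{\ad^*_{k_\smbullet}(\underline{\xi})}\Omega_{[f]}=0$ coming from Remark \ref{rmk:KKS-Hamiltonian}, once one transports the action through $\uh$ using its equivariance (Proposition \ref{prop:hamactionGp}): $(\uh)_*\rho(\underline{\xi}) = \ad^*_{k_\smbullet}(\underline{\xi})\circ\uh$ on $\uS_{[f],q}$.

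The key algebraic step is horizontality, $\bi_{\rho(\underline{\xi})}\widetilde{\omega}_{[f],q}=0$ for every $\underline{\xi}\in\fGred$. The first summand yields $\underline{\iota}{}_{[f],q}^*\bd\langle\uh,\underline{\xi}\rangle$ directly from the Hamiltonian flow equation of Proposition \ref{prop:hamactionGp}. For the second summand, Remark \ref{rmk:KKS-Hamiltonian} gives $\bi_{\ad^*_{k_\smbullet}(\underline{\xi})}\Omega_{[f]} = \bd\langle i_{[f]},\underline{\xi}\rangle$ on $\mathcal{O}_f$, and since $i_{[f]}\circ(\uh\circ\underline{\iota}{}_{[f],q}) = \uh\circ\underline{\iota}{}_{[f],q}$, the equivariance of $\uh$ together with naturality of contraction under pullback produces
\[
\bi_{\rho(\underline{\xi})}(\uh\circ\underline{\iota}{}_{[f],q})^*\Omega_{[f]} = (\uh\circ\underline{\iota}{}_{[f],q})^*\bd\langle i_{[f]},\underline{\xi}\rangle = \underline{\iota}{}_{[f],q}^*\bd\langle\uh,\underline{\xi}\rangle,
\]
so the two contributions cancel.

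Having shown that $\widetilde{\omega}_{[f],q}$ is closed and basic, standard principal-bundle theory delivers a unique closed $2$-form $\uuomegao_{[f],q}$ on $\uuS_{[f],q}$ with $\underline{\pi}{}_{[f],q}^*\uuomegao_{[f],q} = \widetilde{\omega}_{[f],q}$; uniqueness follows from the surjectivity of $\bd\underline{\pi}{}_{[f],q}$ guaranteed by Assumption \ref{ass:smoothsuperselections}. The main obstacle is not the algebraic identity above but rather the infinite-dimensional regularity: one needs $\uuS_{[f],q}$ to be a genuine smooth manifold and $\underline{\pi}{}_{[f],q}$ a smooth surjective submersion, precisely the content we package into Assumption \ref{ass:smoothsuperselections}; nondegeneracy of $\uuomegao_{[f],q}$ is a separate matter, to be addressed under the additional symplectic-closure hypothesis (Assumption \ref{ass:symplecticclosure-f}) and is not claimed by this proposition.
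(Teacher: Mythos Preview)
Your proof is correct and follows essentially the same route as the paper: both establish that the shifted form $\underline{\iota}{}_{[f],q}^*\uomegao - (\uh\circ\underline{\iota}{}_{[f],q})^*\Omega_{[f]}$ is basic with respect to the $\Gred$-action by combining the Hamiltonian flow equation of Proposition~\ref{prop:hamactionGp}, the equivariance/pushforward relation $(\uh)_*\rho(\underline{\xi}) = \ad^*_{k_\smbullet}(\underline{\xi})$, and the KKS identity from Remark~\ref{rmk:KKS-Hamiltonian}, then invoke the surjective submersion of Assumption~\ref{ass:smoothsuperselections} for descent and closedness. The paper explicitly names this the ``shifting trick'' and cites \cite{Arms1996shifttrick}; your write-up is a more detailed unpacking of the same argument.
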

\begin{proof}
This follows from the fact that the 2-form on the right-hand side is basic w.r.t.\ the action of $\Gred$, which is in turn a consequence of Proposition \ref{prop:hamactionGp}, Lemma \ref{lemma:Imh=flux}, and Remark \ref{rmk:equipushfwd} ($(\h_\smbullet)_*\rho(\xi) = \ad^*_K(\xi)$). In this regard, we recall Remark \ref{rmk:KKS-Hamiltonian}, stating:
\[
\bi_{\ad^*_K(\xi)}\Omega_{[f]}\vert_{f'} = \bd  \langle i_{[f]}(f'), \xi\rangle,
\]
where $i_{[f]}$ is the embedding $i_{[f]} :\mathcal{O}_f \to \dual{\fG}_\mathrm{loc}$. Closure is ensured by the existence of a surjective submersion, as per Assumption \ref{ass:smoothsuperselections}. This way of constructing $\uuomegao_{[f]}$ is sometimes called the ``shifting trick'' \cite{Arms1996shifttrick}.
\end{proof}

Similarly to what discussed in Section \ref{sec:bulkreduction}, we need to check conditions akin to \ref{C1} and \ref{C2}, and Assumption \ref{ass:symplecticclosure}:

\begin{proposition}[Section 4.1 of \cite{DiezPhD}]\label{prop:uuSf-nondeg}
Denote $\underline{V}_{[f],q}  = \rho(\uS_{[f],q}\times \fGred) \subset T\uCo$. The map $\uuomegao_{[f],q}^\flat : T\uuS_{[f],q} \to T^*\uuS_{[f],q}$ is injective, and therefore symplectic, iff $\underline{V}_{[f],q} = (T \uS_{[f],q})^{\uomegao} $ or, equivalently, iff $\underline V_{[f],q}$ is symplectically closed i.e.\ $\underline{V}_{[f],q}^{\uomegao\uomegao} = \underline{V}_{[f],q}$.
\end{proposition}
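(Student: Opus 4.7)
The plan is to compute the kernel of the closed 2-form $\underline{\pi}_{[f],q}^*\uuomegao_{[f],q} = \underline{\iota}_{[f],q}^*\uomegao - (\uh\circ\underline{\iota}_{[f],q})^*\Omega_{[f]}$ on $T\uS_{[f],q}$, and to recognize that since $\underline{\pi}_{[f],q}$ is a surjective submersion whose vertical bundle is generated by the $\Gred$-action, i.e.\ is exactly $\underline{V}_{[f],q}$ (Assumption \ref{ass:smoothsuperselections}), injectivity of $\uuomegao_{[f],q}^\flat$ is equivalent to this kernel coinciding with $\underline{V}_{[f],q}$.

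The core computation proceeds as follows. First I would record two structural facts about $T\uS_{[f],q}$. By the equivariance of $\uh$ (Proposition \ref{prop:hamactionGp}, Remark \ref{rmk:equipushfwd}) and the transitivity of the $\Gred$-action on the orbit $\mathcal{O}_f$, the restriction $T\uh\colon T\uS_{[f],q}\to T\mathcal{O}_f$ is surjective with $T\uh(\underline{V}_{[f],q}) = T\mathcal{O}_f$, so that $T\uS_{[f],q} = \underline{V}_{[f],q}+\ker(T\uh)$. Moreover, the momentum-map equation $\bi_{\rho(\underline{\xi})}\uomegao = \bd\langle\uh,\underline{\xi}\rangle$ implies $\ker(T\uh) = \underline{V}_{[f],q}^{\uomegao}$. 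Then, for $X\in T\uS_{[f],q}$, I would decompose an arbitrary $Y\in T\uS_{[f],q}$ as $Y = \rho(\underline{\eta}) + Y'$ with $Y'\in\ker(T\uh)$ and evaluate both terms of the form. Using the momentum-map equation on $\uCo$ together with the analogous one on $(\mathcal{O}_f,\Omega_{[f]})$, whose momentum map is the inclusion $i_{[f]}$ (Remark \ref{rmk:KKS-Hamiltonian}), one finds that both $\uomegao(X,\rho(\underline{\eta}))$ and $\Omega_{[f]}(T\uh(X), T\uh(\rho(\underline{\eta})))$ equal $-\langle T\uh(X),\underline{\eta}\rangle$, so their contributions cancel. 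The condition $X\in\ker(\underline{\pi}^*\uuomegao_{[f],q})$ therefore reduces to $\uomegao(X,Y')=0$ for all $Y'\in\ker(T\uh) = \underline{V}_{[f],q}^{\uomegao}$, that is, $X\in \underline{V}_{[f],q}^{\uomegao\uomegao}$. Hence $\ker(\underline{\pi}^*\uuomegao_{[f],q}) = T\uS_{[f],q}\cap\underline{V}_{[f],q}^{\uomegao\uomegao}$.

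The remaining step is to translate this into the two stated equivalent conditions. Since $\underline{V}_{[f],q}\subset T\uS_{[f],q}$ and $\underline{V}_{[f],q}\subset\underline{V}_{[f],q}^{\uomegao\uomegao}$ hold unconditionally, the inclusion $\underline{V}_{[f],q}\subset T\uS_{[f],q}\cap\underline{V}_{[f],q}^{\uomegao\uomegao}$ is automatic, and the injectivity of $\uuomegao_{[f],q}^\flat$ is equivalent to the reverse inclusion. Taking $\uomegao$-orthogonals of the decomposition $T\uS_{[f],q} = \underline{V}_{[f],q}+\underline{V}_{[f],q}^{\uomegao}$ and using the general identity $W^{\uomegao\uomegao\uomegao}=W^{\uomegao}$, one obtains $(T\uS_{[f],q})^{\uomegao} = \underline{V}_{[f],q}^{\uomegao}\cap\underline{V}_{[f],q}^{\uomegao\uomegao}$, which allows one to reformulate the injectivity condition as $\underline{V}_{[f],q} = (T\uS_{[f],q})^{\uomegao}$. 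The symplectic-closure formulation $\underline{V}_{[f],q}^{\uomegao\uomegao} = \underline{V}_{[f],q}$ is immediately sufficient (it collapses the intersection with $T\uS_{[f],q}$ to $\underline{V}_{[f],q}$), and is necessary upon noting, as in the first-stage analogue (Proposition \ref{prop:symplclosVo}), that $T\uS_{[f],q}\cap\underline{V}_{[f],q}^{\uomegao\uomegao} = \underline{V}_{[f],q}$ forces $\underline{V}_{[f],q}^{\uomegao\uomegao}\subset T\uS_{[f],q}$.

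The main obstacle is genuinely this last step in infinite dimensions: unlike the finite-dimensional case where $\underline{V}_{[f],q}^{\uomegao\uomegao}=\underline{V}_{[f],q}$ is automatic, in the Fr\'echet setting this symplectic closure is a nontrivial functional-analytic requirement, to be verified on a case-by-case basis (see Section 4 of \cite{DiezPhD} and the discussion following Proposition \ref{prop:symplclosVo}). The rest of the argument is essentially the shifting-trick computation, made explicit by the two momentum-map equations and the transitivity of the $\Gred$-action on $\mathcal{O}_f$.
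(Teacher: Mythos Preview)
The paper does not give a proof of this proposition; it is stated as a citation to \cite[Section~4.1]{DiezPhD}. Your argument therefore supplies what the paper omits, and your core computation is correct and standard: using the two momentum-map equations (for $\uh$ on $\uCo$ and for the inclusion on $\mathcal{O}_f$) together with $T\uS_{[f],q}=\underline{V}_{[f],q}+\underline{V}_{[f],q}^{\uomegao}$ and $\underline{V}_{[f],q}^{\uomegao}=\ker T\uh$, you correctly obtain $\ker(\underline{\pi}_{[f],q}^*\uuomegao_{[f],q})=T\uS_{[f],q}\cap\underline{V}_{[f],q}^{\uomegao\uomegao}$. This immediately yields the sufficiency of symplectic closure for injectivity, which is all the paper actually uses (Assumption~\ref{ass:symplecticclosure-f}).

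Your argument for the converse direction, however, has a gap. You claim that the injectivity condition $T\uS_{[f],q}\cap\underline{V}_{[f],q}^{\uomegao\uomegao}=\underline{V}_{[f],q}$ ``forces $\underline{V}_{[f],q}^{\uomegao\uomegao}\subset T\uS_{[f],q}$'' by analogy with the first stage, but the analogy does not carry: in the first stage one has $V_\circ^{\omega}=T\C$ (Lemma~\ref{lemma:coisoC}), whence $V_\circ^{\omega\omega}=T\C^{\omega}\subset T\C$ by coisotropy, so the inclusion holds \emph{unconditionally} there. In the second stage $\uS_{[f],q}$ is not coisotropic (since $\underline{V}_{[f],q}$ is not $\uomegao$-isotropic when $f\neq 0$), and one only has $\underline{V}_{[f],q}^{\uomegao}\subsetneq T\uS_{[f],q}$; no a priori reason gives $\underline{V}_{[f],q}^{\uomegao\uomegao}\subset T\uS_{[f],q}$. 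Likewise, your passage from the injectivity condition to $\underline{V}_{[f],q}=(T\uS_{[f],q})^{\uomegao}=\underline{V}_{[f],q}^{\uomegao}\cap\underline{V}_{[f],q}^{\uomegao\uomegao}$ is asserted rather than derived: the two intersections involve different subspaces and do not obviously coincide. One clean way to close this gap is to pass to the extended space $\uCo\times\mathcal{O}_f^{-}$ via the shifting trick, where the diagonal orbit \emph{is} isotropic and the first-stage argument applies verbatim, and then check that symplectic closure of the diagonal orbit there is equivalent to symplectic closure of $\underline{V}_{[f],q}$ in $\uCo$; alternatively, consult the cited reference for the precise hypotheses under which the full ``iff'' is established.
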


Hence, to ensure that the superselections are sympelctic, we assume:
\begin{assumption}\label{ass:symplecticclosure-f}
$\underline V_{[f],q}$ is symplectically closed, i.e.\ $\underline V_{[f],q}^{\omega\omega} = \underline V_{[f],q}$.
\end{assumption}

To conclude our discussion of symplectic reduction in the presence of corners, we observe that under our regularity assumptions the superselections foliate a Poisson space:

\begin{proposition}[Poisson structure of $\uuC$]\label{prop:uXPoisson}
When smooth, the fully-reduced phase space $\uuC = \bigsqcup_{[f],q} \uuS_{[f],q}$ is a (partial) Poisson manifold, of which the superselection sectors $(\uuS_{[f],q},\uuomegao_{[f],q})$ are the symplectic leaves.
\end{proposition}

\begin{remark}
This could also be obtained as an infinite-dimensional adaptation of the construction of Theorem \ref{thm:extendedreduction} (cf.\ Remark \ref{rmk:extendedreduction}), where $\X$ is replaced by $\uCo$ and $H$ by $\uh$, provided that the issues arising from infinite-dimensionality are appropriately handled.
\end{remark}

\begin{remark}
Propositions \ref{prop:secondstage}, \ref{prop:uuSf-nondeg}, and \ref{prop:uXPoisson} prove Theorem \ref{mainthm:redsummary}, items \ref{thmitem:main-cornerred} and \ref{thmitem:main-Poisson}.
In particular, Assumptions \ref{ass:groups}--\ref{ass:symplecticclosure-f} guarantee that the flux superslection sectors $\uuS_{[f]}$ are well-defined, (weakly) symplectic, manifolds.

More precisely, Theorem \ref{mainthm:redsummary} is a reformulation of these results upon their lifting from $\uCo$ to $\C$, as per the diagram below which summarises the two-stage reduction with wavy lines, and emphasises the unified construction presented in Theorem \ref{mainthm:redsummary}, above, with dashed lines. 
(To avoid clutter, in the diagram we omit the label $q$.)
\[
\xymatrix@C=.75cm{
(\X,\omega)
	\ar@{~>}[rr]^-{\tbox{2.2cm}{constraint reduction \\(by $\Go$ at $0$)}}
&&(\uCo,\uomegao)
	\ar@{~>}[rr]^-{\tbox{2.2cm}{flux superselection (by $\Gred$ at $\mathcal{O}_f$)}}
&&(\uuS_{[f]},\uuomegao_{[f]})\\
&{
    \;\C\;
    \ar@{_(->}[ul]^-{\iota_\C}
	\ar@{->>}[ur]_-{\pi_\circ}
	}
&&{
    \;\;\uS_{[f]}
    \ar@{_(->}[ul]^-{\underline{\iota}_{[f]}}
	\ar@{->>}[ur]_-{\underline{\pi}}
	}\\
&&{
    \;\;\S_{[f]}
    \ar@{_(->}[ul]^-{{\iota}^\C_{[f]}}
	\ar@{->>}[ur]_-{\pi_\circ\vert_{\S_{[f]}}}
	\ar@{_(-->}@/^2.7pc/[uull]^{\iota_{[f]}}
	\ar@{-->>}@/_2.7pc/[uurr]_{\pi_{[f]}}
	}\\
}\]
Notice that the lift of $\uS_{[f],q}$ to $\C$ can be directly defined as the $q$-th connected component of 
\[
\uS_{[f]} \doteq h^{-1}_\smbullet(\mathcal{O}_f)\cap \C.
\qedhere
\]
\end{remark}

\subsection{Yang--Mills theory: second stage reduction}\label{sec:runex-secondstage}

Let's start from the semisimple case. In the present discussion we will follow the construction of Theorem \ref{mainthm:redsummary}.

Let us start by recalling that in semisimple YM theory the space of fluxes is isomorphic to the space of electric fields pulled-back to $\pp\Sigma$, i.e.\ $\mathcal{E}_\pp\simeq \F$, where the isomorphism is given by the map $E_\pp \mapsto \int_{\pp\Sigma}\tr(E_\pp\cdot)$.

\begin{proposition}[Local model for second stage reduction]\label{prop:localsecondstagemodel}
    \[
    T(\Acal/\Go)\simeq_\loc  T(\Acal/\G) \times \Gred \times \mathcal{E}_\pp \qquad 
     \uuC=\uCo/\Gred \simeq_\loc T(\Acal/\G) \times \mathcal{E}_\pp. 
    \]
\end{proposition}
\begin{proof}
    From the local model for the first stage reduction (Proposition \ref{prop:YMlocaliso}), since $\uCo \simeq_\loc T(\Acal/\Go)$ and we can locally\footnote{Since $\Go$ is split, $\G/\Go$ is a principal bundle and thus admits a local section (cf.\ Remark \ref{rmk:split}).} write $\Acal/\Go \simeq_\loc \Acal/\G \times \G/\Go \simeq\Acal/\G \times \Gred$, identifying $\mathcal{E}_\pp\simeq \fGred$ (as a vector space) we have
    \[
    \uCo \simeq_\loc T(\Acal/\G \times \Gred) \simeq T(\Acal/\G) \times T\Gred \simeq T(\Acal/\G) \times \mathcal{E}_\pp\times \Gred 
    \]
    Consequently: $T\Gred/\Gred \simeq (\fGred\times \Gred)/\Gred \simeq (\mathcal{E}_\pp\times\Gred)/\Gred \simeq \mathcal{E}_\pp$. Indeed, from the above formula follows:
    \[
    \uuC=\uCo/\Gred \simeq_\loc  T(\Acal/\G) \times (\mathcal{E}_\pp\times \Gred )/\Gred \simeq T(\Acal/\G) \times \mathcal{E}_\pp.\qedhere
    \]
    \end{proof}

    \begin{remark}[Symplectic structure on the model]\label{rmk:sympl-model}
    To construct the symplectic structures on these model spaces, we specialize Proposition \ref{prop:omegadecomp} to $\mathfrak{H}=\fG$, e.g. by chosing $\UpsilonH=\UpsilonCou$ (Definition \ref{def:Coulombconnection}). Then, $\iota^*_\C\omega = \omega^H + \omega^\pp$ is basic with respect to $\C\to \uCo$. 
    In particular, the first term $\omega^H$ is a basic 2-form that descends to a symplectic 2-form on $T(\Acal/\G)$, the first factor in the above local model of $\uCo$:\footnote{Note that it is $E_{\rad,A}$ (Equations \eqref{eq:CtoP-YM} and \eqref{eq:horizontalYMcotangentsplit}) that appears on the right hand side, and not $E$: the remaining Coulombic contribution to $E$ simplifies with the boundary term in Equation \eqref{eq:omegaHpp} involving the curvature of $\Upsilon_\mathrm{Cou}$. Showing this requires a straightforward computation that uses Equations \eqref{eq:CtoP-YM} and \eqref{eq:horizontalYMcotangentsplit} as well as Lemma \ref{lemma:curv-Coul-conn}. Note that identifying $E_{\rad,A}$ with a vector at $\wt\Pi\in T_A\Acal$, Equation \eqref{eq:horizontalYMcotangentsplit} is equivalent to $\UpsilonCou|_A(\wt\Pi) = 0$.}
    \[
    \omega^H \doteq \int_\Sigma \tr(\bd_{\mathrm{Cou}} E_{\rad,A}\ \bd_{\mathrm{Cou}} A). 
    \]
    The second term, $\omega^\pp$, is also basic with respect to the action of $\Go$ however it is not manifestly the pullback of a symplectic structure on $T\Gred$. The issue is that the formulas of Proposition \ref{prop:omegadecomp} are global, whereas the model $T(\Acal/\G) \times T\Gred$ is valid only locally. 
    To see how the natural\footnote{Given the metric on $\G$ defined by $\tr(\cdot\ \cdot)$.} symplectic structure on $T\Gred$ emerges from Proposition \ref{prop:omegadecomp}, consider a local\footnote{We denote isomorphisms and equalities induced by the local section $\sigma : \uuU{}_\sigma \to \Acal$, for some $\uuU{}_\sigma\subset\Acal/\G$, through the symbol $\stackrel{\sigma}{\approx}$. In the main text, we abuse notation by replacing $\uuU{}_\sigma$ with $\Acal/\G$. We keep in mind that a priori all formulas are true only locally. For the existence of local sections (or, slices) see the discussion at the end of Section \ref{sec:runex-secondstage}. On the (non)existence of global sections, see e.g. \cite{Gribov,Singer1978,NarasimhanRamadas79}.)} section $\sigma: \Acal/\G \to \Acal$. This local section (or gauge fixing) provides us with a trivialization $\uuC \stackrel{\sigma}{\approx}T(\Acal/\G) \times \Gred \ni (\Pi, [A], k) $, and with an adapted (flat) principal Ehresmann connection $\Upsilon = \Upsilon_\sigma = k^{-1}\bd k$ on $\X\to\uuC$. 
    Using this trivialization, denoting $k|_{(A,E)}=k(A)$ simply by $k$ we introduce ``dressed'' fields $(A^k,E^k) \doteq (\Ad(k)\cdot A -  d k k^{-1}, \Ad^*(k^{-1})\cdot E)$. \cite{Francois2012,Gomes:2018shn,RielloGomesHopfmuller,Francois2021}. Then,\footnote{For a more detailed discussion of $\Upsilon$ and dressings, see \cite[Section 9]{RielloGomesHopfmuller} as well as \cite{Francois2021}. (\cite[Corollary 3.3]{RielloGomes} contains a similar formula to the one below, but with a few typos.)}
    \[
    \bd A^k = \Ad(k)\cdot \bd_{\Upsilon_\sigma } A 
    \quad\text{and}\quad
    \bd E^k = \Ad^*(k)\cdot \bd_{\Upsilon_\sigma } E,
    \]
    whence it is easy to see that Equation \eqref{eq:omegaHpp} for $\pi_\circ^*\uomegao = \iota_\C^*\omega = \omega^H + \omega^\pp$ simplifies to 
    \[
    \omega^H \stackrel{\sigma}{\approx} \iota_\C^*\int_\Sigma \tr(\bd E^k_\sigma \wedge \bd A^k ), 
    \]
    where $E_\sigma$ is the horizontal component of $E$ w.r.t.\ $\Upsilon_\sigma$ (as $E_\rad$ is w.r.t.\ $\Upsilon_\mathrm{Cou}$), and
    \[
    \omega^\pp  \stackrel{\sigma}{\approx} \int_{\pp\Sigma} \tr\big( \tfrac12 E_\pp^k[ \bd k k^{-1}, \bd k k^{-1}] - (-1)^{\dim\Sigma} \bd E_\pp^k \wedge \bd k k^{-1}\big).
    \]

    In fact, by looking at this formula we recognize that $(E_\sigma^k, A^k)$ and $\omega^H$ define (the $\Upsilon_\sigma$ horizontal lift of) a symplectic form on $T(\Acal/\G)$, whereas the variables $(E_\pp, \iota_{\pp\Sigma}^*k)$ together with $\omega^\pp$ give us the symplectic space
    \[
    \mathcal{E}_\pp\times \Gred \simeq T\Gred
    \]
    (note the close analogy with the KKS symplectic structure).
    The latter symplectic space is formally analogous to the ``edge mode'' phase space of \cite{DonnellyFreidel16}.\footnote{See \cite[Section 5.8]{RielloSciPost} for a more detailed comparison to edge modes.}
\end{remark}

\begin{remark}
    This proves the local model for $\uuC$ claimed in Theorem \ref{mainthm:YMon}, upon identifying $\mathcal{E}_\pp \simeq \F$, Equation \eqref{eq:YM-fluxdensity}.
\end{remark}

\medskip

We now turn to the model of the superselection sector $\uuS_{[f]}$. Fix a reference flux $f = \int_{\pp\Sigma}\tr(e_\pp\cdot)$ and its associated orbit $\mathcal{O}_f$, once and for all. We use the small cap notation to indicate that the corner electric field is fixed i.e.\ $\bd e_\pp =0$.

Observe that $\Go$ is a normal subgroup of $\G$ as well as of the stabiliser $\G_f$. Then,
\[
\mathcal{O}_f \simeq \G_f\backslash \G \simeq \underline{\G}_f\backslash \underline{\G},
\]
where $\underline{\G}_f \doteq \G_f / \Go$ and the backslash stands for a left quotient.
Consider now 
\[
\S_{[f]} \doteq 
\h_\smbullet^{-1}(\mathcal{O}_f) \cap \C
=\{ (A,E)\in\X \, |\, d_A E = 0,\, \text{ and }\, \exists u \in \underline{\G}\, \text{ s.t. } \, \iota_{\pp\Sigma}^*E = u^{-1} e_\pp u \}
\]
where for simplicity we used $\tr$ to identify $\fg^*$ and $\fg$, and implicitly identified $\underline{\G}$ with the group of ``corner gauge transformations'' $\Gamma(\pp\Sigma,\iota_{\pp\Sigma}^* \AD P)$ (see Equation \eqref{eq:uGYM}).\footnote{A more conceptual, and fully general, take on this bulk-boundary identifications is provided in Section \ref{sec:cornerdata} and in particular in the next running example, Section \ref{sec:runex-cornerdata}.}

Let us recall the construction of Section \ref{sec:runex-fluxannihilators}, where we proved that $\C$ is a fibration onto $\Acal\times \mathcal{E}_\pp$ with fibre $\mathcal{H}_A$ over $A\in\Acal$. We start from (\textit{i}) $\X = T^\vee \Acal = \mathcal{E} \times \Acal$, and observe that (\textit{ii}) the electric field decomposes into a radiative and a Coulombic part, i.e. $\mathcal{E}_A \doteq T_A^\vee \Acal = \mathcal{H}_A \oplus \mathcal{V}_A$ as per Equation \eqref{eq:YMcotangentsplit}. Owing to the Gauss constraint we have (\textit{iii}) $d_A E= 0$, and thus on $\C$ the Coulombic part of the electric field is fully determined by the boundary value $\mathsf{t}E \in \mathcal{E}_\pp$ as per Equation \eqref{eq:LaplaceEq} (cf.\ also Equation \eqref{eq:CtoP-YM}). Then, it is easy to see that $\S_{[f]}$ is similarly a fibration over $\Acal \times \mathcal{O}_f$ with fibre $\mathcal{H}_A$: i.e.\ points in $\S_{[f]}$ are uniquely identified by a triple $(A, [u], E_\text{rad})$, with $[u]\in\G_f\backslash \G\simeq \mathcal{O}_f$ and $E_\text{rad} \in \mathcal{H}_A$, so that 
\begin{equation}
    \label{eq:StoP-YM}
\iota_{[f]}:\S_{[f]}\hookrightarrow \C\subset\X, \quad \big(A, [u], E_\text{rad}\big) \mapsto \big(A, E_\text{rad} + \star d_A (\varphi(u^{-1} e_\pp u)) \big)
\end{equation}
where $\varphi(E_\pp)$ denotes the unique solution to the boundary value problem \eqref{eq:LaplaceEq}, and for simplicity we dropped the label $\bullet_A$ from $E_{\rad,A}$ (cf.\ Equation \eqref{eq:CtoP-YM}).
Note that by construction the flux $f'\in\mathcal{O}_f$ associated to the triple $(A,[u], E_\text{rad})$ is only a function of $[u]$, since $f'=\Ad^*(u)\cdot f$ or, more pedantically 
\[
f' = h_\smbullet \circ \iota_{[f]}(A,[u],E_\text{rad}) = \int_{\pp\Sigma}\tr\left((u^{-1} e_\pp u)\,\cdot\,\right)  \in \F \subset \fG^*_\loc.
\]

In this description of $\S_{[f]}$, the action of $\G$ on $\S_{[f]}\subset \X$ is given by\footnote{In writing $[ug]$ with $u\in\underline{\G}$ and $g\in \G$, we are abusing notation. What we mean is $[u\underline{g}]$ where $\underline{g}$ is the element of $\underline{\G} = \G/\Go$ associated to $g\in\G$.  Analogous abuses of notation will be committed below.}
\begin{equation}
    \label{eq:gaugetrC}
(A, \, [u],\, E_\text{rad}) \triangleleft g = (g^{-1} A g + g^{-1} d g , \, [ug], \, g^{-1} E_\text{rad} g)
\end{equation}
which is well defined since the space of the $[u]$'s, i.e.\ $\G_f\backslash \G\simeq \mathcal{O}_f$, is a \emph{left} quotient, whereas $\G$ acts from the right: i.e.\ if we set $E_\pp \equiv u^{-1} e_\pp u$, then $E_\pp \triangleleft g   =  (ug)^{-1} e_\pp (ug) =  (g_fug)^{-1} e_\pp (g_fug)$ for any $g_f \in \G_f$.

Then, let $\iota_{[f]}:\S_{[f]}\hookrightarrow\X$, and consider on $\S_{[f]}$ the 2-form 
\begin{equation}
    \label{eq:omegafYM}
\omega_{[f]} \doteq \iota_{[f]}^*\omega - (h_\smbullet\circ\iota_{[f]})^*\Omega_{[f]}.
\end{equation}
where the form $\iota_{[f]}^*\omega = (\iota_\C\circ\iota_{[f]}^\C)^*\omega$ can be read off of Equation \eqref{eq:omegaHpp} (specialised for $\mathfrak{H}=\fG$) for $\iota_\C^*\omega$ by setting $E_\pp = u^{-1}e_\pp u$ there, and off of Equation \eqref{eq:KKSYM} for $\Omega_{[f]}$ by noting that
\begin{equation}
    \label{eq:OmegaKKSYM}
(h_\smbullet\circ\iota_{[f]})^*\Omega_{[f]} = -\tfrac12 \int_{\pp\Sigma}\tr(e_\pp [\bd u u^{-1},\bd u u^{-1}]).
\end{equation}

Now, using that $\bd e_\pp =0$ and $\bi_{\rho(\xi)}u^{-1} \bd  u  = \xi$, one can check that not only $\L_{\rho(\xi)}\omega_{[f]}=0$, but also $\bi_{\rho(\xi)}\omega_{[f]}=0$ for \emph{all} $\xi\in\fG$.
Therefore, $\omega_{[f]}$ is $\G$-basic and can be projected down to the flux superselection sector $\uuS_{[f]}\doteq \S_{[f]}/\G$ where it defines the 2-form $\uuomegao_{[f]}$.

At the corner, one can use the flux degrees of freedom $u$, from $E_\pp = u^{-1}e_\pp u$, to introduce a \emph{dressed} versions of the Coulomb connection form $\UpsilonCou$ (for the group $\G$, Definition \ref{def:Coulombconnection}):\footnote{Technically, $\Upsilon^u_{\mathrm{Cou},\pp}$ is only defined up to a conjugation by an element that stabilized $e_\pp$. Also note that all we say about dressings can be extended to any principal connnection $\Upsilon$. \label{fnt:Upsilonu}}
\[
\Upsilon^u_{\mathrm{Cou},\pp} \doteq \Ad(u)\cdot\UpsilonCou\vert_{\pp\Sigma} - \bd u u^{-1}.
\]
Although the dressing of $\UpsilonCou$ looks like a gauge transformation, it isn't one. This is thoroughly discussed in\footnote{\cite{Francois2012} studies the dressing of connections over spacetime, rather than connections over field space. However, since $\UpsilonCou$ is equivariant by construction, one has $\UpsilonCou^u\vert_A = \UpsilonCou\vert_{A^u}$ where $A^u \doteq \Ad(u)\cdot A - d u u^{-1}$, even if $u$ is not constant over $\X$: indeed $\Upsilon_\mathrm{Cou}^u = \Delta_{A^u}^{-1} d_{A^u}^\star\bd A^u$. To prove this formula, one needs equivariance of $\UpsilonCou$ under field-\emph{dependent} gauge transformations $\sigma\in\Gamma(\X,\A)$.  For these, Equation \eqref{eq:varpi} implies  $\L_{\rho(\sigma)} \UpsilonCou = \bd \sigma + [\UpsilonCou,\sigma]$ (cf.\ \cite{RielloGomesHopfmuller}). } \cite{Francois2012} but as far as we are concerned it is enough to observe that it cannot be one because $\Upsilon^u_{\mathrm{Cou},\pp}$ fails to satisfy the basic properties of a connection form (Equation \eqref{eq:varpi}), and is instead both invariant and horizontal, i.e.\ basic: 
\[
\L_{\rho(\xi)}\Upsilon^u_{\mathrm{Cou},\pp} = 0 
\quad\text{and}\quad
\bi_{\rho(\xi)} \Upsilon^u_{\mathrm{Cou},\pp}=0.
\]

After some algebra, $\omega_{[f]}$ can be cast into the sum of two manifestly basic terms (this is obtained combining Equations (\ref{eq:omegaHpp}-\ref{eq:omegaHSdW}) and (\ref{eq:omegafYM}-\ref{eq:OmegaKKSYM})):
\begin{equation}
    \label{eq:YM-omega[f]}
\omega_{[f]} = \iota_{[f]}^*\int_\Sigma \tr\left(\bd_\mathrm{Cou} E_{\text{rad},A}\wedge \bd_\mathrm{Cou} A\right) + \bd \int_{\pp\Sigma} \tr(e_\pp \, \Upsilon^u_{\mathrm{Cou},\pp}) .
\end{equation}
Since all the terms in this formula are basic, they are each a lift to $\S_{[f]}$ of an object defined on the super-selection sector $\uuS_{[f]}$. 
Remarkably this is a \emph{global} formula that holds over the \emph{entire} phase space of (irreducible) YM configurations. 

\begin{remark}
A local result similar to the one discussed in Remark \ref{rmk:sympl-model} can be obtained for $\omega_{[f]} = \pi_{[f]}^*\uuomegao$ by replacing the connection $\Upsilon_\mathrm{Cou}$ in Equation \eqref{eq:YM-omega[f]} with the adapted connection $\Upsilon_\sigma$ used in Remark \ref{rmk:sympl-model}:
\begin{equation*}
\omega_{[f]} \stackrel{\sigma}{\approx} \int_\Sigma \tr\left(\bd E_\sigma^k\wedge \bd A^k \right) + \bd \int_{\pp\Sigma} \tr( e_\pp \bd u^k (u^k)^{-1}) ),
\end{equation*}
where $u^k = u k^{-1}$. 
This formula in particular shows that\footnote{See also Equation \eqref{eq:S-fibration}.}
\[
\uuS_{[f]} \simeq_\loc T(\Acal/\G) \times \mathcal{O}_f.
\qedhere\]
\end{remark}

The drawback to the global nature of Equation \eqref{eq:YM-omega[f]} is that the 1-forms $\bd_\mathrm{Cou} E_\rad$, $\bd_\mathrm{Cou} A$, and $\Upsilon^u_{\mathrm{Cou},\pp}$ cannot in general be exact, a fact that hinders a clear identification of the degrees of freedom of YM in $\uuS_{[f]}$.
This difficulty can be circumvented in two ways: either by introducing a local trivialization as done in the previous Remark, or by focusing on the tangent space structure of $T_{\phi}\uuS_{[f]}$ as described by Equation \eqref{eq:YM-omega[f]}. In physical parlance, the latter option corresponds to focusing on the ``linearized'' degrees of freedom over a reference background.\

Thus, from Equation \eqref{eq:YM-omega[f]}, we recognise the linearized degrees of freedom of YM in $\uuS_{[f]}$ to be \cite{RielloGomes,RielloSciPost}: (\emph{i}) the $\Upsilon_\mathrm{Cou}$-horizontal, ``radiative'', degrees of freedom $(\bd_\mathrm{Cou} E_\text{rad},\bd_\mathrm{Cou}A)$, together with (\emph{ii}) the Coulombic degrees of freedom (C.d.o.f.), captured by the corner dressed connection $\Upsilon_{\mathrm{Cou},\pp}^u$.  Note that although the C.d.o.f. are parametrized by quantities over the corner $\pp\Sigma$, they \emph{nonlocally} encode the Coulombic electric field \emph{throughout} $\Sigma$. 

Whereas the radiative degrees of freedom can be specified independently of the flux $f$, the C.d.o.f. depend on the choice of $\S_{[f]}$ as testified by their dependence on $u$. To better understand them, it is convenient to first write the dressed connection as
\[
\Upsilon^u_{\mathrm{Cou},\pp} = \Ad(u)\cdot( \UpsilonCou\vert_{\pp\Sigma} - u^{-1} \bd u).
\]
It is easy to see that the dressed connection is basic (in particular horizontal), and hence vanishes on any vertical vector field. This gives physical meaning to the \emph{difference}\footnote{In the literature, observables that arise from the fact that only the \emph{difference} between two co-varying quantities has physical meaning are sometimes referred to as ``relational'' \cite{GomesPhD}.} of the field space (principal) connection $\UpsilonCou\vert_{\pp\Sigma}$ (associated to the action of $\G$ on $\X$) and the flux orbit connection $u^{-1} \bd u$ associated to the transitive action of $\G$ on its co-adjoint orbits $\mathcal{O}_f$.

To test this point of view it is convenient to analyze the effects of a transformation of $(A,[u],E_{\text{rad}})$ which looks like a gauge transformation $g$ on the fluxes, but does \emph{not} transform the remanining d.o.f. accordingly:
\[
(A, \, [u], \, E_\text{rad}) \ \mapsto \ (A ,\, [ug], \, E_\text{rad} )\qquad g\in \G, \; g \not\in \G_f.
\]
These transformations, and their rather subtle Hamiltonian properties, were introduced and studied in \cite[Sections 4.3, 4.4, and 4.12]{RielloSciPost}, where they are called \emph{flux rotations}.
Comparison with Equation \eqref{eq:gaugetrC} readily shows that this map in general \emph{fails} to be a gauge transformation: indeed, by equivariance, a gauge transformation cannot act on $E_\pp = u^{-1}e_\pp u$ while keeping $(A;E_{\rad})$ fixed.
Rather, flux rotations implement \emph{physical} changes of the configuration that result in a modified Coulombic component $\star d_A\varphi$ of the electric field as computed by \eqref{eq:StoP-YM}. This change can be  quantified in a gauge invariant manner e.g. by computing the energy carried by the (Coulombic) electric fields associated to the two above configurations.  

\begin{remark}[A flat $\Upsilon$ and dressed electric fluxes]
\textit{If} the connection is flat, $\Upsilon = k^{-1}\bd k$, then one can as well perform the opposite dressing i.e.\ introduce $(u^{-1}\bd u)^k \doteq k (u^{-1}\bd u) k^{-1} - \bd k k^{-1}$ as well as the dressed, gauge invariant, corner electric fields $E_\pp^k = k^{-1} u^{-1} e_\pp u k$---i.e.\ introducing $[u^k] = [uk^{-1}]$. This ``inverse'' dressing, which is not always available, is however formally similar to the one mentioned in \cite{DonnellyFreidel16}. See the end of Section \ref{sec:runex-cornerdata} for more on this.
\end{remark}

~

\subsubsection{Abelian Case}
Whenever $G$ is Abelian, $\Upsilon_\mathrm{Cou}$ (Definition \ref{def:Coulombconnection}) is obtained by solving a Laplace equation with Neumann boundary conditions,
\begin{equation}
 \label{eq:Coulombgauge}
\Upsilon_\mathrm{Cou} = \bd \varsigma, \qquad
\begin{cases}
\Delta \varsigma = d^\star A &\text{in }\Sigma\\
\mathsf{n}d \varsigma = \mathsf{n} A & \text{at }\pp\Sigma.
\end{cases}
\end{equation}
The equation for $\varsigma\in C^\infty(\Sigma,\fg)$ shows that $A$ itself can be decomposed orthogonally w.r.t.\ $\langle\langle\cdot,\cdot\rangle\rangle$ (Appendix \ref{app:Hodge}) into 
\[
A = A_\text{rad} + d \varsigma
\]
in a manner fully analogous to $E = E_\text{rad} + \star d \varphi$.

From the above definition of $\varsigma=\varsigma(A)$ one sees that, similarly to $E_\text{rad}$, $A_\text{rad}$ is also in the kernel of $d^\star$ and satisfies a corner condition. That is, when written as order-1 tensors $X^i$, both $E_\text{rad}$ and $A_\text{rad}$ satisfy the following: $\nabla_i X^i = 0 = n_i X^i\vert_{\pp\Sigma}$. In other words, $A_\text{rad}$ is $A$ expressed in Coulomb gauge (generalized to the case with corners). It is then easy to verify that, in the Abelian case: 
\[
\bd_\mathrm{Cou} A = \bd A_\text{rad}, \qquad \bd_\mathrm{Cou} E_\text{rad} = \bd E_\text{rad}, \quad\text{and}\quad \mathbb{F}_\mathrm{Cou} = 0.
\]

Hence, in Maxwell theory, using the parametrization of $\C\subset\X$ introduced in Equation \eqref{eq:CtoP-YM} we find (cf.\ Proposition \ref{prop:omegadecomp})
\begin{equation}
 \label{eq:AbomegaH}
\iota_\C^*\omega = \int_\Sigma  \tr(\bd E_\text{rad} \wedge \bd A_\text{rad} )
-(-1)^{\dim\Sigma}
\int_{\pp\Sigma}\tr(\bd E_\pp \wedge \bd \varsigma_\pp), \qquad \text{Abelian}
\end{equation}
where $E_\pp\in\mathcal{E}_\pp^0$, and $\varsigma_\pp = \iota_{\pp\Sigma}^*\varsigma$ is most directly interpreted as an element of $\fGred \simeq C^\infty(\pp\Sigma,\fg)$.

Looking at the second stage reduction, the Abelian case is once again much simpler. In this case the orbits $\mathcal{O}_f$ are zero-dimensional (points) and  the Coulombic electric field is completely fixed by the choice of a superselected flux. In other words,  configurations in the superselection sector $\S_{[f]}$ are fully determined by $(A, E_\text{rad})\in\Acal\times\mathcal{H}$ since then, using Equation \eqref{eq:YM-fluxdensity},\footnote{In this case although $\varphi(e_\pp)$ is not uniquely determined by $e_\pp$---it is only up to an element of the stabiliser i.e.\ up to constant over $\Sigma$---the Coulombic electric field $\star d\varphi(e_\pp)$ \textit{is} uniquely determined by $e_\pp$.}
\[
\iota_{[f]}:\S_{[f]} \hookrightarrow \X, \quad (A, E_\text{rad}) \mapsto (A,E) = (A , E_\text{rad} + \star d\varphi(e_\pp))
\ \text{with} \ f = \int_{\pp\Sigma} \tr(e_\pp \ \cdot).
\]

Assuming, as in the previous section, that $P\to\Sigma$ is a trivial principal bundle for $\fg = (\mathbb R,>0)$, we have that the Abelian Coulomb connection is flat and exact and therefore, in this case, $\omega^{\pp,\mathrm{Cou}}_{[f]}=0$ and thus (cf.\ Equation \eqref{eq:AbomegaH} as well as Equation \eqref{eq:YM-omega[f]})\footnote{In the Abelian case, the orbit being a point, the dressing is trivial and the Coulombic contribution to $\omega_{[f]}$ in Equation \eqref{eq:YM-omega[f]} becomes:
$$
\bd \int_{\pp\Sigma} \tr(e_\pp \Upsilon^u_{\mathrm{Cou},\pp}) \stackrel{\text{Ab.}}{=} \int_{\pp\Sigma} e_\pp \bd \Upsilon_\mathrm{Cou} = \int_{\pp\Sigma} e_\pp  \mathbb{F}_\mathrm{Cou} = 0.
$$}
\[
\omega_{[f]} = \int_\Sigma \tr(\bd E_\text{rad} \wedge \bd A_\text{rad}),
\]
which is basic w.r.t.\ $\S_{[f]}\to\uuS_{[f]}$ and $\uuS_{[f]}$ is parametrized by the radiative fields. (See \cite{RielloEdge} for a pedagogical discussion.)

Furthermore, recalling that in the Abelian case over a trivial bundle the Coulomb connection $\Upsilon_\mathrm{Cou}=\bd\varsigma$ corresponds to the Coulomb gauge-fixing (suitably generalized to the presence of corners, Section \ref{sec:runex-firststage}), one can thus provide an explicit expression for the Poisson structure on $\uuS_{[f]}$ in terms of a Dirac bracket \cite{henneaux1992quantization}. On $\X=T^\vee\Acal$, this bracket is defined by
\[
\{ \hat E^i(x) , A_j(y) \}_D =  H(x,y)^i{}_j
\]
where $H(x,y)^i{}_j$ is the ``radiative'' projection kernel associated to $\mathcal{E}\to\mathcal{H}$. By duality this also projects $A$ onto the corresponding $A_\text{rad}$. Formally, one could write $H(x,y)^i{}_j = \delta(x,y)\delta^i{}_j - \nabla^i \Delta^{-1}(x,y) \nabla_j$---provided that $\Delta^{-1}(x,y)$ is understood as the Green's function appropriate to the boundary conditions of \eqref{eq:Coulombgauge}.
It is immediate to check that this bracket projects down from $\X$ to $\uCo=\C/\G$.

Of course, the bivector in $\mathfrak{X}^{\wedge2}(\X)$ associated to $\{\cdot,\cdot\}_D$ is
\[
\Pi_D = \int_\Sigma dx dy \, H(x,y)^i{}_j \frac{\delta}{\delta \hat E^i(x)} \wedge \frac{\delta}{\delta A_j(y)}.
\]

\medskip

To conclude, we notice that the final discussion of Section \ref{sec:runex-firststage} on the smoothness and symplecticity of $(\uCo,\uomegao)$, can be adapted step by step to discuss the smoothness and symplecticity of $(\uuS_{[f]},\uuomegao_{[f]})$. It is enough to replace $\fGo$ and $V_\circ = \rho(\C \times \fGo)$ by $\fG$ and $V_{[f]} = \rho(\S_{[f]}\times \fG)$, and to swap Dirichlet for Neumann boundary conditions when the equivariant Hodge--de Rham decomposition of $\Omega^1(\Sigma,\fg)$ is required (cf.\ Proposition \ref{prop:HodgeEqui} and Remark \ref{rmk:HodgeInYM}).

\section{Corner data}\label{sec:cornerdata}

So far we have looked at the fully-reduced phase space, and found that it is a collection of symplectic spaces that we called ``superselection sectors'' (SSS). Recall that these SSS can be identified with the symplectic leaves of the Poisson structure on $\uuC$.
The condition $\bHo=0$ restricts the set of fluxes that are compatible with the bulk reduction, and we have seen that generally only the inclusion $\F\subset \Foff$ holds.

In this section we look for a universal geometric structure associated to the corner that 
generalises the notion of superselection sector off shell.

In order to do this, we first use the flux map $\h$ to construct an auxiliary symplectic space of corner data $(\A_\pp,\varpi_\pp)$ from the restriction to the corner of the sections comprising the bulk algebroid $\A$.
Then, under a suitable regularity hypothesis (Assumption \ref{ass:AnnIm}), we show that the corner data $\A_\pp$ is itself a vector bundle $p_\pp:\A_\pp\to\X_\pp$ over a space of corner field configurations $\X_\pp$ with fibre $\fGpoff$ isomorphic to $\fGredoff$ (Definition \ref{def:fGp}).
Furthermore, we show that this vector bundle inherits an action Lie algebroid structure $\rho_\pp: \A_\pp \to T\X_\pp$ from the bulk anchor map $\rho:\A \to T\X$, and we prove that this induces a unique (partial) Poisson structure $\Pi_\pp$ on the space $\X_\pp$. We call the space $(\X_\pp,\Pi_\pp)$ the space of off-shell superselection sectors.

There are at least two natural procedures to induce field-theoretic data on the corner. One of these procedures, detailed in Section \ref{sec:offshellcorner1}, focuses on the properties of the flux map $h$. The other procedure---which is more general as it does only require the existence of a constraint form, and not that of a flux map\footnote{Note, the constraint form $\bHo$ is a meaningful object also in Lagrangian field theories with local symmetries which fail to be Hamiltonian and thus fail to possess a well-defined momentum map, see Appendix \ref{app:covariant}.}---introduces the corner symplectic data by treating the constraint form $\bHo$ as a Lagrangian density on the field space $\A = \X \times \fG$ over $\Sigma$.\footnote{This procedure is essentially a degree $0$-version of the BV-BFV construction of the corner phase space (in degree 1), with $\fG$ interpreted here as the zero-degree ``evaluation'' of the ghosts \cite{CMR1,RejznerSchiavina}, cf.\ Definition \ref{def:cornershift}. More generally, this relates to the ideas of Kijowski and Tulczijew \cite{KT1979}, applied one codimension higher.}

The flux-based procedure naturally connects to our reduction results from the previous sections. Although the two procedures need not coincide a priori, in Section \ref{sec:ultralocality} we show that in the case of ultralocal data $(\X,\bom)$ on $\Sigma$, they do.

\subsection{Off-shell corner data} \label{sec:offshellcorner1}
In Sections \ref{sec:flux} and \ref{sec:BulkCornerGrps} we characterized the constraint gauge ideal as the flux-annihilating ideal $\fGo = \fN$, and the flux gauge algebra as the corresponding quotient $\fGred = \fG/\fGo$. In order to obtain a direct relationship between $\fGo$ and the constraint space $\C$ (cf.\ Lemma \ref{lemma:coisoC}, and Propositions \ref{prop:symplclosVo} and \ref{prop:symplecticclosurenogo}), it was important to work on-shell. However, since the conditions defining $\C$ are typically non-local, this hinders a local description (see Remark \ref{rmk:nonlocality} and Section \ref{sec:runex-firststage} for an explicit example). In particular, $\fGred$ has so far been defined as the quotient between 
$\fG$ and $\fGo$, and---beside the observation that $\fGred$ is trivial whenever $\pp\Sigma=\emptyset$---it is not clear a priori whether and in which sense $\fGred$ is a Lie algebra genuinely supported on $\pp\Sigma$ (see Definition \ref{def:LocLieAlg}).
The goal of the forthcoming discussion is to fill this gap.

We start by revisiting the definitions of the off-shell analogues of the flux annihilator $\fNoff$ (Definition \ref{def:frakN}) and flux gauge algebra $\fGredoff = \fG/\fNoff$ (Definition \ref{def:fGp}). The upshot is that
\begin{enumerate}[label=(\roman{*})]
    \item $\fGredoff$ is canonically isomorphic to a Lie algebra supported on $\partial \Sigma$, named $\fGpoff$, which in all examples we consider is also \emph{local} (Definition \ref{def:LocLieAlg}).
    \item $\fGred$ will be described as the quotient of $\fGredoff\simeq\fGpoff$ by a Lie ideal and is therefore also isomorphic to a Lie algebra supported on $\partial\Sigma$ (Definition \ref{def:LocLieAlg}).
\end{enumerate}

Our strategy will employ a presymplectic reduction procedure for an auxiliary structure over the restriction of fields to $\partial\Sigma$.
To make this statement precise, we observe that, when restricting sections of a fibre bundle to the boundary, it is sometimes necessary to retain information of transverse derivatives. Formally this means looking at jets of sections of the fibre bundle over a thin collar of $\partial \Sigma$, evaluated at $\partial\Sigma \times \{0\}$. (We follow the constructions of \cite{CMRCorfu} here.) In local field theories, we typically work with a finite jet order, which means that restriction simply requires one to supplement sections of the induced bundle with a finite number of transverse derivatives.

\begin{definition}[Off-shell corner data]\label{def:cornerconf}
The space of \emph{pre-corner data} $\tA_{\partial \Sigma}$ is given by transverse jets of sections\footnote{This means that we are looking at equivalence classes of sections whose \emph{transverse} derivative, i.e.\ along the collar direction, coincide at a point. The collar neighborhood of the boundary is needed to ensure the coordinate independence of the construction. Observe as well that, for a given vector bundle $W\to \Sigma$, the infinite jet bundle $J^\infty W$ is a Fr\'echet manifold. Its space of sections is again Fr\'echet \cite[Prop. 30.1]{kriegl1997convenient}.} in a collar neighborhood $\partial\Sigma \times [0,\epsilon)$, evaluated at $\partial\Sigma\times\{0\}$.

Then, let $\varpi \in \iloc^{1,1}(\X\times\fG)\subset\iloc^2(\A)$ be defined as
\[
\varpi \doteq \langle \cbd_H h, \cbd_V \xi\rangle,
\]
and further define:
\begin{enumerate}
    \item $\wtpi_{\partial}\colon\A\to \tA_{\partial }$ to be the surjective submersion given by restriction,
    \item the pre-corner 2-form $\widetilde{\varpi}_{\partial }$ on $\tA_{\partial \Sigma}$, such that $\wtpi_\pp^*\widetilde{\varpi}_{\partial } = \varpi$.
\end{enumerate}
Moreover, whenever $\ker(\widetilde{\varpi}^\flat_{\partial })$ is regular, we define the space of \emph{off-shell corner data} $(\A_{\partial},\varpi_{\partial})$ as the presymplectic reduction of $(\tA_{\partial},\widetilde{\varpi}_{\partial})$, i.e.
\[
    \A_{\partial } \doteq \widetilde\A_{\partial}/\ker(\widetilde{\varpi}^\flat_{\partial}),
    \qquad
    \widetilde{\varpi}_\partial \doteq \pi_\text{pre}^*{\varpi}_\partial
\]
where we denoted $\pi_\text{pre}\colon \widetilde\A_{\partial \Sigma} \to \A_{\partial \Sigma}$ the presymplectic reduction map.
Finally, we also introduce
\[
    \pi_{\partial}: \A \to \A_{\partial}
\]
as the composition of $\pi_{\partial} \doteq \pi_\text{pre}\circ\widetilde\pi_{\partial}$. Summarising:
\[
\xymatrix@C=1cm{
(\A ,\varpi) \ar@{->>}[r]^-{\wtpi_\partial} \ar@{->>}[rd]_-{\pi_\partial}
& (\tA_\partial, \widetilde{\varpi}_\partial)  \ar@{->>}[d]^-{\pi_\text{pre}}\\
& (\A_\partial,\varpi_\partial)
}
\]
\end{definition}

\begin{remark}
Observe that $\varpi$, $\widetilde{\varpi}$, and $\varpi_\pp$ are all manifestly closed.  In fact, they are exact as a consequence of the linearity of the local momentum map---e.g. $\varpi = \cbd \langle \cbd h, \xi\rangle$.
\end{remark}

\begin{remark}\label{rmk:projectionsplitting1}
Since we work with $p:\A=\X\times\fG\to\X$ an \emph{action} Lie algebroid, we have a finer decomposition of the (pre-)corner data. In fact, the restriction
splits as $\tA_{\partial}=\tX_\partial\times\tfGp$,
where $\tX_\partial$ (resp. $\tfGp$) denotes the restriction of sections in $\X$ (resp. $\fG$) and their transverse jets. This comes with a splitting of the projection $\wtpi_\partial=\wtpi_{\partial,\X} \times \wtpi_{\partial,\fG}$.
For future reference we also introduce the following notation. First we introduce the projection $\wt{p}:\tA_{\partial}=\tX_\partial\times\tfGp \to \tX$; then, considering the associated canonical flat connections, we introduce the following horizontal/vertical splits
\[
T\A = H\A \oplus V\A,
\quad
T\wt{\A}_\pp = H\wt{\A}_\pp \oplus V\wt{\A}_\pp,
\]
where $H_{(\phi,\xi)}\A\simeq T_\phi\X$ and $V_{(\phi,\xi)}\A \simeq T_\xi\fG \simeq \fG$---and similarly for the corresponding pre-corner data.
\end{remark}

In order to characterise $\A_\pp$, we will need the following notions.

\begin{definition}[Off-shell flux annihilating locus]\label{def:fluxannlocus}
Let $\bd \h \in \iloc^1(\A)$, with $\h$ the off-shell flux map of Assumption \ref{ass:setup}. The \emph{off-shell flux annihilating locus} in $\A$ is
\[
\mathsf{N}^\off \doteq \{ (\phi,\xi)\in\A : \bd \h(\phi,\xi) \equiv \langle \bd h(\phi), \xi\rangle =0 \}.
\qedhere
\]
\end{definition}

\begin{remark}\label{rmk:fNphi}
Recall $p:\A\to\X$. Since $\bd \h^{\text{co}}$ can be seen as a linear map  $\bd\h^{\text{co}}:\fG \to \iloc^1(\X)$, it is clear that
\[
\fN_\phi \doteq \mathsf{N}^\off \cap p^{-1}(\phi) \subset \fG
\]
is a vector space coinciding with the kernel of $\mathrm{ev}_\phi \circ \bd h : \fG \to T^*_\phi \X$. Recall the Definition \ref{def:frakN} of $\fNoff = \ker(\bd \h^{\text{co}} )$.
Then, from the previous observation one also deduces that
\[
\fNoff \equiv \bigcap_{\phi\in\X} \fN_\phi .
\]
This construction and the following Assumption should be compared with Assumption \ref{assA:isotropy} and Remark \ref{rmk:isotropybundle3}.
\end{remark}

\begin{assumption}\label{ass:AnnIm}
The flux annihilating locus $\mathsf{N}^\off$ is a trivial sub-vector bundle of the trivial bundle $p:\A\to\X$, $\A = \X \times \fG$. We denote it
\[
^hp : \mathsf{N}^\off \to \X.
\qedhere
\]
\end{assumption}

From this assumption and the previous remark, it follows that for all $\phi\in\X$, 
$\fN_\phi = \fNoff$,
and therefore
\[
\mathsf{N}^\off = \X \times \fNoff \subset \A.
\]
Summarising, thanks to Assumption \ref{ass:AnnIm}, we have the following embedding relation between (trivial) fibre bundles:\footnote{The top and bottom lines denote fibre bundles according to the usual notation $F\to E \xrightarrow{\pi} X$, for $F$ the fibre, $X$ the base, and $\pi$ the projection.}
\[
\xymatrix@R=.25cm{
\fG \ar[r] & \A  \ar[rd]^-{p} \\
&& \X\\
\fNoff \ar[r]\ar@{^(->}[uu] & \mathsf{N}^\off  \ar@{^(->}[uu] \ar[ru]_-{^hp}
}
\]

\begin{remark}
Assumption \ref{ass:AnnIm} is satisfied in ``standard'' scenarios. For example, whenever $\X$ is a linear/affine space and $h$ is affine---since then $\fN_\phi\subset\fG$ is the same for all $\phi$. (See, Section \ref{sec:runex-cornerdata} for an explicit example.)
\end{remark}

\begin{remark}
If $\A$ were a (not necessarily trivial) Lie algebroid $p:\A\to\X$ with connection $\check{\mathbb{D}}$, one could rephrase assumption \ref{ass:AnnIm} as saying that $\mathsf{N}^\off$ is a sub-vector bundle of $p:\A\to\X$ generated by $\check{\mathbb{D}}$-constant sections. (Cf. Remarks \ref{rmk:isotropybundle1}, \ref{rmk:isotropybundle2} and \ref{rmk:isotropybundle3} for closely related comments on the isotropy bundle/locus.)
\end{remark}

\begin{lemma}[Restriction kernel]\label{lemma:precornerkernel}
Let $\wtpi_{\partial,\fG}$ be defined as in Remark \ref{rmk:projectionsplitting1}. Then, the \emph{restriction kernel}
\[
\mathfrak{K} \doteq \ker(\wtpi_{\partial,\fG})\subset\fG
\]
is a \emph{just} constraining ideal (Definition \ref{def:constraintideal}) with $\fG_c \subset \mathfrak{K} \subset\fNoff \subset \fN = \fGo$.
\end{lemma}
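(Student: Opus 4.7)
The plan is to verify each inclusion and structural property in turn, and then conclude ``just constraining ideal'' by invoking Proposition \ref{prop:Nsubalgebraprop} together with Corollary \ref{cor:justconstr}. Concretely, I would first show $\mathfrak{K}$ is a Lie ideal, then establish the chain $\fG_c\subset\mathfrak{K}\subset\fNoff$ by direct inspection, and finally combine with the already-proven chain $\fNoff\subset\fN=\fGo$ (Remark \ref{rmk:Nhco} and Theorem \ref{thm:fGo}).

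To see that $\mathfrak{K}$ is a Lie ideal, I would use the locality of the bracket on $\fG$ (Definition \ref{def:LocLieAlg}): since $[\eta,\xi]$ is a local expression in the jets of $\eta$ and $\xi$, and linear in the jets of $\xi$, whenever $\xi$ together with its transverse jets at $\partial\Sigma$ vanishes, so does $[\eta,\xi]$ together with its transverse jets, for every $\eta\in\fG$. Thus $[\fG,\mathfrak{K}]\subset\mathfrak{K}$. The inclusion $\fG_c\subset\mathfrak{K}$ is then immediate: elements of $\fG_c$ have support in a compact subset of the open interior $\mathring\Sigma$, so all their transverse jets vanish identically on a neighborhood of $\partial\Sigma$, and in particular their image under $\wtpi_{\partial,\fG}$ is zero.

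For $\mathfrak{K}\subset\fNoff$, I would exploit the fact that $d\bh$ is a boundary form in $d\oloc^{\mathrm{top}-1,0}(\Sigma\times\X,\fG^*)$. By Stokes' theorem,
\[
\langle h(\phi),\xi\rangle = \int_{\Sigma}\langle d\bh(\phi),\xi\rangle = \int_{\partial\Sigma}\iota_{\partial\Sigma}^{\ast}\langle\bh(\phi),\xi\rangle,
\]
which by locality depends only on the (finite-order) transverse jets of $\phi$ and $\xi$ along $\partial\Sigma$. Hence if $\xi\in\mathfrak{K}$, the integrand vanishes pointwise on $\partial\Sigma$ for every $\phi\in\X$, so $\langle h^\text{co},\xi\rangle\equiv 0$ on $\X$. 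Differentiating, this gives $\xi\in\ker(\bd h^{\text{co}})=\fNoff$ (Theorem \ref{thm:frakN}\ref{prop:frakNi}). Combined with $\fNoff\subset\fN=\fGo$, we obtain the full chain $\fG_c\subset\mathfrak{K}\subset\fNoff\subset\fN=\fGo$.

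With the chain established, the final claim comes ``for free'': since $\mathfrak{K}$ is a Lie ideal satisfying $\mathfrak{K}\subset\fN$, Proposition \ref{prop:Nsubalgebraprop} provides a constraining, equivariant, local momentum form $\bJ_{\mathfrak{K}\smbullet}=\iota_\mathfrak{K}^\ast(\bHo+d\bh_\smbullet)$ and hence a constraining momentum map $J_\mathfrak{K}$; and since additionally $\fG_c\subset\mathfrak{K}\subset\fN$, Corollary \ref{cor:justconstr} ensures $\C=J_\mathfrak{K}^{-1}(0)$, i.e.\ $\mathfrak{K}$ is just. The only step requiring genuine attention is the middle inclusion $\mathfrak{K}\subset\fNoff$, where one must be careful that the ``restriction plus transverse jets'' that defines $\mathfrak{K}$ is taken to sufficiently high order to control \emph{all} jet-dependencies of $\bh$ at $\partial\Sigma$; this is automatic because $\bh$ is a local form of finite jet order, and the pre-corner restriction $\wtpi_\partial$ is defined precisely to retain enough transverse jets to capture local boundary behaviour.
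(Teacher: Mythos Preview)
Your proposal is correct and follows essentially the same approach as the paper: establish that $\mathfrak{K}$ is an ideal via locality of the bracket, verify $\fG_c\subset\mathfrak{K}\subset\fNoff$ by direct inspection of supports and of the boundary integral defining $h$, and then invoke Corollary~\ref{cor:justconstr} (together with Proposition~\ref{prop:Nsubalgebraprop}) to conclude ``just constraining ideal''. The only minor redundancy is that once you have shown $\langle h^{\mathrm{co}},\xi\rangle\equiv 0$ on $\X$ you can conclude $\xi\in\fNoff$ directly via $\fNoff=\ker(h_\smbullet^{\mathrm{co}})$ without passing through $\bd h^{\mathrm{co}}$.
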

\begin{proof}
By the definition of $\wtpi_{\pp,\fG}$, observe that an $\eta \in \fG = \Gamma(\Sigma,\Xi)$ is an element of $\fK$ iff $(j^\infty\eta)(x)=0$ for all $x\in\pp\Sigma$.\footnote{Note that restriction of sections commutes with jet evaluations for \emph{tangent} directions. Here, we first evaluate the jets and then look at their values at an $x\in\pp\Sigma\subset\Sigma$, so that $(j^\infty\xi)\vert_{\pp\Sigma} = 0$ iff the restriction of $\xi$ vanishes at $\pp\Sigma$ \emph{and so do all the transverse jets}.}

By the locality of $[\cdot,\cdot]:\fG\times\fG\to\fG$, for every $\xi\in\fG$, $[\xi,\eta]$ descends to a linear function of $j^k\eta$, for $k$ large enough. Then, if $\eta\in\fK$, by the previous observation it follows that $j^\infty\eta$ vanishes at $\partial\Sigma$ and therefore so does the restriction to $\partial\Sigma$ of $j^\infty[\xi,\eta]$. From this we conclude that $[\xi,\eta]$ also belongs to $\fK$, whence $\mathfrak{K}$ is an ideal. 

To prove that $\fK\subset \fG$ is also a just constraining subalgebra, according to Lemma \ref{lemma:fGc} and Corollary \ref{cor:justconstr} it suffices to show that $\fG_c \subset \fK \subset \fN=\fGo$.

Recall that $\fG_c$ is the ideal of $\fG$ of elements $\xi$ for which there exists a collar neighborhood $U_\xi\subset\Sigma$ of $\partial\Sigma$ that does not intersect the support of $\xi$. Therefore for such $\xi$'s $j^\infty\xi$ vanishes in the neighborhood $U_\xi$ of $\pp\Sigma$, and thus $\fG_c\subset \fK$.

Finally, to prove that $\fK\subset \fN=\fGo$ it is enough to show that $\fK\subset\fNoff= \ker(\h_\smbullet)$.
But this is obvious since for all $\phi$
\[
\langle \h_\smbullet(\phi), \xi\rangle = \int_{\Sigma}d\langle\bh_\smbullet(\phi),\xi\rangle
\]
which only depends on the restriction of $j^\infty\xi$ to $\pp\Sigma$ and therefore vanishes if $\xi\in\fK$.
\end{proof}

\begin{lemma}[Pre-corner algebroid]\label{lemma:precorneralg}
Let $\tA_\partial$ be the restriction of $\A$ to the corner $\partial\Sigma$, as above. Then, 
\[
\tA_\partial =\tX_\partial\times\tfGp
\]
and $\wtpi_\partial=\wtpi_{\partial,\X} \times \wtpi_{\partial,\fG}$ so that  $\tX_\partial = \wtpi_{\partial,\X}(\X)$ and $\tfGp = \wtpi_{\partial,\fG}(\fG)$.
Furthermore, $\wt{p}_\pp:\tA_\partial\to\wt{\X}_\pp$ carries a canonical action Lie-algebroid structure induced by that of $\A$ (Equation \eqref{eq:A}).
\end{lemma}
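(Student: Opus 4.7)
The plan is to transport the action Lie algebroid structure of $\A = \X\times\fG$ across the restriction map $\wtpi_\partial$, using the locality of the data. Since $\A$ is a product, the product structure of $\wtpi_\partial = \wtpi_{\partial,\X}\times\wtpi_{\partial,\fG}$ is automatic from the construction in Definition \ref{def:cornerconf} and Remark \ref{rmk:projectionsplitting1}: restriction (of sections together with finitely many transverse jets at $\pp\Sigma$) is evaluated independently on the $\X$- and $\fG$-factors. Hence $\tA_\pp = \tX_\pp \times \tfGp$ and $\wt p_\pp$ is the projection on the first factor, which makes $\tA_\pp \to \tX_\pp$ a (trivial) vector bundle with fibre $\tfGp \simeq \fG/\fK$ where $\fK = \ker(\wtpi_{\pp,\fG})$.

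The second step is to descend the Lie bracket on $\fG$ to a Lie bracket on $\tfGp \simeq \fG/\fK$. By Lemma \ref{lemma:precornerkernel}, $\fK$ is a Lie ideal of $\fG$, so the bracket quotients. Concretely, if $\eta\in\fK$ then $(j^\infty\eta)\vert_{\pp\Sigma}=0$; invoking locality of $[\cdot,\cdot]\colon\fG\times\fG\to\fG$ (Definition \ref{def:LocLieAlg}), the value of $[\xi,\eta]$ and of its transverse jets at any boundary point depends only on finitely many jets of $\eta$ at that point, and therefore vanishes. Consequently $\wtpi_{\pp,\fG}[\xi,\eta]$ depends only on $\wtpi_{\pp,\fG}\xi$ and $\wtpi_{\pp,\fG}\eta$, defining the bracket $\llbracket\cdot,\cdot\rrbracket_\pp$ on $\tfGp$.

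Next, I would construct the anchor. The anchor $\rho\colon\A\to T\X$ is local (Definition \ref{def:localaction}), so the evaluation $\rho(\phi,\xi)(x)$ depends only on a finite jet of $(\phi,\xi)$ at $x$; the associated transverse jets at $\pp\Sigma$ of $\rho(\phi,\xi)$ depend only on the corresponding transverse jets of $(\phi,\xi)$ at $\pp\Sigma$. This data is precisely recorded by $(\wtpi_{\pp,\X}\phi,\wtpi_{\pp,\fG}\xi)\in\tA_\pp$. Differentiating the restriction $\wtpi_{\pp,\X}$ then yields a well-defined, linear, bundle map $\wt\rho_\pp\colon \tA_\pp \to T\tX_\pp$ satisfying $(\bd\wtpi_{\pp,\X})\circ\rho = \wt\rho_\pp\circ\wtpi_\pp$. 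Because $\rho$ is the anchor of an \emph{action} Lie algebroid, so is $\wt\rho_\pp$: one reads off that $\wt\rho_\pp$ comes from the infinitesimal $\tfGp$-action on $\tX_\pp$ induced by the local $\fG$-action on $\X$, which is well defined precisely because $\fK$ acts trivially on $\pp\Sigma$-data.

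The remaining verifications---$C^\infty(\tX_\pp)$-linearity of $\wt\rho_\pp$, the Leibniz rule for the induced bracket on sections of $\tA_\pp$, and the Lie homomorphism property $[\wt\rho_\pp(\sigma),\wt\rho_\pp(\tau)] = \wt\rho_\pp\llbracket\sigma,\tau\rrbracket_\pp$---all follow by passing to the quotient the analogous identities for $\A$, using in an essential way that $\fK$ is an ideal and that both the bracket and the anchor descend along $\wtpi_\pp$. The main (and only nontrivial) obstacle is making precise that the locality and finite rank of $\rho$ and of $[\cdot,\cdot]$ guarantee that these operations truly descend to the finite transverse-jet data recorded by $\tA_\pp$; once that is established the action-algebroid structure is essentially forced by the product decomposition and by Lemma \ref{lemma:precornerkernel}.
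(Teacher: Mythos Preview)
Your proposal is correct and follows essentially the same approach as the paper: both arguments use the product decomposition from Remark~\ref{rmk:projectionsplitting1}, invoke Lemma~\ref{lemma:precornerkernel} to conclude that $\fK$ is an ideal so the bracket descends to $\tfGp\simeq\fG/\fK$, and then use locality of $\rho$ to descend the anchor via $\wt\rho_\pp = \cbd\wtpi_\pp\circ\rho\circ\wtpi_\pp^{-1}$. Your write-up is in fact slightly more explicit than the paper's about why the anchor depends only on the transverse-jet data of \emph{both} $\phi$ and $\xi$ at $\pp\Sigma$, but the underlying idea is identical.
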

\begin{proof}

To prove that a canonical action Lie agebroid structure $([\cdot,\cdot]^\sim_\pp, \wt\rho_\pp)$ exists on $\tA_\partial \to \tX_\partial$, we need first to show that $\tfGp$ carries a canonical Lie algebra structure.
This is a consequence of Lemma \ref{lemma:precornerkernel}, since $\tfGp\simeq \fG/\mathfrak{K}$ and $\mathfrak{K}$ is an ideal, and thus the bracket
\[
[\cdot,\cdot]^\sim_\partial:\tfGp\times\tfGp \to \tfGp,
\qquad
[\cdot,\cdot]^\sim_\partial = \wtpi_\partial\circ [\wtpi_\partial^{-1}(\cdot), \wtpi_\partial^{-1}(\cdot) ]
\]
is well defined.  
(Whenever unambiguous, we will write $[\cdot,\cdot]^\sim_\partial$ simply as $[\cdot,\cdot]$.)

Similarly, locality of the action $\rho$ (Definition \ref{def:localaction}) means that $\rho: \A=\X\times\fG \to T\X$ is local as a map of sections of fibre bundles. Therefore, one has $\rho(\mathfrak{K})\subset \ker(\cbd \wtpi)$ and the restriction of $\rho$ at $\partial\Sigma$ is a well-defined map 
\begin{align*}
\widetilde\rho_\partial : \tA_\partial = \tX_\partial\times \tfGp & \to T\tX_\pp,
\qquad \widetilde\rho_\partial = \cbd\wtpi_\partial\circ\rho\circ\wtpi_\partial^{-1}.
\end{align*}

Notice that the data $[\cdot,\cdot]^\sim_\partial$ and $\widetilde{\rho}_\partial$ define a unique action Lie algebroid structure on $\tA_\partial = \tX_\partial \times \tfGp \to \tX_\partial$ (cf.\ the text below Equation \eqref{eq:A}). If one restricts their attention to \emph{local} sections of $\A\to\X$ and $\tA_\partial\to\widetilde{\X}_\partial$ only, then the above discussion shows that the two action Lie algebroid structures are one the restriction of the other at $\partial\Sigma$ (ibidem).
\end{proof}

Recall that $\fG$ is a local Lie algebra, obtained as the sections of the vector bundle of Lie algebras $\Xi\to \Sigma$. Quotienting this by those sections whose jets have a vanishing restriction at the boundary, returns the Lie algebra $\tfGp$ of Lemma \ref{lemma:precorneralg}. Observe that the quotient leaves us with sections of the induced bundle, together with infinitely many transverse jets, which can be seen as local information but w.r.t.\ an infinite vector bundle over $\Sigma$. Typically, dependency on an infinite number of jets is eliminated with the next step: pre-symplectic reduction. The price to pay is that locality might be lost in the process, i.e.\ it might not be possible to identify $\A_\pp$ with the space of section of some bundle over $\partial\Sigma$.\footnote{In many cases of interest, however, locality turns out not to be lost.}

The main result of this section is that $\A_\pp$ is a (weak) symplectic manifold (cf.\ Definitions \ref{def:locsymp} and \ref{def:symplecticmanifolds}) additionally endowed with the structure of an action Lie algebroid $\rho_\pp\colon\A_\pp\to T\X_\pp$. This is detailed in the following theorem, which in particular shows that $\fGredoff$ is isomorphic to an algebra $\fGpoff$ \emph{supported on the corner} $\partial\Sigma$, in the sense of Definition \ref{def:LocLieAlg}.

\begin{theorem}[Corner algebroid]\label{thm:corneralgd}
Let $(\A_{\partial},\varpi_\pp)$ and $\pi_\pp:\A\to\A_\pp$ be as in Definition \ref{def:cornerconf}, and let Assumption \ref{ass:AnnIm} hold.
Consider the ``kernel'' distribution 
\[
^hK\X \doteq \{ \mathbb Y \in T\X :  \bi_{\mathbb Y}\bd h =0\}\subset T\X.
\]
Then, $\pi_\pp \doteq \pi_\text{pre}\circ \wtpi_\pp$ factorises as 
\[
\pi_\pp = \pi_{\pp,\X}\times\pi_{\pp,\fG}
\]
where $\pi_{\pp,\X}$ and $\pi_{\pp,\fG}$ implement the following quotients of $\X$ and $\fG$:
\[
\X_\pp \doteq \pi_{\pp,\X}(\X) = \frac{\wt{\X}_\pp}{\cbd\wtpi_{\pp,\X}({}^hK\X)}
\quad\text{and}\quad
\fGpoff \doteq \pi_{\pp,\fG}(\fG) = \frac{\tfGp}{\wt{\fN}_\pp^\text{off}}
\simeq \fGredoff
\]
where $\wt{\fN}_\pp^\text{off} \doteq \wtpi_{\pp,\fG}(\fNoff)$.
(The last isomorphism is canonical, see Equation (\ref{eq:Gppiso})). Therefore, $\A_\pp$ has the product structure 
\[
    \A_{\partial} = \X_\pp \times \fGpoff. 
\]

Moreover, $p_\pp:\A_\pp\to \X_\pp$ can be endowed with an action Lie algebroid structure with anchor $\rho_\pp:\A_\pp \to T\X_\pp$ canonically induced by $\rho:\A\to T\X$ through the formula
\[
\rho_\pp = \bd \pi_\pp \circ \rho \circ \pi_\pp^{-1} = \bd \pi_\text{pre} \circ \wt\rho_\pp \circ \pi_\text{pre}^{-1}
\]

Finally, seen as a function over $\A$, the flux map $h_\smbullet$ is basic with respect to $\pi_{\pp}$ and therefore defines a function $h_\pp$ over $\A_\pp$ via\footnote{Although $h_\pp$ depends on the choice of reference configuration $\phi_\smbullet$, we omit appending $\cdot_\smbullet$ to $h_\pp$ for ease of notation.}
\[
\h_\smbullet \doteq \pi_\pp^* h^{}_\pp.
\]
Similarly $k_\smbullet$ is basic with respect to $\pi_{\pp,\fG}$ and defines a CE cocycle over of $\fGpoff$ which we also denote by $k_\smbullet$.
Then, the map $h_\pp$ is equivariant with respect to the action of $\rho_\pp$ up to the cocycle $k_\smbullet$, i.e.\ 
\[
    \langle\L_{\rho_\pp(\xi_\pp)}h_\pp,\eta_\pp\rangle = \langle h_\pp, [\xi_\pp,\eta_\pp]\rangle + k_\smbullet(\xi_\pp,\eta_\pp)
    \qquad
    \forall \xi_\pp,\eta_\pp \in \fGpoff,
\]
and allows to write the weak symplectic structure over $\A_\pp$ as
\[
\varpi_\pp = \langle \cbd_H h_\pp , \cbd_V \xi_\pp \rangle
\]
where $\cbd=\cbd_H+ \cbd_V$ is now understood as the differential over $\A_\pp$ decomposed according to the $p_\pp$-horizontal/vertical split.
\end{theorem}

Before turning to the proof of the theorem, let us summarise diagrammatically the construction of the corner Lie algebra $\fGpoff\simeq\fGredoff$ (all arrows are surjective maps, with the bottom one a canonical isomorphism). 
\begin{equation}
    \label{eq:diagramreductiongroups}
\xymatrix@C=1cm{
\fG 
\ar@{->}[r]^-{\wtpi_{\partial,\fG}}
\ar@{->}[d]_-{ \cdot / \fNoff}
\ar@{->}[dr]^{\pi_{\partial,\fG}}
& \tfGp 
\ar@{->}[d]^-{\pi_{\text{pre},\fG}}\\
\fGredoff\; 
\ar@{->}_{\simeq}[r]
& \fGpoff
}
\end{equation}
Note: the algebras appearing on the left of this diagram are all defined over the bulk $\Sigma$, whereas the algebras on the right are defined over the corner $\pp\Sigma$.

\begin{proof}
Recall the  $p$-horizontal/vertical split (cf.\ Section \ref{sec:LieAlgPict} and Remark \ref{rmk:projectionsplitting1}),
\[
T\A = H\A \oplus V\A,
\]
induced by the trivial connection on $p:\A\to\X$, $\A = \X\times \fG$.
Accordingly, we denote the split of a vector $\mathbb Y\in T\A$ into its $p$-horizontal and $p$-vertical components as
\[
\mathbb Y= \mathbb Y_H + \mathbb Y_V \in H\A\oplus V\A.
\]

For $\A$ the action Lie algebroid $\X\times\fG\to\X$, one further has the identifications
\[
H\A = T\X\times \fG,\quad
V\A= \X\times T\fG \simeq \A \times \fG,
\]
thanks to which $\mathbb Y_V\vert_{(\phi,\xi)}$ can either be thought as an element of $T_\xi\fG$ or as an element of $\fG$.
In fact the latter statement is a general feature since, even for non-trivial vector bundles $p:\A\to\X$, the vertical distribution is $ V\A \simeq \A \times_\X \A$.

Thanks to Remark \ref{rmk:projectionsplitting1} and Lemma \ref{lemma:precorneralg} analogous considerations hold for $T\wt{\A}_\pp = H\wt{\A}_\pp \oplus V\wt{\A}_\pp$---in which case we will adopt a tilde notation, as in
\[
\wt{\mathbb Y}= \wt{\mathbb Y}_H + \wt{\mathbb Y}_V \in H\wt{\A}_\pp\oplus V\wt{\A}_\pp.
\]

In what follows, given $\wt{\mathbb Y}\in T\wt{\A}_\pp$ (resp. $\wt{\mathbb Y}_H$, and $\wt{\mathbb Y}_V$) we will denote by $\mathbb Y\in T\A$ (resp. $\mathbb Y_H$, and $\mathbb Y_V$) one of its lifts w.r.t.\ to $\cbd\wtpi_\pp$.
These always exists in view of the surjectivity of $\cbd\wtpi_\pp$, which also implies that
\begin{equation}\label{eq:kerneleq}
    0=\bi_{\wt{\mathbb{Y}}}\wt\varpi_\pp  \iff 0  =\bi_\mathbb{Y}\varpi =  \langle \mathbb{Y}_H \h,\cbd_V\xi\rangle +  \langle \cbd_H \h,\mathbb{Y}_V\rangle
\end{equation}
for any lift $\mathbb Y$ of $\wt{\mathbb Y}$ (in the rightmost term we used the identification of $\mathbb{Y}_V\vert_{(\phi,\xi)} \in V_{(\phi,\xi)}\A \simeq T_\xi\fG \simeq\fG$).
Linear independence of the splitting of forms over $\A$ into $p$-vertical and $p$-horizontal implies that the two terms must vanish independently. 

Now, the first term above vanishes iff $\mathbb{Y}_H \h=0$. 
Observing that the condition $(\mathbb{Y}_H \h)(\phi,\xi)=0$ does not depend on $\xi$, we introduce the distribution $^hK\X\subset T\X$ defined by
\begin{equation}
 \label{eq:hKP}
^hK_\phi\X \doteq \{ \mathbb{Y} \in T_\phi\X:  (\mathbb{Y}\h)(\phi) = 0\} ,   
\end{equation}
and find
\begin{equation}\label{eq:hKAfacto}
\mathbb{Y}_H\in \ker(\varpi^\flat) \iff
\mathbb{Y}_H\in {}^hK\A \doteq {}^hK\X \times \fG \subset H\A.
\end{equation}

Analogously, focusing on the second term in $\bi_\mathbb{Y}\varpi$, we are led to introducing the distribution $^hN\A\subset V\A \subset T\A$, as
\[
^hN_{(\phi,\xi)}\A \doteq \{ \mathbb{Y}_V \in V_{(\phi,\xi)}\A: \langle \bd h(\phi),\mathbb Y_V\rangle = 0 \},
\]
so that 
\[
\mathbb{Y}_V\in\ker(\varpi^\flat)
\iff
\mathbb{Y}_V\in {}^hN\A .
\]

Comparing with Definition \ref{def:fluxannlocus}, we recognise that 
\[
^hN\A \simeq \A\times_\X {}(\mathsf{N}^\off)\subset V\A,
\]
as a subset of $V\A\simeq\A\times_\X \A$.
Using the triviality of $\A=\X\times \fG$, together with Assumption \ref{ass:AnnIm}, we have ${}^hN\A  \simeq \X \times \fG \times \fNoff$.
Summarising:
\begin{equation}\label{eq:hNAfacto}
    \mathbb{Y}_V\in\ker(\varpi^\flat)
\iff
\mathbb{Y}_V\in{}^hN\A \simeq \X \times \fG \times \fNoff \subset V\A.
\end{equation}

For later convenience we introduce the notation
${}^hN\A \simeq \X \times {}^hN\fG $, where  $^hN \fG \subset T\fG$ is
\begin{equation}\label{eq:laterconveniencehNG}
 {}^hN\fG \simeq \fG \times \fNoff.
\end{equation}
Using these results, Equation \eqref{eq:kerneleq} can be rephrased as 
\[
\wt{\mathbb{Y}}\in \ker(\wt\varpi_\pp^\flat) 
\iff
\mathbb{Y} \in {}^hK\A\oplus {}^hN\A.
\]
for any lift $\mathbb Y$ of $\wt{\mathbb Y}$. 
Thus, projecting along $\cbd\wtpi_\pp$, we obtain that the kernel of $\widetilde{\varpi}^\flat_{\partial}$ factorises as:
\[
\ker(\widetilde{\varpi}^\flat_{\partial}) = \ker_H(\widetilde{\varpi}^\flat_{\partial}) \oplus\ker_V\widetilde{\varpi}^\flat_{\partial},
\qquad
\begin{cases}
 \ker_H(\widetilde{\varpi}^\flat_{\partial})= \cbd\wtpi_{\pp}( {}^hK\A)\\
 \ker_{V}(\widetilde{\varpi}^\flat_{\partial})= \cbd\wtpi_{\pp}( {}^hN\A)
\end{cases}.
\]

Before quotienting $\wt\A_\pp$ by $\ker(\wt\varpi_\pp^\flat)$, which is involutive as a distribution owing to the closure of $\wt{\varpi}_\pp$, we note that its vertical and horizontal subdistributions also are, independently.\footnote{Note: since the trivial connection is manifestly flat, $H\wt\A_\pp$ and $V\wt\A_\pp$ are both involutive; the conclusion follows from $\ker_H(\wt\varpi_\pp^\flat)\subset H\wt\A_\pp$ and $\ker_V(\wt\varpi_\pp^\flat)\subset V\wt\A_\pp$. }

Therefore, when quotienting $\wt{\A}_\pp$ to obtain $\A_\pp$, thanks in particular to Equations \eqref{eq:hKAfacto} and \eqref{eq:hNAfacto} (which crucially relies on Assumption \ref{ass:AnnIm}), we find:
\begin{multline}\label{eq:Appquotients}
\A_\pp \doteq \frac{\wt\A_\pp}{ \ker(\wt\varpi_\pp^\flat) }
= \frac{\tX_\pp\times\tfGp}{\cbd\wtpi_{\pp}({}^hK\A)\times\cbd\wtpi_{\pp}({}^hN\A)}
\\= \frac{\tX_\pp}{\cbd\wtpi_{\pp,\X}({}^hK\X)}\times\frac{\tfGp}{ \cbd\wtpi_{\pp,\fG}({}^hN\fG)}
\doteq \X_\pp\times\fGredoff.
\end{multline}

Let us define the (factorized) reduction maps $\pi_{\text{pre},\X}$ (resp. $\pi_{\text{pre},\fG}$) as the maps that identify elements of $\tX_\pp$ (resp. $\tfGp$) which are connected by the flow of the vectors in $\bd\wtpi_{\pp,\X}({}^hK\X)\subset T\tX_\pp$ (resp. $\cbd\wtpi_{\pp,\fG}({}^hN\fG)\subset T\tfGp$), i.e.
\[
\X_\pp \doteq \frac{\tX_\pp}{\cbd\wtpi_{\pp,\X}({}^hK\X)} \doteq \pi_{\text{pre},\X}(\wt{\X}_\pp),
\qquad
\fGpoff \doteq \frac{\tfGp}{\cbd\wtpi_{\pp,\fG}({}^hN\fG)} 
\doteq \pi_{\text{pre},\fG}(\tfG_\pp),
\]
Thus, also $\pi_\partial \doteq \pi_\text{pre}\circ\wtpi_\partial$ splits,
\[
{\pi}_\partial={\pi}_{\partial,\X} \times {\pi}_{\partial,\fG}.
\]
Hence, $\A_\pp$ is a symplectic manifold with product structure $\A_\pp=\X_\pp\times \fGpoff$. 

To prove the canonical isomorphism of Lie algebras $\fGpoff\simeq\fGredoff$ it is enough to observe that---once again in view of the canonical isomorphism $^hN\fG \simeq \fG \times \fNoff$---the quotient of $\tfGp$ by $\bd\wtpi_{\pp,\fG}({}^hN\fG)$ is equivalent to the quotient of $\tfGp$ by 
\[
\wt\fN^\off_\pp \doteq \wtpi_{\pp,\fG}(\fNoff) \simeq \fNoff/\mathfrak{K}. 
\]
Indeed, using this observation, we compute
\begin{equation}\label{eq:Gppiso}
\fGpoff \doteq \frac{\wt\pi_{\pp,\fG}(\fG)}{\cbd\wtpi_{\pp,\fG}({}^hN\fG)}
=  \frac{\tfGp}{\wt{\fN}^{\off}_\pp }
\simeq \frac{ \fG/ \mathfrak{K}}{\fNoff/\mathfrak{K}}
\simeq \frac{\fG}{\fNoff} \doteq \fGredoff,
\end{equation}
where all the isomorphisms are canonical isomorphisms of Lie algebras (cf.\ Lemma \ref{lemma:precorneralg} and Lemma \ref{lemma:precornerkernel}).

Since $\varpi$ is linear in $\xi$ and since by the above construction $h_\smbullet$---seen as a function over $\A=\X\times\fG$---is basic w.r.t.\ $\pi_{\pp}$, there exists a $\h_\pp \in \Omega^0(\X_\pp\times \fGpoff)$ which is linear in the $\fGpoff$ entry, such that $h_\smbullet = \pi_\pp^*h^{}_\pp$.
Then, we have
\[
\varpi_\pp=\langle \cbd_H h_\pp,\cbd_V \xi_\pp \rangle,
\] 
where $\xi_\pp\in\fGpoff$ and $\cbd=\cbd_H+ \cbd_V$ is now understood as the differential over $\A_\pp$ decomposed according to the $p_\pp$-horizontal/vertical split. 

We now turn to show that the anchor map $\wt\rho_\pp : \wt\A_\pp \to T\wt{\X}_\pp$ (Lemma \ref{lemma:precorneralg}) descends through $\pi_\text{pre}$ to an anchor $\rho_\pp : \A_\pp \to T\X_\pp$. 
It is most convenient to adopt the equivalent viewpoint in which the anchor map on an action Lie algebroid is seen as an action of the Lie algebra on the base space, e.g. $\rho_\pp : \fGpoff \to \Gamma(T\X_\pp)$.
Then, since $\fGpoff = \tfGp /\wt\fN^\off_\pp$, we have that $\wt\rho_\pp$ descends to an action $\rho_\pp$  iff 
\[
\wt\rho(\wt\fN^\off_\pp) \subset \Gamma( \ker(\wt\varpi_\pp^\flat)).
\]
Since $\wt\rho(\wt\fN^\off_\pp)\subset \Gamma(H\wt{\A})$, using the surjectivity of $\cbd\wtpi_{\pp}$ and the definition of $\wt\rho$ (Lemma \ref{lemma:precorneralg}), it is enough to prove that $\rho(\fNoff) \subset {}^hK\X$ (Equations (\ref{eq:kerneleq}--\ref{eq:hKP}))---i.e.\ that for all $\eta\in\fNoff$, $\rho(\eta)\h = 0$.
In fact, this equation follows from the equivariance properties of $\h$ (Proposition \ref{prop:equi}), Theorem \ref{thm:frakN} ($\fNoff = \ker(\bd \h)$), and Proposition \ref{prop:adjequi} ($\fNoff\subset\ker(k_\smbullet)$). From this it is immediate to compute that for all $\eta\in\fNoff$ and $\xi\in\fG$: 
\[
\langle\L_{\rho(\eta)} h_\smbullet,\xi\rangle 
= \langle h, [\xi,\eta]\rangle + k_\smbullet(\xi,\eta)  
=-\langle\L_{\rho(\xi)} h_\smbullet,\eta\rangle 
=  - \bi_{\rho(\xi)} \langle \bd h_\smbullet,\eta\rangle =0,
\]
and thus conclude that $\A_\pp$ is a Lie algebroid.  
This result also shows that the cocycle $k_\smbullet$ descends to $\fGpoff\simeq\fGredoff$, so that
\[
    \langle\L_{\rho_\pp(\xi_\pp)}h_\pp,\eta_\pp\rangle = \langle h_\pp, [\xi_\pp,\eta_\pp]\rangle + k_\smbullet(\xi_\pp,\eta_\pp).
\qedhere
\]
\end{proof}

The following straightforward corollary gives us a corner-suported description\footnote{Observe that all Lie algebras appearing below are naturally supported on $\pp\Sigma$ when marked by the subscript $\pp$, while the other ones are defined as either subalgebras or quotients of $\fG$ and are therefore supported on $\Sigma$, according to our definition.} of the on-shell corner Lie algebra $\fGred$:
\begin{corollary}[of Theorem \ref{thm:corneralgd}]\label{cor:Fsubset}
The on-shell flux gauge algebra $\fGred\doteq\fG/\fN$ is canonically isomorphic to the following quotients of $\fGredoff\doteq \fG/\fNoff$ and $\fGpoff \doteq \pi_{\pp,\fG}(\fG)$: 
\[
\fGred \simeq \frac{\fGredoff}{\fN/\fNoff} \simeq \frac{\fGpoff}{\wt{\fN}_\pp/\wt{\fN}_\pp^{\off}}.
\]
where $\wt\fN^{(\off)}_\pp \doteq \wtpi_{\pp,\fG}(\fN^{(\off)})$. Hence, $\F=\mathrm{Im}(\iota_\C^*\uh)$ can be thought of as a subset of $(\fGpoff)^*$.
\end{corollary}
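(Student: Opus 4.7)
The proof is essentially a threefold application of the third isomorphism theorem, combined with the canonical isomorphism $\fGredoff \simeq \fGpoff$ already established in Theorem~\ref{thm:corneralgd} (Equation~(\ref{eq:Gppiso})). The first isomorphism $\fGred \simeq \fGredoff / (\fN/\fNoff)$ is immediate: since $\fNoff \subset \fN$ are both Lie ideals of $\fG$ (Theorem~\ref{thm:frakN} and Proposition~\ref{prop:Noffideal}), the third isomorphism theorem yields
\[
\fGred \doteq \fG/\fN \;\simeq\; \frac{\fG/\fNoff}{\fN/\fNoff} \;=\; \frac{\fGredoff}{\fN/\fNoff}.
\]

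For the second isomorphism, I will transport $\fN/\fNoff$ along $\fGredoff \simeq \fGpoff$ and identify its image with $\wt\fN_\pp/\wt\fN_\pp^\off$. Recall from Lemma~\ref{lemma:precornerkernel} that $\fK \subset \fNoff \subset \fN$, so the restriction maps are well defined. Using Lemma~\ref{lemma:precorneralg} and again the third isomorphism theorem,
\[
\wt\fN_\pp \doteq \wtpi_{\pp,\fG}(\fN) \simeq \fN/\fK,
\qquad
\wt\fN_\pp^\off \doteq \wtpi_{\pp,\fG}(\fNoff) \simeq \fNoff/\fK,
\]
whence
\[
\frac{\wt\fN_\pp}{\wt\fN_\pp^\off} \;\simeq\; \frac{\fN/\fK}{\fNoff/\fK} \;\simeq\; \frac{\fN}{\fNoff}.
\]
Combining the isomorphism of Equation~(\ref{eq:Gppiso}) with this identification, and checking (from the commuting diagram (\ref{eq:diagramreductiongroups})) that the isomorphism $\fGredoff \to \fGpoff$ maps $\fN/\fNoff$ onto $\wt\fN_\pp/\wt\fN_\pp^\off$, we obtain
\[
\fGred \;\simeq\; \frac{\fGredoff}{\fN/\fNoff} \;\simeq\; \frac{\fGpoff}{\wt\fN_\pp/\wt\fN_\pp^\off}.
\]

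For the last claim, recall from Lemma~\ref{lemma:Imh=flux} that $\F = \Im(\iota_\C^*\h_\smbullet) \subset \fGred^*_\loc \subset \fGred^*$. Since $\fGred$ arises as the quotient $\fGpoff / (\wt\fN_\pp/\wt\fN_\pp^\off)$, dualising gives a canonical injection
\[
\fGred^* \;\hookrightarrow\; (\fGpoff)^*
\]
identifying $\fGred^*$ with $\Ann(\wt\fN_\pp/\wt\fN_\pp^\off) \subset (\fGpoff)^*$; composing with $\F \hookrightarrow \fGred^*$ exhibits $\F$ as a subset of $(\fGpoff)^*$, as claimed.

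The only potential subtlety—and the step worth double-checking—is the compatibility of the isomorphism $\fGredoff \simeq \fGpoff$ with the subalgebra $\fN/\fNoff \subset \fGredoff$; but this is enforced by the factorisation $\pi_{\pp,\fG} = \pi_{\text{pre},\fG}\circ \wtpi_{\pp,\fG}$ of Theorem~\ref{thm:corneralgd}, which sends $\fN\subset\fG$ to $\wt\fN_\pp \subset \fGpoff$ and restricts to sending $\fNoff$ to $\wt\fN_\pp^\off$. No further analytic input is needed.
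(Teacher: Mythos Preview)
Your proof is correct and follows essentially the same approach as the paper's: both arguments rest on the third isomorphism theorem together with the canonical isomorphism $\fGredoff\simeq\fGpoff$ of Equation~(\ref{eq:Gppiso}) and the commuting diagram~(\ref{eq:diagramreductiongroups}). You are in fact more explicit than the paper, spelling out the intermediate identifications $\wt\fN_\pp^{(\off)}\simeq\fN^{(\off)}/\fK$ and the dualisation argument for $\F\subset(\fGpoff)^*$, which the paper leaves implicit. One small notational slip: in your final paragraph you write ``sends $\fN\subset\fG$ to $\wt\fN_\pp\subset\fGpoff$'', but $\wt\fN_\pp=\wtpi_{\pp,\fG}(\fN)$ lives in $\tfGp$, not $\fGpoff$; it is its image under $\pi_{\text{pre},\fG}$, namely $\wt\fN_\pp/\wt\fN_\pp^\off$, that sits in $\fGpoff$.
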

\begin{proof}
The first isomorphism follows from the fact that $\fNoff\subset\fN$ are both ideals, together with the definition $\fGred \doteq \fG/\fN$ (Theorem \ref{thm:frakN}, Definition \ref{def:fGp}). For the second isomorphism, start by noticing that the quotient on the rhs makes sense in view of Equation \eqref{eq:Gppiso} ($\fGpoff \doteq \tfGp / \wt\fN^\off_\pp$) and the fact that $\wt{\fN}_\pp^\off\subset\wt\fN_\pp \subset \tfGp$, which is nothing else than the image along $\wt\pi_{\pp,\fG}$ of the relation $\fNoff\subset \fN\subset\fG$. The isomorphism then follows from the diagram \eqref{eq:diagramreductiongroups}.
\end{proof}

The following gives us a sufficient condition (satisfied in many examples of interest) for the algebra $\fGpoff$ (which is generally only \emph{supported} on $\pp\Sigma$) to actually be $\pp\Sigma$-\emph{local} (cf.\ Definition \ref{def:LocLieAlg}):

\begin{proposition}\label{prop:localGredoff}
If $\bH$ is order $1$, then $\fGpoff$ is a local Lie algebra over $\partial\Sigma$.
\end{proposition}
\begin{proof}

By assumption, $\bH$ is order $1$ as a local momentum form. From the identity $\bH = \bHo + d\bh$ and the fact that $\bHo$ is order 0 by definition, it follows that $d\bh$ is order $1$. Therefore, up to $d$-closed terms whose integral over $\pp\Sigma$ would vanish, $\bh$ can be taken to be order 0 as a local map $\fG\to \oloc^{\text{top}-1,1}(\Sigma\times\X)$.

From this it follows that $\wt\varpi_\pp=\int_{\partial\Sigma}\langle\cbd_H \bh,\cbd_V\xi\rangle$ is $C^\infty(\pp\Sigma)$-linear in $\cbd_V\xi$, and therefore that the equations defining the vertical part of the kernel,
\[
    \bi_{\wt{\mathbb{Y}}_{V}}\wt\varpi_\pp = \int_{\partial \Sigma}\langle\cbd\bh,\wt{\mathbb{Y}}_{V}\rangle = 0 ,
\]
are \emph{ultralocal} equations for $\wt{\mathbb{Y}}_{\xi}$ that do not depend on $\xi\in\fG$.

Now, $\wt\fG$ is the space of sections of the induced bundle\footnote{Notice that in principle $\wt{\fG}$ contains all higher transverse jets. However, since $\bh$ is order $0$, all the transverse jet direction will be trivially in the (vertical) kernel of $\wt{\varpi}$, so we can discard them directly.}  $\iota^*_{\pp\Sigma}\Xi$ on $\partial \Sigma$. The vertical part of the kernel equations then define a subbundle $\Xi_\pp\to\pp\Sigma$ of the induced bundle, whose sections are isomorphic to $\wt\fN{}^{\off}_\pp \doteq \wt\pi_{\pp,\fG}(\fNoff)$ (Theorem \ref{thm:corneralgd}). Hence, the reduction 
\[
\fGpoff = \frac{\wt\fG}{\cbd\pi_\text{pre}({}^hN\wt\fG ) }\simeq \frac{\wt\fG}{\wt{\fN}{}^{\off}_\pp} \simeq \Gamma(\pp\Sigma,\Xi_\pp)
\]
is a local Lie algebra.
\end{proof}

In all theories of interest $\rho$ is order 1 as a local map, i.e.\ it descends to a smooth map on the first jet bundle. Then from the local Hamiltonian flow equation 
$
    \bi_{\rho(\cdot)} \bom = \langle\bd \bH, \cdot\rangle
$
for an ultralocal $\bom$ (Definition \ref{def:ultralocal}), we conclude that if $\rho$ is order 1, so is $\bH$. This observation gives us the following:

\begin{corollary}[of Proposition \ref{prop:localGredoff}]\label{cor:localGredoff}
If $\bom$ is ultralocal and $\rho$ is order 1 as a local map, then $\fGpoff$ is a local Lie algebra on $\pp\Sigma$ in the sense of Definition \ref{def:LocLieAlg}.
\end{corollary}

\begin{remark}
Proposition \ref{prop:localGredoff} and its Corollary \ref{cor:localGredoff} prove Theorem \ref{mainthm:Poisson} item \ref{thmitem:main2-ultralocal}.
\end{remark}

\subsection{The Poisson manifold of off-shell corner data and superselections}
In the previous section we analyzed the structure of the space of off-shell corner data $(\A_\pp,\varpi_\pp)$ where $\A_\pp = \X_\pp \times \fGpoff$ and $\varpi= \langle \cbd_H h_\pp , \cbd_V\xi_\pp\rangle$. In particular, we showed that the bulk action $\rho:\A \to T\X$, induces a natural action Lie algebroid structure on $\A_\pp$. We are now going to show that, taken together, the symplectic and Lie algebroid structures uniquely fix a partial Poisson structure on $\X_\pp$ (Definition \ref{def:partialpoisson}) described by a bivector $\Pi_\pp$.

\subsubsection{Preliminary constructions}

To start with, recall that the pre-symplectic reduction procedure we outlined in Theorem \ref{thm:corneralgd} always returns a weak-symplectic manifold by construction. To the weak-symplectic manifold $(\A_\pp,\varpi_\pp)$ one associates the partial cotangent bundle $T^{\mathsf{p}}\A_\pp\subset T^*\A_\pp$ defined as the image of $\varpi_\pp^\flat\colon T\A_\pp\to T^*\A_\pp$ (See diagram \ref{eq:partialdiagram} and Definition \ref{def:Hamfunctions}). 
This endows the set of Hamiltonian functions $C^\infty_{\mathsf{p}}(\A_\pp)\subset C^\infty(\A_\pp)$ with a Poisson bracket directly issued from $\varpi_\pp$ (see Lemma \ref{lem:weaksymplecticpoisson}), which in turn endows $\A_\pp$ with a partial Poisson structure (Definition \ref{def:partialpoisson}). In what follows we will show that $\X_\pp$ is also endowed with a partial Poisson structure, this time induced by the Lie algebroid structure $\rho_\pp\colon\A_\pp\to T\X_\pp$. (cf.\ Remark \ref{rmk:PoissonLieAlg}.)

The splitting $\A_\pp=\X_\pp \times \fGpoff$ induces the splitting of the tangent bundle into the horizontal and vertical bundles $T\A_\pp = H\A_\pp \oplus V\A_\pp$ w.r.t.\ the projection $p_\pp:\A_\pp \to \X_\pp$. 
Dually, one obtains the splitting of the partial subbundle $T^{\mathsf{p}}\A_\pp \doteq V^{\mathsf{p}}\A_\pp \oplus H^{\mathsf{p}}\A_\pp$: where $V^{\mathsf{p}}\A_\pp = \mathrm{Im}(\varpi_\pp^\flat\vert_{H})$ and viceversa $H^{\mathsf{p}}\A_\pp = \mathrm{Im}(\varpi_\pp^\flat\vert_{V})$ (this is because $\varpi_\pp^\flat$ induces the isomorphisms $H\A_\pp\xrightarrow{\simeq\;} V^{\mathsf{p}}\A_\pp$ and $V\A_\pp\xrightarrow{\simeq\;} H^{\mathsf{p}}\A_\pp$).

\begin{remark}\label{rmk:horizontalgcotangent}
The space of horizontal cotangent vectors of $p_\pp:\A_\pp\to\X_\pp$ is defined as $H^*\A_\pp \doteq \Ann( V\A_\pp)$ or, equivalently, as $H^*\A_\pp \doteq \Im(\cbd p_\pp^* : T^*\X_\pp \to T^*\A_\pp)$, and the map
\[
\ell\colon \A_\pp  \times_{\X_\pp}T^*\X_\pp \to H^*\A_\pp, \quad \left(a_\pp, \alpha \right) \mapsto \left( a_\pp , \cbd p_\pp^*\alpha\vert_{a_\pp} \right)
\]
is 1-to-1. This allows us to define $\ell^{-1}(H^{\mathsf{p}}\A_\pp) \subset T^*\X_\pp\times_{\X_\pp}\A_\pp$. Examining the definition of $H^\mathsf{p}\A_\pp$ it is easy to see that it does not depend on $\xi_\pp$ and therefore there exists a subbundle $T^\mathsf{p}\X_\pp \subset T^*\X_\pp$ such that
\[
\ell^{-1}(H^\mathsf{p}\A_\pp) = T^\mathsf{p}\X_\pp \times_{\X_\pp} \A_\pp.\qedhere
\]
\end{remark}

\begin{definition}\label{def:T'P}
The partial cotangent bundle of $\X_\pp$ is the (convenient, weak) subbundle  $T^{\mathsf{p}}\X_\pp \xhookrightarrow{} T^*\X_\pp$ defined by $\ell^{-1}(H^\mathsf{p}\A_\pp)$ as per Remark \ref{rmk:horizontalgcotangent}.
We denote the algebra of $\mathsf{p}$-functions with respect to the subbundle $T^{\mathsf{p}}\X_\pp$ by $C^\infty_{\mathsf{p}}(\X_\pp)\subset C^\infty(\X_\pp)$ (see Definition \ref{def:partialpoisson}).
\end{definition}

In particular, this means that\footnote{In the trivial algebroid case under consideration this can be rewritten as $H^{\mathsf{p}}\A_\pp = T^\mathsf{p}\X_\pp \times \fGpoff$, in the sense that $H^{\mathsf{p}}_{(\phi_\pp,\xi_\pp)}\A_\pp \simeq T^{\mathsf{p}}_{\phi_\pp}\X_\pp \times \{\xi_\pp\}$.} $V \A_\pp \simeq H^{\mathsf{p}}\A_\pp = T^{\mathsf{p}}\X_\pp\times_{\X_\pp}\A_\pp$, and we have the fibre product
\[
\xymatrix{T^{\mathsf{p}}\X_\pp\times_{\X_\pp}\A_\pp \ar[r]\ar[d] & \A_\pp \ar[d]\\ 
T^{\mathsf{p}}\X_\pp \ar[r] & \X_\pp}
\]
from which we conclude the following:

\begin{lemma}\label{lemma:T'P=A}
$T^{\mathsf{p}}\X_\pp \simeq \A_\pp$.
\end{lemma}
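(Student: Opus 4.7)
The plan is to unpack the bundle isomorphisms already assembled in the paragraphs preceding the statement and observe that they combine into the desired identification. First I would recall that, since $p_\pp:\A_\pp\to\X_\pp$ is a (trivial) vector bundle $\X_\pp\times\fGpoff$, the vertical distribution is canonically $V\A_\pp\simeq p_\pp^*\A_\pp$, with fibre $V_{(\phi_\pp,\xi_\pp)}\A_\pp$ identified with $\fGpoff$ independently of $\xi_\pp$. Dually, by Remark~\ref{rmk:horizontalgcotangent} together with Definition~\ref{def:T'P}, the bundle $\ell^{-1}(H^{\mathsf{p}}\A_\pp)$ is canonically $p_\pp^*T^{\mathsf{p}}\X_\pp$, so that $H^{\mathsf{p}}\A_\pp\simeq p_\pp^*T^{\mathsf{p}}\X_\pp$.

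The remark right before the lemma (or equivalently the definition of $H^{\mathsf{p}}\A_\pp$ as the image of $\varpi_\pp^\flat\vert_{V\A_\pp}$) furnishes the bundle isomorphism
\[
\varpi_\pp^\flat\vert_V \colon V\A_\pp \xrightarrow{\;\sim\;} H^{\mathsf{p}}\A_\pp,
\]
the inverse being well defined because the pre-symplectic reduction performed in Theorem~\ref{thm:corneralgd} has precisely removed the vertical kernel of $\wt\varpi_\pp^\flat$ (which was identified with $\wt\fN_\pp^\off$ en route to $\fGpoff$). Chaining with the two canonical identifications above yields
\[
p_\pp^*\A_\pp \;\simeq\; V\A_\pp \;\xrightarrow{\;\varpi_\pp^\flat\;}\; H^{\mathsf{p}}\A_\pp \;\simeq\; p_\pp^*T^{\mathsf{p}}\X_\pp,
\]
a $p_\pp$-fibrewise isomorphism of vector bundles over $\A_\pp$.

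The remaining step is to show that this chain descends from $\A_\pp$ to a bundle isomorphism over $\X_\pp$, i.e.\ that it is basic with respect to $p_\pp$. This is where one uses the explicit shape $\varpi_\pp=\langle \cbd_H h_\pp,\cbd_V\xi_\pp\rangle$ together with the fact that $h_\pp$ is linear in its $\fGpoff$-entry: contracting with a vertical vector represented by $\eta_\pp\in\fGpoff$ gives
\[
\bigl(\bi_{\eta_\pp}\varpi_\pp\bigr)\!\vert_{(\phi_\pp,\xi_\pp)} \;=\; -\,\cbd p_\pp^*\bigl(\cbd_{\X_\pp}\langle h_\pp,\eta_\pp\rangle\bigr)\!\vert_{\phi_\pp},
\]
which manifestly does not depend on the fibre coordinate $\xi_\pp$. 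Hence the fibre map $\fGpoff\to T^{\mathsf{p}}_{\phi_\pp}\X_\pp$, $\eta_\pp\mapsto \cbd_{\X_\pp}\langle h_\pp,\eta_\pp\rangle\vert_{\phi_\pp}$, is well defined point-wise on $\X_\pp$, and the chain descends to a bundle isomorphism $\A_\pp\simeq T^{\mathsf{p}}\X_\pp$ over $\X_\pp$.

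The only point that requires any thought is the independence of the map from $\xi_\pp$, but this is forced by the bilinear shape of $\varpi_\pp$ and the linearity of $h_\pp$ in the fibre variable; surjectivity onto $T^{\mathsf{p}}\X_\pp$ is built into Definition~\ref{def:T'P}, while injectivity is exactly the statement that no vertical kernel of $\varpi_\pp$ survives after the reduction performed in Theorem~\ref{thm:corneralgd}. Thus the lemma follows as a direct corollary of the structural results already in hand, with no genuinely new calculation required.
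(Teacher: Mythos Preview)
Your proof is correct and follows essentially the same route as the paper: both chain the canonical identifications $T^{\mathsf{p}}_{\phi_\pp}\X_\pp \simeq H^{\mathsf{p}}_{(\phi_\pp,\eta_\pp)}\A_\pp \simeq V_{(\phi_\pp,\eta_\pp)}\A_\pp \simeq \fGpoff$ (via Definition~\ref{def:T'P}, $\varpi_\pp^\flat$, and the triviality of $\A_\pp$) and then verify that the composite is independent of the fibre coordinate. Your presentation is slightly more elaborate---you phrase things in terms of pullback bundles over $\A_\pp$ and descend, and you spell out the $\xi_\pp$-independence via the bilinear form of $\varpi_\pp$, whereas the paper just fixes a point $\eta_\pp$ and remarks that the check is easy---but there is no substantive difference (note a small slip in your displayed contraction formula: the extra $\cbd$ in front of $p_\pp^*$ should not be there).
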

\begin{proof}
Fix an element $\eta_\pp\in\fGpoff$, then the previous construction provides us with the following sequence of canonical isomorphisms:
\[
T^{\mathsf{p}}_{\phi_\pp}\X_\pp \simeq H^{\mathsf{p}}_{(\phi_\pp,\eta_\pp)}\A_\pp \simeq V_{(\phi_\pp,\eta_\pp)}\A_\pp  \simeq \fGpoff.
\]
where the first isomorphism is the one used to define $T^{\mathsf{p}}\X_\pp$, the second one is given by $\varpi_\pp^\flat$, whereas the last one is given by $V_{(\phi_\pp,\eta_\pp)}\A_\pp  = T_{\eta_\pp} \fGpoff \simeq \fGpoff$.
It is easy to check that the isomorphism between the first and last term in the above formula does not depend on the choice of $\eta_\pp$.
\end{proof}

\begin{remark}[Functional derivative]\label{rmk:functionalder}
This lemma allows us to define the functional derivative of $g\in C^\infty_{{\mathsf{p}}}(\X_\pp)$ with respect to $h_\pp$ as the corresponding section $\zeta_\pp[g] \in \Gamma(\X_\pp,\A_\pp)$---that is with a $\fGpoff$-valued function over $\X_\pp$:
\[
\bd g \doteq \langle \cbd_H h_\pp, \zeta_{\pp}[g] \rangle,
\qquad 
\frac{\delta g}{\delta h_\pp} \doteq \zeta_{\pp}[g] \in \Gamma(\X_\pp,\A_\pp).
\qedhere
\]
\end{remark}

\begin{remark}
When $\fGpoff$ is local over $\pp\Sigma$ and $\X_\pp$ is itself the fibrewise dual of $\fGpoff$---i.e.\ $\fGpoff = \Gamma(\pp\Sigma,\Xi_\pp)$ and $\X_\pp = \Gamma(\pp\Sigma,\Xi_\pp^*)$ with $h_\pp$ the identity---the above definition of functional derivative coincides with the standard, pointwise, definition. 
\end{remark}

\subsubsection{The partial Poisson structure of $\X_\pp$}

Next, we are going to show that $\X_\pp$ carries a natural partial Poisson structure $\Pi_\pp$. 
Using Remark \ref{rmk:functionalder}, this Poisson structure can easily be guessed by analogy with the  KKS Poisson structure on $(\fGpoff)^*$:
\begin{equation}
    \label{eq:Poissonpp}
\Pi_\pp = \frac12\langle h_\pp , \left[\frac{\delta}{\delta h_\pp} , \frac{\delta}{\delta h_\pp}\right] \rangle + 
\frac12 k_\smbullet\left(\frac{\delta}{\delta h_\pp} , \frac{\delta}{\delta h_\pp}\right).
\end{equation}
Thanks to that analogy, it is then rather obvious that \textit{if} $\Pi_\pp$ is well defined as a 2-derivation on $C^\infty_{{\mathsf{p}}}(\X_\pp)$ then it will satisfy the Jacobi identity. However, at this level it is still quite unclear why this is well-defined at all, i.e.\ why given two $\mathsf{p}$-functions functions $f,g\in C^\infty_{{\mathsf{p}}}(\X_\pp)$, $\Pi_\pp(f,g)$ should again be in $C_{\mathsf{p}}^\infty(\X_\pp)$.

Dualising the isomorphism of Lemma \ref{lemma:T'P=A}, and forgetting momentarily about the difference between $T^{\mathsf{p}}\X_\pp$ and $T^*\X_\pp$, we see that there is an isomorphism between vector fields over $\X_\pp$ and sections of $\A_\pp^*$. More precisely, the isomorphism is between 1-derivations of $\mathsf{p}$-functions over $\X_\pp$ and sections of $\A_\pp^*$.
Since a partial Poisson structure on $\X_\pp$ is a 2-derivation of this kind, it makes sense to extend that isomorphisms to $k$-derivations.
The advantage of this approach is that one can directly use the Lie-algebroid structure of $\A_\pp$ to define a partial Poisson structure on $\X_\pp$.

A convenient way to extend the isomorphism to $k$-derivations is by introducing the shifted vector bundle $\A_\pp[1]$---where the fibre of $\A_\pp$ is turned into a graded-vector space. 

\begin{definition}\label{def:cornershift}
Let $p_\pp^{[1]}:\A_\pp[1]\to \X_\pp$ be the degree-1 shift of the vector bundle $p_\pp:\A_\pp\to\X_\pp$. The space of smooth function on $\A_\pp[1]$ is then the graded-commutative algebra\footnote{We define $\wedge^k\A_\pp\to \X_\pp$ to be the vector bundle with fibre $(\fGpoff)^{\wedge k}$ and we understand $\X_\pp\times\mathbb{R}$ as a trivial line bundle over $\X_\pp$. Then, $L(\wedge^k\X_\pp, \X_\pp \times\mathbb{R})$ is the space of fibrewise linear bundle morphisms that are bounded \cite{kriegl1997convenient}. Cf. Remark \ref{rmk:diffForms}.} (see Definition \ref{def:algebroidforms} and Remark \ref{rmk:diffForms}) 
\[
    C^\infty(\A_\pp[1])\doteq L(\wedge^\bullet \A_\pp , \X_\pp \times \mathbb{R}) \equiv \Omega^\bullet_{\A_\pp}(\X_\pp).
\]
We denote
\begin{enumerate}[label=(\roman*)]
    \item by $c$ the degree $1$ fibre-variable, which we call the \emph{(corner) ghost variable}; consequently, we refer to the internal degree of this graded algebra as the \emph{ghost degree};\footnote{Observe that $c$ is defined as a vector-valued ``tautological'' map $\Gamma(\A_\pp[1])\to\Gamma(\A_\pp)$; as such it has degree $1$ with the convention that, given a vector space $W$, the shifted graded vector space $W[1]$ is concentrated in degree $-1$.\label{fnt:V[1]}}
    \item $C_{k}^\infty(\A_\pp[1]) \subset C^\infty(\A_\pp[1]) $ the set of functions on $\A_\pp[1]$ of ghost degree $k$;
    \item $\varpi_\pp^{[1]}=\langle \cbd_H h_\pp, \cbd_V c\rangle$ the shift of the weak symplectic structure $\varpi_\pp$;
    \item $C_{\mathsf{p}}^\infty(\A_\pp[1])$  the space of Hamiltonian functions with respect to $\varpi_\pp^{[1]}$;
    \item $\{\cdot,\cdot\}^{[1]}: C_{\mathsf{p}}^\infty(\A_\pp[1])^{\otimes 2} \to C_{\mathsf{p}}^\infty(\A_\pp[1])$ the degree $-1$ partial Poisson structure on $\A_\pp[1]$ associated to $\varpi_\pp^{[1]}$;
    \item $\rho_\pp^{[1]}: \A_\pp[1] \to T[1]\X_\pp$ the anchor map $\rho_\pp$ on the shifted algebroid.\footnote{The degree-1 property of $\rho_\pp^{[1]}$ corresponds the linearity of $\rho_\pp$.}\qedhere
\end{enumerate}
\end{definition}

\begin{remark}\label{rmk:offPoissonalg}
For future reference let us recall the well-known fact that $\{\cdot,\cdot\}^{[1]}$ defines a (odd) Poisson \emph{algebra} over $C^\infty_{\mathsf{p}}(\A_\pp[1])$, i.e.\ that the bracket of two Hamiltonian functions is also a Hamiltonian function---that is, denoting for all $F\in C^\infty_{\mathsf{p}}(\A_\pp[1])$ by $\mathbb{X}_F \in \mathfrak{X}(\A_\pp[1])$ the corresponding Hamiltonian vector field, $\bi_{\mathbb{X}_F}\varpi_\pp^{[1]} = \cbd F$:\footnote{Indeed: $\bi_{[\mathbb{X}_F, \mathbb{X}_G]} \varpi_\pp^{[1]} = [ \L_{\mathbb{X}_F},\bi_{\mathbb{X}_G}] \varpi_\pp^{[1]} = \L_{\mathbb{X}_f} \cbd G = \cbd \bi_{\mathbb{X}_F} \bi_{\mathbb{X}_G} \varpi_\pp^{[1]}$ and $ \bi_{\mathbb{X}_F} \bi_{\mathbb{X}_G} \varpi_\pp^{[1]}\doteq \{ F, G\}^{[1]}$.}
\begin{equation}
\mathbb{X}_{\{F,G\}^{[1]}} = [\mathbb{X}_F,\mathbb{X}_G].
\label{eq:Poissonalgrep}    
\end{equation}
Furthermore, $\{\cdot,\cdot\}^{[1]}$ lowers the homogeneous degree of the functions by $1$, so that if $|F^{(k)}|=k$ and $|F^{(j)}|=j$ we have $|\{F^{(k)},F^{(j)}\}^{[1]}|=k+j-1$.
In particular $\{F_1^{(0)}, F_2^{(0)}\}^{[1]} \equiv 0$.
\end{remark}

We now recall a classic result linking cohomological vector fields on $\A_\pp[1]$ and the associated Chevalley--Eilenberg algebroid differential, adapted to the case at hand.
\begin{theorem}[BRST]\label{thm:BRST}
Let $(\Omega^\bullet_{\A_\pp}(\X_\pp),\delta_{\A_\pp})$ be the Lie algebroid complex defined in Appendix \ref{def:algebroidforms}. The definition
\[
C^\infty(\A_\pp[1]) \equiv \Omega^\bullet_{\A_\pp}(\X_\pp)
\]
induces a cochain complex structure $(C^\infty(\A_\pp[1]) , \mathbb{Q}_\pp)$,
\[
    (C^\infty(\A_\pp[1]) , \mathbb{Q}_\pp) \simeq (\Omega^\bullet_{\A_\pp}(\X_\pp),\delta_{\A_\pp}),
\]
where $\mathbb{Q}_\pp\in\mathfrak{X}[1](\A_\pp[1])$ is a nilpotent degree-1 vector field,
\[
    [\mathbb{Q}_\pp,\mathbb{Q}_\pp]=0,
\]
which decomposes as the following sum of a $p_\pp^{[1]}$-horizontal and a $p_\pp^{[1]}$-vertical vector fields\footnote{More explicitly: $\rho_\pp^{[1]}(c)\vert_{(\phi_\pp,c)}$ is here understood as the horizontal lift to $T[1]_{(\phi_\pp,c)}\A_\pp[1]$ of $\rho_\pp^{[1]}(c)\in T[1]_{\phi_\pp}\X_\pp$, whereas $\ad(c)\vert_{(\phi_\pp,c)}$ is a vertical vector in $V[1]_{(\phi_\pp,c)}\A_\pp[1] \simeq T[1]_{c} \fGpoff[1]$.}
\[
\mathbb{Q}_\pp  \doteq 
\rho_\pp^{[1]}(c) - \frac12 \ad(c).
\qedhere
\]
\end{theorem}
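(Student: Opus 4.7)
The plan is to upgrade the well-known graded-geometric reformulation of the Chevalley--Eilenberg complex of a Lie algebroid to the (infinite-dimensional) action algebroid $\A_\pp=\X_\pp\times\fGpoff$, and then to specialize the resulting cohomological vector field. First I would unpack the identification $C^\infty(\A_\pp[1])\equiv\Omega_{\A_\pp}^\bullet(\X_\pp)$, which holds essentially by construction in light of Definition \ref{def:cornershift} (cf.\ Remark \ref{rmk:diffForms}): elements of ghost degree $k$ are precisely fibrewise antisymmetric $k$-linear maps $(\fGpoff)^{\wedge k}\to C^\infty(\X_\pp)$. Under this identification, the tautological ghost variable $c$ (Footnote \ref{fnt:V[1]}) is the degree-$1$ identity $\fGpoff\to\fGpoff$.

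Next, I would construct $\mathbb{Q}_\pp$ by prescribing its action on generators of $C^\infty(\A_\pp[1])$ and extending as a graded derivation of degree $+1$. On degree-$0$ generators $g\in C_\mathsf{p}^\infty(\X_\pp)$, the Lie algebroid CE differential is $(\delta_{\A_\pp}g)(\xi_\pp)=\rho_\pp(\xi_\pp)\,g$; hence the horizontal contribution must be $\rho_\pp^{[1]}(c)$ (acting on $g$ viewed as the pull-back $p_\pp^{[1]*}g$). On the degree-$1$ generator $c$, the CE differential computes $(\delta_{\A_\pp}c)(\xi_\pp,\eta_\pp)=-c([\xi_\pp,\eta_\pp])=-[\xi_\pp,\eta_\pp]$, which after antisymmetrization gives the purely vertical contribution $-\tfrac12\ad(c)$. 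Together this singles out
\[
\mathbb{Q}_\pp=\rho_\pp^{[1]}(c)-\tfrac12\ad(c),
\]
and the graded-derivation property fixes $\mathbb{Q}_\pp$ on the whole of $C^\infty(\A_\pp[1])$. Its restriction to $\Omega_{\A_\pp}^\bullet(\X_\pp)$ is readily checked to reproduce the global Koszul-type formula for $\delta_{\A_\pp}$ in terms of $\rho_\pp$ and $[\cdot,\cdot]$.

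It remains to establish $[\mathbb{Q}_\pp,\mathbb{Q}_\pp]=0$, or equivalently $\mathbb{Q}_\pp^{2}=0$. Since $\mathbb{Q}_\pp^{2}$ is itself a degree-$2$ derivation of $C^\infty(\A_\pp[1])$, it suffices to check nilpotency on the generators. On $g\in C^\infty(\X_\pp)$ one finds $\mathbb{Q}_\pp^{2} g=\bigl(\tfrac12\rho_\pp^{[1]}([c,c])-\rho_\pp^{[1]}(c)\rho_\pp^{[1]}(c)\bigr)g$, which vanishes precisely because $\rho_\pp$ is a Lie-algebra homomorphism from $\fGpoff$ to $\mathfrak{X}(\X_\pp)$ (Lemma \ref{lemma:precorneralg} and Theorem \ref{thm:corneralgd}). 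On the generator $c$, the cancellation $\mathbb{Q}_\pp^{2}c=\tfrac14\bigl[c,[c,c]\bigr]=0$ reduces to the graded Jacobi identity for $[\cdot,\cdot]$ on $\fGpoff$. Combining these two checks yields $\mathbb{Q}_\pp^{2}=0$, and one concludes by the uniqueness-on-generators argument that the derivation $\mathbb{Q}_\pp$ coincides with $\delta_{\A_\pp}$ under the identification $C^\infty(\A_\pp[1])\simeq\Omega_{\A_\pp}^\bullet(\X_\pp)$.

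The only non-routine point that I expect to require attention is the infinite-dimensional aspect: one must verify that the derivation extension and the manipulations with $\rho_\pp^{[1]}(c)$ and $\ad(c)$ make sense in the convenient/Fr\'echet framework of Section \ref{sec:infinitedim}, since $c$ is a $\fGpoff$-valued ``tautological'' object and $\ad(c)$ involves the (local) Lie bracket of $\fGpoff$. Both operations are local/smooth thanks to the action-algebroid structure established in Theorem \ref{thm:corneralgd}, so no essentially new analytical input is needed beyond that already in place. The rest is the standard algebraic argument recalled above.
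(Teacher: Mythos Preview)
The paper does not actually prove this theorem: it is introduced with the phrase ``We now recall a classic result linking cohomological vector fields on $\A_\pp[1]$ and the associated Chevalley--Eilenberg algebroid differential, adapted to the case at hand,'' and no proof environment follows the statement. Your proposal is correct and is precisely the standard argument the paper is tacitly invoking---define $\mathbb{Q}_\pp$ on generators, check agreement with $\delta_{\A_\pp}$, and verify nilpotency on generators using the anchor homomorphism property and the Jacobi identity---so there is nothing to compare beyond noting that you have supplied the details the authors chose to omit.
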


In the following two lemmas, we establish the canonical isomorphisms between the space of Hamiltonian functions on $\A_\pp[1]$ of degree 0 and the space of $\mathsf{p}$-functions over $\X_\pp$, as well as the sought (canonical) isomorphism between the space of Hamiltonian functions over $\A_\pp[1]$ of degree $k>0$ and the space of multi-derivations over $\X_\pp$. The latter is in fact an isomorphism of Gesternhaber algebras.

\begin{lemma}\label{lemma:deg0}
$C_{\mathsf{p}}^\infty(\X_\pp) \simeq C_{\mathsf{p}, 0}^\infty(\A_\pp[1])$.
\end{lemma}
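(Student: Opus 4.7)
\medskip

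\noindent\textbf{Proof plan.} The identification is essentially tautological once the definitions are unpacked, so the plan is to trace through them in the right order and verify that the pullback along $p_\pp^{[1]}: \A_\pp[1]\to\X_\pp$ restricts to a bijection on the Hamiltonian subalgebras.

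\medskip

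\noindent First I would observe that ghost-degree-zero functions on $\A_\pp[1]$ are, by the definition $C^\infty(\A_\pp[1])=\Omega^\bullet_{\A_\pp}(\X_\pp)$ of Definition~\ref{def:cornershift}, precisely the smooth functions on the base:
\[
C_0^\infty(\A_\pp[1]) \;=\; \Omega^0_{\A_\pp}(\X_\pp) \;\simeq\; C^\infty(\X_\pp),
\]
with the isomorphism given by pullback along $p_\pp^{[1]}$. Hence the content of the lemma is to show that, under this pullback, $\mathsf{p}$-functions on $\X_\pp$ correspond exactly to Hamiltonian functions on $(\A_\pp[1],\varpi_\pp^{[1]})$ of ghost degree $0$.

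\medskip

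\noindent Next I would characterize Hamiltonianity for a degree-$0$ function $F=(p_\pp^{[1]})^*f$. By Definition~\ref{def:Hamfunctions}, $F$ is Hamiltonian iff $\cbd F$ lies in the image of $(\varpi_\pp^{[1]})^\flat$. Since $F$ depends only on $\phi_\pp\in\X_\pp$, its differential $\cbd F=(p_\pp^{[1]})^*\cbd f$ is $p_\pp^{[1]}$-horizontal, i.e.\ it takes values in $H^*\A_\pp[1]$. The explicit form $\varpi_\pp^{[1]}=\langle \cbd_H h_\pp,\cbd_V c\rangle$ shows that $(\varpi_\pp^{[1]})^\flat$ pairs horizontal and vertical directions; in particular it sends $p_\pp^{[1]}$-vertical vectors to horizontal partial-cotangent vectors, with image precisely $H^{\mathsf{p}}\A_\pp[1]$ (cf.\ Remark~\ref{rmk:horizontalgcotangent}). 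Therefore
\[
F \in C_{\mathsf{p},0}^\infty(\A_\pp[1]) \;\iff\; \cbd F \in H^{\mathsf{p}}\A_\pp[1].
\]

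\medskip

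\noindent The conclusion then follows from Definition~\ref{def:T'P}, which defines $T^{\mathsf{p}}\X_\pp$ precisely so that $H^{\mathsf{p}}\A_\pp[1] = \ell( \A_\pp[1]\times_{\X_\pp} T^{\mathsf{p}}\X_\pp)$ (the $c$-independence established in Remark~\ref{rmk:horizontalgcotangent} is what allows this characterization at the ghost-degree-zero level). Thus $\cbd F\in H^{\mathsf{p}}\A_\pp[1]$ iff $\cbd f\in T^{\mathsf{p}}\X_\pp$, which is the defining condition for $f\in C_{\mathsf{p}}^\infty(\X_\pp)$. Pullback along $p_\pp^{[1]}$ therefore restricts to the sought bijection
\[
(p_\pp^{[1]})^* : C_{\mathsf{p}}^\infty(\X_\pp) \xrightarrow{\ \simeq\ } C_{\mathsf{p},0}^\infty(\A_\pp[1]).
\]

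\medskip

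\noindent\textbf{Expected obstacle.} There is no substantive obstacle: the proof is a careful unpacking of definitions. The only delicate point is to verify that the horizontal partial cotangent bundle $H^{\mathsf{p}}\A_\pp[1]$ is $c$-independent so that it genuinely descends to a subbundle $T^{\mathsf{p}}\X_\pp$ of $T^*\X_\pp$; this is exactly the content of Remark~\ref{rmk:horizontalgcotangent}, so the lemma is a direct consequence and the proof can be stated in a few lines.
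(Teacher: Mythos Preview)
Your proof is correct and follows essentially the same approach as the paper: both arguments first identify degree-zero functions on $\A_\pp[1]$ with $C^\infty(\X_\pp)$ via pullback along $p_\pp^{[1]}$, and then use the definition of $T^{\mathsf{p}}\X_\pp$ as (the $\xi_\pp$-independent descent of) $H^{\mathsf{p}}\A_\pp=\Im(\varpi_\pp^\flat\vert_V)$ to conclude that Hamiltonianity of the pullback is equivalent to the $\mathsf{p}$-function condition on the base. The paper phrases the final step slightly more concretely via the functional-derivative identity $\bd g=\langle\bd h_\pp,\mathbb{X}_g(c)\rangle$, but this is just an explicit rewriting of the same equivalence you establish abstractly.
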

\begin{proof}
First, notice that the set of $p_\pp^{[1]}$-basic functions on $\A_\pp[1]$ is the set of functions in $C^\infty(\A_\pp[1])$ which are constant along the vertical $\fGpoff[1]$-direction. These can be equivalently characterized either as the set of degree-zero function over $\A_\pp[1]$ or as the set of functions over $\X_\pp$, i.e.\ $C^\infty(\X_\pp) \simeq C^\infty_{0}(\A_\pp[1])$.

The lemma follows from the Definition \ref{def:T'P} of $T^{\mathsf{p}}_{\phi_\pp}\X_\pp$ as 
\[
T^{\mathsf{p}}_{\phi_\pp}\X_\pp \simeq H^{\mathsf{p}}_{(\phi_\pp,\xi_\pp)} \A_\pp = \Im(\varpi_\pp^\flat(\phi_\pp,\xi_\pp)\vert_V).
\]
Indeed, this means that a $p_\pp^{[1]}$-basic function $g$ is a $\mathsf{p}$-function over $\X_\pp$ iff it is Hamiltonian in $(\A_\pp[1],\varpi_\pp^{[1]})$---i.e.\ if and only if there exists a vector field $\mathbb{X}_g\in \mathfrak{X}(\A_\pp[1])$ such that (cf.\ Remark \ref{rmk:functionalder})
\[
    \bi_{\mathbb{X}_g}\varpi_\pp^{[1]} = \cbd g
    \; \iff \; 
    \bd g = \langle \bd \h_\pp,\mathbb{X}_g(c)\rangle
    \;\iff\;
    \mathbb{X}_g(c) = \frac{\delta g}{\delta \h_\pp} \in \Gamma(\A_\pp).
\]
Notice that such an $\mathbb{X}_g\in \Gamma(V\A_\pp[1]) \subset \mathfrak{X}(\A_\pp[1])$ is $p_\pp^{[1]}$-vertical.
\end{proof}

Before proceeding to the second lemma, we need to introduce:
\begin{definition}
The space of $k$-derivations of the algebra $C^\infty_{\mathsf{p}}(\X_\pp)$ of $\mathsf{p}$-functions over $\X_\pp$ is ($k\geq 1$)
\[
\mathrm{Der}^k_{{\mathsf{p}}}(\X_\pp) \doteq \mathrm{Der}^{k}(C^\infty_{{\mathsf{p}}}(\X_\pp))
\]
i.e.\ the subset of multilinear maps $L(\wedge^k C^\infty_\mathsf{p}(\X_\pp), C^\infty_\mathsf{p}(\X_\pp))$ that satisfy the Leibniz rule in each of their entries.
\end{definition}

\begin{lemma}\label{lemma:Gestern}
Let $k\geq1$. The map
\[
C^\infty_{\mathsf{p},k} (\A_\pp[1]) \to \mathrm{Der}^k_{{\mathsf{p}}}(\X_\pp),
\quad
F^{(k)}\mapsto \Pi_{F^{(k)}}
\]
defined for all $g_1, \cdots,g_k \in C^\infty_{\mathsf{p}}(\X_\pp)$ as
\[
    \Pi_{F^{(k)}} (g_1, \cdots, g_k) \doteq \{\{  \cdots \{ F^{(k)},g_1\}^{[1]},  \cdots \}^{[1]},g_k\}^{[1]}
\]
is an isomorphism of Gesternhaber algebras:
\[
( C^\infty_{\mathsf{p},k} (\A_\pp[1]), \{\cdot,\cdot\}^{[1]}) \simeq ( \mathrm{Der}^k_{\mathsf{p}}(\X_\pp), [\cdot,\cdot]_\text{SN} )
\]
where $[\cdot,\cdot]_\text{SN} : \mathrm{Der}^j_{\mathsf{p}}(\X_\pp)\times\mathrm{Der}^k_{\mathsf{p}}(\X_\pp) \to \mathrm{Der}^{j+k-1}_{\mathsf{p}}(\X_\pp)$ is the Schouten–Nijenhuis bracket.
\end{lemma}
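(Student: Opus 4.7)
My plan is to prove the lemma by separately establishing (i) well-definedness of the map $F^{(k)} \mapsto \Pi_{F^{(k)}}$ as a $k$-derivation of $\mathsf{p}$-functions; (ii) its bijectivity; and (iii) its compatibility with brackets.

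\textbf{Well-definedness.} The graded Poisson bracket $\{\cdot,\cdot\}^{[1]}$ has degree $-1$ (cf.\ Remark \ref{rmk:offPoissonalg}), so iterating it $k$ times against the degree-$0$ arguments $g_1,\dots,g_k$ lowers the total ghost degree of $F^{(k)}$ from $k$ to $0$. By Remark \ref{rmk:offPoissonalg}, applied inductively, each intermediate bracket is again a Hamiltonian function, so the final output lies in $C^\infty_{\mathsf{p},0}(\A_\pp[1])$, which by Lemma \ref{lemma:deg0} is canonically identified with $C^\infty_{\mathsf{p}}(\X_\pp)$. The Leibniz rule in each argument $g_i$ is then inherited directly from the Leibniz rule of $\{\cdot,\cdot\}^{[1]}$. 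Skew-symmetry in adjacent arguments follows from the graded Jacobi identity combined with the vanishing $\{g_i,g_j\}^{[1]}=0$ for degree-$0$ functions (noted in Remark \ref{rmk:offPoissonalg}): specifically $\{\{F^{(k)},g_i\}^{[1]},g_j\}^{[1]}+\{\{F^{(k)},g_j\}^{[1]},g_i\}^{[1]} = \{F^{(k)},\{g_i,g_j\}^{[1]}\}^{[1]}=0$, which propagates to general permutations.

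\textbf{Bijectivity.} Using the isomorphism $T^{\mathsf{p}}\X_\pp\simeq\A_\pp$ from Lemma \ref{lemma:T'P=A}, a skew-symmetric $k$-derivation of $C^\infty_{\mathsf{p}}(\X_\pp)$ corresponds to a section of $\wedge^k\A_\pp$ (a ``partial $k$-vector field''). On the other hand, by Definition \ref{def:cornershift}, $C^\infty_{k}(\A_\pp[1])=\Gamma(\wedge^k\A_\pp^*)$, and fiberwise duality upgrades this via $\varpi_\pp^{[1]}$ to sections of $\wedge^k\A_\pp$. Concretely, given $F^{(k)}$, its Hamiltonian vector field $\mathbb{X}_{F^{(k)}}$ is $p_\pp^{[1]}$-vertical on elementary generators (just as in the proof of Lemma \ref{lemma:deg0}), and the map $F^{(k)}\mapsto \frac{\delta F^{(k)}}{\delta h_\pp}$ (extending Remark \ref{rmk:functionalder}) produces an element of $\Gamma(\wedge^k\A_\pp)$. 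One then reads off $\Pi_{F^{(k)}}$ as the evaluation of this polyvector against the $k$-tuple $(\bd g_1,\dots,\bd g_k)$ pulled back along $T^{\mathsf{p}}\X_\pp\simeq\A_\pp$. Conversely, given a $\mathsf{p}$-valued $k$-derivation, regard it as a section of $\wedge^k\A_\pp$ via Lemma \ref{lemma:T'P=A} and pair it with the exterior algebra of ghosts to define a degree-$k$ function on $\A_\pp[1]$; the hypothesis that the derivation takes values in $\mathsf{p}$-functions is precisely what guarantees (by Definition \ref{def:T'P}) that this function is Hamiltonian with respect to $\varpi_\pp^{[1]}$.

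\textbf{Gerstenhaber compatibility.} Both $\{\cdot,\cdot\}^{[1]}$ and $[\cdot,\cdot]_{\mathrm{SN}}$ are graded Lie brackets of degree $-1$ that satisfy the Leibniz rule over the commutative product (wedge product on one side, symmetric product on the other). By a standard argument, any such bracket is uniquely determined by its restriction to the generators in degree $0$ and $1$. In degree $0$, both brackets vanish. In degree $1$, elements of $C^\infty_{\mathsf{p},1}(\A_\pp[1])$ correspond via Lemma \ref{lemma:T'P=A} to sections of $\A_\pp$, and a direct calculation—evaluating $\{\xi_\pp,\eta_\pp\}^{[1]}$ on a degree-$0$ test function using the explicit form $\varpi_\pp^{[1]}=\langle\cbd_H h_\pp,\cbd_V c\rangle$ together with Theorem \ref{thm:BRST}—recovers the Lie algebroid bracket $[\xi_\pp,\eta_\pp]$ plus the anchor-derivation term, which is exactly the Schouten–Nijenhuis bracket on $\Gamma(\A_\pp)\subset\Gamma(\wedge^\bullet\A_\pp)$. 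Uniqueness then forces agreement in all degrees.

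\textbf{Main obstacle.} The principal technical difficulty is handling the partial structure consistently throughout: since $\A_\pp$ is only weakly symplectic, one cannot freely invert $\varpi_\pp^\flat$, so Steps (ii)–(iii) require carefully tracking that every construction lands in Hamiltonian (i.e.\ $\mathsf{p}$-) sectors. In particular, surjectivity onto $\mathrm{Der}^k_{\mathsf{p}}(\X_\pp)$ hinges on showing that an arbitrary $\mathsf{p}$-valued $k$-derivation, viewed as a section of $\wedge^k\A_\pp$, indeed corresponds to a Hamiltonian function under $\varpi_\pp^{[1]}$—and this is precisely what the definition of $T^{\mathsf{p}}\X_\pp$ via $\ell^{-1}(H^{\mathsf{p}}\A_\pp)$ in Remark \ref{rmk:horizontalgcotangent} and Definition \ref{def:T'P} was engineered to guarantee.
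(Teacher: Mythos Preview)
Your step (i) matches the paper's argument closely and is correct. Steps (ii) and (iii), however, contain conceptual slippages that the paper avoids by a more direct route.

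In step (ii), you invoke Lemma~\ref{lemma:T'P=A} to identify a $k$-derivation of $C^\infty_{\mathsf{p}}(\X_\pp)$ with a section of $\wedge^k\A_\pp$. But Lemma~\ref{lemma:T'P=A} gives $T^{\mathsf{p}}\X_\pp\simeq\A_\pp$ as a partial \emph{cotangent} bundle, so sections of $\wedge^k\A_\pp$ are partial $k$-forms, not $k$-vector fields; the identification with derivations is not immediate, and your proposed dualization via $\varpi_\pp^{[1]}$ is essentially the map you are trying to construct. The paper bypasses this by building the inverse explicitly: given $\Pi^{(k)}\in\mathrm{Der}^k_{\mathsf{p}}(\X_\pp)$, evaluate it on the linear $\mathsf{p}$-functions $q_{\xi_\pp}=\langle h_\pp,\xi_\pp\rangle$ to obtain $F^{(k)}(\phi_\pp,\xi^1_\pp,\dots,\xi^k_\pp)\doteq\Pi^{(k)}(q_{\xi^1_\pp},\dots,q_{\xi^k_\pp})(\phi_\pp)$, then replaces $\xi_\pp\leadsto c$. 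The key point---that $\Pi^{(k)}(q_{\xi^1_\pp},\dots,q_{\xi^k_\pp})$ is again a $\mathsf{p}$-function, which is what makes $F^{(k)}$ Hamiltonian---is precisely the content of your ``main obstacle'' paragraph, but your abstract phrasing obscures how it enters.

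In step (iii), you conflate the Schouten--Nijenhuis bracket on $\mathrm{Der}^\bullet_{\mathsf{p}}(\X_\pp)$ (which in degree $(1,1)$ is simply the commutator of derivations) with the Lie algebroid bracket on $\Gamma(\A_\pp)$; these are related by the anchor but are not the same object, and Theorem~\ref{thm:BRST} is not what is needed here. The paper instead verifies the compatibility directly by an iterated-Jacobi computation,
\[
[\Pi_{F^{(j)}},\Pi_{G^{(k)}}]_{\mathrm{SN}}(g_1,\dots,g_{j+k-1})=\Pi_{\{F^{(j)},G^{(k)}\}^{[1]}}(g_1,\dots,g_{j+k-1}),
\]
which for $j=k=1$ is just Equation~\eqref{eq:Poissonalgrep}. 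Your ``determined on generators'' strategy would work once the degree-$1$ check is done correctly, but the clean input is \eqref{eq:Poissonalgrep}, not the algebroid bracket.
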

\begin{proof}
First we need to show that the definition of $\Pi_{F^{(k)}}$ is well-posed as a derivation on $C^\infty_{\mathsf{p}}(\X_\pp)$. Thanks to Lemma \ref{lemma:deg0} $g_1, \cdots,g_k$ can be identified with degree-0 Hamiltonian functions over $\A_\pp[1]$, so that $\Pi_{F^{(k)}}(g_1,  \cdots, g_k)$ is itself a degree-0 Hamiltonian function over $\A_\pp[1]$ (cf.\ Remark \ref{rmk:offPoissonalg}), which can finally be re-identified with an element of $C^\infty_{\mathsf{p}}(\X_\pp)$. Moreover, given functions $g_1, \cdots, g_k \in C^\infty_{\mathsf{p}}(\X_\pp)$, we need the following properties:
\begin{enumerate}[label=(\textit{\roman*})]
    \item \label{it:derivation_antisymm} $\Pi_{F^{(k)}}(g_1,  \cdots, g_k)$ is completely antisymmetric in its $k$ entries;
    \item \label{it:derivation_leibniz} it satisfies the Leibniz rule in each of its $k$ entries.\footnote{Viz. $\Pi_{F^{(k)}}(g_1g_1',  \cdots, g_k) = \Pi_{F^{(k)}}(g_1,  \cdots, g_k)g_1' + \Pi_{F^{(k)}}(g_1',  \cdots, g_k)f_1.$}
\end{enumerate}

Statement \ref{it:derivation_antisymm} is a consequence of $\varpi_\pp^{[1]} = \langle \cbd_H h_\pp, \cbd_V c\rangle$ being symplectic of degree 1 and $F^{(k)}$ being homogeneous of degree $k$ in the \emph{anticommuting} variable $c$. 
Finally, \ref{it:derivation_leibniz} follows from the Leibniz rule for $\{\cdot,\cdot\}^{[1]}$ and the fact that $\{g_i, g_j\}^{[1]}\equiv0$ (Remark \ref{rmk:offPoissonalg}). Therefore $\Pi_{F^{(k)}}$ is well defined as a multi-derivation on $C^\infty_\mathsf{p}(\X_\pp)$. 

Proving that $(C^\infty_{\mathsf{p}}(\A_\pp[1]),\{\cdot,\cdot\}^{[1]}) \hookrightarrow (\mathrm{Der}_{\mathsf{p}}(\X_\pp), [\cdot,\cdot]_\text{SN})$ as Gerstenhaber algebras is  straightforward---albeit tedious---and will be left to the reader. Here we simply notice that it relies on the following calculation:
\begin{align}
    [\Pi_{F^{(j)}}, \Pi_{G^{(k)}} ]_\text{SN}(g_1,  \cdots, g_{j+k-1}) &=  \cdots = \{\{\{\{ F^{(j)}, G^{(k)}\}^{[1]}, g_1\}^{[1]},   \cdots \}^{[1]}, g_{j+k-1}  \}^{[1]} \notag\\
    &= \Pi_{\{ F^{(j)}, G^{(k)}\}^{[1]}}(g_1,  \cdots, g_{j+k-1})
    \label{eq:gestern}
\end{align}
which generalises Equation \eqref{eq:Poissonalgrep} to $j,k\neq 1$ (observe: $\Pi_{F^{(1)}} \equiv \mathbb{X}_{F^{(1)}}$).

We now want to prove the opposite injection, that is $ \mathrm{Der}^k_{\mathsf{p}}(\X_\pp)\hookrightarrow C^\infty_{\mathsf{p},k}(\A_\pp[1])$. Formally, this is given by:
\[
\Pi^{(k)} \mapsto F^{(k)}\doteq \Pi^{(k)}( \langle h_\pp, c\rangle,...,\langle h_\pp,c\rangle).
\]
Below we discuss the meaning of this formula more carefully.

Let us start from the case $k=1$, and consider $\Pi^{(1)}\in \mathrm{Der}^1_{\mathsf{p}}(\X_\pp)$.
For a given $\xi_\pp\in\fGpoff$, define the function $q_{\xi_\pp}\in C^\infty(\X_\pp)$ by $q_{\xi_\pp}(\phi_\pp) \doteq \langle h_\pp(\phi_\pp),\xi_\pp\rangle$. Using Lemma \ref{lemma:deg0} it is immediate to see that $q_{\xi_\pp}$ is a $\mathsf{p}$-function, and therefore, since $\Pi^{(1)}$ is a derivation of $\mathsf{p}$-functions, the function $\Pi^{(1)}(q_{\xi_\pp})$ will also be a $\mathsf{p}$-function.
Then, define $Q_{\Pi^{(1)}}\in \Gamma(\X_\pp,\A^*_\pp)$ as
\[
Q_{\Pi^{(1)}}(\phi_\pp,\xi_\pp)\doteq\Pi^{(1)}(q_{\xi_\pp}(\phi_\pp)).
\]
Next, given the tautological, degree-1, function $c\colon \fG[1]\to \fG$, we can precompose $F$ with $c$ and obtain the degree $1$ function\footnote{More simply put, we replace $\xi \leadsto c$, as a degree-$1$ variable.} 
\[
F^{(1)}\doteq Q_{\Pi^{(1)}}\circ c\in C^\infty(\A[1]).
\]
It is immediate to check that for any $g\in  C^\infty_{\mathsf{p}}(\X_\pp)$, the expression $\{F^{(1)}, g\}$ is well-defined thanks to the fact that $\Pi^{(1)}(q_{\xi_\pp})$ is a $\mathsf{p}$-function, and is equal to
\[
\{ F^{(1)}, g\}^{[1]} = \Pi^{(1)} (g).
\]
This proves $\mathrm{Der}^1_{\mathsf{p}}(\X_\pp)\hookrightarrow C^\infty_{\mathsf{p},1}(\A_\pp[1])$. 

Higher derivations can be dealt similarly by considering $Q_{\Pi^{(k)}}\in L(\wedge^k\A_\pp,\X_\pp \times \mathbb{R})$, $Q_{\Pi^{(k)}}(\phi_\pp,\xi^1_\pp, \cdots, \xi^k_\pp) \doteq \Pi^{(k)}(q_{\xi_\pp^1}, \dots, q_{\xi_\pp^k})$. 
Precomposing with $c$, one can view an alternating linear function of $k$ vectors $(\xi_\pp^1,\dots \xi^k_\pp)$ as the evaluation of a degree $k$ function of the degree $1$ variable $c$ on $(\xi_\pp^1, \dots,\xi^k_\pp)$, and one finally obtains a degree $k$ function $F^{(k)} \in C^\infty_{\mathsf{p},k}(\A[1])$ with the sought property.
\end{proof}

\begin{corollary}[Lemma \ref{lemma:Gestern}]\label{rmk:CME}
To any degree-$2$ Hamiltonian function $F^{(2)}$ over $\A_\pp[1]$ there corresponds a 2-derivation $\Pi_{F^{(2)}}$ over $C^\infty_{\mathsf{p}}(\X_\pp)$. This 2-derivation satisfies the Jacobi identity if and only if $F^{(2)}$ satisfies the Classical Master Equation (CME), that is
\[
[\Pi_{F^{(2)}},\Pi_{F^{(2)}}]_\text{SN}=0
\quad\iff\quad
\{F^{(2)},F^{(2)}\}^{[1]} = 0.
\qedhere
\]
\end{corollary}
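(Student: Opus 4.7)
The plan is to derive the corollary directly from the Gerstenhaber algebra isomorphism established in Lemma \ref{lemma:Gestern}, specialized to the case $j=k=2$, combined with the standard characterization of Poisson bivectors via the Schouten--Nijenhuis bracket.

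First, I would invoke the key computational identity \eqref{eq:gestern} from the proof of Lemma \ref{lemma:Gestern}, namely
\[
[\Pi_{F^{(j)}},\Pi_{G^{(k)}}]_\text{SN} = \Pi_{\{F^{(j)},G^{(k)}\}^{[1]}},
\]
and specialize it to $j=k=2$ with $F^{(j)}=G^{(k)}=F^{(2)}$, obtaining
\[
[\Pi_{F^{(2)}},\Pi_{F^{(2)}}]_\text{SN} = \Pi_{\{F^{(2)},F^{(2)}\}^{[1]}}.
\]
Next, I would recall the standard fact that, for a $2$-derivation (bivector) $\Pi$ on a Poisson algebra of $\mathsf{p}$-functions, the induced skew-symmetric bracket $\{g_1,g_2\}_\Pi \doteq \Pi(g_1,g_2)$ satisfies the Jacobi identity precisely when $[\Pi,\Pi]_\text{SN}=0$. (This is a purely algebraic statement about Schouten--Nijenhuis brackets of alternating multi-derivations on a commutative algebra, and it carries over to the partial Poisson setting of Definition \ref{def:partialpoisson} because the relevant nested brackets close within $C^\infty_{\mathsf{p}}(\X_\pp)$ by Remark \ref{rmk:offPoissonalg} and Lemma \ref{lemma:deg0}.) Thus the Jacobi identity for $\Pi_{F^{(2)}}$ is equivalent to $[\Pi_{F^{(2)}},\Pi_{F^{(2)}}]_\text{SN}=0$.

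Finally, I would use the injectivity of the map $F^{(k)}\mapsto\Pi_{F^{(k)}}$ (for $k\geq 1$) that is part of the isomorphism of Gerstenhaber algebras in Lemma \ref{lemma:Gestern}. Applied to the degree-$3$ Hamiltonian function $\{F^{(2)},F^{(2)}\}^{[1]}\in C^\infty_{\mathsf{p},3}(\A_\pp[1])$, injectivity yields
\[
\Pi_{\{F^{(2)},F^{(2)}\}^{[1]}} = 0 \;\iff\; \{F^{(2)},F^{(2)}\}^{[1]} = 0.
\]
Chaining the three equivalences gives the claim.

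The only potentially nontrivial point is the second step: one must check that the standard equivalence ``Jacobi $\Longleftrightarrow$ $[\Pi,\Pi]_\text{SN}=0$'' holds in the partial Poisson setting, since $\Pi_{F^{(2)}}$ is a multi-derivation only on $C^\infty_{\mathsf{p}}(\X_\pp)$ rather than on all of $C^\infty(\X_\pp)$. This is guaranteed by Lemma \ref{lemma:deg0} and Remark \ref{rmk:offPoissonalg}, which together ensure that the nested brackets appearing in the Jacobiator remain within the class of $\mathsf{p}$-functions, so that the purely algebraic argument applies verbatim.
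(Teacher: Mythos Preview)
Your proposal is correct and follows exactly the route the paper intends: the corollary is stated without proof as an immediate consequence of the Gerstenhaber isomorphism in Lemma~\ref{lemma:Gestern}, and you have simply spelled out the three steps (the identity \eqref{eq:gestern} specialized to $j=k=2$, the standard Jacobi $\Leftrightarrow [\Pi,\Pi]_\text{SN}=0$ equivalence, and injectivity of $F^{(k)}\mapsto\Pi_{F^{(k)}}$) that this entails.
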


Then, in our quest for a partial Poisson structure on $\X_\pp$, the next goal is to find a natural candidate for $F^{(2)}$. So far we have only used the symplectic structure on $\A_\pp$.
We can now leverage the Lie algebroid structure of $\A_\pp$ and in particular its anchor map $\rho_\pp: \A_\pp \to T\X_\pp$. 
In fact, as the anchor map is given by the Lie algebra action $\fGpoff\to\mathfrak{X}(\X_\pp)$,  the shifted anchor $\rho_\pp^{[1]}: \A_\pp[1] \to T[1]\X_\pp$ can itself be re-interpreted as the map
\[
\rho_\pp^{[1]}: \fGpoff[1] \to \mathfrak{X}[1](\X_\pp),\quad c\mapsto \rho_\pp^{[1]}(c).
\]

This observation can be used to show that the anchor $\rho_\pp$ fixes a partial Poisson structure on $\X_\pp$. 

\begin{lemma}
The BRST vector field $\mathbb{Q}_\pp$ (Theorem \ref{thm:BRST}) is the unique Hamiltonian vector field on $(\A_\pp[1],\varpi_\pp^{[1]})$ of degree 1 whose horizontal part is $\rho_\pp^{[1]}(c)$. Its Hamiltonian is the degree-2 function $S_{\pp}  \in C^\infty_{\mathsf{p}}(\A_\pp[1])$,
\[
    \mathbb{Q}_\pp = \{S_{\pp}, \cdot \}^{[1]},
    \qquad
    S_{\pp} \doteq \tfrac12 \langle \h_\pp, [c,c]\rangle + \tfrac12 k_\smbullet(c,c).
\]
It satisfies the CME:
\[
    \{ S_{\pp}, S_{\pp} \}^{[1]}=0.
\]
\end{lemma}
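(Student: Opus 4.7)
The plan is to establish three claims in succession---Hamiltonicity ($\mathbb{Q}_\pp = \mathbb{X}_{S_\pp}$), uniqueness, and the classical master equation---using as main ingredients the equivariance identity of Theorem~\ref{thm:corneralgd}, the nilpotency $[\mathbb{Q}_\pp,\mathbb{Q}_\pp]=0$ recorded in Theorem~\ref{thm:BRST}, and the Lie-algebra morphism property $\mathbb{X}_{\{F,G\}^{[1]}} = [\mathbb{X}_F,\mathbb{X}_G]$ from Remark~\ref{rmk:offPoissonalg}.

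First I would verify Hamiltonicity by direct computation of $\iota_{\mathbb{Q}_\pp}\varpi_\pp^{[1]}$, splitting $\mathbb{Q}_\pp = \rho_\pp^{[1]}(c) - \tfrac12 \ad(c)$ into its $p_\pp^{[1]}$-horizontal and $p_\pp^{[1]}$-vertical pieces. Using the factorized form $\varpi_\pp^{[1]} = \langle \cbd_H h_\pp, \cbd_V c\rangle$, the horizontal piece produces $\langle \mathcal{L}_{\rho_\pp^{[1]}(c)} h_\pp, \cbd_V c\rangle$, which upon substituting the tautological ghost $c$ into the equivariance identity of Theorem~\ref{thm:corneralgd} converts to $\langle h_\pp, [c, \cbd_V c]\rangle + k_\smbullet(c, \cbd_V c)$. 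The vertical piece yields $-\tfrac12\langle \cbd_H h_\pp, \ad(c)(c)\rangle = -\tfrac12 \langle \cbd_H h_\pp, [c,c]\rangle$. A graded Leibniz-rule expansion of $\cbd S_\pp = \tfrac12 \cbd\langle h_\pp, [c,c]\rangle + \tfrac12 \cbd k_\smbullet(c,c)$ into its horizontal and vertical components, combined with the graded symmetry of $[\cdot,\cdot]$ and of $k_\smbullet$ on the degree-$1$ ghost, should match these two contributions exactly, giving simultaneously the Hamiltonian property and the identification of $S_\pp$ as its Hamiltonian.

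Uniqueness follows by degree counting. The Hamiltonian condition $\iota_{\mathbb{X}_F}\varpi_\pp^{[1]} = \cbd F$ splits as $\cbd_V F = \langle \mathcal{L}_{\mathbb{X}_F^H} h_\pp, \cbd_V c\rangle$ and $\cbd_H F = \pm\langle \cbd_H h_\pp, \mathbb{X}_F^V(c)\rangle$; prescribing $\mathbb{X}_F^H = \rho_\pp^{[1]}(c)$ therefore fixes $\cbd_V F$ and hence determines $F$ up to a function constant along the $\fGpoff[1]$-fibres. Since such a residual sits in ghost degree~$0$ while $F$ has ghost degree~$2$, it must vanish, forcing $F = S_\pp$; the horizontal part $\cbd_H F$ then fixes $\mathbb{X}_F^V$ by weak nondegeneracy, so $\mathbb{X}_F = \mathbb{Q}_\pp$. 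For the classical master equation, Remark~\ref{rmk:offPoissonalg} together with Theorem~\ref{thm:BRST} gives
\[
\mathbb{X}_{\{S_\pp, S_\pp\}^{[1]}} = [\mathbb{X}_{S_\pp}, \mathbb{X}_{S_\pp}] = [\mathbb{Q}_\pp, \mathbb{Q}_\pp] = 0,
\]
so by weak nondegeneracy of $\varpi_\pp^{[1]}$ the function $\{S_\pp, S_\pp\}^{[1]}$ is locally constant; being homogeneous of ghost degree $2+2-1=3$, it must vanish.

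The main obstacle will be tracking the Koszul-type signs when promoting the equivariance identity of Theorem~\ref{thm:corneralgd}---stated for commuting degree-$0$ arguments $\xi_\pp,\eta_\pp\in\fGpoff$---to the graded setting in which one entry is the degree-$1$ tautological variable $c$ and the other is the vertical one-form $\cbd_V c$. In tandem, the graded-symmetry conventions on the shifted bracket $[\cdot,\cdot]$ on $\fGpoff[1]$ and on $k_\smbullet$ must be matched consistently with the factors of $\tfrac12$ in $S_\pp$ so that the various halves reassemble without residue. Once these conventions are fixed, the remaining computation reduces to a mechanical application of Cartan calculus on $\A_\pp[1]$.
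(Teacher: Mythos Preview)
Your proposal is correct and follows essentially the same strategy as the paper for uniqueness and the CME: both reduce to the observation that a constant function of positive ghost degree must vanish, combined with $[\mathbb{Q}_\pp,\mathbb{Q}_\pp]=0$ from Theorem~\ref{thm:BRST} and the Poisson-algebra representation $\mathbb{X}_{\{F,G\}^{[1]}}=[\mathbb{X}_F,\mathbb{X}_G]$.

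There are two minor differences worth noting. First, you add an explicit verification that $\mathbb{Q}_\pp$ is Hamiltonian with Hamiltonian $S_\pp$, via the equivariance identity of Theorem~\ref{thm:corneralgd}; the paper's proof takes this for granted (it is checked only in the Yang--Mills running example, Section~\ref{sec:runex-cornerdata}). Your route is more self-contained here. Second, for uniqueness you argue at the level of the Hamiltonian function $F$ (fixing $\cbd_V F$ determines $F$ up to a degree-0 ambiguity, incompatible with $|F|=2$), whereas the paper argues directly at the level of the vector field: it sets $\mathbb{X}_\pp \doteq \mathbb{X}'_\pp - \mathbb{Q}_\pp$, observes this is purely vertical and Hamiltonian, and uses $\L_{\mathbb{X}_\pp}\varpi_\pp^{[1]}=0$ to conclude $\cbd\,\mathbb{X}_\pp(c)=0$, hence $\mathbb{X}_\pp(c)$ is constant of degree~2 and vanishes. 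The two arguments are dual to one another and equally short; yours has the slight advantage of making the role of $S_\pp$ explicit throughout.
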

\begin{proof}
We start by observing that the only \emph{constant} homogeneous functions of degree $k>0$ is the vanishing function. 

Now, let $\mathbb{X}'_\pp$ be a Hamiltonian vector field of  degree-1 such that its horizontal part is $\rho_\pp^{[1]}$. Then $\mathbb{X}_\pp \doteq \mathbb{X}'_\pp-\mathbb{Q}_\pp$ is of degree 1, purely vertical and also Hamiltonian. We want to show that $\mathbb{X}_\pp$ must be zero, i.e.\ that $\mathbb{X}_\pp(c)=0$. Indeed, from its Hamiltonian property, $0 = \L_{\mathbb{X}_\pp}\varpi_\pp = - \langle \cbd h_\pp, \cbd \mathbb{X}_\pp(c) \rangle$, which is possible iff $\cbd \mathbb{X}_\pp(c)=0$ that is iff $\mathbb{X}_\pp(c)$ is constant. Since $\mathbb{X}_\pp(c)$ is degree 2, by the initial observation, we deduce $\mathbb{X}_\pp(c)=0$.

The CME follows from the fact that $\mathbb{Q}_\pp$ is the Hamiltonian vector field of $S_\pp$, i.e.\ $\bi_{[\mathbb{Q}_\pp,\mathbb{Q}_\pp]}\varpi_\pp^{[1]} = \cbd\{S_\pp,S_\pp\}^{[1]}$ (cf.\ Equation \eqref{eq:Poissonalgrep}) and again the fact that there are no nonvanishing constant homogeneous functions of degree $k>0$.
\end{proof}

Piecing all the elements together we arrive at the following statement, proving Theorem \ref{mainthm:Poisson}, item \ref{thmitem:main2-algebroids}:
\begin{theorem}[Corner Poisson structure]\label{thm:Poissontheorem}
The corner configuration space $\X_\pp$ equipped with the 2-derivation of the algebra of $\mathsf{p}$-functions $C_{\mathsf{p}}^\infty(\X_\pp)$
\[
    \Pi_\pp(f,g) \doteq \{\{S_{\pp},f\}^{[1]},g\}^{[1]}
\]
is a partial Poisson manifold. Moreover, $\Pi_\pp^\sharp\colon T^{\mathsf{p}}\X_\pp \to T\X_\pp$ defines an algebroid structure, and $\Pi_\pp^\sharp = \rho_\pp$, making $\A_\pp\simeq T^\mathsf{p}\X_\pp$ into a symplectic Lie algebroid (see Remark \ref{rmk:sympliealgbd}).
\end{theorem}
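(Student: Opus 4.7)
The plan is to assemble the theorem directly from the Gerstenhaber-algebra machinery already built up in Lemmas \ref{lemma:deg0}, \ref{lemma:Gestern}, Corollary \ref{rmk:CME} and the CME satisfied by $S_\pp$. Since $S_\pp$ is a degree-2 Hamiltonian function in $C^\infty_{\mathsf{p},2}(\A_\pp[1])$, Lemma \ref{lemma:Gestern} guarantees that $\Pi_{S_\pp}$ is a well-defined element of $\mathrm{Der}^2_{\mathsf{p}}(\X_\pp)$. In particular the composite double bracket $\{\{S_\pp,f\}^{[1]},g\}^{[1]}$ automatically lands in $C^\infty_{\mathsf{p}}(\X_\pp)$ (via the identification $C^\infty_{\mathsf{p}}(\X_\pp)\simeq C^\infty_{\mathsf{p},0}(\A_\pp[1])$ of Lemma \ref{lemma:deg0}), is antisymmetric in $(f,g)$ because $|S_\pp|=2$ and $c$ is anticommuting, and obeys the Leibniz rule entry-wise since $\{\cdot,\cdot\}^{[1]}$ does and the degree-$0$ functions commute among themselves. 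So the definition of $\Pi_\pp$ is well posed.

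The Jacobi identity is then immediate from Corollary \ref{rmk:CME}: under the Gerstenhaber isomorphism of Lemma \ref{lemma:Gestern}, $[\Pi_{S_\pp},\Pi_{S_\pp}]_{\text{SN}}$ corresponds to $\Pi_{\{S_\pp,S_\pp\}^{[1]}}$, and the CME $\{S_\pp,S_\pp\}^{[1]}=0$ (established immediately before the theorem) therefore forces $[\Pi_\pp,\Pi_\pp]_{\text{SN}}=0$. Thus $(\X_\pp,\Pi_\pp)$ is a partial Poisson manifold in the sense of Definition \ref{def:partialpoisson}.

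Next I would verify the anchor identification $\Pi_\pp^\sharp=\rho_\pp$. Under the isomorphism $T^\mathsf{p}\X_\pp \simeq \A_\pp$ of Lemma \ref{lemma:T'P=A}, any $\alpha\in T^\mathsf{p}\X_\pp$ is represented as an element $\xi_\pp\in\fGpoff$ via the functional-derivative prescription of Remark \ref{rmk:functionalder}, i.e.\ $\alpha=\bd g$ corresponds to $\xi_\pp=\delta g/\delta h_\pp$. Computing $\Pi_\pp(f,g)=\{\{S_\pp,f\}^{[1]},g\}^{[1]}$ with $S_\pp=\tfrac12\langle h_\pp,[c,c]\rangle+\tfrac12 k_\smbullet(c,c)$ reproduces, on functions, the Hamiltonian flow generated by $\mathbb{Q}_\pp$ acting through its horizontal component $\rho_\pp^{[1]}(c)$; the vertical $-\tfrac12\ad(c)$ piece contributes only to the Chevalley--Eilenberg differential in the algebroid direction and drops out of the evaluation on basic (degree-$0$) functions. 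Matching terms against the definition of $\rho_\pp$ and the cocycle $k_\smbullet$ (Theorem \ref{thm:corneralgd}) yields $\Pi_\pp^\sharp=\rho_\pp$ on $T^\mathsf{p}\X_\pp\simeq\A_\pp$.

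Finally, since $\A_\pp$ is already a Lie algebroid (Theorem \ref{thm:corneralgd}) with anchor $\rho_\pp$ and carries the closed symplectic form $\varpi_\pp$, and since the preceding identification shows $\Pi_\pp^\sharp$ coincides with the algebroid anchor, $\A_\pp\simeq T^\mathsf{p}\X_\pp$ inherits the structure of a symplectic Lie algebroid as claimed. I expect the main obstacle to be the third step: carrying out carefully the partial bracket computation $\{\{S_\pp,f\}^{[1]},g\}^{[1]}$, tracking the split between horizontal and vertical contributions of $\mathbb{Q}_\pp$, and showing that the cocycle term in $S_\pp$ enters precisely as required so that the resulting anchor matches $\rho_\pp$ rather than some shift thereof. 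Everything else is a direct application of the Gerstenhaber algebra machinery already in place.
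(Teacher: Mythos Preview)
Your proposal is correct and follows essentially the same route as the paper: the theorem is presented there as the direct assembly of Lemmas \ref{lemma:deg0} and \ref{lemma:Gestern}, Corollary \ref{rmk:CME}, and the CME for $S_\pp$ (the paper introduces it with ``Piecing all the elements together''). The anchor identification $\Pi_\pp^\sharp=\rho_\pp$ and the symplectic Lie algebroid interpretation are handled exactly as you describe, with the explicit bracket computation deferred to the Yang--Mills example and Remark \ref{rmk:sympliealgbd}.
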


\begin{remark}[Symplectic Lie algebroid]\label{rmk:sympliealgbd}
With reference to Appendix \ref{app:CEcoh}, we observe that through the isomorphism of Theorem \ref{thm:BRST}, the degree-2 Hamiltonian function $S_{\pp}$ is identified with the $\A_\pp$-2-form $\sigma_{\A_\pp}\in\Omega^{2}_{\A_\pp}(\X_\pp)$, such that
\[
    \sigma_{\A_\pp}(s_1,s_2) = \langle h_\partial,\llbracket s_1,s_2\rrbracket\rangle  + k_\smbullet(s_1,s_2)
\]
for any two sections $s_i\in\Gamma(\X_\pp,\A_\pp)$. 
It is then immediate to verify that $\sigma_{\A_\pp}$ is in fact $\delta_{\A_\pp}$-exact,
\[
        \sigma_{\A_\pp} = \delta_{\A_\pp} \tau_{\A_\pp} \quad\text{for}\quad \tau_{\A_\pp}\doteq\frac12\langle[h_\smbullet]_\pp,\cdot\,\rangle,
\]
and therefore to conclude that $S_{\pp}$ satisfies the CME:
\[
    \delta_{\A_\pp} \sigma_{\A_\pp} = 0
    \quad\iff\quad
    \mathbb{Q}_\pp S_{\pp} = 0
    \quad\iff\quad
    \{ S_{\pp}, S_{\pp} \}^{[1]} = 0.
\]
A Lie algebroid $\A_\pp$ equipped with a $\delta_{\A_\pp}$-closed $\A_\pp$-2-form $\sigma_{\A_\pp}\in\Omega^2_{\A_\pp}$ is called a \emph{symplectic Lie algebroid}.\footnote{Notice that this nomenclature does \emph{not} directly refer to the symplectic structure $\varpi_\pp$ existing on $\A_\pp$. This symplectic structure, however, is utilized in the construction of the bracket of the closely related algebra $(C^\infty_\mathsf{p}(\A_\pp[1]),\{\cdot,\cdot\}^{[1]})$.}
\end{remark}

\begin{remark}
From \cite[Prop. 2.4.5]{PelletierCabau} we have that the collection of $\Delta_\phi=\Pi_\pp^\sharp(T^{\mathsf{p}}_\phi\X_\pp)\subset T_{\phi}\X_\pp$ defines an involutive distribution whenever $\X_\pp$ is partial Poisson, $\Delta \simeq T^{\mathsf{p}}\X_\pp/\mathrm{ker}(\Pi_\pp^\sharp)$. (Involutivity comes from the fact that $\Pi^\sharp_\pp\equiv \rho_\pp$ is an algebroid anchor.) 

Moreover, the Poisson structure $\Pi_\pp$ defines a skew-symmetric bilinear form $\Omega_\phi:(T^{\mathsf{p}}_\phi\X)^{\wedge 2} \to \mathbb{R}$:
\[
\Omega_\phi(\alpha,\beta) \doteq \Pi_\pp(\alpha,\beta) \equiv \{f,g\}, \qquad \alpha=\bd f, \quad \beta=\bd g,
\]
for $f,g\in C^\infty_\mathsf{p}(\X_\pp)$. Then, globally, for any $\alpha,\beta\in\Gamma(T^\mathsf{p}\X_\pp)$, one has $\Omega(\alpha,\beta) = \langle \alpha, \Pi_\pp^\sharp(\beta)\rangle$. Since $\Pi_\pp^\sharp$ identifies $\Delta$ with $T^{\mathsf{p}}\X/\ker(\Pi^\sharp_\pp)$, $\Omega_\phi$ induces a skew-symmetric, (weakly) nondegenerate bilinear form on $\Delta$ (see \cite[Definition 2.4.3]{PelletierCabau} and subsequent discussion). We then assume that integral manifolds exist to thus obtain the symplectic leaves of the partial Poisson structure $\Pi_\pp$.
\end{remark}

\begin{corollary}\label{cor:sympleaves}
The symplectic leaves of $\Pi_\pp$ are in 1-1 correspondence with the connected components of the preimage of a coadjoint orbit $\mathcal{O}_{f}\subset (\fGpoff)^*$ along $\h_\pp$.
\end{corollary}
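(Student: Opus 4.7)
My plan is three-fold: first, identify the symplectic leaves of $\Pi_\pp$ with the orbits of the $\fGpoff$-action on $\X_\pp$ induced by $\rho_\pp$; second, use the affine-equivariance of $h_\pp$ to see that these orbits map into affine orbits $\mathcal{O}_f \subset (\fGpoff)^*$; and third, exploit the weak symplecticity of $\varpi_\pp$ on $\A_\pp$ to show that each orbit already fills a full connected component of $h_\pp^{-1}(\mathcal{O}_f)$, giving the bijection.

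For the first step I will appeal to Theorem \ref{thm:Poissontheorem}, which yields $\Pi_\pp^\sharp = \rho_\pp$; combined with the functional-derivative identification $\delta g/\delta h_\pp \in \Gamma(\X_\pp,\A_\pp)$ of Remark \ref{rmk:functionalder}, the Hamiltonian vector field of an arbitrary $\mathsf{p}$-function $g$ is simply $\rho_\pp(\delta g/\delta h_\pp)$. Hence the characteristic distribution of $\Pi_\pp$ at $\phi_\pp$ equals $\rho_\pp(\fGpoff)|_{\phi_\pp}$, and its maximal integral submanifolds are the connected $\fGpoff$-orbits on $\X_\pp$. Step two is immediate from the infinitesimal equivariance relation
\[
(h_\pp)_*\, \rho_\pp(\xi_\pp)\big|_{\phi_\pp} \;=\; \ad^*_{k_\smbullet}(\xi_\pp)\cdot h_\pp(\phi_\pp),
\]
established in Theorem \ref{thm:corneralgd}, combined with Definition \ref{def:affinecoad}: every $\fGpoff$-orbit in $\X_\pp$ is sent into a single affine orbit $\mathcal{O}_f$, so each symplectic leaf $L$ lies in a unique connected component of $h_\pp^{-1}(\mathcal{O}_f)$, where $f = h_\pp(\phi_\pp)$ for any $\phi_\pp\in L$.

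Step three, the reverse inclusion, is the crux. Examining $\varpi_\pp = \langle \cbd_H h_\pp,\cbd_V\xi_\pp\rangle$ on $\A_\pp = \X_\pp\times\fGpoff$, the weak nondegeneracy produced in Theorem \ref{thm:corneralgd} unpacks (among other things) into the injectivity of the differential $(h_\pp)_*|_{\phi_\pp}\colon T_{\phi_\pp}\X_\pp \to (\fGpoff)^*$; equivalently $h_\pp\colon \X_\pp\to(\fGpoff)^*$ is an immersion. Restricting to $h_\pp^{-1}(\mathcal{O}_f)$, this forces $(h_\pp)_*$ to inject $T_{\phi_\pp} h_\pp^{-1}(\mathcal{O}_f) \hookrightarrow T_{h_\pp(\phi_\pp)}\mathcal{O}_f$. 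On the other hand, the inclusion $\rho_\pp(\fGpoff)|_{\phi_\pp}\subseteq T_{\phi_\pp} h_\pp^{-1}(\mathcal{O}_f)$ from step two pushes forward \emph{onto} $T_{h_\pp(\phi_\pp)}\mathcal{O}_f$ by transitivity of the affine action on its own orbit. Squeezing through the injection then forces
\[
\rho_\pp(\fGpoff)\big|_{\phi_\pp} \;=\; T_{\phi_\pp} h_\pp^{-1}(\mathcal{O}_f),
\]
so the characteristic distribution of $\Pi_\pp$ agrees pointwise with the tangent bundle of $h_\pp^{-1}(\mathcal{O}_f)$, and the maximal integral submanifolds (i.e.\ the symplectic leaves) coincide with the connected components of $h_\pp^{-1}(\mathcal{O}_f)$.

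The main obstacle I anticipate is more technical than conceptual: extracting immersivity of $h_\pp$ from the weak symplecticity of $\varpi_\pp$ in the Fr\'echet/convenient setting rests on the careful identification $T^{\mathsf{p}}\X_\pp \simeq \A_\pp$ of Definition \ref{def:T'P} and Lemma \ref{lemma:T'P=A}, and one must additionally invoke the smoothness and existence of the integral submanifolds of $\rho_\pp(\fGpoff)$ in infinite dimensions---a point already flagged just above the corollary. Modulo those standing regularity hypotheses, the three-step argument above converts the symplectic leaves of $\Pi_\pp$ into the connected components of $h_\pp^{-1}(\mathcal{O}_f)$ bijectively.
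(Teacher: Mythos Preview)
Your proposal is correct and follows essentially the same approach as the paper: both identify the symplectic leaves with $\Gpoff$-orbits via $\Pi_\pp^\sharp = \rho_\pp$ and $T^{\mathsf{p}}\X_\pp\simeq\A_\pp$, then use the equivariance of $h_\pp$ to map orbits into affine orbits. Your third step makes explicit the immersivity of $h_\pp$ (extracted from the weak nondegeneracy of $\varpi_\pp$) to squeeze $\rho_\pp(\fGpoff)|_{\phi_\pp}$ between itself and $T_{\phi_\pp}h_\pp^{-1}(\mathcal{O}_f)$; the paper compresses this into the phrase ``discrete kernel counting connected components,'' which encodes the same immersion-plus-transitivity argument in abbreviated form.
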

\begin{proof}
Observe that every symplectic leaf, i.e.\ every integral manifold of the characteristic distribution of the (partial) Poisson structure, $\Delta = \Pi_\pp^\sharp(T^{\mathsf{p}}\X_\pp)$, coincides with an orbit of the action of the Lie group $\Gpoff$ on $\X_\pp$, owing to $\rho_\pp = \Pi_\pp^\sharp$ and $T^{\mathsf{p}}\X_\pp\simeq \A_\pp$ (Lemma \ref{lemma:T'P=A} and Theorem \ref{thm:Poissontheorem}). However, by equivariance of $\h_\pp$, each symplectic leaf maps surjectively onto $\mathcal{O}_f$, with discrete kernel counting connected components, i.e.\ every $\Gpoff$-orbit in $\X_\pp$ is diffeomorphic to a coadjoint orbit in $(\fGpoff)^*$
\end{proof}

\subsection{Equivalence with on-shell superselection sectors}\label{sec:onshellequivalence}
In this section we tie the two-stage reduction of Sections \ref{sec:bulkreduction} and \ref{sec:cornerred} with the off-shell construction we just outlined. Notice that the observations leading to Corollary \ref{cor:sympleaves} in particular imply that every symplectic leaf, being an orbit of the exponential group $\Gpoff$ in $\X_\pp$, has a transitive action of $\Gpoff$. This justifies the following definition:

\begin{definition}\label{def:offshellSSS}
Let $f\in \calF^\off$ and denote by $\mathcal{O}_f$ its co-adjoint/affine orbit. The \emph{q-th off-shell corner superselections} $\S^\pp_{[f],q}$ are the symplectic leaves of $\Pi_\pp$ or, equivalently, the $\Gpoff$-orbits in $\X_\pp$.
\end{definition}

\begin{remark}
According to Corollary \ref{cor:sympleaves}, the off-shell corner superselection sectors can be equivalently presented as the $q$-th path-connected components of the preimage of a coadjoint orbit $\mathcal{O}_f\subset (\fGpoff)^*$ along the map $\h_\pp\colon \X_\pp \to (\fGpoff)^*$.
\end{remark}

We can look at the subset $\C_\pp=\pi_{\pp,\X}(\C)$ in $\X_\pp$ obtained by projecting $\C$ along $\pi_{\pp,\X}\colon \X \to \X_\pp$ (see Theorem \ref{thm:corneralgd}). Recall that $\X_\pp$ is a union of $q$-th off-shell corner superselections because they are its symplectic leaves (Definition \ref{def:offshellSSS} and Theorem \ref{thm:Poissontheorem}). Then, the quotient of $\X_\pp$ by the action of $\fGpoff$ is covered by the Poisson manifold $\uuC$, and the covering map is given by $\pi_{\pp,\X}\vert_\C$. Indeed, one has:

\begin{proposition}\label{prop:onshellsuperselections}
Assume that the set $\C_\pp\doteq \pi_{\pp,\X}(\C)\subset \X_\pp$ is a smooth submanifold. Then $\C_\pp$ is a $\Gpoff$-invariant submanifold in $\X_\pp$, and a union of symplectic leaves for the partial Poisson structure $\Pi_\pp$.
\end{proposition}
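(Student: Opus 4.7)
The strategy is to reduce the claim to two ingredients: (i) the $\G$-invariance of $\C \subset \X$, and (ii) the identification of the symplectic leaves of $\Pi_\pp$ with the $\Gpoff$-orbits in $\X_\pp$ provided by Definition \ref{def:offshellSSS} and Corollary \ref{cor:sympleaves}.

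First, I would show that $\C_\pp = \pi_{\pp,\X}(\C)$ is $\Gpoff$-invariant. By Corollary \ref{cor:tangent}, the infinitesimal action of $\fG$ on $\X$ is tangent to the constraint surface, i.e.\ $\rho(\fG)|_\C \subset T\C$. Applying $\cbd\pi_{\pp,\X}$ and invoking the formula $\rho_\pp = \cbd\pi_{\pp,\X}\circ\rho\circ\pi_\pp^{-1}$ from Theorem \ref{thm:corneralgd}, one obtains $\rho_\pp(\fGpoff)|_{\C_\pp}\subset T\C_\pp$, so that $\C_\pp$ is infinitesimally preserved by the $\fGpoff$-action. Invoking Assumption \ref{ass:groups}, which ensures that $\fGpoff \simeq \fGredoff$ enlarges to a simply-connected locally exponential Lie group $\Gpoff$, we integrate the infinitesimal invariance to conclude that $\C_\pp$ is $\Gpoff$-invariant inside $\X_\pp$; here the smoothness assumption on $\C_\pp$ guarantees that the integral flow remains in $\C_\pp$.

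Second, by Theorem \ref{thm:Poissontheorem} together with Corollary \ref{cor:sympleaves}, the symplectic leaves of $\Pi_\pp$ are precisely the $\Gpoff$-orbits in $\X_\pp$, as also recorded in Definition \ref{def:offshellSSS}. A $\Gpoff$-invariant subset of $\X_\pp$ is thus automatically a union of such orbits, and hence a union of symplectic leaves, which yields both halves of the claim simultaneously.

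The main obstacle, implicit in the first step, is making precise the sense in which the $\G$-action on $\X$ descends to a $\Gpoff$-action on $\X_\pp$ compatible with $\pi_{\pp,\X}$. The infinitesimal version of this statement is already encoded in Theorem \ref{thm:corneralgd}, where the anchor $\rho_\pp$ is constructed so that the flux annihilator $\fNoff$ acts within the fibres of $\pi_{\pp,\X}$; integrating this to a bona fide group action uses Assumption \ref{ass:groups}. As an alternative route one could exploit the characterization of leaves via $\h_\pp$: since $\h_\smbullet = \pi_\pp^* h_\pp$ (Theorem \ref{thm:corneralgd}), one has $h_\pp(\C_\pp) = \F \subset (\fGpoff)^*$, which is invariant under the affine coadjoint action by Lemma \ref{lemma:FPoisson}, so $\C_\pp \subset h_\pp^{-1}(\F)$; showing that $\C_\pp$ is saturated by leaves, however, still reduces to the $\Gpoff$-invariance proved above.
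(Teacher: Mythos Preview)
Your argument is correct and more economical than the paper's own proof for the bare statement of the proposition. Your route is: (i) $\rho(\fG)\vert_\C\subset T\C$ from Corollary~\ref{cor:tangent}, (ii) push this forward through $\bd\pi_{\pp,\X}$ using the defining formula $\rho_\pp=\bd\pi_\pp\circ\rho\circ\pi_\pp^{-1}$ of Theorem~\ref{thm:corneralgd} (choosing, for each $\phi_\pp\in\C_\pp$, a lift $\phi\in\C$, which exists by definition of $\C_\pp$), (iii) integrate via Assumption~\ref{ass:groups}, and (iv) invoke the identification leaves${}={}\Gpoff$-orbits from Corollary~\ref{cor:sympleaves}/Definition~\ref{def:offshellSSS}. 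Each step is sound; the smoothness hypothesis on $\C_\pp$ is used exactly where you say, to make sense of $T\C_\pp$ and to ensure that $\bd(\pi_{\pp,\X}\vert_\C)$ lands in it.

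The paper takes a longer path: it first factors $\iota_\C^*\pi_{\pp,\X}$ through $\uCo$ by checking $\rho(\fGo)\subset\ker(\bd\pi_{\pp,\X}\vert_\C)$, obtaining $\pi_{\pp,\uCo}\colon\uCo\to\C_\pp$; it then builds the commuting square linking $\uh$ and $h_\pp\vert_{\C_\pp}$, tracks a point $\phi\in\uS_{[f],q}$ through the diagram to a leaf $\S^\pp_{[f],n}$, and finally uses transitivity of the $\Gpoff$-action on that leaf to establish the equivariance $\pi_{\pp,\uCo}(\underline g\cdot\phi)=g_\pp\cdot\pi_{\pp,\uCo}(\phi)$ with $g_\pp=\underline g$ under $\fGpoff\simeq\fGredoff$. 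What this buys is precisely the commuting diagram and the equivariance of $\pi_{\pp,\uCo}$, which are \emph{reused} in the proof of Theorem~\ref{thm:spaceofsuperselection} immediately afterwards. Your approach proves the proposition cleanly but does not produce those intermediate objects; if you were writing the paper, you would have to recover them separately when proving the next theorem.
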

\begin{proof}

Let us look at the diagram
\[
    \xymatrix{
            & \ar[dl]_{\pi_\circ} \C \ar[dd]^{\iota_\C^*\pi_{\pp,\X}}\ \ar@{^(->}[r]^{\iota_\C}  & \X \ar[dd]^{\pi_{\pp,\X}} \ar[r]^{\h_\smbullet} & \dual{\fG}\\
    \uCo\ar[dr]_{\pi_{\pp,\uCo}} &&\\
            & \C_\pp\ \ar@{^(->}[r]^{\iota_{\C_\pp}}    & \X_\pp \ar[r]^{\h_\pp\;\;}& (\fGpoff)^*
    }
\]
It commutes because $\rho(\fGo)\subset \mathrm{ker}(\bd\pi_{\pp,\X}\vert_\C)$. Indeed, from the fact that the on-shell flux annihilator $\fN = \fGo$ is an ideal in $\fG$ (Theorems \ref{thm:frakN} and \ref{thm:fGo}), we have
\[
    \bi_{\rho(\xio)}(\bd \h)\vert_{\phi\in\C} = 0,
\]
which means that $\rho(\fGo)$ is contained in the restriction of the horizontal kernel distribution ${}^hK\X$ to $\C$ (see Theorem \ref{thm:corneralgd}), and $\iota^*_\C\pi_{\pp,\X}$ descends to a map $\pi_{\pp,\uCo}\colon \uCo \to \C_\pp$. Hence, we have the commuting diagram:
\begin{equation}\label{e:commutingdiagramoff-onshell}\xymatrix{
    \uCo \ar[d]^{\pi_{\pp,\uCo}} \ar[r]^{\uh} &\,\fGred^*\,\ar@{^(->}[r]& (\fGredoff)^* \ar[d]^{\simeq} \\
    \C_\pp \ar[rr]^{\h_\pp\vert_{\C_\pp}} && (\fGpoff)^* 
    }
\end{equation}
Take $\phi\in \uS_{[f],q}\subset \uCo$, i.e.\ $\phi\in\uCo$ and $\uh(\phi)\in\mathcal{O}_f$ (Definition \ref{def:fluxSSS}). The commutativity of the diagram \eqref{e:commutingdiagramoff-onshell} implies that $\pi_{\pp,\uCo}(\phi)\in \S^\pp_{[f],n} \subset \h_\pp^{-1}(\mathcal{O}_f)$ for some index-value $n$ labelling the relevant connected component. 
Similarly, owing to the equivariance of $\uh$ and connectedness of $\Gred$, for every $\underline{g}\in\Gred$ we have $\underline{g}\cdot \phi \in \uS_{[f],q}$ and $\pi_{\pp,\uCo}(\underline{g}\cdot\phi)\in \S^\pp_{[f],n}$ for $n$ as above.

Now, each connected component $\S^\pp_{[f],n} \subset \h_\pp^{-1}(\mathcal{O}_f)$ of the preimage along $\h_\pp$ of a coadjoint orbit $\mathcal{O}_f$ is by construction a symplectic leaf of $\Pi_\pp$. Hence the action of $\Gpoff$ on $\S^\pp_{[f],n}$ is transitive and there exists $g_\pp\in\Gpoff$ such that $\pi_{\pp,\uCo}(\underline{g}\cdot\phi)=g_\pp\cdot \pi_{\pp,\uCo}(\phi)$. Indeed, it is easy to check that, in virtue of the commuting diagram \eqref{e:commutingdiagramoff-onshell} and of Theorem \ref{thm:GloecknerNeeb}, this equation is satisfied if one chooses $g_\pp = \underline{g}$ through the isomorphism $\fGpoff\simeq \fGred$.
\end{proof}

We can summarise a few results on superselection sectors and flux orbits through the following diagram:
\[
\xymatrix{
\S_{[f]} \ar@{^(->}[r] \ar[d] & \C \;\ar@{^(->}[r] \ar[d]^{{\pi_\circ}} & \X \ar[dd]^{{\pi_{\pp,\X}}}\\
\uS_{[f]} \ar@{^(->}[r] \ar[d] & \uCo \ar[d]^{{\pi_{\pp,\uCo}}}\\
\pi_{\pp,\uCo}(\uS_{[f]}) \ar@{^(->}[r] &\C_\pp \,\ar@{^(->}[r]&\X_\pp
}
\]

\begin{remark}
We interpret Proposition \ref{prop:onshellsuperselections} as saying that the on-shell superselection sectors can be recovered from an off-shell analysis. At the same time this indicates that---generally speaking---not all off-shell superselections, which only know about $\partial\Sigma$, will be compatible with the specifics of the solutions of the constraint $\bHo$ on the interpolating manifold $\Sigma$.
\end{remark}

We conclude our discussion of corner perspective on superselections with the following result, which proves Theorem \ref{mainthm:Poisson}, item \ref{thmitem:main2-diagram}.

\begin{theorem}[Space of on-shell superselections]\label{thm:spaceofsuperselection}
There exists a commuting diagram of Poisson manifolds
\[
\xymatrix{
 \uCo \ar[d]_{\pi_{\pp,\uCo}}\ar[r]^{\underline{\pi}} & \uuC \ar[d]^{\underline{q}} \\
 \C_\pp \ar[r]^-{q_\pp} & \mathcal{B}
}
\]
where $\mathcal{B}\doteq\C_\pp/\Gpoff$ is the space of leaves of the symplectic foliation of $\C_\pp$ with trivial Poisson structure, and the maps $\underline{\pi}\colon \uCo \to \uuC$ and $q_\pp \colon \C_\pp \to \mathcal{B}$ are the respective group-action quotients by $\Gred$ and $\Gpoff$ respectively. 
We call $\mathcal{B}$ the \emph{space of on-shell superselections}. 
\end{theorem}

\begin{proof}
In the proof of Proposition \ref{prop:onshellsuperselections} we argued that $\pi_{\pp,\uCo}\colon \uCo \to \C_\pp$ is an equivariant map. This means that, if we denote elements of the equivalence class $[\phi]\in\uuC$ by $\underline{g}\cdot [\phi]_\circ$ for  $[\phi]_\circ\in \uCo$ and $\underline{g}\in\Gred$ we have
\[
\pi_{\pp,\uCo}(\underline{g}\cdot [\phi]_\circ) = g_\pp\cdot\pi_{\pp,\uCo}( [\phi]_\circ)
\]
for $g_\pp=\underline{g}$ through the isomorphism $\fGpoff\simeq \fGred$. 
Then, the map $\pi_{\pp,\uCo}$ is well defined on the equivalence class $[\phi]$ as
\[
\underline{q}([\phi]) \doteq [\pi_{\pp,\uCo}(\underline{g}\cdot [\phi]_\circ)]_\pp = [{g_\pp}\cdot \pi_{\pp,\uCo}([\phi]_\circ)]_\pp = [\pi_{\pp,\uCo}([\phi]_\circ)]_\pp
\]
where $q_\pp([\phi_\pp])\equiv[\phi_\pp]_\pp\in\mathcal{B}$ is the equivalence class of a point $\phi_\pp\in \X_\pp$ defined by the action of $\Gpoff$.

Whenever the fibre product $\C_\pp \times_{\mathcal{B}} \uuC$ exists as a smooth manifold, it is the pullback of the diagram given by the two morphisms $\C_\pp\rightarrow \mathcal{B} \leftarrow \uuC$. Hence it satisfies the universal property, whence we conclude that there exists, unique, a surjective map $\uCo \to \C_\pp \times_{\mathcal{B}} \uuC$.
\end{proof}

\subsection{An alternative set of pre-corner data}\label{sec:ultralocality}

We conclude this section with a brief exploration of an alternative construction of pre-corner and corner data, which, at least in the ultralocal case, ends up coinciding with the one studied so far. The advantage of this alternative corner structure is that it can be introduced even in the absence of a flux-map, purely out of the theory's constraint form. We leave a general analysis of these alternative corner data set to future work.

In Definition \ref{def:momentmapSourcedecomp} and Lemma \ref{lemma:checkC} we have interepreted $\bH$ as a local Lagrangian density on the triple product $\oloc^{\text{top},0}(\Sigma\times\A)$ for $\A=\X\times \fG$. Using the decomposition of Theorem \ref{thm:LocFormDec}, we do the same for $\bHo$ and write the total variation 
\[
\cbd \bHo = (\cbd\bHo)_\text{src} + (\cbd\bHo)_\text{bdry}
\]
where $(\cbd\bHo)_\text{bdry}\in d\oloc^{\text{top}-1,0}(\Sigma\times\X,\dual{\fG})$. Therefore, upon integrating over $\Sigma$, we intuitively see that $(\cbd \Ho)_\text{bdry} \doteq \int_\Sigma (\cbd\bHo)_\text{bdry}$ will descend to the corner $\partial \Sigma$. This statement is made precise by the following

\begin{definition}[Constraint corner data]\label{def:precorner2}
Let $\tA_\pp$ be the space of pre-corner data of Definition \ref{def:cornerconf}, with the restriction map $\wtpi\colon \A \to \tA_\pp$.
\begin{enumerate}
    \item The \emph{constraint} pre-corner 1-form on $\widetilde{\A}_\pp$ is  $\wtpi^*_\pp{\wt\theta}^{\circ}_\pp\doteq (\cbd \Ho)_{\text{bdry}}$.
    \item The \emph{constraint} pre-corner 2-form on $\widetilde{\A}_{\partial}$ is $\wt\varpi_\pp^\circ \doteq \cbd \wt{\theta}^\circ_\pp$.
\end{enumerate}
Finally, define the \emph{constraint} corner data $(\A_\pp^\circ,\varpi^\circ_\pp)$ as the pre-symplectic reduction $\tA_\pp/\ker(\wt{\varpi}_\pp^\circ)$, and $\pi^\circ_\pp \colon \A \to \A_\pp^\circ$.
\end{definition}

\begin{remark}
This definition shows that, a priori, there are at least two distinct natural pre-corner structures: one induced by the flux map $\h$ (Definition \ref{def:cornerconf}), the other by treating the constraint\footnote{Or, equivalently, the momentum form $\bH$, cf.\ Equation \eqref{eq:HoHpp}.} form $\bHo$ as a Lagrangian density over $\Sigma\times\A$ (Definition \ref{def:precorner2}).

Note, in particular, that the constraint form $\bHo$, and thus the associated constraint corner data, are meaningfully defined even within Lagrangian field theories with local symmetries which \emph{fail} to conform to the local Hamiltonian framework studied in this article (cf.\ Appendix \ref{app:covariant}).

However, the next proposition shows that at least for local Hamiltonian spaces admitting an \emph{ultralocal} symplectic form, the two pre-corner structures actually coincide, thus allowing the definition of an \emph{intrinsic} pre-corner datum.
\end{remark}

\begin{proposition}\label{prop:fullcornerdata}
If $\bom$ is ultralocal (Definition \ref{def:ultralocal}), then $\wt\varpi_\pp^{\circ}= \wt\varpi_\pp$.
\end{proposition}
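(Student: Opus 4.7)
My plan is to establish the equality $\wtpi_\pp^*\wt{\varpi}_\pp^\circ = \wtpi_\pp^*\wt{\varpi}_\pp$ on the common source $\A$ and then invoke injectivity of $\wtpi_\pp^*$, which follows from the surjectivity of the submersion $\wtpi_\pp$. Since $\wtpi_\pp^*\wt{\varpi}_\pp = \varpi = \langle \cbd_H h, \cbd_V\xi\rangle$ by the very definition of $\wt\varpi_\pp$, the entire content of the proposition reduces to computing $\cbd(\cbd\Ho)_{\text{bdry}}$ and matching it with $\varpi$.

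The first step is to source/boundary decompose $\cbd\bHo$ on $\Sigma\times\A$ via Theorem~\ref{thm:LocFormDec}. Since $\bHo$ is rank-0 in $\xi$, its $p$-vertical piece $\cbd_V\bHo = \langle\bHo,\cbd_V\xi\rangle$ is already source type, so every boundary contribution sits inside $\cbd_H\bHo$; I would write $\cbd_H\bHo = \langle(\cbd_H\bHo)_{\text{src}},\xi\rangle + d\langle\boldsymbol\alpha,\xi\rangle$ for a suitable boundary potential $\boldsymbol\alpha$. Next I would feed this into the local Hamiltonian flow equation
\[
\bi_{\rho(\xi)}\bom = \langle\cbd_H\bHo,\xi\rangle + d\langle\cbd_H\bh,\xi\rangle
\]
and exploit ultralocality of $\bom$: because $\bom$ has no jet-dependence in $\phi$, the contraction $\bi_{\rho(\xi)}\bom$ is source type in $\cbd\phi$ irrespective of the rank of $\rho$ on $\xi$. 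The direct-sum character of the source/boundary splitting of $\oloc^{\text{top},1}(\Sigma\times\A)$ then forces the identity $d\langle\boldsymbol\alpha,\xi\rangle = -d\langle\cbd_H\bh,\xi\rangle$, so that Stokes' theorem applied together with $h=\int_{\pp\Sigma}\bh$ gives
\[
\wtpi_\pp^*\wt\theta_\pp^\circ = (\cbd\Ho)_{\text{bdry}} = -\int_{\pp\Sigma}\langle\cbd_H\bh,\xi\rangle = -\cbd_H h.
\]

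To finish, I would differentiate once more. Using $\cbd^2=0$ with $\cbd_H^2=0$ one has $\cbd\cbd_H h = \cbd_V\cbd_H h = -\cbd_H\cbd_V h$, and therefore
\[
\wtpi_\pp^*\wt{\varpi}_\pp^\circ = \cbd\wtpi_\pp^*\wt\theta_\pp^\circ = -\cbd\cbd_H h = \cbd_H\cbd_V h = \langle\cbd_H h,\cbd_V\xi\rangle = \varpi,
\]
which matches $\wtpi_\pp^*\wt{\varpi}_\pp$ and lets me conclude.

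The main delicate point is the cancellation that trades $\boldsymbol\alpha$ for $-\cbd_H\bh$: it hinges entirely on ultralocality forcing $\bi_{\rho(\xi)}\bom$ to be source type, so that the direct-sum property of the Takens--Zuckerman decomposition applies without a remainder. I should also track the intrinsic ambiguity in $\boldsymbol\alpha$, which is only defined modulo $d$-closed local forms; that ambiguity, however, lies in the kernel of the outer $d$ and so never enters $(\cbd\Ho)_{\text{bdry}}$. Without ultralocality one would expect $\bi_{\rho(\xi)}\bom$ to carry its own boundary contribution, and the two natural pre-corner structures would generically fail to coincide by precisely that correction.
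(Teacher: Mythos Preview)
Your argument is correct and follows essentially the same route as the paper: both proofs hinge on the observation that ultralocality of $\bom$ forces $\bi_{\rho(\xi)}\bom$ to be of source type, so the uniqueness of the Takens--Zuckerman decomposition identifies the boundary part of $\cbd_H\bHo$ with $-d\cbd_H\bh$. The only organizational difference is that the paper introduces an auxiliary potential $\wt\theta_\pp^{\text{tot}}$ (with $\wtpi_\pp^*\wt\theta_\pp^{\text{tot}} = (\cbd H)_{\text{bdry}}$) and shows it equals $\wt\theta_\pp$ (with $\wtpi_\pp^*\wt\theta_\pp = \cbd_V h$), whereas you compute $\wtpi_\pp^*\wt\theta_\pp^\circ = -\cbd_H h$ directly; the two computations differ by the $\cbd$-exact term $\cbd h$ and yield the same $\wt\varpi_\pp^\circ$.
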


\begin{proof}
As a preliminary step, define on $\tA_\pp$ the pre-corner 1-forms $\wt\theta_\pp$ and $\wt\theta_\pp^\text{tot}$, by the following formulas\footnote{Here, as above, we denote $(\cbd H)_\text{bdry} \doteq \int_{\Sigma} (\cbd \bH)_\text{bdry}$ and similarly below for $(\cbd_H\omega)_\text{bdry}$.}
\[
\wtpi_\pp^*\wt\theta_\pp \doteq \cbd_V h \equiv \langle h , \cbd_V \xi\rangle
\quad\text{and}\quad
\wtpi_\pp^*\wt\theta^\text{tot}_\pp \doteq (\cbd H)_\text{bdry}.
\]
Notice that, by the linearity of $\h$ in $\fG$, the pre-corner 1-form $\wt\varpi_\pp$ of Definition \ref{def:cornerconf} is equal to 
\[
\wt\varpi_\pp = \cbd_H \wt\theta_\pp = \cbd \wt\theta_\pp.
\]
Also, since $\bH = \bHo + d \bh$, one has $(\cbd\bH)_\text{bdry} = (\cbd\bHo)_\text{bdry} + d\cbd \bh$ and therefore $\wt\theta_\pp^\text{tot} = \wt\theta_\pp^\circ + \cbd h$---which means that
\begin{equation}
    \label{eq:HoHpp}
\wt\varpi_\pp^\circ = \cbd \wt\theta_\pp^\circ = \cbd \wt\theta_\pp^\text{tot}.
\end{equation}
Thus, to prove the proposition, it is enough to show that, when $\bom$ is ultralocal $\wt\theta_\pp = \wt\theta_\pp^\text{tot}$ or, equivalently, that $(\cbd H)_\text{bdry} = \cbd_V h$.

In order to prove this statement we proceed in two steps. First we show that $(\cbd_V H)_\text{bdry} = \cbd_V h$ and then that $(\cbd_H H)_\text{bdry} = 0$. 

The first step is fully general and readily follows from the decomposition 
\[
(\cbd_V \bH)_\text{bdry} = (\cbd_V \bHo)_\text{bdry} + d\cbd_V\bh
\]
and the fact that $\Ho$ is order 0 (i.e.\ $C^\infty(\Sigma)$-linear in $\xi$, Definition \ref{def:order-kduals}).

The second step, on the other hand, relies on the Hamiltonian flow equation and the of ultralocality of $\bom$:
\[
\cbd_H H=\bi_{\rho(\cdot)}\omega \implies (\cbd_H H)_\text{bdry} = (\bi_{\rho(\cdot)}\omega)_\text{bdry} = 0.
\]
This concludes the proof.
\end{proof}

\begin{corollary}
If $\bom$ is ultralocal, then $\A_\pp^{\circ}\simeq \A_\pp$.
\end{corollary}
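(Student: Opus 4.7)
The proof is essentially immediate from Proposition \ref{prop:fullcornerdata}. My plan is as follows.

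First, I would recall that both corner spaces are defined as presymplectic reductions of the same pre-corner space $\tA_\pp$: namely, $\A_\pp \doteq \tA_\pp/\ker(\wt\varpi_\pp^\flat)$ (Definition \ref{def:cornerconf}) and $\A_\pp^\circ \doteq \tA_\pp/\ker(\wt\varpi_\pp^{\circ\flat})$ (Definition \ref{def:precorner2}). Since a presymplectic reduction is completely determined by the 2-form being reduced, the entire content of the corollary is the equality of these two 2-forms.

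Next, I would invoke Proposition \ref{prop:fullcornerdata}, which---under the hypothesis that $\bom$ is ultralocal---establishes precisely the equality $\wt\varpi_\pp^\circ = \wt\varpi_\pp$ as 2-forms on $\tA_\pp$. Taking flats, this implies $\ker(\wt\varpi_\pp^{\circ\flat}) = \ker(\wt\varpi_\pp^\flat)$ as distributions inside $T\tA_\pp$; assuming (as in Definition \ref{def:cornerconf}) that this common kernel is regular, the two quotients coincide canonically, giving the sought diffeomorphism $\A_\pp^\circ \simeq \A_\pp$. The resulting projection $\pi_\text{pre}: \tA_\pp \to \A_\pp^\circ = \A_\pp$ is a surjective submersion, hence $(\cbd\pi_\text{pre})^*$ is injective on forms, and uniqueness of the pushed-down 2-forms satisfying $\pi_\text{pre}^*\varpi_\pp^\circ = \wt\varpi_\pp^\circ = \wt\varpi_\pp = \pi_\text{pre}^*\varpi_\pp$ yields the symplectomorphism $(\A_\pp^\circ,\varpi_\pp^\circ) \simeq (\A_\pp,\varpi_\pp)$.

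There is no substantive obstacle to overcome here: the corollary is really just an unpacking of the definitions once Proposition \ref{prop:fullcornerdata} is in hand. The only minor care needed is to note that the factorization $\pi_\pp = \pi_\text{pre} \circ \wtpi_\pp$ coming from Definition \ref{def:cornerconf} applies equally well to both reductions because the restriction map $\wtpi_\pp: \A \to \tA_\pp$ is the same in both constructions---only the subsequent presymplectic quotient might differ, and under ultralocality it does not. The only work that was already done, namely establishing the \emph{pre-corner} equality $\wt\varpi_\pp^\circ = \wt\varpi_\pp$ via the source/boundary decomposition together with the identity $\cbd_H H = \bi_{\rho(\cdot)}\omega$ applied to an ultralocal $\bom$, lives entirely inside the proof of Proposition \ref{prop:fullcornerdata}.
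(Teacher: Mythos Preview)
Your proposal is correct and matches the paper's intent exactly: the corollary is stated without proof in the paper because it follows immediately from Proposition~\ref{prop:fullcornerdata}, and your unpacking of the definitions of $\A_\pp$ and $\A_\pp^\circ$ as presymplectic reductions of the same space $\tA_\pp$ by the (now equal) kernels is precisely the intended argument.
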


\begin{remark}
We note that the construction of what we call ``contraint corner data'' is directly related to the bulk-to-boundary induction procedure used in the BV-BFV framework \cite{CMR1}, inspired by the constructions of Kijowski and Tulczijew \cite{KT1979}. It is unclear to us what this discrepancy of induction procedures might signify for those approaches to field theory on manifolds with boundary.
\end{remark}

\subsection{Yang--Mills theory: corner Poisson structure and superselections}\label{sec:runex-cornerdata}

In this section we are interested in the \emph{off-shell} corner data of YM theory, and off-shell results can be formulated in a unified manner independently of whether $G$ is Abelian or semisimple.

Our off-shell analysis relies on Assumption \ref{ass:AnnIm}, to which we now turn.
Recall that in YM, the flux map is $\langle h,\xi\rangle (E,A) = - \int_{\Sigma} d \tr( E \xi )$, which is obviously basic with respect to the projection $p:T^\vee\Acal\to\mathcal{E}_\pp$, i.e.\ $h(A,E)=p^*h_\pp(E^\pp)$ with $h_\pp(E_\pp)=-\int_{\pp{\Sigma}}\tr(E_\pp\cdot)$. As a consequence,  $h$ does not depend on $A$, and it is linear in $\mathcal{E}_\pp$ by direct inspection. Then $\bd h(A,E)$ is independent of the base point $(A,E)$. This allows us to conclude that the space $\fN_{(A,E)}$ introduced in Remark \ref{rmk:fNphi} is independent of the base point $(A,E)$, and it coincides with the off-shell flux annihilator:
\[
\fN_{(A,E)} \doteq \{ \xi\in\fG \,|\, \langle\bd h(A,E),\xi\rangle = 0 \} = \fNoff.
\]
Hence, the off-shell flux annihilating locus $\mathsf{N}^\off = \{ (A,E,\xi)\in\A\,\colon\, \langle\bd h(A,E),\xi\rangle=0 \}\subset \A$ turns out to be a trivial bundle over $\X$ with fibre $\fNoff$---in compliance with Assumption \ref{ass:AnnIm}.

Next, we introduce the 2-form on $\A = \X \times \fG$
\[
\varpi \doteq \langle \cbd_H h, \cbd_V \xi \rangle = - \int_{\pp\Sigma} \iota^*_{\pp\Sigma}\tr(\cbd_H E \wedge \cbd_V \xi ).
\]
From the previous observations, it is easy to see that the symplectic space of off-shell corner data $\A_\pp$---which is defined through a presymplectic reduction procedure starting from $\varpi$---is given by
\[
\A_\pp = \X_\pp \times \fGpoff
\]
for
\[
\X_\pp \doteq \mathcal{E}_\pp
\qquad\text{and}\qquad
\fGpoff \doteq \Gamma(\pp\Sigma, \iota_{\pp\Sigma}^*\Ad P)
\]
equipped with the symplectic form
\[
\varpi_\pp = -\int_{\pp\Sigma} \tr(\cbd E_\pp \wedge \cbd \xi_\pp).
\]
It is easy to see that in YM theory, $\varpi_\pp = \varpi^\circ_\pp$. Indeed, the latter 2-form is defined by the presymplectic reduction of $\wt\varpi^\circ_\pp=\cbd (\cbd \bHo)_\text{bdry}$, while the identity $(\cbd \bHo)_\text{bdry} =  d \cbd \bh$ can be read off of
\[
\cbd \bHo = \cbd \tr\big(  d_A E \, \xi\big) = \underbrace{ \tr\big( (\cbd E) d_A \xi +  \ad^*(\xi) E\,  \cbd A + (-1)^{\dim\Sigma} d_A E \cbd \xi\big)}_{\text{src}} \underbrace{-  d \tr\big( ( \cbd E)\xi \big)}_\text{bdry}.
\]

These results exemplify Theorem \ref{thm:corneralgd} and Proposition \ref{prop:fullcornerdata}. They also shed light on the isomorphism of Equation \eqref{eq:uGYM} of Section \ref{sec:runex-fluxgaugegroup} which can now be recognized to be a special instance of the general result $\fGred^\text{off} \simeq \fGpoff$---where $\fGpoff$ is a Lie algebra supported on the corner.

The action Lie algebroid structure on $\A_\pp$ is induced from that on $\A$ owing to the fact that
\[
\rho(\xi) E = \ad^*(\xi)\cdot E 
\]
naturally restricts to the corner in the form
\[
\rho_\pp(\xi_\pp) E_\pp = \ad^*(\xi_\pp)\cdot E_\pp.
\]
Notice that $\fGpoff$ is a local Lie algebra with a (ultra)local action on $\X_\pp$.

Also, in this case, Corollary \ref{cor:Fsubset} yields
\[
\F \subset \F^\text{off} \simeq (\fGpoff)^\vee
\]
We also have $\F^\text{off} \simeq \X_\pp$. Since the canonical Poisson structure on $\dual{(\fGpoff)}$ is local, it readily restricts to $(\fGpoff)^\vee$, and, from $\Foff \simeq \X_\pp$, one concludes that $\X_\pp$ is Poisson. Nonetheless, it is instructive to provide a more detailed discussion.

First, we start by noticing that the isomorphism $\X_\pp \simeq (\fGpoff)^\vee$ is a statement of Lemma \ref{lemma:T'P=A} which says that $T^\mathsf{p}\X_\pp \simeq \A_\pp$ since in the present case the partial cotangent bundle $T^\mathsf{p}\X_\pp$ (Definition \ref{def:T'P}) is given by $T^\mathsf{p}_{E_\pp} \X_\pp \simeq \fGpoff$. Summarising:
\begin{equation}
    \label{eq:AppYM}
\A_\pp \simeq \mathcal{E}_\pp \times \fGpoff \simeq T^\vee \mathcal{E}_\pp \simeq T^\mathsf{p}\mathcal{E}_\pp , \qquad \X_\pp \simeq \mathcal{E}_\pp \simeq (\fGpoff)^\vee.
\end{equation}

Recalling Definition \ref{def:partialpoisson} of the space of $\mathsf{p}$-functions $C^\infty_\mathsf{p}(\X_\pp)$ associated to a partial cotangent bundle $T^\mathsf{p}\X_\pp$, 
one sees that in the present case these are the functions $g\in C^\infty(\X_\pp)$ whose functional derivative can be naturally identified with a section of $\A_\pp\to \X_\pp$ (Remark \ref{rmk:functionalder}), i.e.\footnote{Formally, this is a rather natural statement, since $E_\pp\in(\fGpoff)^\vee$ is (locally) given by a $\fg^*$-valued densities.}
\begin{equation}
    \label{eq:pfunctionsYM}
\bd g = \int_{\pp\Sigma} \tr\left( (\cbd E_\pp) \frac{\delta g}{\delta E_\pp}\right), 
\quad\text{for}\quad 
\frac{\delta g}{\delta E_\pp} : \X_\pp \to \fGpoff.
\end{equation}

Conversely, on $\A_\pp$ one defines the $\mathsf{p}$-functions $C_\mathsf{p}^\infty(\A_\pp)$ as the space of Hamiltonian functions with respect to $\varpi_\pp$, which in light of \eqref{eq:AppYM}, can be seen as those functions $F$ whose functional derivatives can be naturally identified with elements of $\fGpoff$ and $(\fGpoff)^\vee$ respectively:
\[
\frac{\delta F}{\delta E_\pp} : \A_\pp \to \fGpoff, \quad\text{and}\quad \frac{\delta F}{\delta \xi_\pp} : \A_\pp \to (\fGpoff)^\vee.
\]
Replacing $\xi_\pp$ with a ghost variable $c$ of degree 1, one maps a function $F\in C^\infty_\mathsf{p}(\A_\pp)$ into a function (denoted by the same symbol) in $C^\infty_\mathsf{p}(\A_\pp[1])$. Any such functions can be developed as a sum of functions of given nonnegative degree $k$, i.e.\ $C^\infty_{\mathsf{p},k}(\A_\pp[1])$---the first three terms in this development are of the form:
\[
F(E_\pp, c) =  F_0(E_\pp)  + \tr( F_1(E_\pp) c) + \tfrac12 \tr( F_2(E_\pp) [c,c]) + \dots
\]
Clearly, $C^\infty_{\mathsf{p},0}(\A_\pp[1])\simeq C^\infty_\mathsf{p}(\X_\pp)$ (Lemma \ref{lemma:deg0}).

Leveraging this construction we can define the shifted symplectic structure on $T^\mathsf{p}[1]\X_\pp \simeq \A_\pp[1]$, \[
\varpi_\pp^{[1]} = \int_{\pp\Sigma} \tr( \cbd_H E_\pp \wedge \cbd_V c)
\]
and thus the associated (shifted) partial Poisson bracket:
\[
\{ E_\pp(x), c(y) \}^{[1]} = \delta(x-y) \quad\text{and}\quad \{ E_\pp(x), E_\pp(y) \}^{[1]} = 0 = \{ c(x),c(y) \}^{[1]}. 
\]

As per Theorem \ref{thm:BRST}, on $C^\infty_\mathsf{p}(\A_\pp[1])$ one has the degree-1 vector field 
\[
\mathbb{Q}_\pp : C^\infty_{\mathsf{p},k}(\A_\pp[1])\to C^\infty_{\mathsf{p},k+1}(\A_\pp[1]), 
\]
\[
\mathbb{Q}_\pp(E_\pp) = \ad^*(c)\cdot E_\pp 
\quad\text{and}\quad
\mathbb{Q}_\pp(c) = -\frac12  [c,c].
\]
We recognise $\mathbb{Q}_\pp$ as being the (corner) BRST operator.
Its nilpotency, $[\mathbb Q_\pp,\mathbb Q_\pp]=0$, is therefore standard.

We now explicitly verify the general result of  Theorem \ref{thm:BRST} that $\mathbb{Q}_\pp$ is Hamiltonian with generator ($\pm$) the degree-2 function:
\[
S_\pp = \frac12\langle h_\pp(E_\pp), [c,c]\rangle = -\frac12 \int_{\pp\Sigma} \tr\left( E_\pp [c,c] \right).
\]
Indeed:
\begin{align*}
\cbd S_\pp 
& = - (-1)^{\pp\Sigma} \int_{\pp\Sigma}  \tr( \cbd E_\pp \tfrac12 [c,c] ) + (-1)^{\partial\Sigma}\tr( \ad^*(c)\cdot E_\pp \cbd c ) \\
& = -\int_{\pp\Sigma} (-1)^{\pp\Sigma+1}  \tr( \cbd E_\pp  \mathbb Q_\pp(c)  ) + \tr( \mathbb{Q}_\pp(E_\pp) \cbd c )   \\
& = -(-1)^{\pp\Sigma}\bi_{\mathbb Q_\pp} \varpi_\pp^{[1]}
\end{align*}
which is equivalent to $\{S_\pp, \cdot\}^{[1]} = \mathbb{Q}_\pp$.

Finally, as per Theorem \ref{thm:Poissontheorem}, from $S_\pp$ we can define the partial Poisson structure $\Pi_\pp$ acting on the space of $\mathsf{p}$-functions $C^\infty_\mathsf{p}(\X_\pp)$, by setting:
\begin{align*}
\Pi_\pp(f,g) & \doteq \{ \{ S_\pp , f\}^{[1]}, g\}^{[1]} 
\end{align*}
From this, recalling $\mathsf{p}$-functions' property \eqref{eq:pfunctionsYM}, one can easily compute
\begin{align*}
\Pi_\pp(f,g) 
& = \{\rho_\pp(c) f , g\}^{[1]}  
= \rho_\pp\left( \frac{\delta g}{\delta E_\pp} \right) f \\
& = \int_{\pp\Sigma} \tr\left( \ad^*\left(\frac{\delta g}{\delta E_\pp}\right) \cdot E_\pp \ \frac{\delta f}{\delta E_\pp}\right)  
= - \int_{\pp\Sigma} \tr\left( E_\pp \left[ \frac{\delta f}{\delta E_\pp}, \frac{\delta g}{\delta E_\pp} \right] \right),
\end{align*}
where the formula at the end of the first line substantiates the general expression $\Pi_\pp^\sharp = \rho_\pp$ via $\A_\pp\simeq T^\mathsf{p} \X_\pp$, whereas at the end of the second line we recognise an expression in that exemplifies Equation \eqref{eq:Poissonpp}. (This construction should be compared with, e.g.\ \cite[Section 2.2.2]{CanepaCattaneo}.)

To conclude we notice that from the above formula it is clear that the superselections $\S^\pp_{[f]}$, defined as the symplectic leaves of $\Pi_\pp$, are the coadjoint orbits of $E_\pp \in \mathcal{E}_\pp \simeq (\fGpoff)^\vee \simeq \Foff$. 

On-shell, only those orbits in $\mathcal{E}_\pp$ which can be identified with elements of $\F\subset\Foff$ satisfy the compatibility requirements imposed by the cobordism $(\Sigma,\pp\Sigma)$ and the constraint equation $\bHo = d_A E = 0$. This distinction is not relevant in semisimple YM theory at irreducible configurations, but becomes important e.g. in the Abelian theory, where the (integrated) {{Gauss}} law must be satisfied by all on-shell configurations. This restricts the available superselections to those lying in $\mathcal{E}^\text{ab}_\pp$, whose properties depend on the cobordism $\Sigma$, as emphasized in Section \ref{sec:runex-fluxannihilators} and in particular Equation \eqref{eq:EppAb}.

\medskip

We now turn our attention onto Theorem \ref{thm:spaceofsuperselection}, and thus go back to describing the constraint- and fully-reduced degrees of freedom of YM theory by means of local trivializations (gauge fixings) of $\uCo \to \uuC$.

We investigate this question under the hypothesis that the on-shell isotropy locus is a trivial vector bundle, $\mathsf{I}_\rho = \C \times \mathfrak{I}$ (i.e.\ either in the Abelian, or the irreducible semisimple cases).
Accoding to Proposition \ref{prop:hamactionGp}, the triviality of $\mathsf{I}_\rho$ guarantees that the action of $\Gred$ is \emph{free} on $\uCo$, which can be thought of as the principal bundle
\[
\Gred \hookrightarrow \uCo \to \uuC.
\]

Now, in the irreducible semisimple case (resp. Abelian),  $\mathcal{E}_\pp^{(\mathrm{ab})}$ corresponds precisely to the space of on-shell corner fields $\C_\pp$ (Proposition \ref{prop:onshellsuperselections}) which, through $h_\pp$, can here be also identified with the space of \emph{on-shell} fluxes:
\[
 \C_\pp = \mathcal{E}_\pp^{(\mathrm{ab})}\simeq \F = \Im(\iota_\C^* \h_\smbullet).
\]
Thanks to this identification, Equations \eqref{eq:CtoP-YM} and \eqref{eq:CfibrationAb} can both be written as
\[
\C \simeq \X_\rad \times \F, 
\]
where $\X_\rad \to \Acal$ is a fibration with fibre $\mathcal{H}_A$ (resp. $\mathcal{H}$ in the Abelian case).
Using the fact that the constraint gauge algebra coincides with the on-shell flux annihilator ($\fGo = \fN$, Theorem \ref{thm:fGo}), we showed that $\mathcal{E}_\pp \simeq \F$ is invariant under the action of $\Go$ (Proposition \ref{prop:hamactionGp}). Therefore, from the previous equation we deduce that
\[
\uCo \simeq \underline{\X}{}_\rad \times \F, 
\qquad
\underline{\X}{}_\rad \doteq \X_\rad / \Go.
\]

In the cases we are investigating, it is easy to see that $\Gred$ acts freely on $\Acal/\Go$: in the irreducible semisimple case because the action $\G\circlearrowright\Acal$ is free, and in the Abelian case because the only stabilising elements are constant gauge transformations---which are already part of $\Go$ (Proposition \ref{prop:Gausslaw}, Section \ref{sec:runex-fluxgaugegroup}). Therefore, a fortiori, $\Gred$ must act freely also on $\underline{\X}{}_\rad$. From this, we deduce that $\underline{\X}_\rad$ is the principal bundle
\begin{equation}
    \label{eq:uCo-fibration}
\Gred \hookrightarrow \underline{\X}{}_\rad \to \uuC{}_\rad, 
\qquad
\uuC{}_\rad \simeq \X_\rad /\G.
\end{equation}
Therefore, from this and the previous decomposition of $\uCo$, we find that $\uCo$ is a fibre bundle over $\uuC{}_\rad$:
\[
\Gred \times \F \hookrightarrow \uCo  \to \uuC{}_\rad,
\]
which means that, locally, we have the decomposition $\uCo \approx \uuC{}_\rad \times \Gred \times \F$.

To understand the fibration structure of $\uuC = \uCo /\Gred$, we observe that in quotienting $\uCo$, $\Gred$ acts solely on the fibre $\Gred\times\F$ and not on the base $\uuC{}_\rad = \underline{\X}{}_\rad/\Gred$. Moreover, the action on the fibre is by right translations on the $\Gred$ factor and by coadjoint actions on the $\F$ one:
\begin{equation}
    \label{eq:radCoul-uCo}
\Gred\times(\Gred \times \F)\to (\Gred \times \F),
\qquad 
\big(\underline{g},  (\underline{k},f) \big) \mapsto (\underline{k}\underline{g} , \Ad^*(\underline{g})\cdot f).
\end{equation}
Therefore, $\uuC$ can be written as a fibration over $\uuC{}_\rad$:
\[
(\Gred \times \F)/\Gred \simeq
\F \hookrightarrow \uuC \to \uuC{}_\rad.
\]
where the isomorphism on the left is given by 
\[
\big[ (\underline{k}\, \underline{g},\ \Ad(\underline{g})^*\cdot f ) \big]_{\underline{g}\in\Gred} \mapsto  f^{\underline{k}} \doteq \Ad(\underline{k}^{-1})\cdot f.
\]

Focusing on a single superselection sector, we get
\begin{equation}
    \label{eq:S-fibration}
\uuS_{[f]}\simeq_{\mathrm{loc}} \uuC_\rad \times \mathcal{O}_f.
\end{equation}
Now, notice that---owing to the isomorphism $\C_\pp \simeq \F$---the space of on-shell superselections, $\mathcal{B} \doteq \C_\pp / \Gpoff \simeq \F / \Gred$ (Theorem \ref{thm:spaceofsuperselection}), is isomorphic to the space of coadjoint orbits of the on-shell fluxes:
\[
 \mathcal{O}_f \hookrightarrow \C_\pp \simeq \F \to \mathcal{B} = \{ \mathcal{O}_f \ | \ f\in\F\}
\]
where the leftmost $\mathcal{O}_f$ should be understood as the set $\{ f'\in\mathcal{O}_f \}\subset \F$.
Thus, since $\uuS_{[f]}\hookrightarrow \uuC \to \mathcal{B}$,  we conclude that
$\uuC$ can be written as 
\[
\uuC \simeq_\loc \uuC_\rad \times \F
\]

We now have all the necessary ingredients to investigate the full content of Theorem \ref{thm:spaceofsuperselection} in irreducible semisimple, and Abelian, YM theory. Putting everything together, we obtain the following commuting diagram:
\[
\xymatrix@C-=0.3cm@R-=0.3cm{
\uCo \ar[rr] \ar[dd] & & \uuC \ar[dd]\\
&  &\\
\C_\pp \ar[rr] && \mathcal{B} 
}
\qquad
\xymatrix@C-=0.1cm@R-=0.2cm{
(\uuphi, f, \underline{k}) \ar[rr]  \ar[dd] & & (\uuphi, f^{\underline{k}}) \ar[dd]\\
&  &\\
\;f\; \ar[rr] && \; \mathcal{O}_f = \mathcal{O}_{f^{\underline{k}}}\;
}
\]

\begin{remark}
    Summarising, in this section we have proven Theorem \ref{mainthm:YMPoisson}.
\end{remark}


\section{Other applications}\label{sec:applications}

In this section we consider other applications of our framework that go beyond what we discussed in the running examples on YM theory, that is:  Chern--Simons theory, a discussion of a 1-parameter family of corner ambiguities in YM theory, and $BF$ theory.

\subsection{Chern--Simons theory}\label{sec:ex-ChernSimons}

We start from the application of our general formalism to Chern--Simons theory, and recover the construction of \cite{MeinrenkenWoodward} to which we refer for many more details specific to the study of the moduli space of flat connections on Riemann surfaces.\footnote{We thank P.\ Mnev and A.\ Alekseev for pointing out this reference to us.} 

Consider Chern--Simons theory \cite{FroelichKingCS89,WittenCS89,AdPW91,Freed93classicalchern,MeinrenkenWoodward} for a real semisimple Lie group $G$ with Lie algebra $\fg$, and let $\Sigma$ be a 2d Riemann surface with boundaries.
The geometric phase space $\X$ is the space of principal connections over the principal $G$-bundle $P\to\Sigma$, i.e.\ $\X = \Acal\doteq \mathrm{Conn}( P \to \Sigma)$. The gauge group is defined as for YM theory (Section \ref{sec:runex-setup}), so that the gauge algebra is given by $\fG \simeq \Gamma(\Sigma, \Ad P)$ and
\[
\rho(\xi)A = d_A \xi.
\]

The above action of $\fG$ on $\X$ is not free unless one restricts to the subspace of irreducible connections. However, this restriction is too strong in Chern--Simons theory, since only reducible connections survive the imposition of the constraint $\bHo=0$, and we will therefore work with generally non-free actions, as encoded in the isotropy locus $\mathsf{I}_\rho$ of Definition \ref{def:isotropylocus}.

Chern--Simons theory equips $\X$ with the following symplectic density
\[
\bom = \tr( \bd A \wedge \bd A),
\]
which integrates to the Atiyah--Bott 2-form $\omega = \int_\Sigma\bom$ \cite{AtiyahBott}.
Here, $\tr$ is an $\ad$-invariant nondegenerate bilinear form on $\fg$.
Then, the action of $\G$ on $(\X,\bom)$ is \emph{locally} Hamiltonian  and the following is a  local momentum map $\bH = \bHo+ d\bh$ compatible with Assumption \ref{ass:setup}:
\[
\langle\bHo, \xi\rangle= \tr( F_A \xi ),
\qquad
\langle d\bh,\xi\rangle = -  d\tr((A-A_\smbullet)\xi),
\]
where $\bHo$ is Chern--Simons theory's flatness constraint:
\[
\C = \bHo^{-1}(0) = \{ A\, | \, F_A=0\}.
\]
Moreover, the local momentum map $\bH$ is only weakly equivariant:
\[
\langle g^*\bHo,\xi\rangle = \langle\bHo, \Ad(g)\cdot\xi\rangle
\qquad
\langle g^* d \bh, \xi\rangle = \langle d \bh, \Ad(g)\cdot \xi\rangle - d \tr( g^{-1}d g \xi )
\]
by which we identify the CE group and algebra cocycles
\[
\langle d\bc(g),\xi\rangle = - d \tr( g^{-1}d g \xi ),\qquad
d \bk(\eta,\xi) = d \tr(\xi d \eta).
\]

\begin{remark}[The reference connection $A_\smbullet$]
Since $\X = \mathcal A$, we stress the importance of choosing a reference connection $A_\smbullet$ in order to obtain a geometrically  sensible definition of $\bH$ and $d\bh$ across an atlas of charts over $\Sigma$.\footnote{Indeed, contrary to $A$, but just like $\xi$, $(A-A_\smbullet)$ is a section of $\Ad P$.}
This reference connection should be understood as a ``purely kinematical'' origin of the affine space $\Acal$. In the expressions above this is encoded in a fixed $\fg$-valued 1-form $A_\smbullet$ (i.e. $\bd A_\smbullet \equiv 0$) which therefore does \emph{not} transform under (active) gauge transformations, $\rho(\xi) A_\smbullet = \bi_{\rho(\xi)}\bd A_\smbullet \equiv 0$. (See Remark \ref{rmk:HamShift2}).  
\end{remark}

Looking now at the flux map
\[
\langle\bd h, \xi\rangle  = - \int_{\pp\Sigma} \iota^*_{\pp\Sigma}\tr((\bd A) \xi),
\]
we readily see that the \emph{off-shell} flux annihilator ideal and flux gauge algebra (and groups) are
\[
\fNoff = \{ \xi \in\fG | \iota_{\pp\Sigma}^* \xi = 0 \} ,
\qquad
\fGredoff \simeq \Gamma(\pp\Sigma, \iota_{\pp\Sigma}^*\Ad P).
\]
However, things are a priori different on-shell since the flatness condition in $\Sigma$ strongly restricts which 1d connection forms are available at the corner $\partial \Sigma$.

For definiteness let us now focus on the simplest example, where $\Sigma = B^2$ is the 2d disk and $P\to\Sigma$ is the trivial principal $G$-bundle over $\Sigma$, $P=B^2\times G$.
In this case, $\X  \simeq \Omega^1(B^2,\fg)$, and we can choose $A_\smbullet=0$; moreover, $\G \simeq C^\infty_0(B^2,G)$. With this, the constraint set, given by the space of flat connections, reads
\[
\C = \{ A\in \X \ |\ \exists u\in\G,\ A= u^{-1} d u \} \simeq G \backslash \G.
\]
In the last expression, we are quotienting out the constant elements $v\in\G$, $dv=0$, acting on $\G$ from the left: 
i.e\  $(v,u(x))\mapsto v u(x)$.

Notice that gauge transformations $g\in\G$ act instead from the \emph{right} on the fields $u$ parametrising $\C \simeq G\backslash \G$ and thus commute with the above quotient:
\[
Gu(x) \in \C, \quad Gu(x) \mapsto Gu(x) g(x).
\]

With this parametrization, and denoting $S^1=\pp B^2$ the corner, the on-shell flux map reads:
\[
\langle \iota_\C^* \bd h , \xi\rangle
= - \int_{S^1 } \iota_{S^1}^* \tr( \bd (u^{-1}du)  \xi ) 
= \int_{S^1}\iota_{S^1}^* \tr\big(   (u^{-1}\bd u)\ d_{u^{-1} d u} \xi \big).
\]

Now, the on-shell flux annihilator $\fN\doteq\Ann(\F)$ is defined as the subset of $\xi\in\fG$ such that $ \langle \iota_\C^* \bd h(A) , \xi\rangle $ = 0 for \emph{all} $A\in \C$. Therefore, $\fN$ is given by those $\xi\in\fG$ such that $\iota_{S^1}^* d_{u^{-1}d u}\xi = 0  $  for all $u\in\G$. 
Recalling that $P\to\Sigma$ is a trivial bundle and choosing $u$ such that $\iota_{S^1}^*u = e$ implies $\iota_{S^1}^*\xi = \text{const}$, hence---since at a fixed $x$, the map $\C\to \fg$ sending $u\mapsto(u^{-1}du)(x)$ is surjective---the previous condition says that $\iota_{S^1}^*\xi$ is a constant with value in the center of $\fg$.

If $G$ is semisimple and therefore centerless, then  $\iota_{S^1}\xi = 0$ and hence $\fN = \fNoff$; whereas if $G$ is Abelian, $\fN^\text{ab} = \{ \xi\in\fG | \iota_{S^1}^*\xi =\text{const}\}\subsetneq \fNoff$.

Recall now Theorem \ref{thm:fGo}, which establishes that the constraint gauge group $\fGo$ coincide with the on-shell flux annihilator $\fN$, and the definition of the flux guage group $\fGred \doteq \fG/\fGo$. Then, from the above we conclude that if $G$ is semisimple then $\fGred \doteq \fG/\fGo \simeq \fGredoff$, whereas if $G$ is Abelian then $\fGred^\text{ab} \simeq \fGredoff/\fg$ where $\fg\hookrightarrow \fGredoff$ as the subalgebra of constant functions on $\pp\Sigma$. 

To compute the space of on-shell fluxes $\F$, recall that we picked  the vanishing 1-form $A_\smbullet = 0$ as the reference on-shell connection; thus we write (notice the underline on the rightmost term)\footnote{Warning: the condition $ \langle \iota_\C^*h_\smbullet,\xi \rangle\vert_{A=u^{-1}du} = 0 $ for all $u$ does \emph{not} imply $\iota^*_{S^1}\xi=0$, although this is the correct result for semisimple Lie groups. In particular, in the Abelian case it only implies $\iota^*_{S^1}\xi=\text{const}$. This should be contrasted with the above treatment of the formula $\langle \iota_\C^* \bd h , \xi\rangle=0$ which can be thought of as the associated variational problem.}
 \[
 \langle \iota_\C^*h_\smbullet,\xi \rangle\vert_{A=u^{-1}du} = \int_{S^1} \tr\big( (u^{-1} d u) \xi \big) \equiv \langle \uh,\underline{\xi}\rangle.
 \]
where on the right-most side $\underline\xi = (\xi \,\mathrm{mod}\, \fGo) \in \fGred$. From this formula one finds that $\F =\Im(\iota_\C^*h_\smbullet)\simeq\Im(\uh)$ (Lemma \ref{lemma:Imh=flux}) is given by those maps from the circle into $G$ that are connected to the identity---modulo the constant maps:
\[
\F \simeq G\backslash C^\infty_0(S^1, G).
\]
Only the maps connected to the identity enter the above formula because the relevant $u(x):S^1 \to G$ are restriction to the boundary of smooth functions from the entire disk into $G$ and the fact that the disk is contractible.\footnote{Recall that in view of the fact that $P\to B^2$ is trivial, $\G\simeq C_0^\infty(B^2,G)$.}

In other words, the space of on-shell fluxes associated to the disk $B^2$ is given by the subgroup of elements of the based loop group $\Omega G \doteq G\backslash L G$ which are homotopic to the identity:
\[
\F \simeq (\Omega G)_0 \simeq \Omega \fg.
\]

From $\fGo = \fN = \Ann(\F)$ we find that in the semisimple and Abelian case one has respectively:
\[
\Go = \{ g\in\G | \iota_{S^1}^*g = e\}
\quad\text{and}\quad
\Go^\text{ab} = \{ g\in\G | \iota_{S^1}^*g = \text{const}\} \simeq G\cdot \Go,
\] 
and
\[
\Gred \simeq \Gredoff \simeq C^\infty(S^1,G)
\quad\text{and}\quad
\Gred^\text{ab} \simeq G\backslash \Gred.
\]
In both cases:
\[
\uCo \doteq \C/\Go \simeq \F,
\]
since for every $Gu(x)\in \C$ and every $g_\circ^\text{ab}(x)\in \Go^{ab}$, which can be written as $g_\circ^\text{ab}(x)=Gg_\circ(x)$ with $g_\circ(x)\in \Go$ we have
\[
    \C/\Go^\text{ab}\ni Gu(x) g^\text{ab}_\circ(x) = G u(x) G g_\circ(x) = G u(x) g_\circ(x) \in \C/\Go. 
\]

The action of $\Gred$ on $\uCo \simeq \F$ is given by point-wise right translation $u(x) \mapsto u(x) g(x)$ where $x\in S^1$. Clearly, on the disk $B^2$ there is one single gauge orbit in the on-shell space $\F$. In other words on $B^2$ there exists only one superslection sector---in agreement with the fact that only one flat connection exists there up to gauge.

\begin{remark}\label{rmk:CSpunctures}
This construction can be adapted to the case in $\Sigma$ has punctures and/or multiple boundary components. See for example \cite{jeffrey1994, AlekseevMalkin94,AlekseevMalkinSymp95,MeinrenkenWoodward,MeusburgerSchroers}. The simplest generalization is that of $\Sigma$ a cylinder: in this case there are two $S^1$ boundary components, and superselections simply require that the holonomies around them coincide, without fixing a preferred value. ``Pinching'' one boundary component to a point-like defect labelled by a conjugacy class of $\fg$ (which can be thought as the charge associated to the piercing of $\Sigma$ by a Wilson line) yields a punctured disk; in this case, the puncture's decoration fixes---via the flatness constraint---the holonomy around the remaining boundary component, returning us to a situation akin to the case analyzed in the main text where only one superselection sector is present.
\end{remark}

The corner off-shell structure is characterized by the Poisson manifold $\X_\pp = \mathrm{Conn}(\iota^*_{S^1}P\to \Sigma)$ endowed with the bivector $\Pi_\pp$, which at $a\in\X_\pp$ reads
\[
    \Pi_\pp\vert_a = \left\langle \h_\pp(a), \left[\frac{\delta}{\delta a},\frac{\delta}{\delta a}\right]\right\rangle + k\left(\frac{\delta}{\delta a},\frac{\delta}{\delta a}\right) =  \int_{S^1} \tr\left( a\left[\frac{\delta}{\delta a},\frac{\delta}{\delta a}\right]\right) + \tr\left(\frac{\delta}{\delta a} d \frac{\delta}{\delta a}\right)
\]
and for which the following expression holds:\footnote{We are using $\tr$ to implicitly identify $\fg^*$ and $\fg$ and thus the coadjoint and adjoint actions.}
\[
\Pi^\sharp(\langle \bd \h_\pp, \xi\rangle))=\int_{S^1} \tr\left(\ad_K^*(\xi)\cdot a \frac{\delta}{\delta a} \right).
\]
We can, in fact, gather that the solution of the classical master equation on $\A_\pp[1]$ reads:
\[
S_\pp = \int_{S^1} \tr(a[c,c] + cdc) = \int_{S^1} \tr(c d_a c ),
\]
where $c\in C^\infty(S^1,\fg)[1]=L\mathfrak{g}[1]$.

The previous expressions comes from the fact that $\A_\pp$ is an action algebroid, and that the off shell flux map $\h_\pp$ is equivariant up to the cocycle $k(\xi,\eta) = \int\tr(\xi d\eta)$. Indeed, rephrasing a well-known result \cite{Freed93classicalchern,MeinrenkenWoodward}, it is a standard calculation to show that the (off-shell) superselections are not given by the coadjoint orbits of $\Gpoff\simeq LG$, but rather by those of its central extension $\widehat{LG}$. Each of these orbits is labelled by a single gauge invariant quantity: the holonomy of $A$ around the circle boundary---which is the only gauge invariant quantity one can build out of $A$ on $S^1$. Only those holonomies that lie in the projection $\pi_{\pp,\X}(\C)=\C_\pp$ are \emph{compatible} with a given two dimensional surface (possibly with nontrivial genus) (see Remark \ref{rmk:CSpunctures} and references therein).

\begin{remark}[CS vs. YM]
While in Chern--Simons theory on a (decorated punctured) disk the space of superselections (called $\mathcal{B}$ in Theorem \ref{thm:spaceofsuperselection}) is zero-dimensional, and on a general (punctured) Riemann surface is finite dimensional, in YM theory the space of superselections is infinite-dimensional. 

From the perspective of the corner Poisson space $\mathcal{P}_\pp$ this is due to the fact that in CS $\mathcal{P}_\pp$ is an affine space and $\Pi_\pp$ encodes a centrally-extended (co)adjoint action.

From a more physical viewpoint, whereas in Chern--Simons theory the holonomies around the circle boundary components of $\pp\Sigma$ are the only gauge-invariant quantities in $\X_\pp$---and thus are the only candidate labels of points in $\mathcal{B}$---in YM theory, there are infinitely many such invariant quantities, e.g. the Killing norm of the electric flux at every point of the boundary, $|f|^2(x) = \tr(f(x)f(x))$.

This distinction is clear quantum-mechanically: the Chern--Simons punctures are labelled by the quantum irreps of the gauge group and that is the only nontrivial gauge invariant quantity one can associate to a flat connection on a punctured disk.
On the other hand, in the spin-network basis of lattice gauge theory, there is a precise sense in which the value of the irreps attached to each one of the links piercing the boundary can be understood as a discretized version of the continuous Casimirs labelling the orbit $\mathcal O_f$ \cite{DonnellyEntanglement}. In the continuum limit there are infinitely many such labels. (For a perfect\footnote{``Perfect'' here means that the relevant discretization captures \emph{all} the continuum degrees of freedom. This is possible because the theory is topological.} discretization of 3d $BF$ and Chern--Simons theory coupled to punctures, see also \cite{RielloDittrichFusion}.)
\end{remark}

\subsection{A 1-parameter corner ambiguity in Yang--Mills theory}\label{sec:thetaQCD}

To exemplify the role of the corner ambiguity in deriving $\bom$ from a  Lagrangian field theory, one can consider the following variation on the YM construction.

First, recall that $\bom =  \tr(\bd E \wedge \bd A)$ is the only \emph{ultralocal} symplectic density for YM theory as obtained from the covariant phase space method (Appendix \ref{app:covariant}, Proposition \ref{prop:bomuniqueness}). 
Consider then the following modification by addition of a ``Chern--Simons corner term'':
\[
\bom_\theta = \tr(\bd E  \wedge \bd A) - \frac{\theta}{2} d \tr( \bd A \wedge \bd A),
\]
for $\theta$ a real parameter. In the following we refer to this theory as $\theta$-YM.

Consider the curvature of $A$, $F_A = d A + A \wedge A$. Since $\bd F_A = d_A \bd A$ and  $d \tr(\bd A\wedge \bd A) = 2 \tr(d_A \bd A \wedge \bd A)$, it is immediate to see that we can rewrite 
\[
\bom_\theta = \tr\left( \bd (E + \theta F_A) \wedge \bd A\right).
\]

Since $E$ and $F_A$ have the same equivariance properties, we redefine the momentum to be\footnote{Throughout we implicitly identify $\fg$ and $\fg^*$ via $\tr$.} 
\[
E^\theta(E,A) \doteq E + \theta F_A
\]
and then the mathematical theory developed in the YM running examples proceeds unaltered upon the replacement of $E \leadsto E^\theta$. 
The physics, however, is affected by this modification: let us see how.

The key ingredient in our construction is the momentum form and its decomposition into a constraint and a flux form, viz. $\bH = \bHo + d \bh$. In the case of $\theta$-YM we find:
\[
\langle\bH^\theta,\xi\rangle = (-1)^{\dim\Sigma} \tr( E^\theta d_A \xi) 
\]
so that
\[
\langle\bHo^\theta,\xi\rangle = \tr( d_A E^\theta)
\quad \text{and}\quad
\langle d\bh^\theta ,\xi\rangle = - d \tr( E^\theta \xi)
\]

Crucially, the constraint form coincides with that of standard YM theory, since by the algebraic Bianchi identity $d_A F_A \equiv 0$,
$d_A E^\theta \equiv d_A E$---and therefore:
\[
\C^\theta = \C = \{ (A,E) \in \X \ | \ d_A E = 0 \} \subset \X.
\]
Therefore, in a sense, the first stage reduction---i.e.\ constraint reduction---preserves the same physical meaning as in standard YM. This was expected from the ``bulk'' nature of constraint reduction and the fact that $\bom_\theta$ and $\bom$ only differ by a corner term. 
Moreover, the space of off-shell corner configurations of YM theory (Equation \eqref{e:cornerconfYM}) coincides with that of $\theta$-YM theory, in virtue of the projection map
\[
    \pi_\pp\colon \X \to \X_\pp^\theta\simeq\X_\pp\simeq\mathcal{E}_\pp \; \qquad (A,E)\mapsto E_\pp = \iota_{\pp\Sigma}^* E^\theta ,
\]
and the fact that $E$ and $E^\theta$ have the same equivariance properties---which means that the canonical identification $\X_\pp^\theta\simeq\X_\pp$ preserves $\Pi_\pp$ (recall: $\Pi_\pp^\sharp =\rho_\pp$).

The algebraic treatment of flux superselection is also unchanged, as the unreduced superselections $\S^\theta_{[f]}$ are embedded as $\S^\theta_{[f]} \hookrightarrow \C $:\footnote{This formula is valid in the semisimple case away from reducible configurations, where there is no difference between the off- and on-shell treatment of fluxes.}
\[
\S^\theta_{[f]} = \left\{ (A, E)\in \C \ \left| \ \textstyle{\int}_{\pp\Sigma}\tr\big( E^\theta \cdot \big) \in \mathcal{O}_f\right. \right\}
\]
where the flux spaces $\F$ (and the coadjoints action on them) are just as in standard YM, cf.\ Section \ref{sec:runex-fluxannihilators}.

However, the \emph{physical} interpretation of superselection sectors \emph{is} different, since now it concerns the $\theta$-fluxes which mix electric and magnetic fields.
Therefore, we conclude that although the constraint reduction is unaffected by the corner ambiguity, the flux analysis can vary---as showcased by the fact that the physical content of the superselection sectors carry an imprint of the corner $\theta$-term appearing in $\bom_\theta$. 

\medskip

It is suggestive that a similar $\theta$-ambiguity affects the quantization of YM theory. 

A first point of contact is the fact that $\bom_\theta$ can be heuristically understood as the ``natural'' symplectic density associated to the Lagrangian for $\theta$-QCD.
In $\theta$-QCD, to the usual YM term $\frac12 \tr(\bar F\wedge \star \bar F)$ one adds the ``theta-term'' $\frac\theta{2}\tr( \bar F\wedge \bar F)$.\footnote{Here $\bar A$ is a spacetime principal connection defined from a principal bundle over $M$, $\Sigma \subset \pp M$. Cf. Appendix \ref{app:covariant}.} On a closed manifold the contribution of this term to the action is, up to a numerical normalization, the second Pontryagin number of $P$. This is a topological invariant which is not affected by infinitesimal variation and which therefore cannot itself affect the dynamics.
Despite this classical equivalence, corroborated by the $\theta$-independence of the equations of motion, and in particular of the constraint condition $\C^\theta = \C$, it is well known that an ambiguity arises at the quantum level---even in the absence of corners---due to the existence of unitary-inequivalent representations of $\pi_0(\G)$.

Semiclassically the $\theta$-ambiguity can be understood as due to the contribution of instantons \cite{JackiwRebbitheta}, whereas in a fully non-perturbative framework  the ambiguity arises due to the  way the (central) action of the fundamental group of the (non-simply connected) gauge group $\G\simeq \Gamma(\Sigma,\AD P)$ is represented on the physical Hilbert space \cite{MorchioStrocchiTheta,Strocchi13,strocchi19}.

It is therefore intriguing to observe that an analogous ambiguity can be detected already at the classical level, in the superselection structure, \emph{provided corners are present}. 
Of course, a much deeper analysis is needed to prove that a direct relationship with the quantum ambiguity of $\theta$-QCD exists, especially with respect to its non-perturbative description.


\subsection{BF theory}\label{sec:BFtheory}
We discuss here the example of $BF$ theory in spacetime dimension larger than 4, since for the 3-dimensional case one can simply adapt the constructions outlined in Section \ref{sec:ex-ChernSimons} for Chern--Simons theory, to include non-semisimple Lie groups. 

For later convenience, let us set:
\[
n \doteq \dim(\Sigma) \geq 3.
\]

For simplicity of exposition, we will focus on the case of a trivial principal bundle (see the running example on YM for how to handle the general case).
Then the symplectic space of fields for $BF$ theory with Lie algebra $\fg$ is given by 
\[
\X \doteq T^\vee \Acal \simeq \Omega^{n-2}(\Sigma,\fg^*) \times \Omega^1(\Sigma,\fg) \ni (B,A)
\]
with local symplectic density
\[
\bom = \tr(\bd B \wedge \bd A)
\]
with $\tr$ denoting the pairing between $\fg$ and its dual.

This space admits a Hamiltonian action of the Lie algebra:
\[
\fG \doteq \Omega^0(\Sigma,\fg) \ltimes \Omega^{n-3}(\Sigma,\fg^*) \ni \xi = (\lambda,\tau),
\]
so that, for a (constant) section $\xi\colon \X\to \X\times \fG$, the anchor map reads
\[
\rho(\xi)(A)= d_A \lambda; \qquad \rho(\xi)(B) = d_A\tau + [\lambda,B]
\]
yielding
\begin{align*}
\iota_{\rho(\xi)}\bom 
&= \tr(d_A \lambda \wedge \bd B) + \tr( \bd A \wedge d_A\tau) + \tr(\bd A \wedge [\lambda,B]) \doteq \bd\langle\bH,\xi\rangle \\ 
&= \bd\left( \lambda d_AB + \tau F_A\right) + d\bd \left(\lambda B + \tau A\right)
= \langle \bd \bHo,\xi\rangle + d\langle \bd \bh,\xi\rangle 
\end{align*}
and we denote\footnote{For simplicity of exposition, in this section we also forgo explicitly choosing a reference configuration $\phi_\smbullet=(B_\smbullet,A_\smbullet)$. Since we are considering trivial bundles $\phi_\smbullet=0$ would be a viable choice. More generally this can, and should, be done in complete analogy with Section \ref{sec:ex-ChernSimons}.}
\[
\begin{cases}
\langle\bHo,\xi\rangle = \tr(F_A\wedge\tau) +\tr(d_AB \wedge\lambda)\\
\langle\bh,\xi\rangle = \tr(B \lambda) + \tr( A \wedge \tau ).
\end{cases}
\]

From $\bHo$ we can read off the constraint,
\[
\C = \{ (B,A) \in \X \ | \ d_A B = 0 \text{ and } F_A =0 \},
\]
as well as the \emph{off}-shell flux annihilator,
\[
\fNoff = \left\{\xi \in \fG\ |\ \langle\bd\h ,\xi\rangle = 0 \right\} = \Omega^0_D(\Sigma,\fg^*) \times \Omega^{n-3}_D(\Sigma,\fg^*),
\]
by which we mean those forms whose pullback to $\pp\Sigma$ vanishes $\bullet_D$ is for ``Dirichlet'', with reference to the type of boundary condition. Cf.\ Appendix \ref{app:Hodge}).
Hence, we conclude that 
\[
\fGredoff \doteq \fG/\fNoff \simeq\Omega^0(\pp\Sigma,\fg) \times \Omega^{n-3}(\pp\Sigma,\fg^*).
\]
The space of off-shell corner data $\A_\pp$ is obtained from pre-symplectic reduction of the two form\footnote{We are abusing notation by employing the same symbol for $(\lambda,\tau)$ in $\Sigma$ and their restrictions to $\pp\Sigma$.} $\langle \cbd\bh,\cbd\xi\rangle = \int_{\pp\Sigma} \tr( \cbd B \wedge \cbd\lambda + \cbd A \wedge \cbd\tau)$:
(see Definition \ref{def:cornerconf} for a precise characterization), which in this case yields the trivial vector bundle
\[
\A_\pp=\X_\pp\times \fGpoff \to \X_\pp, \quad \X_\pp\doteq \Omega^{n-1}(\pp\Sigma,\fg^*)\times\Omega^1(\pp\Sigma,\fg),
\]
This space is turned into a Lie algebroid (Theorem \ref{thm:corneralgd}) when equipped with the obvious restriction of the gauge action to the boundary, $\rho_\pp: \A_\pp \to  T\X_\pp$.

Hence, $\A_\pp$ is a symplectic Lie algebroid over a (partial) Poisson manifold $\X_\pp$, with Poisson structure given by $\Pi_\pp=\rho_\pp^\sharp$ (Theorem \ref{thm:Poissontheorem}). It can be written as the bivector
\[
    \Pi_\pp = \int_{\pp\Sigma} \tr\left( \frac12 B \left[\frac{\delta}{\delta B},\frac{\delta}{\delta B}\right]  + \frac{\delta}{\delta B} d_A \frac{\delta}{\delta A} \right).
\]
This Poisson structure (together with an on-shell version we will describe shortly) has been recovered by means of the Batalin--Vilkovisky method for a field theory on manifolds with corners in \cite[Section 2.2.4]{CanepaCattaneo}.

On-shell some interesting things happen. It is easy to see that the ($A$-dependent) equivariant differential $\tau \mapsto d_A\tau$ has a kernel whenever $A$ is flat: $d_A\gamma \mapsto [F_A,\gamma] = 0 \mod F_A$, and thus everywhere on $\C$. (Observe that this is true regardless of whether $\tau$ and $A$ denote fields on $\Sigma$ or $\pp\Sigma$.) This means in particular that the algebroid anchor map \emph{fails} to be injective on-shell, and this failure is encoded by the isotropy locus $\mathsf{I}_\rho$. (Another way to say this is that the gauge symmetry is on-shell reducible---although, if $\fg$ is Abelian, this fact is true even off-shell.)

\medskip

In the Abelian case we easily recover (structural) information about the theory by looking at the defining condition for the on-shell flux annihilator $\fN$:
\begin{align*}
\fN 
&\doteq \{\xi\in \fG\ |\ \langle\iota^*_\C\bd\h(\phi),\xi\rangle = 0 \ \forall \phi\in\C \}\\
&\simeq \left(d\Omega^{n-3}(\Sigma,\fg^*)\times \Omega^{n-2}_D(\Sigma,\fg^*)\right) \times\Omega^0_D(\Sigma,\fg).
\end{align*}

Observe that $\mathfrak{I}\subset\fN\not=\fNoff$, where $\mathfrak{I}$ is the isotropy subalgebra, as expected from Proposition \ref{prop:Gausslaw}. We conclude that, in the \emph{Abelian} case, the flux gauge algebra is
\begin{align*}
\fGred \doteq \fG/\fGo &\simeq \fG/\fN \\
&\simeq \frac{\Omega^0(\Sigma,\fg)}{\Omega^0_D(\Sigma,\fg)}\times \frac{\Omega^{n-2}(\Sigma,\fg^*)}{d\Omega^{n-3}(\Sigma,\fg^*)\times \Omega^{n-2}_D(\Sigma,\fg^*)}\\
&\simeq \Omega^0(\pp\Sigma,\fg)\times \frac{\Omega^{n-2}(\pp\Sigma,\fg^*)}{d \Omega^{n-3}(\pp\Sigma,\fg^*)}.
\end{align*}
It is worth noting that, in contrast to its off-shell analogue $\fGredoff$, the flux gauge algebra $\fGred$ \emph{fails} to be local (even though it is still supported on $\pp\Sigma$, Definition \ref{def:LocLieAlg}). The associated algebroid is isomorphic to the restriction of the image of $\rho(\fG)$ to $\C$ (the anchor now being the inclusion map in $T\C$, and thus injective).

In the non-Abelian case, however, the $A$-dependent condition $\tau=d_A\gamma$ complicates matters significantly as it introduces a point-dependent condition for the isotropy locus $\mathsf{I}_\rho$. In fact, this means that $\mathsf{I}_\rho$ is a \emph{nontrivial} vector bundle\footnote{This is because all flat connections have isomorphic (maximal) reducibility algebras.} and therefore the on-shell reducibility of the symmetry cannot be registered in the definition of $\fN$ which is instead defined by a \emph{global} condition as the set of those sections $\xi$ such that $\langle\iota_\C^*\bd \h(\phi),\xi\rangle=0$ \emph{for all $\phi\in\C$}. Therefore, in the non-Abelian case $\fN = \fNoff$, and we seemingly lose track of the on-shell reducibility information encoded in $\mathsf{I}_\rho$.
As we shall discuss in a moment, although this information is not registered by $\fN$ (as compared to $\fNoff$), it is not completely lost either, since it is encoded in the geometry of the on- vs. off-shell superselections.
However, to explicitly describe this geometry at the level of the Lie algebroid itself, an extension of the formalism is needed, to handle more general, non-trivial, Lie algebroids. See Remark \ref{rmk:isotropybundle3}.

In both the Abelian and non-Abelian cases we can observe that the shell condition $F_A=0$ restricts to the corner submanifold,\footnote{Observe that this phenomenon is not present in either YM or Chern--Simons theories, whose shell conditions are automatically satisfied when restricted to the corner submanifold.} whereas $d_AB$ is an $n$ (i.e.\ top) form which identically vanishes on a codimension-one submanifold.

It is easy to gather that on the submanifold $\iota^*_{\pp\Sigma}F_A = 0\subset \X_\pp$ the ``order'' of the symplectic leaves of $\Pi_\pp$ jumps. Indeed, consider the (on-shell vanishing) $\mathsf{p}$-function 
\[
g_\gamma(B,A) = \int_{\pp\Sigma}\tr( F_A \wedge \gamma) \in C^\infty_\mathsf{p}(\X_\pp),
\quad
\bd g_\gamma(B,A) = \int_{\pp\Sigma} \tr(\bd A \wedge d_A \gamma);
\]
it is immediate to see that the corresponding vector fields 
\[
\Pi^\sharp_\pp(\bd g_\gamma) = \Pi_\pp^\sharp\left(\int_{\pp\Sigma}\tr( \bd A \wedge d_A\gamma)\right) = \int_{\pp\Sigma} [F_A,\gamma] \frac{\delta}{\delta B} \equiv 0 \mod F_A
\]
span (nontrivial) tangent directions to the symplectic leaf only off-shell.

Denoting by $\mathcal{I}_{\pi_{\pp}(\C)}$ the \emph{vanishing ideal} of the submanifold
\[
\C_\pp  \doteq \pi_{\pp,\X}(\C) = \{(A,B)\in \X_\pp \ |\ \exists (\wt{A},\wt{B})\in \C, \iota^*_{\pp\Sigma}\wt{A}=A, \iota^*_{\pp\Sigma}\wt{B}=B\},
\]
we see that it is generated by the functions $g_\gamma$ (since $\iota_{\pp\Sigma}^*d_{\wt{A}}\wt{B} \equiv0$ is automatically satisfied for degree reasons). 
Therefore, since the $\mathsf{p}$-functions $g_\gamma$ generate a Poisson subalgebra of $C^\infty_{\mathsf{p}}(\X_\pp)$ (and an associative ideal), the quotient $C^\infty_{\mathsf{p}}(\X_\pp)/\mathcal{I}_{\pi_{\pp}(\C)}\simeq C^\infty_{\mathsf{p}}(\C_\pp)$ is itself a Poisson algebra.

Summarising, within $\X_\pp$ there exists a closed Poisson submanifold $\C_\pp$ defined by the corner shell condition $\iota^*_{\pp\Sigma}F_A=0$. This can be rephrased as the statement that the off shell corner algebroid 
\[
\A_\pp = \X_\pp \times \fGpoff = \underbrace{\Omega^1(\pp\Sigma,\fg)\times \Omega^{n-2}(\pp\Sigma,\fg^*)}_{\X_\pp} \times \underbrace{\Omega^0(\pp\Sigma,\fg) \times \Omega^{n-3}(\pp\Sigma,\fg^*)}_{\fGpoff}
\]
restricts to an algebroid $\iota_{\C_\pp}^*\A_\pp \to \C_\pp$, within which one can identify a \emph{corner isotropy bundle} $\mathsf{I}_{\rho_\pp}\to \C_\pp$.

In order to simultaneously describe on- and off-shell behavior, a handy tool is given by the Batalin Vilkovisky framework. The result of that procedure for this case has been recently presented in \cite[Section 2.2.4]{CanepaCattaneo}, and gives rise to a Poisson-$\infty$ structure. This is a  generalization of the corner Poisson structure we discussed in this paper whose emergence can be traced back to the presence of the on-shell corner isotropy bundle $\mathsf{I}_{\rho_\pp}$---a generalization that our strictly-Hamiltonian framework cannot (presently) capture. We refer to Remark \ref{rmk:isotropybundle3} for a path towards a generalization which should be capable of doing just that.


\appendix
\section{Remarks on infinite-dimensional manifolds}\label{app:infinitegeometry}

We collect here a few observations and outline central constructions underpinning the infinite-dimensional smooth structures of the spaces used throughout. 

As we mentioned in the main text, the convenient setting endows locally convex vector spaces with the $c^\infty$ topology, which is generally finer than the associated locally convex topology, but it coincides with it in many cases of interest, e.g.\ for all bornological, metrizable vector spaces, as shown in \cite[Section 4 and Theorem 4.11]{kriegl1997convenient}. 
In particular, it coincides with the locally convex topology when the vector space is either Fr\'echet, or the strong dual of a Fr\'echet space---which we now define---provided that these space enjoy some additional properties:

\begin{remark}[Strong vs. bornological dual]\label{rmk:strong/bornduals}
Following \cite{kriegl1997convenient}, but changing the notation, given a locally convex vector space $\calW$ we can consider two (generally distinct) notions of dual. The strong dual $\strong{\calW}$ is the locally convex vector space of all \emph{continuous linear functionals} on $\calW$ endowed with the strong topology\footnote{The strong topology is the topology of uniform convergence on bounded subsets generated by the seminorms $\mu_B(f)=\mathrm{sup}_{x\in B}(f(x))$.}. On the other hand we have the \emph{bornological dual}, defined as the space $\dual{\calW}\doteq L(\calW,\mathbb{R})$ of \emph{bounded linear functionals}\footnote{Bounded linear maps map smooth curves to smooth curves, and viceversa \cite[Cor. 2.11]{kriegl1997convenient}.} on a locally convex vector space $\calW$ to be the space of linear maps that map bounded sets to bounded sets. 
For the important class of nuclear Fr\'echet vector spaces, to which all of our examples belong (see Remark \ref{rmk:nuclearFrechet} and Remark \ref{rmk:NFVS}, below) these two notions agree, and $\strong{\calW}=\dual{\calW}$ as topological vector spaces. (This follows from \cite[Lemma 5.4.3]{FrolicherKriegl} and the fact that the strong dual of nuclear Fr\'echet space is bornological and the $c^\infty$ topology coincides with the given locally convex topology\footnote{In fact one can relax nuclearity and require that $\calW$ be Fr\'echet Montel. However, among those, only Fr\'echet Schwartz spaces will generally enjoy the additional fact that their strong duals will be $c^\infty$-complete \cite[Theorem 4.11]{kriegl1997convenient}. (Recall that all nuclear Fr\'echet spaces are also Schwartz, and all Schwartz Fr\'echet spaces are also Montel.) \label{fnt:Montel}}). Finally, we recall that the strong dual of a Fr\'echet vector space $\calW$ is not necessarily Fr\'echet, unless $\calW$ is also Banach.

Hence, we will not distinguish between the $c^\infty$ topology used in the convenient setting and the Fr\'echet topology used in the locally-convex setting, nor between strong and bornological duals, and we  will henceforth use the unified notation $\dual{\calW}$. However, we recall in Remark \ref{rmk:cotangent} why the convenient and the locally convex settings generally differ.
\end{remark}

\begin{remark}[Based on 6.5(7)\cite{kriegl1997convenient}]\label{rmk:NFVS}
The following holds for Nuclear Fr\'echet (NF) vector spaces:
\begin{enumerate}
\item A NF vector space $\calW$ is necessarily reflexive\footnote{Note that reflexivity for locally convex spaces need not coincide with reflexivity in the $c^\infty$ sense, but they do in some cases, including Fr\'echet \cite{Jarchow}. See \cite[Cor. 5.4.7]{FrolicherKriegl}, and for a summary \cite[Definition 6.3 to Result 6.5]{kriegl1997convenient}.}, i.e.\ $\bidual{\calW}\simeq\calW$ canonically, where the bidual $\bidual{\calW}$ is considered either in the strong or the bornological sense.
\item The quotient of a NF vector space by a closed NF sub-vector space is also a NF vector space.
\item NF vector spaces admit partitions of unity (see \cite[Thm. 16.10]{kriegl1997convenient}  and related results).
\item The category of NF vector spaces is stable under duality and (projective) tensor product \cite[Thm. 7.5]{SchaeferWolff}.
\item Given two nuclear spaces $E,F$ the projective and injective tensor products coincide, and $E\hat\otimes_\pi F$ is nuclear. If  they are nuclear Fr\'echet, their duals are nuclear. Finally, the projective tensor products of nuclear spaces (and their completions) are nuclear \cite[Thm. 7.5]{SchaeferWolff}.\qedhere
\end{enumerate}
\end{remark}

\begin{remark}[Cotangent bundle]\label{rmk:cotangent}
Consider a locally convex vector space $\calW$ with its dual $\dual{\calW}$. The evaluation map $\mathrm{ev}\colon \calW\times \dual{\calW} \to \mathbb{R}$ is not continuous whenever $\calW$ is not normable. This is usually taken as an obstruction to the existence of a natural smooth cotangent bundle structure on locally convex manifolds (see \cite{NeebSahlmannThiemann,Neeb-locallyconvexgroups}).
However, since the evaluation map is always smooth, it is necessarily continuous when seen as a map\footnote{Here we denote by $c^\infty(\calW)$ the vector space $\calW$ with $c^\infty$ topology.} $c^\infty(\calW\times \dual{\calW}) \to \mathbb{R}$, (see \cite[Section  4.16 and Page 8]{kriegl1997convenient}). This is only possible because $\calW\times \dual{\calW}$ with the $c^\infty$ topology is not a locally convex vector space \cite[Proposition 4.20]{kriegl1997convenient}, and $c^\infty(\calW\times \dual{\calW})\not= c^\infty(\calW) \times c^\infty(\dual{\calW})$ \cite[Corollary 4.21 and Remark 4.22]{kriegl1997convenient}.
In the convenient setting, the cotangent bundle of a vector space $\calW$ is \emph{defined} as $\dual{T}\calW \doteq c^\infty(\calW\times\dual{\calW})$. This construction serves as a model for the definition of the cotangent bundle $\dual{T}\calPhi$ of any convenient smooth manifold $\calPhi$ as the $c^\infty$ vector bundle over $\calPhi$ whose fibre $\dual{T}_\phi\calPhi \doteq \dual{(T_\phi\calPhi)}$ is the dual of the tangent space\footnote{Here $T\calPhi$ denotes the ``kinematical'' tangent bundle, defined in terms of equivalence classes of curves through the base point \cite[Section  28.13 and Sect. 33.1]{kriegl1997convenient}. \label{fnt:TM}} $T_\phi\calPhi$.
As for $\dual{T}\calW = c^\infty(\calW\times\dual{\calW})$, also the topology of $T^*\calPhi$ fails to be the product topology of the topologies on the base and the fibre.\footnote{We refer to \cite[Appendices A and B]{DiezPhD} for a survey of an alternative method to define (noncanonically) a cotangent bundle within the locally convex setting, based on an appropriate choice of topology for dual pairs.}

Below, we will consider an instance where the cotangent bundle admits a (strong) symplectic structure, and is therefore a good model for certain purposes. However, as we will see below, in the \emph{local} field theoretic case another notion of cotangent bundle is available that is natural, smooth, and local. This will be our notion of choice for most applications.
\end{remark}

\begin{remark}[Reflexive cotangent bundle]\label{rmk:cotangentreflexive}
Recall Remark \ref{rmk:cotangent}.
The cotangent bundle $T^*\calV = c^\infty(\calV \times \dual{\calV})$ of a locally convex vector space $\calV$ admits a canonical  symplectic 2-form: 
\[
\omega_\text{can}((v_1,\eta_1),(v_2,\eta_2))= \eta_2(v_1) - \eta_1(v_2).
\] 
This 2-form is strongly symplectic iff $\calV$ is reflexive, i.e.\ $\bidual{\calV}=\calV$  \cite[Section 48.3]{kriegl1997convenient}. This is, in particular, the case for $\calV$ nuclear Fr\'echet (Remarks \ref{rmk:strong/bornduals}, \ref{rmk:cotangent}).

By using local charts, these results can be employed to equip the convenient cotangent bundle of any convenient (Fr\'echet) manifold $\calPhi = T^*\calN$ with a closed 2-form $\omega_{\text{can}}$ that coincides with the 2-form induced from the canonical Liouville 1-form on $T^*\calN$ (see \cite[Section 48.3]{kriegl1997convenient}). Once again, $\omega_\text{can}$ is strongly symplectic if and only if the local model $\calV$ of $\calN$ is a reflexive vector space, and in particular if it is nuclear Fr\'echet.
\end{remark}

\begin{remark}[Differential forms]\label{rmk:diffForms}
In the convenient setting, differential forms can be defined following \cite[Section 33]{kriegl1997convenient}.
Among the multiple (inequivalent) notions that one can naturally consider, only one enjoys all the desired properties generalising the finite dimensional scenario, and in particular all the machinery of Cartan's calculus. This is given by\footnote{Notice that the space of forms defined as sections $C^\infty(\calPhi,\wedge^\bullet T^*\calPhi)$ is only invariant along pullbacks, but not under exterior differentiation and Lie derivative \cite[Section 33.21]{kriegl1997convenient}.}
\[
    \Omega^k(\calPhi)\doteq C^{\infty}(\calPhi,L(\wedge^k T\calPhi,\calPhi\times \mathbb{R})), \quad \bd\colon \Omega^k(\calPhi) \to \Omega^{k+1}(\calPhi)
\]
where the space of multilinear alternating maps\footnote{By $L(T\calPhi,\calPhi\times\mathbb{R})$ we denote fibrewise linear bundle morphisms that are bounded, following \cite{kriegl1997convenient}.} on $\wedge^k T\calPhi$ is viewed as a vector bundle over $\calPhi$, and $\bd$ is defined as the skew-symmetrization of the derivative.

In the particular case of a Fr\'echet (or diffeological) space $\calPhi$, beside the defintion of forms as smooth multilinear maps on tangent spaces, there is another natural definition in terms of the extension of the de-Rham functor (see \cite{Iglesias-Zemmour:Diffeology}).
For spaces of fields $\calPhi=\Gamma(\Sigma,F)$ these two notions coincide \cite{waldorfI,BlohmannLFT}. We can therefore unambiguously denote that complex of  differential forms with $\Omega^\bullet(\calPhi)$.

A similar divergence of definitions and characterizations, marking a departure from the finite dimensional scenario, holds for vector fields. Indeed, the kinematical vector bundle $T\calPhi$ (see \cite[Section  28.13 and Sect. 33.1]{kriegl1997convenient}) in general only yields a subset of the  derivations of the algebra of smooth functions \cite[Lemma 32.3]{kriegl1997convenient}.
\end{remark}

\section{Local Forms}\label{app:localforms}
In this section we will assume to be working on $\calPhi=\Gamma(\Sigma,F)$ for $F\to\Sigma$ a fibre bundle. The manifold $\calPhi$ is then nuclear Fr\'echet.  

\begin{definition}[Local maps]\label{def:localmaps}
Let $F_i\to \Sigma$ denote fibre bundles for $i=1,2$, and let $\calPhi_i$ denote the respective spaces of smooth sections. A map $f \colon \calPhi_1 \to \calPhi_2$ is said to be local of jet order $k$ if $k$ is the smallest integer for which there is a smooth map $f_k\colon J^k F_1\to F_2$ such that the following diagram commutes: 
\begin{equation}
    \xymatrix{
    \Sigma\times \calPhi_1 \ar[r]^-{\;f\times \mathrm{id}_\Sigma\;} \ar[d]^-{j^k} & \Sigma\times \calPhi_2 \ar[d]^{j^0}\\
    J^k F_1 \ar[r]^-{f_k} & F_2
    }
\end{equation}

We will need the following standard results on local maps \cite[Section6]{BlohmannLFT}:
\begin{proposition}
The following properties hold.
\begin{enumerate}
    \item The composition of local maps is local;
    \item The direct product of local maps is local;
    \item Let $F_i\to M$ be fibre bundles, for $i\in\{1,2,3\}$, with sections $\calPhi_i$, and let $f\colon \calPhi_1\times\calPhi_2\to\calPhi_3$ be a local map. Then the maps
    \[
        f(\bullet, \varphi_2)\colon \calPhi_1 \to \calPhi_3, \qquad f(\varphi_1,\bullet) \colon \calPhi_2 \to \calPhi_3
    \]
    are local for all $\varphi_i\in\calPhi_1$.\qedhere
\end{enumerate}
\end{proposition}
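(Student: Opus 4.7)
The three statements all reduce to manipulations of the commuting diagram defining locality, so the plan is to exhibit, in each case, a candidate inducing map on the infinite jet bundle and check that the relevant diagram commutes. The main structural tool will be the \emph{prolongation of a local map}: given $f\colon \calPhi_1\to\calPhi_2$ local with inducing $f_0\colon J^\infty F_1\to F_2$, there exists a unique smooth \emph{infinite jet prolongation} $J^\infty f\colon J^\infty F_1\to J^\infty F_2$ such that $j^\infty\circ f=J^\infty f\circ j^\infty$; it is constructed by repeatedly applying the total horizontal derivative to $f_0$ at each jet order. Establishing this prolongation is the only real technical point; once it is in hand, (1)--(3) follow by diagram chasing.

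For (1), given local maps $f\colon \calPhi_1\to\calPhi_2$ and $g\colon \calPhi_2\to\calPhi_3$ with inducing maps $f_0,g_0$, the candidate inducing map for $g\circ f$ is $(g\circ f)_0 \doteq g_0 \circ J^\infty f$. Commutativity of the defining diagram reduces to $j^0_3\circ g\circ f = g_0\circ j^\infty_2\circ f = g_0\circ J^\infty f\circ j^\infty_1$, which uses the locality diagram for $g$ followed by the prolongation identity for $f$. The main obstacle here is really only the construction of $J^\infty f$, as noted above.

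For (2), given local maps $f_i\colon \calPhi_i\to\calPhi_i'$ ($i=1,2$) with inducing maps $(f_i)_0\colon J^\infty F_i\to F_i'$, we identify $\calPhi_1\times\calPhi_2\simeq \Gamma(\Sigma,F_1\times_\Sigma F_2)$ and $\calPhi_1'\times\calPhi_2'\simeq \Gamma(\Sigma,F_1'\times_\Sigma F_2')$, and use the canonical isomorphism $J^\infty(F_1\times_\Sigma F_2)\simeq J^\infty F_1\times_\Sigma J^\infty F_2$ (and similarly for primed bundles). Then $(f_1\times f_2)_0 \doteq (f_1)_0\times_\Sigma (f_2)_0$ serves as the inducing map, and commutativity of the relevant diagram is immediate from the componentwise version.

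For (3), fix $\varphi_2\in\calPhi_2$ and set $\varphi_2^\infty \doteq j^\infty\varphi_2\colon \Sigma\to J^\infty F_2$. Since $f\colon \calPhi_1\times\calPhi_2\to\calPhi_3$ is local, there is an inducing $f_0\colon J^\infty(F_1\times_\Sigma F_2)\to F_3$ which, using the product identification above, we view as $f_0\colon J^\infty F_1\times_\Sigma J^\infty F_2\to F_3$. Define
\[
    f(\bullet,\varphi_2)_0 \colon J^\infty F_1 \to F_3,\qquad \zeta \mapsto f_0\bigl(\zeta,\varphi_2^\infty(\pi(\zeta))\bigr),
\]
where $\pi\colon J^\infty F_1\to\Sigma$ is the jet projection. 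Smoothness follows from smoothness of $f_0$ and of $\varphi_2^\infty$, and the commutativity of the locality diagram for $f(\bullet,\varphi_2)$ is a direct restriction of the one for $f$. The argument for $f(\varphi_1,\bullet)$ is symmetric. No serious obstacle is expected here beyond bookkeeping with the product-of-jet-bundles identification.
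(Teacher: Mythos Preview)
Your proposal is correct and follows the standard jet-bundle approach. Note, however, that the paper does not actually supply a proof of this proposition: it is stated as a ``standard result on local maps'' with a citation to \cite[Section 6]{BlohmannLFT}, and no proof environment follows. So there is nothing to compare against in the paper itself; your argument is essentially what one finds in the cited reference, and the key technical point you flag --- the existence of the infinite jet prolongation $J^\infty f$ of a local map --- is indeed the only nontrivial ingredient.
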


\end{definition}

We will make extensive use of the notion of local forms on $\Sigma \times \calPhi$, which are defined as a bicomplex of differential forms (Remark \ref{rmk:diffForms})  which only depend on a finite jet of a section $\phi\in\calPhi$. More precisely:\footnote{For further details beyond what we will present here, see See \cite{Takens,Anderson:1989}}

\begin{definition}[Local forms and vector fields]\label{def:localforms}
A local form on a fibre bundle $F\to \Sigma$ is an element of
\[
    \left(\oloc^{\bullet,\bullet}(\Sigma \times \calPhi), d, \bd\right) \doteq (j^\infty)^*\left(\Omega^{\bullet,\bullet}(J^\infty F),d_H,d_V\right)
\]
where $J^\infty F$ denotes the infinite jet bundle of $F\to \Sigma$, the map $j^\infty\colon \Sigma \times \calPhi \to J^\infty F$ is the infinite jet prolongation, and $\left(\Omega^{\bullet,\bullet}(J^\infty F),d_H,d_V\right)$ is the variational bicomplex.
We denote the two differentials induced on $\oloc^{\bullet,\bullet}$ by $d_H,d_V$ as
\[
d(j^\infty)^*\balpha \doteq (j^\infty)^* d_H \balpha,
 \qquad
 \bd (j^\infty)^*\balpha \doteq (j^\infty)^* d_V \balpha.
\]

An \emph{integrated, local} form is an element in the image of the map
\[
    \int_{\Sigma} \colon \oloc^{\text{top},\bullet} \to \Omega^\bullet(\calPhi)
\]
obtained by integration along $\Sigma$. We denote the space of integrated, local, forms by $\iloc^{\bullet}(\calPhi)$. For notational purposes, we will denote local forms with boldface symbols, and integrated local forms with regular symbols. 

A vector field $\mathbb{X}$ on $\calPhi$ is called local if it is local as a map $\mathbb{X}\colon \calPhi \to T\calPhi=\Gamma(\Sigma, VF)$, where $VF$ denotes the vertical subbundle of $F\to \Sigma$. We denote local vector fields by $\Xloc(\calPhi)$.
\end{definition}

The image of the integral map lands into the complex of differential forms $\Omega^\bullet(\calPhi)$ (see Remark \ref{rmk:diffForms}), defined in the appropriate infinite-dimensional manifold setting. Notice that, if $\Sigma$ is compact and without boundary, we can identify $\iloc^{\bullet}(\calPhi)$ with
\[
    \iloc^{\bullet}(\calPhi) \simeq \oloc^{\text{top},\bullet}/d\oloc^{\text{top}-1,\bullet}.
\]

\begin{definition}[Forms of source type]\label{def:sourcetype}
Let us denote by $\{x^i,u^\alpha, u_I^\alpha\}$ a local chart in $J^{|I|}E$, where $I$ is a multiindex. Local forms of the \emph{source} type are defined as those local forms $\boldsymbol{s}\in\oloc^{\text{top},1}(\Sigma \times \calPhi) $ such that 
\[
\boldsymbol{s} = (j^\infty)^*(P_\beta(x^i, u^\alpha, u_I^\alpha) d_Vu^\beta \wedge dx^1 \wedge  \cdots \wedge dx^{\text{top}}).
\]
We denote the space of local forms of source type by $\osrc^{\text{top},1}(\Sigma \times \calPhi)$.
\end{definition}

In other words, we can characterise source forms $\boldsymbol{s}$ as depending only on a finite jet of $\varphi\in\calPhi$ and \emph{only on} $\bd\varphi$ in the vertical direction. That is, source forms do not contain terms of the form $\bd\partial\varphi$. An extension to more general local $(p,q)$-forms of source type is possible, although we will be mostly interested in $(\mathrm{top},1)$-forms of source type. We refer to \cite{DelgadoPhD,blohmann2018hamiltonian} for more details.

We recall now a well-known theorem of Zuckerman about local forms:

\begin{theorem}[Lemma 1 of \cite{Zuckerman}, see also \cite{Takens,Takens77}]
For $\calPhi$ as above, the space of $(\text{top},1)$-local forms decomposes as
\[
\oloc^{\text{top},1}(\Sigma \times \calPhi) = \osrc^{\text{top},1}(\Sigma \times \calPhi) \oplus d \oloc^{\text{top}-1,1}(\Sigma \times \calPhi).
\]
\end{theorem}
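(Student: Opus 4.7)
The plan is to reduce the statement to a local calculation on the infinite jet bundle $J^\infty F$ (which $j^\infty$ identifies with a suitable completion of $\Sigma \times \calPhi$ for the purposes of local forms) and prove it by iterated integration by parts along the fibres. In adapted local coordinates $(x^i, u^\alpha, u^\alpha_I)$ on $J^\infty F$, a generic $(\text{top},1)$-form on $J^\infty F$ has the shape
\[
\boldsymbol{\nu} = \sum_{|I| \geq 0} P^I_\alpha(x,u,u_J)\, d_V u^\alpha_I \wedge \mathrm{vol},
\qquad \mathrm{vol} = dx^1 \wedge \cdots \wedge dx^{\text{top}},
\]
with only finitely many $P^I_\alpha$ nonzero at any given point. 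The source-type condition of Definition~\ref{def:sourcetype} corresponds exactly to the truncation of this sum at $|I|=0$, so the goal is to trade all $d_V u^\alpha_I$ with $|I|>0$ for $d_V u^\alpha$ modulo $d_H$-exact terms.

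The key identity is the commutation $d_V u^\alpha_{I,i} = D_i\, d_V u^\alpha_I$, where $D_i$ is the total derivative operator on $J^\infty F$. Combined with the Leibniz rule and the fact that $D_i \omega \wedge \mathrm{vol}$ equals $\pm d_H(\omega \wedge \mathrm{vol}_i)$ modulo lower-order terms, this gives
\[
P^{Ii}_\alpha\, d_V u^\alpha_{I,i} \wedge \mathrm{vol} \;=\; d_H\!\bigl(\pm P^{Ii}_\alpha\, d_V u^\alpha_I \wedge \mathrm{vol}_i\bigr) \;-\; (D_i P^{Ii}_\alpha)\, d_V u^\alpha_I \wedge \mathrm{vol}.
\]
First I would apply this to the terms of highest jet order in $I$; each application decreases $|I|$ by one at the cost of a $d_H$-exact piece. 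Since only finitely many $P^I_\alpha$ are nonzero, the iteration terminates after finitely many steps, leaving a purely source-type form plus an element of $d_H\Omega^{\text{top}-1,1}(J^\infty F)$. Pulling back along $j^\infty$ and using $d = (j^\infty)^* d_H$ gives the desired decomposition $\boldsymbol{\nu} = \boldsymbol{\nu}_\text{src} + d \boldsymbol{\nu}'$.

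For the directness of the sum, I would prove $\osrc^{\text{top},1} \cap d\oloc^{\text{top}-1,1} = \{0\}$ via the fundamental lemma of the calculus of variations. Suppose $\boldsymbol{s} = (j^\infty)^*(P_\alpha\, d_V u^\alpha \wedge \mathrm{vol}) = d\boldsymbol{\beta}$ for some $\boldsymbol{\beta} \in \oloc^{\text{top}-1,1}(\Sigma \times \calPhi)$. Contracting with a local vertical vector field $\mathbb{X} \in \Xloc(\calPhi)$ whose generating section $X^\alpha$ is compactly supported in the interior of $\Sigma$ and integrating over $\Sigma$ makes the boundary term $\int_\Sigma d\,\bi_\mathbb{X}\boldsymbol{\beta}$ vanish by Stokes, so $\int_\Sigma P_\alpha(j^\infty\varphi)\, X^\alpha\, \mathrm{vol} = 0$ for every $\varphi \in \calPhi$ and every such $X$; the arbitrariness of $X^\alpha$ forces $P_\alpha \equiv 0$ pointwise, hence $\boldsymbol{s}=0$.

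The main obstacle is the bookkeeping in the integration-by-parts recursion and ensuring that the process is coordinate-independent and well-defined as a \emph{global} statement on $\Sigma \times \calPhi$; however, since both $\osrc^{\text{top},1}$ and $d\oloc^{\text{top}-1,1}$ are intrinsically defined subspaces of $\oloc^{\text{top},1}$, and the uniqueness argument above is coordinate-free, the local decomposition glues to a global one by a partition of unity on $\Sigma$ (applied to the coefficient functions $P^I_\alpha$, not to $\boldsymbol{\nu}$ itself, which is automatically global). The result is then precisely the splitting stated in the theorem, with the associated projectors $\Pi_\text{src}$ and $\Pi_\text{bdry}$ well-defined by construction.
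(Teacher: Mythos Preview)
Your argument is correct and is essentially the standard proof of this result as found in the cited references. Note, however, that the paper does not supply its own proof of this statement: it is recorded as a classical result of Takens and Zuckerman and used as a black box throughout, so there is no ``paper's approach'' to compare against beyond the citation itself.
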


We conclude this appendix by defining a Cartan calculus for local forms.
\begin{proposition}[Local Cartan calculus, \cite{BlohmannLFT}]\label{prop:Cartancalc}
Let $\balpha$ be a local form and $\mathbb{X}$ be a local vector field. We denote by $\L_\mathbb{X}$ the operator on local forms defined by
\[
    \L_\mathbb{X} \balpha = (\bi_\mathbb{X} \bd + \bd \bi_\mathbb{X})\balpha.
\]
Then we have a the local Cartan calculus on $\oloc^{\bullet,\bullet}(\Sigma\times \calPhi)$, for every $\mathbb{X},\mathbb{Y}\in\Xloc(\calPhi)$ and $X,Y\in\mathfrak{X}(\Sigma)$:
\begin{align*}
    [\bd,\L_{\mathbb{X}}] &= [\iota_X,\L_{\mathbb{X}}] = 0 ; & \qquad [\L_{\mathbb{X}},\bi_{\mathbb{Y}}]&= \bi_{[\mathbb{X},\mathbb{Y}]}; & \qquad [\L_{\mathbb{X}},\L_{\mathbb{Y}}]& = \L_{[\mathbb{X},\mathbb{Y}]}\\
    [d,L_X] &= [\bi_{\mathbb{X}},L_X] = 0; & \qquad [L_X,\iota_Y]&=\iota_{[X,Y]}; & \qquad [L_X,L_Y]&=L_{[X,Y]}\\
    [d,\bi_\mathbb{X}] &= [\bd,\iota_X]=0; & [d,\bd]& =0  & &
\end{align*}
where, we recall, boldface symbols encode strictly vertical objects and standard symbols strictly horizontal ones, and $[\,\cdot\,,\cdot]$ denotes graded commutators.
\end{proposition}

\section{Lie theory and central extensions}\label{app:CEcoh}

In this appendix we recall a few basic notions in Lie group/algebra cohomology. Throughout, $\fG$ denotes the lie algebra of a Lie group $\G$. We refer to \cite{fuks2012cohomology,ChevalleyEilenberg}.

\subsection{Generalities}

Consider a $\G$-module $(\calW,\rho)$, then a $\calW$-valued $\G$-cocycle is a smooth map $\sigma\colon\G\to \calW$ such that
\[
    \sigma(gh)=\sigma(g) + \rho(g)\sigma(h).
\]
We note that $\sigma(e)=0$. The cocycle $\sigma$ is called a coboundary if there exists $w\in \calW$ such that $\sigma(g) = \rho(g)w - w$. Generalising this idea one gets:

\begin{definition}[Group/algebra Chevalley--Eilenberg complex]\label{def:CEcoh}
Let $(\calW,\rho)$ be a $\G$-module, and consider the space $C_\text{CE}^\bullet(\G,\calW)\doteq C^\infty_e(\G^\bullet,\calW)$ of smooth functions with values in $\calW$ that vanish when one of their arguments is the identity $e\in\G$. Then $C_\text{CE}^\bullet(\G,\calW)$ carries a natural cochain complex structure, with differential $\delta$:
\[
\delta_l : C_\text{CE}^l(\G,\calW) \to C_\text{CE}^{l+1}(\G,\calW)
\]
defined as follows:
\[
\delta_{0} w(g) \doteq \rho(g) w - w,
\]
for $l=0$, and for $l>0$:
\begin{align*}
\delta_{l} \sigma(g_1,  \cdots, g_{l+1}) 
= &{\rho}(g_1)( {\sigma}(g_2,  \cdots, g_{l+1} ))+\\
&+ \sum_{i} (-1)^i {\sigma}( g_1,  \cdots, g_i g_{i+1},  \cdots, g_{l+1})
+ (-1)^{l+1} {\sigma}(g_1,  \cdots, g_l).
\end{align*}

Similarly, if $(\calW,\rho)$ is a $\fG$-module, we have a cochain complex
\[
C_\text{CE}^\bullet(\fG,\calW)\doteq L(\wedge^\bullet \fG,\calW) 
\]
with differential $\delta'$ given by
\begin{align*}
    \delta_k'f(\xi_1\wedge  \cdots \wedge\xi_{k+1}) \doteq & \sum_{i} (-1)^i\rho(\xi_i) (f(\xi_1\wedge \cdots \hat{\xi}_i   \cdots \wedge\xi_{k+1}))\\
    & + \sum_{i<j} (-1)^{i+j-1}f([\xi_i,\xi_j]\wedge\xi_1\wedge  \cdots \hat{\xi}_i  \cdots \hat{\xi}_j  \cdots \wedge \xi_{k+1}).
\end{align*}

The cohomology groups of these two cochain complexes are called the group (resp.\ algebra) Chevalley--Eilenberg cohomology for the group (resp.\ algebra) module $\calW$, and are denoted by $H_{CE}^\bullet(\G,\calW)$ (resp. $H_{CE}^\bullet(\fG,\calW)$). We will denote the cocycles of the Chevalley-Eilenberg complex by $Z_{CE}^\bullet(\G,\calW)$ (resp. $Z_{CE}^\bullet(\fG,\calW)$).
\end{definition}

A special case of this definition is when the modules coincide as vector spaces $\calW=\dual{\fG}$, equippend respectively with the coadjoint action of the Lie group/algebra (see Definition \ref{def:dual*} for the relevant notions in the case of infinite-dimensional local lie groups/algebras). We have the following result (see \cite[Proposition 2.1.20]{RatiuOrtega03} for a proof):

\begin{proposition}[\cite{souriau1970structure}]\label{prop:algebracocyclefromgroupcocycle}
Let $\sigma\colon G\to \fG^*$ be a $\fG^*$ smooth group cocycle for the coadjoint action. Then, the map
\[
    \alpha\colon \fG\times\fG \to \mathbb{R}, \qquad \alpha(\xi,\eta)=\langle d_e\sigma(\xi),\eta\rangle
\]
is a Chevalley-Eilenberg two-cocycle for the coadjoint action of $\fG$ on $\fG^*$ if $\alpha$ is skew-symmetric.
\end{proposition}

\subsection{Infinite dimensional affine actions and orbits}\label{sec:dualPoisson}

Given a CE cocycle, one can define an affine action of $\G$ on $\dual{\fG}$ which generalises the coadjoint action:
\begin{definition}\label{def:affinecoad}
Let $C\in Z_{CE}^{1}(\G,\dual{\fG})$. 
The \emph{affine action} $\Ad_C^*$ of $\G$ on $\dual{\fG}$ is 
\[
    {\Ad_C^*: }\G\times \dual{\fG} \to \dual{\fG}, \qquad (g,\alpha) \mapsto {\Ad_C^*(g)\cdot\alpha \doteq} \Ad^*(g)\cdot\alpha + C(g).
\]
When $C=0$, the affine action is linear and coincides with the coadjoint action. The orbits under this action are denoted by
\[
    \mathcal{O}_\alpha \doteq \alpha\triangleleft_{{\Ad_C^*}}\G \subset \dual{\fG}.
\]
Infinitesimally, for $K\in {Z_{CE}^{2}(\fG,\mathbb{R})}$, we have the Lie algebra action,
\[
    \ad^*_K: \fG\times \dual{\fG} {\to \dual{\fG}}, \qquad (\alpha,\xi) \mapsto \ad^*_K(\xi)\cdot \alpha \doteq \ad^*(\xi) \cdot\alpha + K(\xi,\cdot).
    \qedhere
\]
\end{definition}

Nuclear Fr\'echet Lie algebras are reflexive (see Remark \ref{rmk:NFVS} and the discussion in Section \ref{sec:infinitedim}). One can endow their topological duals with a Poisson structure in the following way.

\begin{lemma}[Kirillov--Kostant--Souriau (KKS)]\label{lemma:KKS}
Let $\fG$ be a nuclear Fr\'echet Lie algebra. Then $\dual{\fG}$ is a (nuclear, reflexive) vector space for which $C^\infty_{\mathsf{p}}(\dual{\fG}) = C^\infty(\dual{\fG})$ admits a Poisson structure for any CE cocycle $K$ of $\fG$, represented by the bivector:
\[
    \Pi_{\dual{\fG}}^K\vert_{\alpha}(\xi\wedge\eta) \doteq  
    \langle\alpha,[\xi,\eta]\rangle + K(\xi,\eta) \equiv \langle  \ad^*_K(\xi)\cdot\alpha,\eta\rangle
\]
where $\alpha\in \dual{\fG}$ and $\xi,\eta\in T_\alpha^*\dual{\fG} \simeq \bidual{\fG} \simeq \fG$. \end{lemma}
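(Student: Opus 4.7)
The plan is to reduce the claim to the standard Kirillov--Kostant--Souriau construction, with two adaptations required in our infinite-dimensional setting: (i) the topological setup must genuinely yield a Poisson structure on all of $C^\infty(\dual{\fG})$, not merely on a partial subalgebra, and (ii) the cocycle term $K$ must be accommodated in the Jacobi identity. I expect the delicate point to be (i), which rests on reflexivity of $\fG$, and the only non-trivial algebraic computation to be the verification of Jacobi using the CE cocycle condition $\delta' K = 0$.

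First I would fix the topological/geometric setup. Since $\fG$ is nuclear Fr\'echet, it is reflexive (Remark \ref{rmk:FrechetLieAlgebras}), and its strong dual $\dual{\fG}$ is itself nuclear. As $\dual{\fG}$ is a vector space, for every $\alpha\in\dual{\fG}$ one has canonical identifications $T_\alpha\dual{\fG} \simeq \dual{\fG}$ and $T^*_\alpha\dual{\fG} \simeq \bidual{\fG} \simeq \fG$. Under these, for $f\in C^\infty(\dual{\fG})$ the differential $\bd f\vert_\alpha$ lies in $\fG$, so the bivector
\[
\Pi_{\dual{\fG}}^K\vert_\alpha(\xi \wedge \eta) = \langle \alpha, [\xi,\eta] \rangle + K(\xi,\eta),\qquad \xi,\eta\in T^*_\alpha\dual{\fG} \simeq \fG,
\]
is well-defined, and the partial cotangent bundle selected by $\Pi^K$ coincides with the full cotangent bundle of $\dual{\fG}$. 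This gives $C^\infty_{\mathsf{p}}(\dual{\fG}) = C^\infty(\dual{\fG})$ as claimed. For the associated bracket $\{f,g\}(\alpha) \doteq \Pi_{\dual{\fG}}^K\vert_\alpha(\bd f\vert_\alpha \wedge \bd g\vert_\alpha)$, skew-symmetry is immediate from that of $[\cdot,\cdot]$ and from the defining skew-symmetry of $K\in Z^2_{CE}(\fG,\mathbb{R})$ (Proposition \ref{prop:cornercocycle}); the Leibniz rule follows from the derivation property of $\bd$; smoothness in $\alpha$ holds because $\alpha\mapsto \Pi^K\vert_\alpha$ is affine in $\alpha$ and $\bd f$, $\bd g$ depend smoothly on $\alpha$.

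The main algebraic step is the Jacobi identity, which I would verify on the separating family of linear functionals $f_\xi(\alpha) \doteq \langle \alpha,\xi\rangle$ for $\xi\in\fG$: a direct computation gives $\{f_\xi,f_\eta\}(\alpha) = f_{[\xi,\eta]}(\alpha) + K(\xi,\eta)$, where the cocycle contribution is a constant function. The cyclic sum $\{\{f_\xi,f_\eta\},f_\zeta\} + \mathrm{cyclic}$ then splits into two pieces: one yields $f$ evaluated on $\mathrm{Cycl}_{123}\,[[\xi,\eta],\zeta]$, which vanishes by the Jacobi identity of $\fG$; the other collects the constants $K([\xi,\eta],\zeta)$, whose cyclic sum vanishes precisely because $\delta' K = 0$ (Proposition \ref{prop:cornercocycle}). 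Jacobi for general smooth $f,g,h$ then follows from the derivation property of $\{\cdot,\cdot\}$ in each argument, extended from the separating family $\{f_\xi\}_{\xi\in\fG}$ exactly as in the finite-dimensional KKS case. Uniqueness of the Poisson structure compatible with the given bivector is automatic once the bracket is defined on all smooth functions.
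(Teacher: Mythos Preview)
Your proposal is correct and follows essentially the same route as the paper: reflexivity of the nuclear Fr\'echet Lie algebra gives $T^*_\alpha\dual{\fG}\simeq\fG$, whence $T^{\mathsf{p}}\dual{\fG}=T^*\dual{\fG}$ and $C^\infty_{\mathsf{p}}=C^\infty$; the Poisson axioms then follow from the Jacobi identity of $\fG$ together with the CE cocycle condition $\delta'K=0$. The paper simply calls this last verification ``standard'' (deferring also to \cite{glockner2007applications}), whereas you spell it out on linear functionals and extend via the biderivation property---which is exactly the standard argument being invoked. One small phrasing issue: the partial cotangent bundle is not ``selected by $\Pi^K$'' but is chosen a priori (Definition~\ref{def:partialpoisson}); here one simply takes $T^{\mathsf{p}}\dual{\fG}\doteq T^*\dual{\fG}$.
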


\begin{proof}
This is a consequence of \cite[Remarks 4.3 and 4.4]{glockner2007applications}.

First observe that since $\fG$ is nuclear Fr\'echet, it is reflexive (Remark \ref{rmk:NFVS}), i.e.\ the inclusion $\iota: \fG\to \bidual\fG$ is an isomorphism. 
This equips $\bidual\fG$ with a Lie algebra structure via its identification with $\fG$, and identifies covectors in $\dual{T}_\alpha\dual{\fG}$ with elements of  $\bidual{\fG} \simeq \fG$.

With reference to the notion of partial Poisson structures (Definition \ref{def:partialpoisson}), this allows us to set $T^\mathsf{p}\dual{\fG}\equiv T^*\dual{\fG}$, and we have $C^\infty_{\mathsf{p}}(\dual{\fG}) = C^\infty(\dual{\fG})$.

Then, for every $K\in \mathrm{Z}_{CE}^2(\fG,\mathbb{R})$, one defines a bivector field $\Pi^K_{\dual\fG}$ over $\dual\fG$ by the formula in the statement of the lemma.
The proof that $\Pi_{\dual{\fG}}^K$ is Poisson is standard and follows from the Jacobi identity for $\fG$ and the cocycle condition for $K$.
\end{proof}

\begin{corollary}\label{cor:KKS}
The bivector $\Pi_{\dual{\fG}}^K$ admits symplectic leaves, which coincide with the connected components of the orbits $\mathcal{O}_\alpha$. We denote the corresponding symplectic structure by $\Omega_{[\alpha]}$.
\end{corollary}
\begin{proof}
The bivector $\Pi^K_{\dual{\fG}}\vert_{\mathcal{O}_\alpha}$ is (strongly) nondegenerate as a consequence of reflexivity by an argument analogous to that of Remark \ref{rmk:cotangentreflexive}. Then the claim follows from closedness of the associated symplectic form $\Omega_{[\alpha]}$ and \cite[Remark 2.4.6]{PelletierCabau}.
\end{proof}

\begin{remark}
An alternative proof for Lemma \ref{lemma:KKS} can be phrased within the framework of \cite{NeebSahlmannThiemann} for weak Poisson locally convex vector spaces. Consider a locally convex Lie algebra $\fG$, whose topological dual $\fG'$ is endowed with the strong topology: the adjoint map $\ad(\xi)\colon \fG\to\fG$ is continuous for all $\xi\in \fG$, and so is $\ad^*(\xi):\strong{\fG}\to\strong{\fG}$ (e.g.\ \cite[Proposition 19.5]{TransposeLinMap}). Then, employing the notation used by \cite[Example 2.14]{NeebSahlmannThiemann}, we can apply \cite[Corollary 2.11]{NeebSahlmannThiemann} to  $V\doteq \fG'$ and $V_\ast\doteq \fG$, so that $\fG'$ inherits a weak-poisson structure on $S(V_*)$ from the Lie algebra structure of $\fG$. When $\fG$ is reflexive  we have additionally the coincidence of dual and predual $V^*=\fG''=\fG=V_*$. This maximally enlarges the algebra on which one has a weak Poisson structure, as stated in \cite[Remark 2.13]{NeebSahlmannThiemann}.
\end{remark}

\begin{remark}\label{rmk:KKS-Hamiltonian}
Recall Definition \ref{def:dual*} of the local dual $\calW_\loc^*$ of the space of sections of a vector bundle, $\calW = \Gamma(\Sigma,W)$.
Since $\fG=\Gamma(\Sigma,\Xi)$ is a local Lie algebra, the affine/coadjoint action restricts to $\dual{\fG}_\loc \subset\dual{\fG}$, so that the orbit of $\alpha\in \dual{\fG}_\loc$ is automatically contained in $\dual{\fG}_\loc$, i.e.\ $\mathcal{O}_\alpha\subset \dual{\fG}_\loc\subset\dual{\fG}$. 
This means that $\mathcal{O}_\alpha$ can be seen a symplectic manifold in the local case as well.
However, the KKS structure introduced above restricts, in general, only to a partial Poisson structure on $\dual{\fG}_\loc\subset\dual{\fG}$, in the sense of \cite{PelletierCabau}, or weak-Poisson in the sense of \cite{NeebSahlmannThiemann}.

Let us recall here that the orbits $(\mathcal{O}_{\alpha},\Omega_{[\alpha]})$ carry a (transitive) Hamiltonian action by $\G$, inherited from the affine/coadjoint action on $\dual{\fG}$ (Definition \ref{def:affinecoad}). Since $\Im(\ad^*_K)$ is by construction tangent to $\mathcal{O}_\alpha$, we can use the same symbol to denote its restriction to $\mathcal{O}_\alpha\subset\dual{\fG}_\mathrm{loc}$, then for all $\alpha' \in \mathcal{O}_\alpha$,
\[
\bi_{\ad^*_K(\xi)}\Omega_{[\alpha]} \vert_{\alpha'} = \bd \langle i_{[\alpha]}(\alpha'), \xi\rangle
\]
and one sees that the momentum map is the embedding $i_{[\alpha]}:\mathcal{O}_\alpha \hookrightarrow \dual{\fG}_\mathrm{loc}$.
\end{remark}

\subsection{Central extension by a nontrivial cocycle}

In the presence of a CE cocycle, the coadjoint and affine orbits of Definition \ref{def:affinecoad} differ from each other. If we are given a Hamiltonian action $\rho$ of $\fG$ on a symplectic space $\X$ with a (local) momentum map $H$ that is not equivariant due to $K$, it is convenient to make consider a central extension $\wh{\fG}$ of $\fG$ governed by the cocycle itself, so that the extended action has an equivariant momentum map.

\begin{definition}[Central extension]\label{def:centralextension}
Let $\fG$ be a (possibly infinite-dimensional) Lie algebra and $K$ a CE cocycle thereof.  The \emph{centrally-extended} Lie algebra is the vector space $\wh{\fG}\doteq \fG\oplus \mathbb{R}$ together with the Lie bracket:
\[
    [(\xi,r),(\eta,s)] \doteq ([\xi,\eta],K(\xi,\eta)).
\]
Its dual is $\dual{\wh{\fG}}\simeq \dual{\fG}\oplus  \mathbb{R}$, with pairing $\langle (\alpha,a) , (\xi,r)\rangle \doteq \langle \alpha, \xi\rangle + ar$.

Furthermore, if $(\X,\omega,\fG,H)$ is Hamiltonian $\G$-space with action $\rho:\fG\to\mathfrak{X}(\X)$, introduce:
\begin{enumerate}[label=(\roman*)]
    \item the action $\wh{\rho}$ of $\wh{\fG}$ on $\X$, defined by $\wh{\rho}(\xi,r)\doteq\rho(\xi)$; 
    \item the map  $\wh{H}\doteq(H,1) \colon\X \to \dual{\wh{\fG}}$.\qedhere
\end{enumerate}
\end{definition}

From the above one deduces that the coadjoint action of $\wh{\fG}$ on $\dual{\wh{\fG}}$ is
    \[
    \wh{\ad}{}^*(\xi,r)\cdot(\alpha,a) =
    (\ad^*_K(\xi)\cdot\alpha, 0) ,
    \]
and therefore that the \emph{coadjoint} orbit of $(\alpha,1)$ in $\dual{\wh{\fG}}$ is given by the \emph{affine} orbit of $\alpha$ in $\dual{\fG}$: $\wh{\mathcal{O}}_{(\alpha,1)} =\mathcal{O}_\alpha \times \{1\}$. This construction restricts to local linear functionals in $\dual{\wh{\fG}}_\loc \simeq \dual{\fG}_\loc\oplus \mathbb{R}$.

\begin{theorem}[see e.g.\ \cite{RatiuOrtega03} Theorem 10.1.1.v]\label{thm:extendedreduction}
Let $(\X,\omega,\G,H)$ be a finite dimensional Hamiltonian $\G$-space, and let $\wh{H}$ and $\wh{\fG}$ as in Definition \ref{def:centralextension}. Then, $\wh{H}$
is a $\widehat\fG$-\emph{equivariant} momentum map for the action $\widehat\rho$ of $\wh{\fG}$ on $\X$ and the preimages of the affine and coadjoint orbits coincide, viz.
\[
{\wh{H}^{-1}(\wh{\mathcal{O}}_{(\alpha,1)})}
=
H^{-1}(\mathcal{O}_\alpha)  
\]
for every $\alpha\in \mathrm{Im}(H) \subset\dual{\fG}$. 

Additionally, if the quotient $\X/\wh{\G}$ is smooth and the reduction $H^{-1}(\mathcal{O}_\alpha)/\G$ is a smooth symplectic manifold for every $\alpha\in \dual{\fG}$, $\X/\wh{\G}$ admits a Poisson structure whose symplectic leaves are the reduction by $\G$ of the connected components of the fibres ${H}^{-1}(\mathcal{O}_\alpha)$.
\end{theorem}

\begin{remark}\label{rmk:extendedreduction}
This theorem holds for $(\X,\omega,\G,H)$ a finite dimensional Hamiltonian $\G$-space.
In finite dimensions, a proper and free action is sufficient to have a smooth orbit space $\X/\G$, and the regularity of the values of the momentum map ensures the existence of a regular foliation in $H^{-1}(\mathcal{O}_\alpha)$, which in turn guarantees that the reduced space is symplectic. The previous theorem is then presented here as a guideline, which will then need to be checked in concrete examples. For conditions on the existence of a centrally extended Lie \emph{group} $\widehat\G$ such that $\mathrm{Lie}(\widehat\G)=\widehat\fG$, we refer to \cite[Thm. VI.1.6 and Def. VI.1.3]{Neeb-locallyconvexgroups}.
\end{remark}

\subsection{Generalization to Lie algebroids}

The notion of differential forms and the associated de-Rham complex can be extended to the case of Lie algebroids as follows (see \cite{KosmannSchwarzbachSurvey}). For an introduction to algebroids of Fr\'echet type we refer to \cite{Baarsma}. Particularly relevant for us is the link with Poisson manifolds \cite{HuebschmannPoissonQuant1990}.

Let $\mathcal{V}$ be the fibre of a vector bundle $\A\to\calPhi$. By $\wedge^k \A$ we denote the completely anti-symmetrized $k$-fold tensor-product bundle over $\calPhi$ with fibre $\wedge^k\mathcal{V}$. We understand $\X_\pp\times\mathbb{R}$ as a trivial line bundle over $\X_\pp$. Then, $L(\wedge^k\X_\pp, \X_\pp \times\mathbb{R})$ is the space of fibrewise linear bundle morphisms that are bounded \cite{kriegl1997convenient}. Cf. Remark \ref{rmk:diffForms}.

\begin{definition}[Lie algebroid complex]\label{def:algebroidforms}
Let $\A\to \calPhi$ be a Lie algebroid, with anchor $\rho\colon \A\to T\calPhi$. The space of $\A$-forms, denoted by $\Omega^\bullet_\A(\calPhi)$, is the space of sections
\[
    \Omega^\bullet_\A(\calPhi)\doteq L(\wedge^k\A,\calPhi \times \mathbb{R})
\]
endowed with the $\A$-de Rham differential $\delta_\A : \Omega^\bullet_\A(\calPhi)\to \Omega^{\bullet+1}_\A(\calPhi) $

\begin{align*}
    \delta_\A\omega(a_1\wedge  \cdots \wedge a_{k+1}) \doteq &  \sum_{i=0}^{k+1}(-1)^i \rho(a_i)(\omega(a_1\wedge  \cdots \hat{a}_i   \cdots \wedge a_{k+1}))\\
      & + \sum_{i<j} (-1)^{i+j-1} \omega(\llbracket a_i,a_j\rrbracket\wedge a_1\wedge \cdots \hat{a}_i  \cdots \hat{a}_j  \cdots \wedge a_{k+1}),
\end{align*}

where the $a_i$'s are arbitrary sections of $\A\to\calPhi$.
The cohomology of the Lie algebroid complex $(\Omega_\A^\bullet(\calPhi),\delta_\A)$ (also known as $\A$-de Rham complex) is denoted by $H_\A^\bullet(\calPhi)$.
\end{definition}

\begin{remark}
Observe that the space of sections $\Gamma(\A)$ of a Lie algebroid comes equipped with a Lie algebra structure. Hence, one can construct the associated Chevalley--Eilenberg complex $C_\text{CE}^\bullet(\Gamma(\A),C^\infty(\calPhi))$ for the module $\calW = C^\infty(\calPhi)$. It is important to distinguish this notion from that of the $\A$-de Rham complex. 
Indeed, notice that $\Omega_\A^\bullet(\calPhi) = L(\wedge^\bullet\A,\calPhi \times \mathbb{R})$ is \emph{strictly} contained in $C_\text{CE}^\bullet(\Gamma(\A),C^\infty(\calPhi))$. An important example of this is the relation between the Chevalley--Eilenberg cochains $C_\text{CE}^\bullet(\mathfrak{X}(M),C^\infty(M))$ for the tangent algebroid (with identity anchor) $TM\to TM$, and its the de-Rham (sub)complex $\Omega^\bullet(M)$. (See \cite[Section 1.3]{fuks2012cohomology}.) Whenever $\A$ is an action Lie algebroid, though, one can understand the $\A$-de Rham complex as the Chevalley-Eilenberg complex for the \emph{fibre} Lie-algebra, as detailed below.
\end{remark}

\begin{theorem}
If $\rho:\A\to T \calPhi$ is an action Lie algebroid $\A = \calPhi \times \fG$, then
\[
(\Omega_\A^\bullet(\calPhi),\delta_\A) \simeq (C^\bullet_\text{CE}(\fG,C^\infty(\calPhi)), \delta')
\]
is an isomorphism of complexes.
\end{theorem}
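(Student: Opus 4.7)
The plan is to construct an explicit, natural isomorphism at the level of graded vector spaces and then verify that it intertwines the two differentials. For each $k\geq 0$, I would define
\[
\Psi_k \colon \Omega_\A^k(\calPhi) \longrightarrow C_{\mathrm{CE}}^k(\fG,C^\infty(\calPhi)),
\qquad \Psi_k(\omega)(\xi_1,\dots,\xi_k)(\phi) \doteq \omega_\phi(\xi_1,\dots,\xi_k),
\]
where on the right-hand side the $\xi_i\in\fG$ are viewed as \emph{constant} sections of $p\colon\A=\calPhi\times\fG\to\calPhi$. The fact that $\Psi_k$ is a bijection is essentially tautological given the triviality of the bundle $\A$: an element of $\Omega^k_\A(\calPhi)=L(\wedge^k\A,\calPhi\times\mathbb{R})$ is a fibrewise alternating $k$-linear bundle morphism, hence specified by a smooth map $\calPhi\to L(\wedge^k\fG,\mathbb{R})$, which one can re-read as a map $\wedge^k\fG\to C^\infty(\calPhi)$. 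The inverse is defined pointwise.

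Next, I would show $\Psi_{k+1}\circ\delta_\A = \delta'\circ\Psi_k$. The crucial simplification is to evaluate $\delta_\A\omega$ on $(k{+}1)$-tuples of constant sections $\xi_i\in\fG\subset\Gamma(\A)$. By the formula recalled just before the theorem,
\[
\llbracket\xi_i,\xi_j\rrbracket = [\xi_i,\xi_j]_\fG + \L_{\rho(\xi_i)}\xi_j - \L_{\rho(\xi_j)}\xi_i,
\]
and the last two summands vanish because $\xi_j$ and $\xi_i$ are $\phi$-independent as sections; thus the algebroid bracket of constant sections is again a constant section, equal to the Lie bracket in $\fG$. Moreover, the anchor $\rho(\xi_i)\colon C^\infty(\calPhi)\to C^\infty(\calPhi)$ is precisely the module action defining the $\fG$-module structure on $C^\infty(\calPhi)$ used to set up $C_{\mathrm{CE}}^\bullet(\fG,C^\infty(\calPhi))$. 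Substituting constant sections in the defining formula for $\delta_\A$ therefore reproduces term-by-term the Chevalley–Eilenberg differential $\delta'$ from Definition \ref{def:CEcoh}, yielding
\[
\Psi_{k+1}(\delta_\A\omega)(\xi_1,\dots,\xi_{k+1}) = (\delta'\Psi_k(\omega))(\xi_1,\dots,\xi_{k+1}).
\]

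The main step requiring care — and, in the infinite-dimensional setting, the main potential obstacle — is the identification underlying $\Psi_k$. Because $\Omega^k_\A(\calPhi)$ is defined via bounded fibrewise multilinear maps on $\wedge^k\A$ (Definition \ref{def:algebroidforms}, cf.\ Remark \ref{rmk:diffForms}), and $\fG$ is a (nuclear Fréchet) convenient vector space, one must invoke the standard exponential-law type identification $C^\infty(\calPhi, L(\wedge^k\fG,\mathbb{R}))\simeq L(\wedge^k\fG, C^\infty(\calPhi))$, which in the convenient calculus is a smooth isomorphism (cf.\ the argument in Remark \ref{rmk:approximationproperty}, using that the relevant spaces have the bornological approximation property). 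Once this identification is in place, both the bijectivity of $\Psi_\bullet$ and its compatibility with the differentials follow cleanly. Finally, I would note that $\Psi$ is \emph{not} expected to extend to an isomorphism with $C^\bullet_{\mathrm{CE}}(\Gamma(\A),C^\infty(\calPhi))$: the difference is exactly the $C^\infty(\calPhi)$-multilinearity built into $\Omega_\A^\bullet(\calPhi)$, reflecting the restriction to constant sections of $\A=\calPhi\times\fG$.
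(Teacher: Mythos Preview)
Your proposal is correct and takes essentially the same approach as the paper, which simply says the result ``can be checked by direct inspection of the definitions, recalling that, in the case of a trivial bundle $\A = \fG \times \calPhi$, we have $L(\wedge^\bullet\A,\calPhi \times \mathbb{R}) = C^\infty(\calPhi,(\wedge^\bullet \fG)^*)$.'' You have carried out that direct inspection explicitly, including the key observation that the algebroid bracket of constant sections reduces to the Lie bracket in $\fG$.
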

\begin{proof}
This can be checked by direct inspection of the definitions, recalling that, in the case of a trivial bundle $\A = \fG \times \calPhi$, we have $L(\wedge^\bullet\A,\calPhi \times \mathbb{R}) = C^\infty(\calPhi,(\wedge^\bullet \fG)^*)$.
\end{proof}

\begin{remark}\label{rmk:PoissonLieAlg}Lemma \ref{lemma:KKS} can be seen as a special case of a more general relation between Poisson manifolds and Lie algebroids, which has been proven by Coste, Dazord and Weinstein in \cite{CosteDazordWeinstein} for finite dimensional manifolds. Here is a summary of their results. The cotangent bundle of a Poisson manifold $(Q,\Pi)$ is a Lie algebroid $A=T^*Q \to Q$ with anchor map $\varrho = \Pi^\#$ and bracket\footnote{For ease of comparison, we denote $\bd$ and $\L$ the differential and Lie derivative over $A$ even if this is finite dimensional in \cite{CosteDazordWeinstein}.} 
\[
\llbracket a,b\rrbracket_\Pi = \L_{\Pi^\#(a)}b  -  \L_{\Pi^\#(b)}a  - \bd \Pi(a\wedge b).
\]
Conversely, a Lie algebroid on $T^*Q\to Q$ with anchor $\varrho$ and bracket $\llbracket\cdot,\cdot\rrbracket$ defines a Poisson bivector $\Pi^\sharp = \varrho$ on $Q$ iff the brackets of two closed 1-forms, $\bd a = 0 = \bd b$, is also closed, $\bd \llbracket a,b\rrbracket = 0$. Such a Poisson bivector on $Q$ is the only one for which the two algebroids are isomorphic.

In Lemma \ref{lemma:KKS}, dimensionality aside, $Q = \dual{\fG}$ and one identifies $T_\alpha^*\dual{\fG} \simeq \fG$. Under this identification $\Pi_{\dual{\fG}}^{K\sharp} = \ad_K^*$, and $T^*\dual{\fG}$ has the structure of an action Lie algebroid with anchor $\varrho = \Pi_{\dual{\fG}}^{K\sharp}$. Indeed, we can view $T^*\dual{\fG}\simeq \fG\times \dual{\fG}$ as a vector bundle over $\dual{\fG}$ with a map $\Pi_{\dual{\fG}}^K\equiv \ad^*_K \equiv \varrho \colon \fG\times \dual{\fG} \to T\dual{\fG}$, and it is a matter of a straightforward calculation to check that the Poisson algebroid on $T^*\dual{\fG}$ is isomorphic to the action Lie algebroid $\varrho=\ad_K^*:\fG\times\dual{\fG} \to \dual{\fG}$
\[
 \llbracket \xi_a, \xi_b\rrbracket  = \L_{\ad^*_K(\xi_a)} \xi_b- \L_{\ad^*_K(\xi_b)} \xi_a + [\xi_a , \xi_b] = \xi_{\llbracket a,b \rrbracket_{\Pi^{K}_{\dual{\fG}}}}.
\]
where here we made explicit the mapping between \emph{1-forms} $a,\,b$ over $\dual{\fG}$ and $\fG$-valued \emph{functions} $\xi_a,\,\xi_b$ over $\dual{\fG}$ (induced by the isomorphism $\iota:\fG\to\bidual\fG$ introduced in the proof of Lemma \ref{lemma:KKS}). 
\end{remark}

\section{Covariant phase space method}\label{app:covariant}

We here provide a bird-eye review of Noether's theorem and the covariant phase space method.
The goal of this discussion is to clarify which of the Assumptions \ref{ass:setup}(1--3) are expected to automatically hold when $(\X,\bom)$ is the geometric phase space of a Lagrangian gauge theory, and which others require extra hypotheses.

Since the scope of this discussion is merely motivational, we will not attempt an exhaustive or self-contained treatment, for which we refer the reader to \cite{Zuckerman,LeeWald} and \cite{AshtekarBombelliKoul, KT1979,  CrnkovicWitten1987, Crnkovic_1988} (see also \cite{gotay1998momentum, gotay2004momentum} and, more recently, \cite{Khavkine2014} and \cite{MargalefCPS} where the relation with the Peierls and canonical approach is explained).

Consider a cylindrical ``spacetime'' manifold\footnote{We will not assume to have a Lorentzian metric, but we keep the terminology for the sake of familiarity.} $M\simeq{\Sigma}\times\mathbb{R}$, with ${\Sigma}\hookrightarrow M$ a closed (Cauchy) hypersurface, i.e.\ $\partial\Sigma=\emptyset$. 
Let $\overline{F}\to M$ be some fibre bundle specifying the field content, and $\overline{\mathcal{F}} \doteq \Gamma(M,\overline{F})$ be the \emph{off-shell covariant configuration space} over $M$.

A Lagrangian density is a local form $\boldsymbol{L}\in \oloc^{\text{top},0}(M\times\overline{\mathcal{F}})$. 
By means of Theorem \ref{thm:LocFormDec}, it defines the Euler--Lagrange form $\boldsymbol{EL} \doteq(\bd \boldsymbol{L})_\text{src}$, whose zero locus defines the equations of motions, and an equivalence class of symplectic potential densities $[\boldsymbol{\Theta}]$---where the brackets mean that $\boldsymbol{\Theta}$ is defined only up to $d$-closed (but in fact $d$-exact) terms:\footnote{To avoid clutter, we employ $(d,\bd$) to respectively denote the horizontal and vertical differential on both $\oloc^{\bullet,\bullet}(M\times\overline{\mathcal{F}})$ and $\oloc^{\bullet,\bullet}(\Sigma\times\X)$, see below.}
\begin{equation}
    \label{eq:deltaL}
\bd \boldsymbol{L} = \boldsymbol{EL} + d \boldsymbol\Theta.
\end{equation}
This is a consequence of Theorem \ref{thm:LocFormDec} and
\begin{theorem}[Taken's acyclicity \cite{Takens77}]
The complex $\left(\oloc^{p,q}(M\times \overline{\mathcal{F}}),d\right)$ is acyclic for $p<\mathrm{top}$ and $q\geq 1$.
\end{theorem}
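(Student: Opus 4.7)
My plan is to reduce the statement to a local computation on the infinite jet bundle, apply the classical Anderson--Tulczyjew homotopy operator, and then globalise. As a preliminary step, I would recall that, by Definition \ref{def:localforms}, $\oloc^{\bullet,\bullet}(M\times\overline{\mathcal{F}})=(j^\infty)^\ast\Omega^{\bullet,\bullet}(J^\infty\overline{F})$, and the horizontal differential $d$ on the left is identified with $d_H$ on the right. Surjectivity of $j^\infty$ on open sets (cf.\ the discussion in Section~\ref{sec:prelim}) ensures that this pull-back is injective on forms, so the cohomology of the $d$-complex on $\oloc^{\bullet,\bullet}$ in a given bidegree is a subspace of the $d_H$-cohomology of the corresponding bicomplex on $J^\infty\overline{F}$. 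Thus it is enough to prove that the variational bicomplex row $(\Omega^{\bullet,q}(J^\infty\overline{F}),d_H)$ is acyclic in horizontal degrees strictly below the top, for every $q\geq 1$.

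The heart of the proof is the construction of a horizontal homotopy operator. I would work first on a local trivialisation of $\overline{F}$ over a contractible chart $U\subset M$ with coordinates $(x^i,u^\alpha)$, inducing jet coordinates $(x^i,u^\alpha_I)$ on $J^\infty\overline{F}\vert_U$, where $I$ is a multi-index. In these coordinates $d_H=dx^i\wedge D_i$, where $D_i=\partial/\partial x^i+u^\alpha_{I\cup\{i\}}\,\partial/\partial u^\alpha_I$ is the total derivative. A homotopy operator $h:\Omega^{p,q}\to\Omega^{p-1,q}$, in the form introduced by Anderson and Tulczyjew, can be defined roughly by contracting with $\partial/\partial u^\alpha_{I\cup\{i\}}$ and multiplying by $u^\alpha_I$ combined with a contraction of the horizontal $dx^i$; crucially, the factor of $d_V u^\alpha$ (or its prolongation $d_V u^\alpha_I$) that appears in the definition is what \emph{forces} the requirement $q\geq 1$, since at $q=0$ there is no vertical form available to plug into the contraction. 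A straightforward if tedious computation yields
\[
d_H h+h\,d_H=\mathrm{id}
\]
in the range $p<\mathrm{top}$, $q\geq 1$; the restriction $p<\mathrm{top}$ is essential because at $p=\mathrm{top}$ the expression for $h\,d_H$ acquires an obstruction encoded by the interior Euler operator, which is precisely the reason the Euler--Lagrange form appears in Equation~\eqref{eq:deltaL}.

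Having established local acyclicity, the global statement follows by a standard partition-of-unity / \v{C}ech--de Rham argument: local forms form a fine sheaf over $M$ (multiplication by smooth compactly supported functions on $M$ preserves locality), so the sheaf cohomology $\check{H}^{>0}(M,\underline{\Omega}^{\bullet,q}_\loc)$ vanishes, and the presheaf exactness established by the Anderson--Tulczyjew homotopy globalises to exactness of the complex of global sections.

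The main obstacle I anticipate is purely bookkeeping rather than conceptual: writing the Anderson--Tulczyjew homotopy down with the correct combinatorial coefficients and verifying $d_H h+h\,d_H=\mathrm{id}$ without tripping over the interior Euler boundary term at $p=\mathrm{top}$. A secondary subtlety is ensuring that the homotopy preserves locality (finite jet order); this is automatic because $h$ lowers the maximal jet order by one in each application, but needs to be checked when iterating on a form whose dependence extends to arbitrarily high jets. Beyond this, the globalisation is essentially formal, since the complex is a fine sheaf on $M$ and both the statement and the homotopy are natural with respect to restriction to open subsets.
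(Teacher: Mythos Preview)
Your proposal is essentially the standard modern proof of Takens' acyclicity theorem via the Anderson--Tulczyjew horizontal homotopy operator, and it is correct in outline. However, note that the paper does \emph{not} provide its own proof of this statement: it is simply cited as a known result from \cite{Takens77} (see also \cite{Anderson:1989}) and invoked in Appendix~\ref{app:covariant} to justify the existence of $\boldsymbol{\Theta}$ in the decomposition $\bd\boldsymbol{L}=\boldsymbol{EL}+d\boldsymbol{\Theta}$. So there is no ``paper's own proof'' to compare against; your sketch is a faithful reconstruction of the classical argument from the literature.

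A small caveat on your writeup: the sentence ``the cohomology of the $d$-complex on $\oloc^{\bullet,\bullet}$ in a given bidegree is a subspace of the $d_H$-cohomology'' is slightly awkward---injectivity of $(j^\infty)^*$ on forms does not by itself give a cohomology inclusion, since a $(j^\infty)^*$-image cocycle could in principle be a coboundary of something not in the image. What you really want is that $(j^\infty)^*$ identifies $\oloc^{\bullet,\bullet}$ with a subcomplex of $\Omega^{\bullet,\bullet}(J^\infty\overline{F})$, and then acyclicity of the ambient row (plus the fact that the homotopy operator preserves the image, as it does since it is built from jet-coordinate operations) gives acyclicity of the subcomplex. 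This is implicit in your later remark that the homotopy preserves locality, but it would be cleaner to state it up front rather than framing it as a cohomology inclusion.
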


By vertical differentiation, $[\boldsymbol{\Theta}]$ defines a \emph{covariant symplectic density} (where the terminology owes to the fact that $\boldsymbol\Omega$ is not a top-form over $M$, and it is thus not an actual symplectic density over $\overline{\mathcal{F}}$.) $[\boldsymbol{\Omega}] = [\bd\boldsymbol{\Theta}]$. 
In virtue of \eqref{eq:deltaL}, any representative $\boldsymbol{\Omega}\in[\boldsymbol{\Omega}]$ is \emph{conserved on-shell}, i.e.\ the form $\boldsymbol{\Omega}$ is $d$-closed modulo terms that vanish on the zero locus of $\boldsymbol{EL}$. We denote this symbolically by 
\[
d \boldsymbol{\Omega} \approx 0.
\]

From now on we will work with a choice of representatives of all horizontal cohomology classes, and investigate the ensuing ambiguities at the end.

\begin{remark}[Covariant Phase Space]\label{rmk:CPS}
Motivated by the on-shell conservation of $\boldsymbol\Omega$, one can then define the \emph{covariant phase space} $\overline{\mathcal{F}}_{\boldsymbol{L}}$ as the vanishing locus of the Euler--Lagrange form $\boldsymbol{EL}$ in $\overline{\mathcal{F}}$ equipped with (the pullback of) $\boldsymbol\Omega$.
If $M = \Sigma \times \mathbb{R} \ni (x,t)$ with $\pp\Sigma=\emptyset$, in view of the on-shell conservation of $\boldsymbol\Omega$, the integrated form $\Omega_\Sigma \doteq \int_\Sigma \boldsymbol\Omega$ does not depend on $t$. Then, one can argue that, barring technical complications, $(\overline{\mathcal{F}}_{\boldsymbol L}, \Omega_\Sigma)$ is a symplectic manifold isomorphic to $(\uCo,\uomegao)$. See e.g. \cite{LeeWald, KT1979, CMRCorfu,CattaneoSchiavinaPCH}.
However, since we are interested in making contact with the geometric phase space $(\X,\omega=\int_\Sigma \bom)$ as our starting point for reduction procedure, we shall henceforth focus on an off-shell analysis.
\end{remark}

Now, assume that the off-shell covariant configuration space $\overline{\mathcal{F}}$ is acted upon by some \emph{local} (gauge) Lie group $\overline\G$ with Lie algebra $\overline{\fG}\ni\xi$, whose infinitesimal (local) action we denote by $\overline{\rho}:\overline\fG\to\Xloc(\overline{\mathcal{F}})$.
Assume also that $\boldsymbol{L}$ is invariant (possibly up to boundary terms) under the infinitesimal action $(\overline{\G},\overline{\rho})$, i.e.\ ${\L}_{\overline{\rho}({\xi})} \boldsymbol{L} = d \langle\boldsymbol{R}, {\xi} \rangle$ for some $\boldsymbol{R}\in\oloc^{\text{top}-1,0}(\Sigma\times\overline{\mathcal{F}},\overline{\fG}{}^*)$.
Then, Noether's second theorem states that, for all $\xi\in\overline\fG$, the \emph{Noether current}
\begin{equation}
    \label{eq:Noether}
\langle\overline\bH,\xi\rangle\doteq \langle \boldsymbol{R},\xi\rangle-\bi_{\overline{\rho}({\xi})}\boldsymbol{\Theta},
\end{equation}
where $\overline{\bH}\in \oloc^{\text{top}-1,0}(\Sigma\times\overline{\mathcal{F}},\overline{\fG}{}^*)$ is a dual-valued local form (Definition \ref{def:dualvaluednew}), is not only closed on-shell of the equations of motion---which is the content of Noether's first theorem, $d\langle\overline{\bH},{{\xi}}\rangle=\bi_{\overline{\rho}({\xi})}\boldsymbol{EL}\approx0$---but it is also $d$-exact:
\begin{equation}
[\langle\overline\bH,\xi\rangle ] \approx 0.
\label{eq:Noether2}
\end{equation}

The full content of Noether's second theorem can be made explicit by noting that this equality must hold for \emph{all} $\xi\in\overline{\fG}$. It is possible to show \cite{LeeWald} that $\overline{\bH}$ splits into a \emph{order-0} part, $\overline{\bH}_\circ \in \oloc^{\text{top}-1,0}(M\times \overline{\mathcal{F}},\overline{\fG}^*_{(0)})$ which must vanish on-shell, and a total derivative $d \overline{\bh}$, i.e.
\begin{equation}
    \label{eq:covariant1}
\overline\bH = \overline{\bH}_\circ + d \overline{\bh}, \qquad \overline{\bH}_\circ \approx 0.
\end{equation}

Furthermore, by taking the Lie derivative of Equation \eqref{eq:deltaL} along $\overline{\rho}({\xi})$ and using the definition of the Noether current \eqref{eq:Noether}, one can show that there exist dual-valued forms $\boldsymbol{\mu}\in\oloc^{\text{top}-1,1}(M\times\overline{\mathcal{F}},\overline{\fG}{}^*)$ and $\overline{\boldsymbol{r}}\in\oloc^{\text{top}-2,1}(M\times\overline{\mathcal{F}},\overline{\fG}{}^*)$, such that\footnote{Since $\overline{\boldsymbol\mu}$ is a 1-form over $\overline{\mathcal{F}}$, the interpretation of the equation $\overline{\boldsymbol{\mu}}\approx0$ requires some care. Here, the symbol $\approx$ is used to mean that the expression on the lhs is evaluated at an on-shell configuration $\varphi_0 \in\overline{\mathcal{F}}_{\boldsymbol{L}}\doteq\{\varphi:\boldsymbol{EL}(\varphi)=0\}$, but the differential $\bd$ can still be ``off-shell''. That is $\overline{\boldsymbol{\mu}}(\varphi_0,\delta\varphi)$ is a valid expression for all $\delta\varphi\in T_{\varphi_0} \overline{\mathcal{F}}$, which must vanishes if $\delta\varphi$ is a ``Jacobi field'', i.e.\ an ``on-shell'' perturbation  $\delta\varphi\in T_{\varphi_0} \overline{\mathcal{F}}_{\boldsymbol{L}}\subset  T_{\varphi_0}\overline{\mathcal{F}}$ satisfying the linearized equations of motion. To ease comparison, let us note that the quantity $\L_{\overline{\rho}(\xi)} \boldsymbol{EL}\equiv d\overline{\boldsymbol{\mu}}$ is a rewriting of the rhs of Equation (3.26) of \cite{LeeWald}.}
\[
\L_{\overline{\rho}(\xi)} \boldsymbol{EL} = d \langle\overline{\boldsymbol{\mu}},\xi\rangle, \quad \overline{\boldsymbol{\mu}}   \approx 0
\]
and
\begin{equation}
    \label{eq:covariant2}
\bi_{\overline{\rho}({\xi})} \boldsymbol{\Omega} - \bd \langle\overline{\bH},\xi\rangle = \langle \overline{\boldsymbol{\mu}},\xi\rangle + d\langle \overline{\boldsymbol{r}},\xi\rangle.
\end{equation}

Now, restricting the (elements of the) off-shell covariant configuration space $\overline{\mathcal{F}}$ to a (Cauchy) hypersurface $\iota_\Sigma:\Sigma\hookrightarrow M$, together with integration of $\boldsymbol{\Omega}$ over $\Sigma$, produces---after presymplectic reduction---the \emph{off-shell} geometric (or ``canonical'') phase space $(\X,\omega=\int_\Sigma\bom)$, where $\bom$ is a (possibly trivial) reduction of $\boldsymbol{\Omega}$, thought of as a $(\text{top},2)$ local form on fields on $\Sigma$.

\begin{remark}
Schematically, this works as follows. There is a natural map from $\overline{\mathcal{F}}$ to $\overline{\mathcal{F}}_\Sigma$---the space of restrictions of fields and their transverse jets at $\Sigma$---w.r.t.\ which $\boldsymbol\Omega_\Sigma \doteq \iota^*_\Sigma \boldsymbol{\Omega}$ is basic. In typical applications, the space $(\overline{\mathcal{F}}_\Sigma,\Omega_\Sigma)$ is presymplectic, with reduction the locally symplectic space $(\X,\omega)$. If $\pi_{\Sigma}$ is the composition of the restriction map with the quotient by the pre-symplectic kernel, we have that $\pi_{\Sigma}:\overline{\mathcal{F}}\to \X$ and $\Omega_\Sigma=\pi^*_\Sigma \omega$. See e.g. \cite{LeeWald, KT1979, CMRCorfu,CattaneoSchiavinaPCH}. Finally, note that, contrary to the on-shell Covariant Phase Space construction of Remark \ref{rmk:CPS}, this off-shell construction cannot count on the on-shell conservation of $\boldsymbol\Omega$ and therefore does depend on the choice of surface $\Sigma$---even when $M=\Sigma\times \mathbb{R}$ and $\Sigma$ has no boundary.
\end{remark}

One can then ask whether the action $(\overline{\G},\overline{\rho})$ on the covariant phase space $(\overline{\mathcal{F}},\boldsymbol{\Omega})$ over $M$ descends to a local \emph{symplectic} action $(\G,\rho)$ on the geometric phase space $(\X,\omega)$ over $\Sigma$.  The answer to this question is controlled by the properties of the two terms on the rhs of Equation \eqref{eq:covariant2}.

\begin{remark}[$\overline{\boldsymbol\mu}\neq 0$ and $\overline{\boldsymbol r} \neq 0$]\label{rmk:obstructionsCPS}
A glimpse at Equation \eqref{eq:covariant2} shows that, whereas the last term $d\langle \overline{\boldsymbol{r}}, \xi\rangle$ vanishes upon integration over the closed hypersurface $\Sigma$, the term $\langle\overline{\boldsymbol\mu},\xi\rangle$ can turn into an obstruction to the Hamiltonian flow of $(\G,\rho)$ over $(\X,\omega)$.
E.g. in General Relativity $\iota^*_\Sigma\langle \overline{\boldsymbol{\mu}},\xi\rangle \neq 0$ for $\xi=\xi_\perp$ a vector field transverse to $\Sigma\hookrightarrow M$: it follows that transverse diffeomorphisms \emph{fail} to act symplectically on the geometric phase space of General Relativity \cite{LeeWald}.
Similarly, even considering only diffeomorphisms tangent to $\Sigma$, i.e.\ with $\xi_\perp = 0$, the term $d\langle\overline{\boldsymbol r},\xi\rangle$ is non-trivial and $\int_\Sigma d\langle\overline{\boldsymbol r},\xi\rangle$ vanishes if and only if $\xi$ is tangent to $\pp\Sigma$.

Typically, the relation between $\overline{\G}$ and $\G$ is given by $\G=\overline{\G}/\overline\G_{\overline{\Sigma}}$ where $\overline\G_{\overline{\Sigma}}= \{g\in\overline{\G}\ :\ g\vert_{\Sigma}=e\}$. But as we said, this is \emph{not} always the case. E.g.\ the problems arising with the diffeomorphisms transverse to $\Sigma$ are related to the fact that the subgroup of diffeomorphisms that stabilise a Cauchy hypersurface is not normal in $\mathrm{Diff}(M)$ \cite[Example 8.13]{blohmann2018hamiltonian}. Finally, note that going on-shell of the constraints \emph{partially} solves this problem, however other subtleties arise \cite{LeeWald}.
\end{remark}

In many cases of interest, the Euler--Lagrange form $\boldsymbol{EL}$ is gauge invariant even off-shell and therefore $\overline{\boldsymbol{\mu}} \equiv 0$. Then,  the action $(\overline{\G},\overline{\rho})$ on the covariant phase space $(\overline{\mathcal{F}},\boldsymbol{\Omega})$ does descend to a symplectic action $(\G,\rho)$ on the geometric phase space $(\X,\omega)$ whenever $\pp\Sigma = \emptyset$. 
However, this does not mean that the $\G$-space $(\X,\bom,\G)$ is necessarily \emph{locally Hamiltonian}  (Definition \ref{def:Ham+mommaps}), since the term $d\overline{\boldsymbol{r}}$ can still obstruct the Hamiltonian flow equation.

Let us express these statements in formulas. Passing from $(\overline{\mathcal{F}}, \boldsymbol{\Omega})$ to $(\X,\bom)$, one expects Equation \eqref{eq:covariant1} to descend to
\begin{equation}
    \label{eq:canonical1}
\bH = {\bH}_\circ + d {\bh},
\end{equation}
with $\bHo$ a \emph{order-0} dual-valued $(\text{top},0)$-form on $\Sigma\times \X$, which defines the \emph{canonical} constraint set,
\begin{equation}
    \label{eq:canonical2}
\C \doteq \bHo^{-1}(0) \subset\X,
\end{equation}
and, whenever $\overline{\boldsymbol{\mu}}=0$, Equation \eqref{eq:covariant2} to descend to
\begin{equation}
   \label{eq:canonical3}
\bi_{\rho({\xi})} \bom - \bd \langle{\bH},\xi\rangle = d\langle {\boldsymbol{r}},\xi\rangle.
\end{equation}

\begin{remark}[$\overline{\boldsymbol\mu}=0$ implies equivariance of $\bHo$]\label{rmk:mucovariance}
Noether's first theorem states that
\[
d \langle\overline{\bH},\xi\rangle = \bi_{\overline{\rho}(\xi)}\boldsymbol{EL}.
\]
Then, taking the Lie derivative of this expression along $\overline{\rho}(\eta)$ one obtains the relation:
\[
d( \langle\L_{\overline{\rho}(\eta)}\overline{\bH},\xi\rangle
- \langle\overline{\bH},[\eta,\xi]\rangle ) = \bi_{\overline{\rho}(\xi)} \L_{\overline{\rho}(\eta)}\boldsymbol{EL} \equiv d \bi_{\overline{\rho}(\xi)} \langle\overline{\boldsymbol\mu},\eta\rangle.
\]
Thus, we see that the condition $\overline{\boldsymbol\mu} = 0$ is equivalent to the (infinitesimal) equivariance of $\overline{\bH}$ up to $d$-exact terms. Therefore, in passing from $(\overline{\mathcal{F}},\boldsymbol{\Omega}, \overline{\bH})$ to $(\X,\bom,\bH)$, one sees that equation $\overline{\boldsymbol\mu} = 0$ implies the off-shell \emph{weak} equivariance of $\bH$---i.e.\ the equivariance of $\bHo$ (see Definition \ref{def:equivariance} and Proposition \ref{prop:equi}). This last statement is in turn equivalent to the canonical constraints being first class.
\end{remark}

Therefore, by inspection of \eqref{eq:canonical3}, we conclude that for the action of $\G$ on $(\X,\bom)$ to also be \emph{locally} Hamiltonian, we further need to assume that $\boldsymbol{r}$ can be reabsorbed away in the ambiguities we encountered in the above construction. Let us survey them.

The first ambiguity one encounters (one which we had not mentioned yet) is the coboundary ambiguity appearing in the choice of a Lagrangian density:
\[
\boldsymbol{L} \; \mapsto\; \boldsymbol{L} + d \boldsymbol{\ell}.
\]
This is because such a change in $\boldsymbol{L}$ does not affect the physical datum \emph{par-excellence}: the equations of motion, encoded in $\boldsymbol{EL}$.
This ambiguity, however, affects $[\boldsymbol{\Theta}]$ by a $\bd$-exact term. Thus, two ambiguities affect the choice of $\boldsymbol{\Theta}$:
\[
\boldsymbol{\Theta} \; \mapsto\; \boldsymbol{\Theta} + \bd \boldsymbol{\ell} + d \overline{\boldsymbol{\vartheta}}.
\]
Only the second ambiguity is reflected in $\boldsymbol{\Omega}$,
\[
\boldsymbol{\Omega} \doteq \bd  \boldsymbol{\Theta}\; \mapsto\; \boldsymbol{\Omega}  + \bd d \overline{\boldsymbol{\vartheta}},
\]
The definition of $\boldsymbol{R}$ suffers a similar ``corner'' ambiguity as $\boldsymbol{\Theta}$
\[
\boldsymbol{R}\; \mapsto\;\boldsymbol{R} + \L_{\overline{\rho}}\boldsymbol{\ell} + d \overline{\boldsymbol{\beta}}.
\]
Accordingly, the Noether current is defined up to
\[
\overline{\bH} \doteq \boldsymbol{R} - \bi_{\overline{\rho}}\boldsymbol\Theta\; \mapsto\; \overline{\bH} + d( \overline{\boldsymbol{\beta}} - \bi_{\overline{\rho}}\overline{\boldsymbol{\vartheta}}).
\]
Combining these expressions, we see that Equation \eqref{eq:covariant2} becomes:
\[
\bi_{\overline{\rho}} \boldsymbol{\Omega} - \bd \overline{\bH} =  \overline{\boldsymbol{\mu}} + d\big( \overline{\boldsymbol{r}} -  \L_{\overline{\rho}} \overline{\boldsymbol{\vartheta}} + \bd\overline{\boldsymbol{\beta}}\big),
\]

Thus, we see that to ensure that the action of $\G$ on $(\X,\bom)$ is \emph{locally} Hamiltonian, beyond $\overline{\boldsymbol\mu}=0$ we shall further assume that a $(\text{top}-2,1)$-local form $\overline{\boldsymbol{\vartheta}}$ and a dual-valued $(\text{top}-2,0)$-local form $\overline{\boldsymbol{\beta}}$ exist such that
\[
d(\overline{\boldsymbol{r}} - \L_{\overline{\rho}}\overline{\boldsymbol{\vartheta}} + \bd \overline{\boldsymbol{\beta}}) =  0.
\]

Indeed, if this is the case, then Equation \eqref{eq:covariant2} descends to a \emph{local Hamiltonian flow} equation on $(\X,\bom)$: 
\begin{equation}
    \label{eq:canonical4}
\bi_{\rho({\xi})} \bom - \bd \langle{\bH},\xi\rangle =0.
\end{equation}

Of course, such local forms $(\overline{\boldsymbol{\vartheta}}, \overline{\boldsymbol{\beta}})$ do not need to be unique nor do they need to exist. E.g.\ (spatial) diffeomorphisms over $\X=T^\vee\mathrm{Riem}(\Sigma)$ provide an example where no such choices are known.

\begin{remark}[Symplectic vs. Covariant]\label{rmk:covambiguity}
We conclude by noticing that a slight tension exists between the symplectic and Lagrangian interpretations of $\bH$. 
From a purely symplectic perspective, one thinks of $\bH$ as a local momentum map satisfying Equation \eqref{eq:canonical4}; then, from this perspective, $\bH$ is unique up to constant shifts $\balpha\in\oloc^{\text{top}}(\Sigma,\dual{\fG})$. But constant shifts this general are not allowed if we want to interpret $\bH$ as coming from a Noether current which satisfies Noether's second theorem (Equation \eqref{eq:Noether2}): for then only $d$-exact shifts are allowed, i.e.\ $\balpha=d\boldsymbol{\beta}$. 
This viewpoint is consistent with Remark \ref{rmk:mucovariance}, which states that the (local) momentum map $\overline{\bH}$ is equivariant possibly up to corner cocycles.
These observations can be seen as an explanation of Assumption \ref{assA:equiv} in the main text.
\end{remark}

Finally, let us emphasise once again that an action functional for a variational problem defines a form $\boldsymbol{\Omega}$, and thus a symplectic density $\bom$ only up to $d$-exact terms. This means that, given the data of a classical field theory, our results will depend on the \emph{choice} of an appropriate symplectic density $\bom$. 
One prominent example of corner ambiguity that we will analyze is the one (heuristically) related to the choice of $\theta$-vacua (see Section \ref{sec:thetaQCD}).

This said, natural candidates for such an $\bom$ usually exist.
Indeed, the arbitrary addition of $d$-exact terms will in general interfere with the existence of a local Hamitlonian action; moreover, since adding boundary terms introduces derivatives, one has:

\begin{proposition}\label{prop:bomuniqueness}
If an \emph{ultralocal} symplectic density $\bom$ exists, then it is uniquely determined by the action functional $\boldsymbol{L}\in\oloc^{\text{top},0}(M\times \mathcal{E})$.
\end{proposition}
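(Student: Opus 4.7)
The strategy is first to trace the ambiguity that the freedom $\boldsymbol{L}\mapsto\boldsymbol{L}+d\boldsymbol\ell$ induces on $\bom$, and then to show that at most one representative in the resulting equivalence class on $\Sigma\times\X$ can be ultralocal. Following the discussion immediately preceding the proposition, the coboundary freedom propagates to $\boldsymbol{\Theta}\mapsto \boldsymbol{\Theta}+\bd\boldsymbol\ell+d\overline{\boldsymbol\vartheta}$, and hence to $\boldsymbol{\Omega}=\bd\boldsymbol{\Theta}\mapsto \boldsymbol{\Omega}+d\bd\overline{\boldsymbol\vartheta}$. Pulling back to the Cauchy slice $\Sigma\hookrightarrow M$, integrating out the transverse jets, and performing the presymplectic reduction that defines $\X$ converts this into an ambiguity
\[
\bom \;\mapsto\; \bom + d\bd\boldsymbol\vartheta
\]
for some $\boldsymbol\vartheta\in\oloc^{\text{top}-1,1}(\Sigma\times\X)$, where now $\text{top}=\dim\Sigma$.

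The uniqueness assertion then reduces to the following key lemma: \emph{if $d\bd\boldsymbol\vartheta\in\oloc^{\text{top},2}(\Sigma\times\X)$ is ultralocal, then $d\bd\boldsymbol\vartheta=0$.} I would prove this by a jet-order analysis in adapted local coordinates on $J^\infty F$, in which the horizontal differential reads
\[
d \;=\; dx^i\wedge\Bigl(\partial_{x^i}+\sum_{I,\alpha} u^\alpha_{I,i}\,\partial_{u^\alpha_I}\Bigr),
\]
and strictly raises the jet order of every monomial it acts on nontrivially. If $\boldsymbol\vartheta$ has jet order $k\geq 1$, then the top-order contribution to $d\bd\boldsymbol\vartheta$ is linear in the $(k{+}1)$-jets $u^\alpha_{I,i}$ with coefficients essentially given by $\partial_{u^\alpha_I}\bd\boldsymbol\vartheta$ (summed against the vertical basis). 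Imposing ultralocality of $d\bd\boldsymbol\vartheta$ forces these coefficients to vanish for every $I$ with $|I|=k$; descending inductively $k\to k-1\to\cdots\to 0$ yields $d\bd\boldsymbol\vartheta\equiv 0$.

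The main obstacle is keeping track of cancellations between the contributions of different multi-indices $I$, which a naive induction does not rule out a priori. A cleaner way to organize the argument is to introduce the filtration of $\oloc^{\bullet,\bullet}(\Sigma\times\X)$ by jet order: since the symbol of $d$ is strictly of weight $+1$ away from the ultralocal (weight-zero) subspace, no non-zero $d$-exact form can land in the ultralocal sector, and the lemma follows immediately. An alternative, basis-free avenue is to combine Takens' acyclicity (Appendix~B) with the source/boundary decomposition of Theorem~\ref{thm:LocFormDec} applied to $\bd\boldsymbol\vartheta$: a descent along $d$ produces a source-type representative whose horizontal differential is ultralocal only if it vanishes identically. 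Either route yields the uniqueness of the ultralocal $\bom$ as claimed.
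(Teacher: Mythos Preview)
The paper offers no proof beyond the one-line heuristic ``since adding boundary terms introduces derivatives'' that immediately precedes the proposition; your proposal is precisely a fleshed-out version of that same observation, reducing uniqueness to the lemma that no nonzero $d$-exact $(\mathrm{top},2)$-form can be ultralocal. Your jet-order filtration (and its source/boundary alternative) is a sound way to substantiate that lemma.
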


\section{Equivariant Hodge--de Rham decompositions}\label{app:Hodge}

In this section we denote
\[
n \doteq \dim\Sigma,
\]
and assume $G$ compact and semisimple (or, $G=(\mathbb{R},+)$, in which case one recovers the usual Hodge--de Rham tehory \cite{SchwarzHodgeBook}).  We denote (minus) the Killing inner product in $\fg$ by $\tr( \cdot \; \cdot)$.

Let $P\to \Sigma$ be a principal $G$-bundle over a compact Riemannian manifold $(\Sigma,\gamma_{ij})$, possibly with boundary, and let $\Ad P \doteq P \times_{(G,\Ad)} \fg$ the associated adjoint bundle.

Consider the space of $\Ad$-equivariant differential forms:
\[
\Omega^\bullet(\Sigma,\Ad P) \doteq \Gamma(\wedge^{\bullet}T^*\Sigma \times_\Sigma \Ad P).
\]
Then, if $\star$ is the Hodge star operator associated to the positive definite metric $\gamma$ and $\balpha_i\in\Omega^{k_i}(\Sigma,\Ad P)$,
\[
\langle\langle \balpha_1, \balpha_2\rangle\rangle = \delta_{k_1 k_2}\int_\Sigma \tr(\balpha_1 \wedge \star \balpha_2)
\] 
defines a positive-definite inner product on $\Omega^\bullet(\Sigma,\Ad P)$.

On the other hand, a principal connection $A$ over $P$, defines the equivariant differential
\[
d_A : \Omega^k(\Sigma,\Ad P) \to \Omega^{k+1}(\Sigma,\Ad P), \quad \alpha \mapsto d_A \balpha \doteq d \balpha + [A,\balpha],
\]
which is nilpotent iff $A$ is flat (a property we will not assume).
Rather, we shall assume that $A$ is irreducible, i.e.\ that in degree-0
\[
(\ker d_A)_{k=0} \equiv 0,
\]
and comment on the reducible case at the end.

The formal adjoint of $d_A$ is then 
\[
d_A^\star \doteq (-1)^{n(k+1)+1}\star d_A \star.
\]
We denote $\Delta_A \doteq d_A^\star d_A + d_A d_A^\star$ the corresponding equivariant Laplace--Beltrami operator.

In the presence of corners $\pp\Sigma\neq\emptyset$, we follow \cite{SchwarzHodgeBook} and introduce the following notation for the tangential and normal component of $\balpha\in\Omega^k(\Sigma,\Ad P)$ at $\pp\Sigma$ (seen as embedded in $\Sigma$). Introducing the normal projection $p_\parallel: T_{\pp \Sigma}\Sigma \to T\pp\Sigma\subset T_{\pp\Sigma} \Sigma$:
\[
\mathsf{t}\balpha \doteq p_\parallel^*\balpha\vert_{\pp\Sigma}
\quad\text{and}\quad
\mathsf{n}\balpha \doteq \balpha\vert_{\pp\Sigma} - \mathsf{t}\balpha.
\]

Locally, these objects are understood as ($\fg$-valued) restrictions to $\pp\Sigma$ of sections of $T\Sigma$---and \emph{not} as sections of $T\pp\Sigma$. 

Noting that $\mathsf{t}\star{\balpha} = \star \mathsf{n} \balpha$ (and viceversa), one obtains the following equivariant version of Green's formula:
\[
\langle\langle d_A\balpha, \boldsymbol\beta \rangle\rangle = \langle\langle \balpha, d_A^\star\boldsymbol\beta\rangle\rangle + \int_{\pp\Sigma} \tr(\mathsf{t}\balpha \wedge \star \mathsf{n}\boldsymbol\beta).
\]

We then introduce the space of equivariant Dirichlet and Neumann forms,
\begin{align*}
\Omega_D^\bullet(\Sigma, \Ad P)& \doteq \{ \balpha \,|\, \mathsf{t}\balpha =0\},\\
\Omega_N^\bullet(\Sigma, \Ad P)& \doteq \{ \balpha \,|\, \mathsf{n}\balpha =0\}.
\end{align*}

The following holds:

\begin{proposition}[Equivariant Hodge--de Rham decomposition with boundary]\label{prop:HodgeEqui}
A choice of principal connection $A\in \mathrm{Conn}(P\to\Sigma)$ induces the following two decompositions, named after Dirichlet and Neumann, of the spaces of equivariant differential $0$- and $1$-forms, $\Omega^k(\Sigma, \Ad P)$, $k=0,1$:\footnote{The equality topped with ``irr.'' holds only if $A$ is irreducible, whereas the equivalence $\equiv$ holds in all cases.}
\[
\text{Neumann}: \quad
\begin{cases}
\Omega^0(\Sigma,\Ad P) = \Im (d_A^\star\vert_N) \oplus \ker d_A \stackrel{\text{irr.}}{=} \Im(d_A^\star\vert_N)\\
\Omega^1(\Sigma,\Ad P)  = \Im d_A \oplus (\ker d_A^\star)_N 
\end{cases}
\]
and
\[
\text{Dirichlet}: \quad
\begin{cases}
\Omega^0(\Sigma,\Ad P) = \Im d_A^\star \oplus (\ker d_A)_D \equiv \Im d_A^\star\\
\Omega^1(\Sigma,\Ad P)  = \Im(d_A\vert_D) \oplus \ker d_A^\star.
\end{cases}
\]
where for brevity we denoted $\Im(d_A^\star\vert_N) \doteq d_A^\star \Omega^1(\Sigma,\Ad P)$, $(\ker d_A^\star)_N \doteq \ker d_A^\star \cap \Omega^0_N(\Sigma,\Ad P),$ etc. with the form degrees always left implicit.
\end{proposition}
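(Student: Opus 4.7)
The plan is to reduce all four decompositions to solvability of two elliptic boundary value problems for the equivariant Laplacian $\Delta_A = d_A^\star d_A + d_A d_A^\star$ on $\Omega^\bullet(\Sigma,\mathrm{Ad} P)$, combined with the orthogonality properties coming from the equivariant Green formula stated just above the proposition. Concretely, I would proceed in three stages: (i) prove orthogonality of the claimed summands, (ii) solve the relevant boundary value problem to show that every form decomposes, and (iii) deduce the $k=0$ statements from the $k=1$ ones by $d_A^\star$--duality.

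First, orthogonality. For the Neumann decomposition of $\Omega^1$, if $\boldsymbol\alpha = d_A\boldsymbol\phi$ and $\boldsymbol\beta\in(\ker d_A^\star)_N$, Green's formula gives $\langle\langle d_A\boldsymbol\phi,\boldsymbol\beta\rangle\rangle = \langle\langle\boldsymbol\phi, d_A^\star\boldsymbol\beta\rangle\rangle + \int_{\partial\Sigma}\tr(\mathsf{t}\boldsymbol\phi \wedge \star\mathsf{n}\boldsymbol\beta)$, and both terms vanish by the Neumann boundary condition and $d_A^\star\boldsymbol\beta=0$. The same computation, with the roles of boundary conditions and operators swapped, yields orthogonality for the Dirichlet decomposition of $\Omega^1$ and for both $k=0$ decompositions.

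Second, completeness. Given $\boldsymbol\omega\in\Omega^1(\Sigma,\mathrm{Ad} P)$, I seek $\boldsymbol\phi\in\Omega^0$ such that $\boldsymbol\omega - d_A\boldsymbol\phi\in(\ker d_A^\star)_N$, i.e. solving the Neumann boundary value problem
\[
\Delta_A\boldsymbol\phi = d_A^\star\boldsymbol\omega \text{ in }\Sigma, \qquad \mathsf{n} d_A\boldsymbol\phi = \mathsf{n}\boldsymbol\omega \text{ at }\partial\Sigma.
\]
Since $\Delta_A$ is elliptic and the Neumann data are of Lopatinski--Shapiro type, the Fredholm alternative applies: solvability is obstructed only by pairing the data against $\ker d_A \subset \Omega^0$, and a direct Green's-formula computation shows this obstruction identically vanishes. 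Under irreducibility ($\ker d_A = 0$) the solution is unique; otherwise it is unique up to elements of $\ker d_A$, which is precisely the second summand. The Dirichlet case is treated analogously by solving $\Delta_A\boldsymbol\phi = d_A^\star\boldsymbol\omega$ with boundary condition $\mathsf{t}\boldsymbol\phi=0$; here $(\ker d_A)_D = 0$ automatically (harmonic functions vanishing on the boundary vanish), which is why the $k=0$ Dirichlet split has trivial kernel piece.

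Third, the $\Omega^0$ statements follow by inspection once the $\Omega^1$ ones are in hand: $\Im(d_A^\star|_N)$ is by definition the image under $d_A^\star$ of the full $\Omega^1$, which by the Neumann $\Omega^1$-decomposition is the same as $d_A^\star\Im d_A = \Im\Delta_A|_{\Omega^0}$, whose orthogonal complement in $\Omega^0$ is $\ker d_A$ (using $\langle\langle\Delta_A\boldsymbol\phi,\boldsymbol\psi\rangle\rangle = \langle\langle d_A\boldsymbol\phi,d_A\boldsymbol\psi\rangle\rangle$ with no boundary term when $\boldsymbol\phi$ is in the image of $d_A^\star|_N$). The Dirichlet $\Omega^0$ decomposition is obtained the same way. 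The main obstacle in the whole argument is the second stage: carefully verifying ellipticity of the mixed boundary value problem for $\Delta_A$ on equivariant forms and tracing how the irreducibility assumption interacts with the Fredholm cokernel. I expect this to go through by reducing to standard results (e.g.\ \cite{SchwarzHodgeBook}) applied componentwise in a local trivialization of $\mathrm{Ad} P$, the compatibility of $d_A$ with the connection ensuring that the boundary symbol is unchanged from the trivial case.
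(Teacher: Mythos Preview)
Your approach is essentially identical to the paper's sketch: the paper also solves exactly the two elliptic boundary value problems you wrote down (Neumann: $\Delta_A\lambda=d_A^\star\boldsymbol\alpha$ with $\mathsf{n}d_A\lambda=\mathsf{n}\boldsymbol\alpha$; Dirichlet: same bulk equation with $\mathsf{t}\lambda_D=0$), then uses Green's formula to verify orthogonality, and defers the analytic details to standard Hodge theory on manifolds with boundary. The paper treats $k=0$ by saying it ``can be dealt similarly'' rather than by reducing it to $k=1$ as you propose; your stage~(iii) reduction is a bit imprecise as written (the claim that the boundary term vanishes ``when $\boldsymbol\phi$ is in the image of $d_A^\star|_N$'' does not directly give $\mathsf{n}d_A\boldsymbol\phi=0$, and completeness of $\Im\Delta_A\oplus\ker d_A$ still needs the same Fredholm input as stage~(ii)), so in practice you would end up re-running the BVP argument at $k=0$ anyway---which is exactly what the paper does.
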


\begin{remark}[Open maps]\label{rmk:openmaps}
Denoting $L^2$-adjoints with a dagger, $\bullet^\dagger$, notice that
\[
(d_A\vert_D)^\dagger \doteq d_A^\star ,
\quad
(d_A^\star\vert_N)^\dagger = d_A
\quad\text{and}\quad
(d^\star_A\vert_N)^{\dagger\dagger} = d^\star_A\vert_N.
\]
From this one can conclude that both $d_A\vert_D$ and $d_A$ are weakly open maps as the image of their adjoints is closed by Proposition \ref{prop:HodgeEqui}.\footnote{We thank T.\ Diez for suggesting this argument.}
\end{remark}

This proposition generalises results from the usual Hodge--de Rham theory \cite{SchwarzHodgeBook} to the equivariant setting in degree 0 and 1 (in the context of the YM phase space, the analogous generalization in the absence of boundaries dates back at least to \cite{Arms1979, Arms1981, Garcia1980Hodge, MitterViallet1981}, see also \cite[Section 2.1.3]{DiezPhD}). Note that the nilpotency of $d_A$---i.e.\ the flatness of $A$---is \emph{not} required, so these decompositions are not cohomological in nature.

We now sketch a pedestrain argument for the above decompositions; an abstract argument is presented in \cite[Section 2.1.3]{DiezPhD}.
We shall focus on the case $k=1$ since the case $k=0$ can be dealt similarly. 
We want to show that $\balpha\in\Omega^1(\Sigma,\Ad P)$ admits---with obvious notation---the following decompositions
\[
\balpha = d_A \lambda + \bar\balpha_N = d_A \lambda_D + \bar\balpha.
\]
To check the uniqueness of these decompositions when $A$ is irreducible, one uses elliptic theory to argue that $\lambda$ and $\lambda_D$ are respectively (uniquely) determined by the following Neumann\footnote{In view of its boundary conditions, which are Robin rather than Neumann, we should maybe refer to this boundary value problem as ``equivariant Neumann''---but for simplicity we refrain from doing so.} and Dirichlet elliptic boundary-value problems:
\[
\begin{cases}
\Delta_A \lambda = d_A^\star \balpha & \text{in } \Sigma\\
\mathsf{n}d_A \lambda = \mathsf{n} \balpha &\text{at }\pp\Sigma
\end{cases}
\qquad\text{and}\qquad
\begin{cases}
\Delta_A \lambda_D = d_A^\star \balpha & \text{in } \Sigma\\
\mathsf{t}\lambda_D = 0 &\text{at }\pp\Sigma
\end{cases}
\]
It is then easy to verify that (\emph{i}) $\bar\balpha_N \doteq \balpha - d_A \lambda$ and $\bar\balpha\doteq \balpha - d_A \lambda_D$ are in the kernel of $d_A^\star$, with $\bar\balpha_N$ further satisfying Neumann boundary condition, and---using Green's formula---that (\emph{ii}) they are indeed orthogonal to $d_A\lambda$ and $d_A \lambda_D$ respectively.

\begin{remark}[Reducible connections]\label{rmk:EllipticReducible}
Technically, $\lambda$ and $\lambda_D$ are uniquely determined by the above boundary value problems only up to an element of $\ker d_A$ and $\ker d_A\vert_D$.
This is because, supposing that $\chi$ is in the kernel of either boundary value problem, then by Green's formula 
\[
0 = \langle\langle \chi, \Delta_A \chi \rangle\rangle - \int_{\pp\Sigma} \tr( \chi \star \mathsf{n} d_A \chi) =  \langle\langle d_A \chi, d_A\chi\rangle\rangle \geq 0,
\]
and the equality holds iff $d_A\chi$ vanishes.

Now, since $d_A\chi=0$ is an (overdetermined) first-order elliptic equation, $\chi \in \Omega^0(\Sigma,\Ad P)$ is (at best) completely determined by its value at a point. This means that $\ker d_A$ is always a finite dimensional space whose dimension is bounded by $\dim(G)$. For a generic $A$, $\ker d_A$ is trivial. Also, since $\chi_D$ has to vanish at $\pp\Sigma$, $(\ker d_A)_D$ is in fact \emph{always} trivial. 
In all cases the $\ker d_A$ ambiguity does not affect the element $d_A\lambda\in\Im d_A$, appearing in the decomposition. 

In general, if $\ker d_A$ is nontrivial and spanned by $\{\chi_I\}$, then the ``source'' and boundary terms appearing in the Neumann elliptic boundary value problem above must satisfy certain conditions for a solution to exist. These conditions are akin to the Abelian (integrated) {{Gauss}} law\footnote{Usually written as $4\pi Q_\text{el} = \int_{\Sigma} \mathrm{div} E = \int_{\pp\Sigma} n_i E^i$.}  of Maxwell theory, using that $d_A\chi_I=0$, they can be read off Green's formula for $\langle\langle d_A \chi_I, d_A \lambda\rangle\rangle \equiv 0$ (this is a special case of Proposition \ref{prop:Gausslaw}). However, for the Neumann problem above these conditions are always satisfied.
\end{remark}

A few further remarks.
First, we observe that by dualization via both $\star$ and $\tr$, one can adapt the above arguments to obtain a set of dual decompositions:

\begin{corollary}[of Proposition \ref{prop:HodgeEqui}]\label{cor:dualHodge}
Setting $n=\dim(\Sigma)$, one has the following decompositions:
\[
\text{Neumann}^\star: \quad
\begin{cases}
\Omega^n(\Sigma,\Ad^* P) = \Im (d_A\vert_D) \oplus \ker d_A^\star \stackrel{\text{irr.}}{=} \Im(d_A\vert_D).\\
\Omega^{n-1}(\Sigma,\Ad^* P)  = \Im d^\star_A \oplus (\ker d_A)_D 
\end{cases}
\]
and
\[
\text{Dirichlet}^\star: \quad
\begin{cases}
\Omega^n(\Sigma,\Ad^* P) = \Im d_A \oplus (\ker d^\star_A)_N \equiv \Im d_A\\
\Omega^{n-1}(\Sigma,\Ad^* P)  = \Im(d^\star_A\vert_N) \oplus \ker d_A 
\end{cases}
\]
where again we left the form degree implicit on the right hand side of these equations.
\end{corollary}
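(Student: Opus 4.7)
The strategy is to dualize the decompositions of Proposition \ref{prop:HodgeEqui} using two canonical isomorphisms: the Hodge star $\star$, which gives $\Omega^k(\Sigma,\Ad P)\xrightarrow{\simeq}\Omega^{n-k}(\Sigma,\Ad P)$, and the (minus) Killing form $\tr$, which provides a $G$-equivariant isomorphism $\Ad P\xrightarrow{\simeq}\Ad^* P$ and hence $\Omega^\bullet(\Sigma,\Ad P)\simeq \Omega^\bullet(\Sigma,\Ad^* P)$. Composing these, one obtains for each $k$ an isomorphism
\[
    \star_\tr \;\colon\; \Omega^k(\Sigma,\Ad P)\xrightarrow{\ \simeq\ } \Omega^{n-k}(\Sigma,\Ad^*P),
\]
which is what one should apply to the decompositions in degrees $0$ and $1$ to obtain the claimed decompositions in degrees $n$ and $n-1$.

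The key technical step is to check the compatibility of $\star_\tr$ with the three structural ingredients appearing in Proposition~\ref{prop:HodgeEqui}: the operators $d_A$ and $d_A^\star$, and the boundary projectors $\mathsf{t}$, $\mathsf{n}$. For the differentials this follows directly from the very definition $d_A^\star=(-1)^{n(k+1)+1}\star d_A\star$: up to an overall sign, conjugation by $\star$ exchanges $d_A$ with $d_A^\star$, and in particular $\star\circ\Im(d_A)=\Im(d_A^\star)\circ\star$ and $\star\circ\ker(d_A)=\ker(d_A^\star)\circ\star$, with analogous identities for the reverse direction. For the boundary conditions, the identity $\mathsf{t}\star=\star\,\mathsf{n}$ (and dually $\mathsf{n}\star=\star\,\mathsf{t}$) already recorded in the excerpt shows that Dirichlet and Neumann conditions get swapped under $\star$.

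Putting the two observations together is then essentially bookkeeping. For instance, starting from the Neumann decomposition
\[
    \Omega^0(\Sigma,\Ad P)=\Im(d_A^\star|_N)\oplus\ker d_A
\]
and applying $\star_\tr$, one obtains a direct sum decomposition of $\Omega^n(\Sigma,\Ad^*P)$ whose first summand is $\star_\tr\,\Im(d_A^\star|_N)=\Im(d_A|_D)$ (since $\star$ converts $d_A^\star$ into $d_A$ up to sign, and the Neumann condition $\mathsf{n}(\cdot)=0$ into the Dirichlet condition $\mathsf{t}(\cdot)=0$), and whose second summand is $\star_\tr\ker d_A=\ker d_A^\star$. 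This yields the first line of the ``Neumann$^\star$'' decomposition; the irreducibility statement transports unchanged because $\star_\tr$ is an isomorphism. The remaining three lines are obtained in the same way by applying $\star_\tr$ respectively to the Neumann decomposition in degree $1$ and to the two Dirichlet decompositions in degrees $0$ and $1$.

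The only step requiring mild care is sign-tracking in the identity $\star d_A^\star=\pm d_A\star$ and in $\star\star=(-1)^{k(n-k)}$, to ensure that one indeed gets the image and kernel spaces exactly, and not just up to nonzero scalar multiples; but since both $\star$ and $\tr$ are isomorphisms, the subspaces themselves are unaffected, so no genuine obstacle arises. The orthogonality of the summands w.r.t.\ $\langle\langle\cdot,\cdot\rangle\rangle$ is also preserved, since $\star_\tr$ is (up to a sign) an isometry of the $L^2$ inner product.
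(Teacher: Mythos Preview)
Your proposal is correct and follows exactly the approach indicated in the paper, which simply states that the corollary is obtained ``by dualization via both $\star$ and $\tr$.'' Your write-up spells out in detail precisely this dualization, tracking how $\star$ exchanges $d_A\leftrightarrow d_A^\star$ and $\mathsf{t}\leftrightarrow\mathsf{n}$, which is all that is needed.
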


Second, we notice that fixing a reference principal connection $A_0$, we can define a family of Faddeev--Popov operators labelled by an (irreducible) principal connection $A$:
\[
\Delta_{A_0 A} : \Omega^0(\Sigma,\Ad P) \to \Omega^0(\Sigma, \Ad P), \quad \Delta_{A A_0} \doteq  d_{A_0}^\star d_A.
\]
As discussed above $\Delta_{A_0 A}$ is invertible at $A = A_0$ both for Neumann and Dirichlet boundary condition. Therefore, in view of its continuity in $A$, for each $A_0$ there must exist an open neighborhood $\mathcal{U}_0\subset\Omega^1(\Sigma,\Ad P)$ of $A_0$ in which $\Delta_{A_0 A}$ has inverses $(\Delta_{A_0 A})^{-1}_N$ and $(\Delta_{A_0 A})^{-1}_D$. Via arguments similar of those emplyed above, one can show that:

\begin{proposition}\label{prop:FaddeeevPopov}
Let $\mathcal{U}_0$ be a sufficiently small open neighborhood of $A_0\in\Omega^1(\Sigma,\Ad P)$ so that $\Delta_{A_0 A}$ admits Neumann and Dirichlet inverses for all $A\in\mathcal{U}_0$. Then, $\mathcal{U}_0 - A_0$ is a vector space which decomposes as\footnote{On the right hand side we are leaving the intersections with $\mathcal{U}_0$ understood, e.g. $\Im d_A$ stands for $\Im d_A \cap \mathcal{U}_0$. Notice that the sum fails to be orthogonal.}
\[
\mathcal{U}_0 - A_0 = \Im d_A \oplus (\ker d_{A_0}^\star)_N = \Im(d_A\vert_D) \oplus \ker(d_{A_0}).
\]
\end{proposition}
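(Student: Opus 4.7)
The plan is to construct each splitting explicitly by solving a linear elliptic boundary value problem whose solvability is precisely what the hypothesis provides, and then to establish directness of the sum by triviality of the relevant intersection. Fix $A\in\mathcal{U}_0$ and set $\balpha \doteq A-A_0$. In both cases I will look for a decomposition $\balpha = d_A\lambda + \bar\balpha$, where the "gauge" piece $d_A\lambda$ is defined by an elliptic equation involving $\Delta_{A_0 A} = d_{A_0}^\star d_A$ and the "remainder" $\bar\balpha$ automatically lands in the complementary space.

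For the Neumann decomposition, I would define $\lambda\in\Omega^0(\Sigma,\Ad P)$ as the unique solution of
\[
\begin{cases}
\Delta_{A_0 A}\lambda = d_{A_0}^\star\balpha & \text{in }\Sigma,\\
\mathsf{n}\, d_A\lambda = \mathsf{n}\balpha & \text{at }\pp\Sigma,
\end{cases}
\]
whose existence and uniqueness follow from the assumed invertibility $(\Delta_{A_0 A})^{-1}_N$ on $\mathcal{U}_0$. Setting $\bar\balpha \doteq \balpha - d_A\lambda$, the bulk equation yields $d_{A_0}^\star\bar\balpha = 0$ and the boundary condition yields $\mathsf{n}\bar\balpha = 0$, so $\bar\balpha \in (\ker d_{A_0}^\star)_N$. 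The Dirichlet case is strictly parallel: using $(\Delta_{A_0 A})^{-1}_D$ I would solve
\[
\begin{cases}
\Delta_{A_0 A}\lambda_D = d_{A_0}^\star\balpha & \text{in }\Sigma,\\
\mathsf{t}\lambda_D = 0 & \text{at }\pp\Sigma,
\end{cases}
\]
and set $\bar\balpha \doteq \balpha - d_A\lambda_D$, so that $d_A\lambda_D\in\Im(d_A|_D)$ while $d_{A_0}^\star\bar\balpha = 0$ (with no boundary condition imposed on $\bar\balpha$). This realises both candidate splittings as surjections.

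To upgrade each sum to a direct sum, I would verify triviality of the intersection. In the Neumann case, a $d_A\chi \in (\ker d_{A_0}^\star)_N$ satisfies $\Delta_{A_0 A}\chi = 0$ and $\mathsf{n}\, d_A\chi = 0$, hence lies in the Neumann kernel of $\Delta_{A_0 A}$, which is trivial by hypothesis; an analogous argument settles the Dirichlet case, where the Dirichlet kernel is always trivial because $\mathsf{t}\chi = 0$ combined with $d_A\chi = 0$ forces $\chi = 0$ by unique continuation. Any residual finite-dimensional reducibility kernel is absent thanks to the standing irreducibility assumption on $A_0$ (cf.\ Remark \ref{rmk:EllipticReducible}).

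The main technical obstacle I anticipate is the mixed-connection character of the problem: the boundary condition $\mathsf{n}\, d_A\lambda = \mathsf{n}\balpha$ involves $d_A$ rather than $d_{A_0}$, so standard self-adjoint elliptic theory does not apply verbatim and one must instead work with the family $\{\Delta_{A_0 A}\}_{A\in\mathcal{U}_0}$ of non-self-adjoint operators with $A$-dependent boundary conditions. Concretely one would write $d_A = d_{A_0} + [\balpha,\,\cdot\,]$, observe that at $\balpha = 0$ the BVP reduces to the standard equivariant Neumann Laplacian of Proposition \ref{prop:HodgeEqui}, and argue that the $\balpha$-dependent perturbation is a bounded deformation on the small neighborhood $\mathcal{U}_0$ where invertibility has been postulated—which is exactly the content of the hypothesis. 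I would also flag, for consistency with the specialization of Proposition \ref{prop:HodgeEqui} at $A=A_0$, that the symbol $\ker(d_{A_0})$ in the displayed Dirichlet identity must be read as $\ker d_{A_0}^\star$.
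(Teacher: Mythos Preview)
Your proposal is correct and matches the paper's intended argument: the paper gives no explicit proof, only the remark ``via arguments similar of those employed above'', referring to the elliptic boundary-value construction used for Proposition~\ref{prop:HodgeEqui}, and your adaptation of that construction to the mixed operator $\Delta_{A_0A}=d_{A_0}^\star d_A$ is precisely what is meant. Your catch that $\ker(d_{A_0})$ in the Dirichlet identity should read $\ker d_{A_0}^\star$ is also correct, as the specialization to $A=A_0$ and the description of the slices $\mathcal{S}_{A_0},\mathcal{S}_{\circ,A_0}$ in Remark~\ref{rmk:HodgeInYM} both confirm.
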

 
Third, we observe that in the context of YM theory, the objects introduced above acquire an immediate physical interpretation:

\begin{remark}[Application to YM theory]\label{rmk:HodgeInYM}
Adopting the notation of the YM running example Sections \ref{sec:runex-setup}, \ref{sec:runex-fluxannihilators}, and \ref{sec:runex-fluxgaugegroup}, where $\Acal\doteq \mathrm{Conn}(P\to\Sigma) \ni A$:
\begin{align*}
\fG &\simeq \Omega^0(\Sigma,\Ad P) & &\text{Gauge algebra;}\\
\fGo & \simeq  \Omega^0_D(\Sigma,\Ad P) & &\text{Constraint gauge ideal;}\\
\fG_A & \simeq  \Omega^0(\Sigma, \Ad P) \cap \ker d_A & &\text{Reducibility algebra of $A$;}\\
~\\
T_A\Acal & \simeq \Omega^1(\Sigma, \Ad P) & &\text{(Infinitesimal) ``variations'' of $A$;}\\\smallskip
\rho_A(\fG)& \simeq \Omega^1(\Sigma,\Ad P) \cap \Im d_A  & &\genfrac{}{}{0pt}{0}{\text{ 
Vertical distribution, or}}{\text{ ``pure gauge'' variations of $A$;}}\\
\ker(\varpi_\mathrm{Cou}) & \simeq \Omega^1_N(\Sigma,\Ad P) \cap \ker d_A^\star & &  \genfrac{}{}{0pt}{0}{\text{Coulomb horizontal variations, or}}{\text{ 
horizontal complement of $V$;}}\\\
\rho_A(\fGo)&\simeq \Omega^1(\Sigma,\Ad P)\cap \Im( d_A\vert_D) & &\text{Constraint vertical distribution;}\\
~\\
\mathcal{E}_A  \doteq T_A^\vee \Acal& \simeq \Omega^{n-1}(\Sigma, \Ad^* P) & &\text{Electric fields;}\\
\mathcal{V}_A & \simeq \Omega^{n-1}(\Sigma, \Ad^* P)\cap \Im d_A^\star & &\text{Coulombic electric fields;}\\
\mathcal{H}_A & \simeq \Omega^{n-1}(\Sigma, \Ad^* P)\cap(\ker d_A)_D & &\text{Radiative electric fields}.
\end{align*}
Furthermore, $\Delta_{A_0 A}$ is nothing else than the Faddeev--Popov operator \cite{Faddeev1969,BabelonViallet,BirminghamTQFT1991} and the decompositions of $\mathcal{U}_0$ represent generalizations of the (local\footnote{``Local'' is here meant over $\Acal$, i.e.\ within $\mathcal{U}_0$. For a discussion of the fact that in general $\mathcal{U}_0$ does not extend to the entirety of $\Acal$, and therefore the Coulomb gauge fixing fails to be global, see \cite{Gribov,Singer1978,Singer1981, NarasimhanRamadas79}.}) Coulomb gauge fixing respectively for the actions of $\fG$ and $\fGo$---generalizations that hold in the presence of corners and for different choices of background connection $A_0$. This is the statement of existence of slices at each $A_0 \in \Acal$ \cite{AbbatiCirelliMania}
\[
\mathcal{S}_{A_0} \doteq \mathcal{U}_{0}\cap \ker d^\star_{A_0} 
\quad\text{and}\quad
\mathcal{S}_{\circ,A_0} \doteq \mathcal{U}_{0}\cap (\ker d^\star_{A_0})_D
\]
transverse to $V$ and $V_\circ$, respectively.

Finally, we conclude by observing that, in form degree 1, $\langle\langle \cdot, \cdot\rangle\rangle$ defines a $\rho$-invariant (since constant) DeWitt supermetric $\mathbb{G}_A$ on $\Acal$.
\end{remark}

\section*{Declarations}
\subsubsection*{Conflict of interest statement} The authors have no competing interests to declare that are relevant to the content of this article.
\subsubsection*{Data availability statement} Data sharing not applicable to this article as no datasets were generated or analysed during the current study.

\medskip 

\begingroup
\sloppy
\printbibliography
\endgroup

\end{document}